\newtheorem{defn}{Definition}
\newtheorem{thm}{Theorem}
\newtheorem{lem}{Lemma}
\newtheorem{assum}{Assumption}
\newtheorem{prop}{Proposition}
\newtheorem{rmk}{Remark}
\newtheorem{coro}{Corollary}
\newcommand{\bR}{\mathbb{R}}
\newcommand{\cB}{\mathcal{B}}
\newcommand{\cD}{\mathcal{D}}
\newcommand{\cE}{\mathscr{E}}
\newcommand{\cJ}{\mathcal{J}}
\newcommand{\cK}{\mathcal{K}}
\newcommand{\cM}{\mathcal{M}}
\newcommand{\cS}{\mathcal{S}}
\newcommand{\cX}{\mathcal{X}}
\DeclareMathOperator{\rmd}{d\!}
\begin{document}
\title{Dynamic Games with Almost Perfect Information}
\author{Wei~He\thanks{Department of Economics, The University of Iowa, W249 Pappajohn Business Building, Iowa City, IA 52242. E-mail: he.wei2126@gmail.com.}
\and
Yeneng~Sun\thanks{Department of Economics, National University of Singapore, 1 Arts Link, Singapore 117570. Email: ynsun@nus.edu.sg}
}
\date{This version: March 30, 2015}
\maketitle



\abstract{This paper aims to solve two fundamental problems on finite or infinite horizon dynamic games with perfect or almost perfect information. Under some mild conditions, we prove (1) the existence of subgame-perfect equilibria in general dynamic games with almost perfect information, and (2) the existence of {\it pure-strategy} subgame-perfect equilibria in perfect-information dynamic games with uncertainty. Our results go beyond  previous works on continuous dynamic games in the sense that public randomization and the continuity requirement on the state variables are not needed. As an illustrative application, a dynamic stochastic oligopoly market with intertemporally dependent payoffs is considered.

}


\newpage

\tableofcontents

\newpage

\section{Introduction}\label{sec-intro}
Dynamic games with complete information and subgame-perfect equilibria are fundamental game-theoretic concepts with wide applications\footnote{See, for example, Part II of \cite{FT1991}.}. For games with finitely many actions and stages, \cite{Selten1965} showed the existence of subgame-perfect equilibria. The infinite horizon but finite-action case is covered by \cite{FL1983}.

Since the agents in many economic models need to make continuous choices, it is important to consider dynamic games with general action spaces. For deterministic continuous games with perfect information where only one player moves at each stage and all previous moves are observable by the players, the existence of pure-strategy subgame-perfect equilibria is shown in \cite{Harris1985}, \cite{HL1987}, \cite{Borgers1989, Borgers1991} and  \cite{HLRR1990}. However, if the deterministic assumption is dropped by introducing a passive player - Nature, then pure-strategy subgame-perfect equilibrium need not exist as shown by a four-stage game in \citet[p.~538]{HRR1995}. In fact, \cite{LM2003} even demonstrated the nonexistence of mixed-strategy subgame-perfect equilibrium in a five-stage game. Thus, it has remained an open problem to prove the existence of (pure or mixed-strategy) subgame-perfect equilibria in (finite or infinite horizon) perfect-information dynamic games with uncertainty under some general condition.

\cite{HRR1995} considered continuous dynamic games with almost perfect information. In such games, there is a finite number of active players and a passive player, Nature. The players (active and passive) know all the previous moves and
choose their actions simultaneously. All the relevant model parameters are assumed to be continuous in both action and state variables (i.e., Nature's moves). \cite{HRR1995} showed the existence of subgame-perfect correlated equilibria by introducing a public randomization device,\footnote{See also \cite{Mariotti2000} and \cite{RR2002}.} and also demonstrated the possible nonexistence of subgame-perfect equilibrium through a simple example with two players in each of the two stages. This means that the existence of subgame-perfect equilibria under some suitable condition is an open problem even for two-stage dynamic games with almost perfect information.

For dynamic games with perfect or almost perfect information, the earlier works have focused on continuous dynamic games.
The purpose of this paper is to solve the two open problems for (finite or infinite horizon) general dynamic games in which the relevant model parameters are assumed to be continuous in actions, but only measurable in states.\footnote{While continuity in terms of actions is natural and widely adopted, the state continuity requirement as in continuous dynamic games is rather restrictive. The state measurability assumption is the minimal regularity condition one would expect for the model parameters.}
In particular, we show the existence of a subgame-perfect equilibrium in a general dynamic game with almost perfect information under some suitable conditions on the state transitions. Theorem~\ref{thm-general} (and also Proposition~\ref{prop-infinite arm'}) below goes beyond earlier works on continuous dynamic games by dropping public randomization and the continuity requirement on the state variables.  Thus, the class of games considered here includes general stochastic games, where the stage payoffs are usually assumed to be continuous in actions and measurable in states.\footnote{Proposition~\ref{prop-infinite arm'} implies a new existence result on subgame-perfect equilibrium for a general stochastic game; see Remark~\ref{rmk-stochastic game} below. }
Proposition \ref{prop-general} also presents some regularity properties of the equilibrium payoff correspondences, including  compactness and upper hemicontinuity in the action variables.\footnote{Such an upper hemicontinuity property in terms of correspondences of equilibrium payoffs, or outcomes, or correlated strategies has been the key for proving the relevant existence results as in
 \cite{Harris1985}, \cite{HL1987}, \cite{Borgers1989, Borgers1991}, \cite{HLRR1990}, \cite{HRR1995} and \cite{Mariotti2000}.}
As an illustrative application of Theorem~\ref{thm-general}, we consider a dynamic oligopoly market in which firms face stochastic demand/cost and intertemporally dependent payoffs.

We work with the condition that the state transition in each period (except for those periods with one active player) has a component with a suitable density function with respect to some atomless reference measure. This condition is also minimal in the particular sense that the existence result may fail to hold if (1) the passive player, Nature, is not present in the model as shown in \cite{HRR1995}, or (2) with the presence of Nature, the reference measure is not atomless as shown in \cite{LM2003}.

For the special class of continuous dynamic games with almost perfect information, we can weaken the atomless reference measure condition slightly. In particular, we simply assume the state transition in each period (except for those periods with one active player) to be an atomless probability measure for any given history, without the requirement of a common reference measure. Thus, the introduction of a public randomization device as in \cite{HRR1995} is an obvious special case.

For dynamic games with almost perfect information, our main result allows the players to take mixed strategies.
However, for the special class of dynamic games with perfect information\footnote{Dynamic games with perfect information do have  wide applications. For some examples, see \cite{PP1968} for an intergenerational bequest game, and \cite{PY1973} and \cite{Goldman1980}  for intrapersonal games in which consumers have changing preferences. \label{fn-dgpi}}, we obtain the existence of pure-strategy subgame-perfect equilibria in Corollaries \ref{coro-perfect general} and \ref{coro-perfect cont}. When Nature is present, the only known general existence result for dynamic games with perfect information is, to the best of our knowledge, for continuous games with public randomization. On the contrary, our Corollary \ref{coro-perfect general} needs neither continuity in the state variables nor public randomization. Furthermore, our Corollary \ref{coro-perfect cont} provides a new existence result for continuous dynamic games with perfect information, which generalizes the results of \cite{Harris1985}, \cite{HL1987}, \cite{Borgers1989}, and \cite{HLRR1990} to the case when Nature is present.

We follow the standard three-step procedure in obtaining subgame-perfect equilibria of dynamic games, namely, backward induction, forward induction, and approximation of infinite horizon by finite horizon. Because we drop public randomization and the continuity requirement on the state variables, new technical difficulties arise in each step of the proof. In the step of backward induction, we obtain a new existence result for discontinuous games with stochastic endogenous sharing rules, which extends the main result of \cite{SZ1990} by allowing the payoff correspondence to be measurable (instead of upper hemicontinuous) in states. For forward induction, we need to obtain strategies that are jointly measurable in history. When there is a public randomization device, the joint measurability follows from the measurable version of Skorokhod's representation theorem and implicit function theorem respectively as in \cite{HRR1995} and \cite{RR2002}. Here we need to work with the deep  ``measurable" measurable choice theorem of \cite{Mertens2003}. Lastly, in order to obtain results for the infinite horizon case, we need to handle various subtle measurability issues due to the lack of continuity on the state variables in our model.\footnote{We cannot adopt the usual method of approximating a limit continuous dynamic game by a sequence of finite games as used in \cite{HLRR1990}, \cite{Borgers1991} and \cite{HRR1995}.} As noted in Subsection \ref{subsec-proof continuous} below, a considerably simpler proof could be obtained for the case of continuous dynamic games.

The rest of the paper is organized as follows. The model and main result are presented in Section \ref{sec-model}. An illustrative application of Theorem~\ref{thm-general} to a dynamic oligopoly market with stochastic demand/cost and intertemporally dependent payoffs is given in Section \ref{sec-application}. Section \ref{sec-variations} provides several variations of the main result. All the proofs are left in the Appendix.

\section{Model and main result}\label{sec-model}

In this section, we shall present the model for an infinite-horizon dynamic game with almost perfect information.

The set of players is $I_0 = \{0,1,\ldots, n\}$, where the players in $I = \{1,\ldots, n\}$ are active and player~$0$ is Nature. All players  move simultaneously. Time is discrete, and indexed by $t = 0,1,2, \ldots$.

The set of starting points is a closed set $H_0 = X_0 \times S_0$, where $X_0$ is a compact metric space and $S_0$ is a Polish space (that is, a complete separable metric space).\footnote{In each stage~$t \ge 1$, there will be a set of action profiles $X_t$ and a set of states $S_t$. Without loss of generality, we assume that the set of initial points is also a product space for notational consistency.} At stage $t \ge 1$, player~$i$'s action will be chosen from a subset of a Polish space $X_{ti}$ for each $i\in I$, and $X_t = \prod_{i\in I} X_{ti}$. Nature's action is chosen from a Polish space $S_t$.  Let $X^t = \prod_{0\le k \le t}X_k$ and $S^t = \prod_{0\le k \le t}S_k$. The Borel $\sigma$-algebras on $X_t$ and $S_t$ are denoted by $\cB(X_t)$ and $\cB(S_t)$, respectively. Given $t \ge 0$,  a history up to the stage~$t$ is a vector
$$h_{t} = (x_0, s_0, x_1, s_1, \ldots, x_{t}, s_{t} ) \in X^{t} \times S^{t}.$$
The set of all such possible histories is denoted by $H_{t}$. For any $t \ge 0$, $H_{t} \subseteq X^{t} \times S^{t}$.

For any $t \ge 1$ and $i\in I$, let $A_{ti}$  be a measurable, nonempty and compact valued correspondence from $H_{t-1}$ to $X_{ti}$ such that (1) $A_{ti}$ is sectionally continuous on $X^{t-1}$,\footnote{Suppose that $Y_1$, $Y_2$ and $Y_3$ are all Polish spaces, and $Z\subseteq Y_1\times Y_2$. Denote $Z(y_1) = \{y_2\in Y_2 \colon (y_1, y_2) \in Z\}$ for any $y_1 \in Y_1$. A function (resp. correspondence) $f\colon Z\to Y_3$ is said to be sectionally continuous on $Y_2$ if $f(y_1, \cdot)$ is continuous on $Z(y_1)$ for all $y_1$ with $Z(y_1) \neq \emptyset$. Similarly, one can define the sectional upper hemicontinuity for a correspondence.} and (2) $A_{ti}(h_{t-1})$ is the set of available actions for player~$i\in I$ given the history $h_{t-1}$.\footnote{Suppose that $Y$ and $Z$ are both Polish spaces, and $\Psi$ is a correspondence from $Y$ to $Z$. Hereafter, the measurability of $\Psi$, unless specifically indicated, is with respect to the Borel $\sigma$-algebra $\cB(Y)$ on $Y$.}  Let $A_t = \prod_{i\in I}A_{ti}$.  Then $H_t =\mbox{Gr}(A^t) \times S_t$, where $\mbox{Gr}(A^t)$ is the graph of $A^t$.

For any $x =(x_0, x_1, \ldots) \in X^\infty$, let $x^t = (x_0, \ldots, x_t) \in X^t$ be the truncation of $x$ up to the period~$t$. Truncations for $s \in S^\infty$ can be defined similarly. Let $H_{\infty}$ be the subset of $X^\infty \times S^\infty$ such that $(x,s) \in H_\infty$ if $(x^t,s^t) \in H_{t}$ for any $t\ge 0$. Then $H_\infty$ is the set of all possible histories  in the game.\footnote{A finite horizon dynamic game can be regarded as a special case of an infinite horizon dynamic game in the sense that the action correspondence $A_{ti}$ is point-valued for each player $i \in I$ and $t \ge T$ for some stage $T \ge 1$; see, for example, \cite{Borgers1989} and \cite{HRR1995}.}

For any $t\ge 1$, Nature's action is given by a Borel measurable mapping $f_{t0}$ from $H_{t-1}$ to $\cM(S_t)$ such that $f_{t0}$ is sectionally continuous on $X^{t-1}$, where $\cM(S_t)$ is endowed with the topology induced by the weak convergence.\footnote{For a Polish space $A$, $\cM(A)$ denotes the set of all Borel probability measures on $A$, and $\triangle(A)$ is the set of all finite Borel measures on $A$.} For each $t \ge 0$, suppose that $\lambda_t$ is a Borel probability measure on $S_t$ and $\lambda_t$ is atomless for $t\ge 1$. Let $\lambda^t = \otimes_{0 \le k \le t}\lambda_t$ for $t \ge 0$. We shall assume the following condition on the state transitions.

\begin{assum}[Atomless Reference Measure (ARM)]
A dynamic game is said to satisfy the ``atomless reference measure (ARM)'' condition if for each $t \ge 1$,
\begin{enumerate}
  \item the probability $f_{t0}(\cdot|h_{t-1})$ is absolutely continuous with respect to $\lambda_t$ on $S_t$ with the Radon-Nikodym derivative $\varphi_{t0}(h_{t-1}, s_t)$ for all $h_{t-1} \in H_{t-1}$;\footnote{It is common to have a reference measure when one considers a game with uncountable states. For example, if $S_t$ is a subset of $\bR^l$, then the Lebesgue measure is a natural reference measure.}
  \item the mapping $\varphi_{t0}$ is Borel measurable and sectionally continuous in $X^{t-1}$, and integrably bounded in the sense that there is a $\lambda_t$-integrable function $\phi_t \colon S_t \to \bR_+$ such that $\varphi_{t0}(h_{t-1}, s_t) \le \phi_t(s_t)$ for any $h_{t-1} \in H_{t-1}$ and $s_t \in S_t$.
\end{enumerate}
\end{assum}

For each $i \in I$, the payoff function $u_i$ is a Borel measurable mapping from $H_{\infty}$ to $\bR_{++}$ which is sectionally continuous on $X^{\infty}$ and bounded by $\gamma > 0$.\footnote{Since $u_i$ is bounded, we can assume that the value of the payoff function is strictly positive without loss of generality.}

When one considers a dynamic game with infinite horizon, the following ``continuity at infinity'' condition is standard.\footnote{See, for example, \cite{FL1983}.} In particular, all discounted repeated games or stochastic games satisfy this condition.

For any $T\ge 1$, let
\begin{equation} \label{eq-CaI}
w^T = \sup_{\substack{i\in I \\ (x,s)\in H_{\infty} \\ (\overline{x}, \overline{s} ) \in H_\infty \\ x^{T-1} = \overline{x}^{T-1} \\ s^{T-1} = \overline{s}^{T-1} }}
|u_i(x,s) - u_i(\overline{x}, \overline{s})|.
\end{equation}

\begin{assum}[Continuity at Infinity]
A dynamic game is said to be ``continuous at infinity'' if $w^T \to 0$ as $T \to \infty$.
\end{assum}

For player $i\in I$, a strategy $f_i$ is a sequence $\{f_{ti}\}_{t \ge 1}$ such that $f_{ti}$ is a Borel measurable mapping from $H_{t-1}$ to $\cM(X_{ti})$ with $f_{ti}(A_{ti}(h_{t-1})|h_{t-1}) = 1$ for all  $h_{t-1} \in H_{t-1}$. A strategy profile $f = \{f_i\}_{i\in I}$ is a combination of strategies of all active players.

In any subgame, a strategy combination will generate a probability distribution over the set of possible histories. This probability distribution is called the
path induced by the strategy combination in this subgame.

\begin{defn}\label{defn-path}
Suppose that a strategy profile $f = \{f_i\}_{i\in I}$ and a history $h_{t} \in H_t$ are given for some $t \ge 0$. Let $\tau_{t} = \delta_{h_t}$, where $\delta_{h_t}$ is the probability measure concentrated at the one point $h_t$. If $\tau_{t'} \in \cM(H_{t'})$ has already been defined for some $t' \ge t$, then let
$$\tau_{t'+1} = \tau_{t'}\diamond(\otimes_{i \in I_0} f_{(t'+1)i}).\footnote{Denote $\otimes_{i \in I_0} f_{(t'+1)i}$ as a transition probability from $H_{t'}$ to $\cM (X_{t'+1})$. Notice that the strategy profile is usually represented by a vector. For the notational simplicity later on, we assume that $\otimes_{i \in I_0} f_{(t'+1)i} (\cdot | h_{t'})$ represents the strategy profile in stage~$t' + 1$ for a given history $h_{t'} \in H_{t'}$, where $\otimes_{i \in I_0} f_{(t'+1)i} (\cdot | h_{t'})$ is the product of the probability measures $f_{(t'+1)i} (\cdot | h_{t'})$, $i \in I_0$. If $\lambda$ is a finite measure on $X$ and $\nu$ is a transition probability from $X$ to $Y$, then $\lambda\diamond \nu$ is a measure on $X\times Y$ such that $\lambda\diamond \nu(A\times B) = \int_A \nu(B|x) \lambda(\rmd x)$ for any measurable subsets $A \subseteq X$ and $B\subseteq Y$.}$$
Finally, let $\tau \in \cM(H_{\infty})$ be the unique probability measure on $H_\infty$ such that $\mbox{Marg}_{H_{t'}}\tau = \tau_{t'}$ for all $t' \ge t$. Then $\tau$ is called the path induced by $f$ in the subgame $h_t$. For all $i\in I$, $\int_{H_\infty}u_i \rmd \tau$ is the payoff of player~$i$ in this subgame.
\end{defn}

The notion of subgame-perfect equilibrium is defined as follows.
\begin{defn}[SPE]\label{defn-SPE}
A subgame-perfect equilibrium is a strategy profile $f$ such that for all $i\in I$, $t \ge 0$,  and $\lambda^t$-almost all $h_{t} \in H_{t}$,\footnote{A property is said to hold for $\lambda^{t}$-almost all $h_t = (x^t, s^t) \in H_t$ if it is satisfied for $\lambda^t$-almost all $s^t \in S^t$ and all $x^t \in H_t(s^t)$.} player~$i$ cannot improve his payoff in the subgame $h_t$ by a unilateral change in his strategy.\footnote{When the state space is uncountable and has a reference measure, it is natural to consider the optimality for almost all sub-histories in the probabilistic sense; see, for example, \cite{APS1990} and Footnote~4 therein.}
\end{defn}

The following theorem is our main result, which shows the existence of a subgame-perfect equilibrium under the conditions of ARM and continuity at infinity. Its proof is left in the appendix.

\begin{thm}\label{thm-general}
If a dynamic game satisfies the ARM condition and is continuous at infinity, then it possesses a subgame-perfect equilibrium.
\end{thm}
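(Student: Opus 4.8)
The plan is to follow the standard three-step architecture for existence of subgame-perfect equilibria — backward induction, forward induction, and finite-horizon approximation — but to replace each step with a version robust to (i) the absence of public randomization and (ii) state variables that enter only measurably. Throughout, the organizing object is the equilibrium payoff correspondence $E_t\colon H_t \rightrightarrows \bR^n$, assigning to each history the set of SPE-payoff vectors of the subgame starting at $h_t$. The endgame is to show that $E_t$ is nonempty-, compact-valued, jointly measurable in $h_t$, sectionally upper hemicontinuous on the action coordinates $X^t$, and — crucially for the later steps — that it admits a measurable selection that can be realized by an actual strategy profile (this is the content of Proposition~\ref{prop-general}).

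\textbf{Step 1 (Backward induction / one-stage problem).} Assume inductively that $E_{t+1}$ has the regularity properties above. Fix $h_t$; the continuation values attached to the period-$(t+1)$ stage game are given by a measurable selection $v$ of $E_{t+1}$ composed with the state transition $f_{(t+1)0}$. Integrating out Nature's move, the active players face a one-shot game whose payoff to player $i$ from mixed profile $\mu$ is $\int \int v_i(h_t, a, s_{t+1}) \, \mu(\rmd a) \, f_{(t+1)0}(\rmd s_{t+1} \mid h_t, a)$. The difficulty is that $v_i$ is only measurable in $s_{t+1}$, so this is a \emph{discontinuous} game; the device that rescues continuity is exactly the ARM condition — the transition has a density $\varphi_{(t+1)0}$ with respect to the atomless $\lambda_{t+1}$, so by a Lyapunov-type / purification argument (the "saturation" or Dvoretzky–Wald–Wolfowitz phenomenon for atomless measures) the integral over $s_{t+1}$ smooths out the discontinuity and the resulting payoff is jointly continuous in the action profile for fixed $h_t$. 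One then invokes the extension of \cite{SZ1990} on games with stochastic endogenous sharing rules that the authors announce in the introduction — this yields, for each $h_t$, a Nash equilibrium of the stage game together with a selection of $E_{t+1}$ supporting it, hence nonemptiness of $E_t(h_t)$; compactness and sectional upper hemicontinuity on $X^t$ follow from Berge-type arguments using the sectional continuity of $A_{ti}$, $f_{t0}$, $\varphi_{t0}$, and $u_i$ in the action variables.

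\textbf{Step 2 (Forward induction / measurable selection).} Knowing $E_t(h_t)\neq\emptyset$ for every $h_t$ is not enough; we need a single strategy profile $f$ whose induced path at every $h_t$ delivers a payoff in $E_t(h_t)$, and $f_{ti}$ must be \emph{jointly} Borel measurable in $h_{t-1}$. This is where public randomization is usually used to glue sections together measurably; lacking it, the plan is to apply the "measurable measurable choice" theorem of \cite{Mertens2003} to the correspondence $h_t \mapsto \{$stage-game equilibria supported by a selection of $E_{t+1}\}$, extracting a measurable selection that is simultaneously a measurable family of mixed actions and a measurable family of continuation selections. Doing this coherently across all $t$ (so that the selection chosen at stage $t$ is consistent with the ones used to define $E_{t+1}$) requires care with the $\lambda^t$-almost-everywhere qualifier in Definition~\ref{defn-SPE} and with the fact that the "good" set of histories is only of full measure, not all of $H_t$.

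\textbf{Step 3 (Infinite horizon).} For the truncated $T$-stage game (freeze actions after stage $T$, or equivalently attach an arbitrary measurable continuation payoff bounded by $\gamma$), Steps 1–2 give an SPE $f^T$ with payoff correspondence $E^T$. Continuity at infinity ($w^T\to 0$) controls the error: $|E^T_t - E_t| \le w^T$ in Hausdorff distance uniformly, so the correspondences converge, and a diagonal/compactness argument on the strategy profiles (using that each $\cM(X_{ti})$-valued map lives in a compact space and the measurability is preserved in the limit) produces a candidate $f$. One then verifies $f$ is an SPE of the infinite game by checking the one-shot deviation principle, which is valid precisely because of continuity at infinity. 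The authors warn that the usual trick of approximating by \emph{finite} games (finite action sets) fails here, so the limit must be taken at the level of strategies/payoff correspondences directly, and the subtle point is maintaining joint measurability of the limiting strategy in $h_{t-1}$ when the state coordinates carry no continuity.

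\textbf{Main obstacle.} I expect Step 2 — obtaining jointly measurable strategies without public randomization — to be the crux. The substitution of Mertens's deep measurable-selection theorem for the Skorokhod-representation / implicit-function arguments of \cite{HRR1995} and \cite{RR2002} is what makes the whole scheme go through, and getting the measurability of the selection to propagate consistently through the backward recursion (so that the $E_{t+1}$ used in Step 1 is itself the value correspondence of genuinely measurable strategies, not merely a measurable correspondence) is the technically delicate part. The discontinuity-in-states issue in Step 1 is serious but is exactly what the ARM/atomless-density hypothesis is designed to neutralize, so I anticipate it is more routine given the announced extension of \cite{SZ1990}.
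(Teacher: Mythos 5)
Your three-step skeleton matches the paper's, and you correctly identify the two external tools (an extension of \cite{SZ1990} and the measurable ``measurable choice'' theorem of \cite{Mertens2003}). But two of your steps contain genuine gaps. First, in Step~1 you claim that the ARM condition makes the integrated stage payoff ``jointly continuous in the action profile,'' so that the discontinuity is neutralized before equilibrium existence is invoked. That is not what happens and not what atomlessness buys. In the paper's backward induction the object is the continuation-payoff \emph{correspondence} $P_t(h_{t-1},x_t)=\int_{S_t}Q_{t+1}(h_{t-1},x_t,s_t)\varphi_{t0}(h_{t-1},s_t)\lambda_t(\rmd s_t)$: integration against the atomless $\lambda_t$ yields \emph{convexity} and compactness of its values via Lyapunov-type results (Lemma~\ref{lem-lyapunov}), while upper hemicontinuity in actions is merely inherited from the inductive hypothesis on $Q_{t+1}$. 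The stage game therefore remains a discontinuous game with an endogenous (stochastic) sharing rule, and the heavy lifting is Proposition~\ref{prop-extension SZ}, which extends \cite{SZ1990} to payoff correspondences that are only measurable in the state; its proof is not routine — it partitions the state space via Lusin's theorem into countably many compact pieces on which the Simon--Zame machinery applies, and then glues. Your sketch treats this as ``announced'' and available, and simultaneously asserts a continuity that, if true, would make it unnecessary; as written the step is internally inconsistent and the convexification role of ARM is missing.

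Second, your Step~3 takes limits of SPE strategy profiles $f^T$ of truncated games by a ``diagonal/compactness argument on the strategy profiles.'' There is no workable compactness here: transition-probability-valued maps $h_{t-1}\mapsto f_{ti}(\cdot|h_{t-1})$ do not live in a compact space in any topology under which both joint measurability and the subgame-equilibrium property pass to the limit, and limits of SPE of truncated games need not be SPE of the infinite game without further structure — this is precisely the difficulty the paper flags when it says the usual finite-game approximation cannot be used. The paper's route is different and avoids limits of strategies altogether: it defines $Q^\tau_t$ by attaching, beyond stage $\tau$, the set of \emph{all} feasible payoffs induced by arbitrary strategy profiles (not equilibrium payoffs), applies the backward operator $\Phi$ up to stage $\tau$, takes the decreasing intersection $Q^\infty_t=\cap_\tau Q^\tau_t$, proves the fixed-point property $Q^\infty_t=\Phi(Q^\infty_{t+1})=E_t$ almost everywhere (Lemmas~\ref{lem-equi}--\ref{lem-equilibrium set}), and only then constructs a single strategy profile by applying the forward-induction Proposition~\ref{prop-forward} recursively to a measurable selection of $\Phi(Q^\infty_{t+1})$, verifying optimality through the one-shot deviation principle and continuity at infinity. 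Establishing the regularity of $Q^\tau_t$ for $t>\tau$ (compactness and sectional upper hemicontinuity of the set of feasible payoff vectors, Lemmas~\ref{lem-measurable history} and \ref{lem-continuation set}) is itself a substantial part of the argument that your plan does not address. So while your architecture is the right one, the limit step as you describe it would fail, and the backward-induction step understates what must actually be proved.
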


Let $E_t(h_{t-1})$ be the set of subgame-perfect equilibrium payoffs in the subgame $h_{t-1}$. The following result demonstrates the compactness and upper hemicontinuity properties of the correspondence $E_t$.

\begin{prop}\label{prop-general}
If a dynamic game satisfies the ARM condition and is continuous at infinity, then $E_t$ is nonempty and compact valued, and essentially sectionally upper hemicontinuous on $X^{t-1}$.\footnote{Suppose that $Y_1$, $Y_2$ and $Y_3$ are all Polish spaces, and $Z\subseteq Y_1\times Y_2$ and $\eta$ is a Borel probability measure on $Y_1$. Denote $Z(y_1) = \{y_2\in Y_2 \colon (y_1, y_2) \in Z\}$ for any $y_1 \in Y_1$. A function (resp. correspondence) $f\colon Z\to Y_3$ is said to be essentially sectionally continuous on $Y_2$ if $f(y_1, \cdot)$ is continuous on $Z(y_1)$ for $\eta$-almost all $y_1$.  Similarly, one can define the essential sectional upper hemicontinuity for a correspondence.}
\end{prop}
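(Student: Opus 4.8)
The plan is to extract the compactness and essential sectional upper hemicontinuity of $E_t$ as a by-product of the backward-induction machinery already used to prove Theorem~\ref{thm-general}. The key observation is that the proof of the existence theorem cannot proceed without first establishing, at each stage, precisely these structural properties of the continuation payoff correspondence; so the proposition records what the induction delivers.

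First I would set up the backward induction on a truncated $T$-horizon game. For the terminal stage, $E_{T+1}$ is the (degenerate) payoff correspondence determined directly by $u_i$ restricted to the remaining point-valued action correspondences; since $u_i$ is bounded, Borel measurable, and sectionally continuous on $X^\infty$, the correspondence $h_T \mapsto E_{T+1}(h_T)$ is nonempty, compact valued, and sectionally upper hemicontinuous on the action coordinates. The inductive step is the heart of the matter: assuming $E_{t+1}$ is nonempty, compact valued, and essentially sectionally u.h.c. on $X^t$, one forms the stage-$t$ one-shot game in which each active player $i$ chooses a mixed action in $\cM(X_{ti})$ subject to $A_{ti}(h_{t-1})$, Nature's move is drawn according to $f_{t0}(\cdot|h_{t-1})$, and the "payoff" to a terminal history is a selection from $E_{t+1}$. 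Here one invokes the new existence result for discontinuous games with stochastic endogenous sharing rules (the extension of \cite{SZ1990} mentioned in the introduction), which — crucially, because $f_{t0}$ has an atomless component via the ARM condition — yields a Nash equilibrium of this one-shot game together with a measurable selection from $E_{t+1}$ supporting it. The resulting set of equilibrium payoff vectors, as a function of $h_{t-1}$, is by construction $E_t(h_{t-1})$.

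The properties of $E_t$ then follow from the properties of the solution correspondence of this parametrized family of one-shot games. Nonemptiness is the existence statement just quoted. Compact-valuedness comes from: the mixed-action sets $\cM(A_{ti}(h_{t-1}))$ are compact (as $A_{ti}$ is compact valued and the spaces are Polish), the payoff integrals depend continuously on the mixed-strategy profile and on the $E_{t+1}$-selection (using sectional continuity of $u_i$ in actions and dominated convergence via the integrable bound $\phi_t$ from ARM), and $E_{t+1}$ is compact valued — so the graph of equilibrium profile/payoff pairs is closed, hence the payoff image is compact. Essential sectional u.h.c.\ on $X^{t-1}$ is obtained by a closed-graph argument along a sequence of action-histories $x^{t-1}_k \to x^{t-1}$ (with the state coordinate fixed, off a $\lambda^{t-1}$-null set): the action correspondences $A_{ti}$ are sectionally continuous on $X^{t-1}$, the densities $\varphi_{t0}$ and payoffs $u_i$ are sectionally continuous on $X^{t-1}$, and $E_{t+1}$ is essentially sectionally u.h.c.\ on $X^t$; combining these with the upper hemicontinuity of the Nash-equilibrium correspondence under continuous perturbation of the game data (Fan–Glicksberg type stability, plus Berge's maximum theorem for the best-response maps) gives that any limit of stage-$t$ equilibrium payoffs is again a stage-$t$ equilibrium payoff, i.e.\ $E_t(x^{t-1}_k,\cdot) \to$ something inside $E_t(x^{t-1},\cdot)$. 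Finally, for the infinite-horizon game one passes to the limit $T\to\infty$: continuity at infinity ($w^T\to 0$) forces the truncated equilibrium payoff correspondences $E_t^T$ to converge, and the limit inherits nonemptiness, compactness, and essential sectional u.h.c.\ because these are preserved under the relevant (Hausdorff-metric / uniform) limits.

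The main obstacle I expect is the inductive step's upper hemicontinuity claim, specifically the interaction between the \emph{essential} (almost-everywhere-in-state) sectional u.h.c.\ of $E_{t+1}$ and the integration against Nature's transition $f_{t0}(\cdot|h_{t-1})$: one must ensure that the null set on which continuity fails at stage $t+1$ does not pollute the stage-$t$ statement, which is exactly where the absolute continuity of $f_{t0}(\cdot|h_{t-1})$ with respect to the fixed atomless $\lambda_{t+1}$, together with the uniform integrable bound $\phi_{t+1}$, is indispensable — it lets one integrate "through" the null set and apply dominated convergence uniformly in $h_{t-1}$. A secondary technical point is the measurable selection of the supporting continuation payoffs (needed so that $f_{ti}$ is jointly Borel in $h_{t-1}$), which is handled by the "measurable" measurable choice theorem of \cite{Mertens2003} rather than by elementary selection arguments; but for the \emph{statement} of Proposition~\ref{prop-general} — as opposed to Theorem~\ref{thm-general} — only the fiberwise properties of $E_t$ are asserted, so the selection subtleties enter only insofar as they are already subsumed in the cited one-shot existence result.
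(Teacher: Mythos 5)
Your finite-horizon backward-induction outline matches the paper's Subsection on backward induction (the stochastic Simon--Zame extension, with ARM delivering convexity of the integrated continuation correspondence via atomless integration), but the proposal has two genuine gaps. First, and most importantly, the infinite-horizon step is asserted rather than proved, and the route you describe -- truncate at horizon $T$, get correspondences $E_t^T$, and let continuity at infinity ``force'' convergence in a Hausdorff/uniform sense with all properties preserved -- is not what can be made to work directly, because there is no canonical truncated game (payoffs live on $H_\infty$, so a $T$-truncation must specify continuation behavior) and because the identification of the limit object with the actual infinite-horizon correspondence $E_t$ is precisely the hard part. The paper instead defines, for each $\tau$, a correspondence $Q_t^\tau$ equal to the set of \emph{all feasible} continuation payoffs (over arbitrary strategy profiles) when $t>\tau$ and to the backward-induction image $\Phi(Q_{t+1}^\tau)$ when $t\le\tau$, proves these are measurable, compact valued and essentially sectionally u.h.c.\ (Lemma~\ref{lem-continuation set}, which itself needs the path-distribution correspondences $\Delta_t^{m_1}$, $\Delta_t$ of Lemma~\ref{lem-measurable history} and a nontrivial approximation argument for measurability), takes $Q_t^\infty=\cap_\tau Q_t^\tau$, and then proves both inclusions $E_t\subseteq Q_t^\infty$ and $\Phi(Q_{t+1}^\infty)\subseteq E_t$ $\lambda^{t-1}$-a.e.\ (Lemmas~\ref{lem-equi}--\ref{lem-equilibrium set}), the latter via the one-shot deviation principle and continuity at infinity. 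None of this is subsumed by ``preservation under Hausdorff limits''; your proposal is missing the argument that the limit correspondence coincides with $E_t$.

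Second, two steps inside your inductive stage are not right as stated. The upper hemicontinuity of the stage-$t$ solution set cannot be obtained from Fan--Glicksberg/Berge stability of Nash equilibria under continuous perturbations, because the stage payoff object $P_t(h_{t-1},\cdot)$ is only an u.h.c.\ correspondence (a family of sharing rules), not a continuous function; the correct argument is the Simon--Zame closed-graph argument in which the sharing rule varies along the sequence (Lemma~\ref{lem-SZ}), combined with a Lusin-type decomposition of the state space into countably many compact pieces on which everything is continuous (Proposition~\ref{prop-extension SZ}) to handle mere measurability in states. Relatedly, your remark that the measurable-selection issues (Mertens' theorem, Lemma~\ref{lem-RR}) matter only for Theorem~\ref{thm-general} and are ``subsumed in the one-shot existence result'' is incorrect: even the nonemptiness of $E_t$ asserted in Proposition~\ref{prop-general} requires constructing an actual subgame-perfect equilibrium with jointly measurable strategies (the forward induction of Proposition~\ref{prop-forward} and Lemma~\ref{lem-equi strategy}), and the equality $E_t=Q_t^\infty$ a.e.\ -- needed for compactness and essential sectional u.h.c.\ of $E_t$ itself, as opposed to the backward-induction correspondence -- uses both the forward and backward directions.
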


\section{Dynamic oligopoly market with sticky prices}\label{sec-application}

In this section, we consider a dynamic oligopoly market in which firms face stochastic demand/cost and intertemporally dependent payoffs. Such a model is a variant of the well-known dynamic oligopoly models as considered in \cite{GP1984} and \cite{RS1986}, which examined the response of firms for demand fluctuations. The key feature of our example is the existence of sticky price effect, which means that the desirability of the good from the demand side could depend on the accumulated past output, and hence gives intertemporally dependent payoff functions.

We consider a dynamic oligopoly market in which $n$ firms produce a homogeneous good in an infinite-horizon setting. The inverse demand function is denoted by $P_t(Q_1, \ldots, Q_t, s_t)$, where $Q_t$ is the industry output and $s_t$ the observable demand shock in period~$t$. Notice that the price depends on the past outputs. One possible reason could be that the desirability of consumers will be influenced by their previous consumptions, and hence the price does not adjust instantaneously. We assume that $P_t$ is a bounded function which is continuous in $(Q_1, \ldots,Q_t)$ and measurable in $s_t$. In period~$t$, the shock $s_t$ is selected from the set $S_t = [a_t,b_t]$. We denote firm~$i$'s output in period $t$ by $q_{ti}$ so that $Q_t = \sum_{i = 1}^n q_{ti}$. The cost of firm~$i$ in period~$t$ is $c_{ti}(q_{ti}, s_t)$ given the output $q_{ti}$ and the shock $s_t$, where $c_{ti}$ is a bounded function continuous in $q_{ti}$ and measurable in $s_t$. The discount factor of firm~$i$ is $\beta_i\in [0,1)$.

The timing of events is as follows.
\begin{enumerate}
  \item At the beginning of period~$t$, all firms learn the realization of $s_t$, which is determined by the law of motion $\kappa_t(\cdot|s_1,Q_1, \ldots, s_{t-1}, Q_{t-1})$. Suppose that $\kappa_t(\cdot|s_1,Q_1, \ldots, s_{t-1}, Q_{t-1})$ is absolutely continuous with respect to the uniform distribution on $S_t$ with density $\varphi_t(s_1,Q_1, \ldots,s_{t-1}, Q_{t-1}, s_t)$, where $\varphi_t$ is bounded, continuous in $(Q_1, \ldots, Q_{t-1})$ and measurable in $(s_1, \ldots, s_t)$.
  \item Firms then simultaneously choose the level of their output $q_t = (q_{t1}, \ldots, q_{tn})$, where $q_{ti}\in A_{ti}(s_t, Q_{t-1}) \subseteq \bR^l$ for $i =1,2, \ldots, n$. In particular, the correspondence $A_{ti}$ gives the available actions of firm~$i$, which is nonempty and compact valued, measurable in $s_t$, and continuous in $Q_{t-1}$.
  \item The strategic choices of all the firms then become common knowledge and this one-period game is repeated.
\end{enumerate}

In period~$t$, given the shock $s_t$ and the output $\{q_k\}_{1 \le k \le t}$ up to time $t$ with $q_k = (q_{k1}, \ldots, q_{kn})$, the payoff of firm~$i$ is $$u_{ti}(q_1, \ldots, q_t,s_t) = \left[ P_t(\sum_{j = 1}^nq_{1j}, \ldots, \sum_{j = 1}^n q_{tj}, s_t) - c_{ti}(q_{ti}, s_t) \right]q_{ti}.$$
Given a sequence of outputs $\{q_t\}_{t\ge 1}$ and shocks $\{s_t\}_{t\ge 1}$, firm~$i$ receives the payoff
$$u_{1i}(q_1,s_1) + \sum_{t=2}^\infty \beta_i^{t-1} u_{ti}(q_1, \ldots, q_t,s_t).$$

\begin{rmk}
Our dynamic oligopoly model has a non-stationary structure. In particular, the transitions and payoffs are history-dependent. The example captures the scenario that the price of the homogeneous product does not adjust instantaneously to the price indicated by its demand function at the given level of output. For more applications with intertemporally dependent utilities, see, for example, \cite{RH1973}, \cite{FK1987} and \cite{BM1988}. If the model is stationary and the inverse demand function only depends the current output, then the example reduces to be the dynamic oligopoly game with demand fluctuations as considered in \cite{RS1986}.
\end{rmk}

By condition~(1) above, the ARM condition is satisfied. It is also easy to see that the game is continuous at infinity. By Theorem~\ref{thm-general}, we have the following result.
\begin{coro}
The dynamic oligopoly market possesses a subgame-perfect equilibrium.
\end{coro}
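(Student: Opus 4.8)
The plan is to verify that the dynamic oligopoly market described in Section~\ref{sec-application} fits the abstract framework of Section~\ref{sec-model} and satisfies the two hypotheses of Theorem~\ref{thm-general}, namely the ARM condition and continuity at infinity. Since the corollary is stated as an immediate consequence of Theorem~\ref{thm-general}, the entire content of the proof is this translation, so I would proceed by checking each structural requirement in turn.

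First I would set up the identification of objects. Nature's state in period~$t$ is the demand shock $s_t \in S_t = [a_t, b_t]$, a compact (hence Polish) space. The active players are the $n$ firms, with firm~$i$'s action $q_{ti}$ ranging over the Polish space $\bR^l$; the feasible-action correspondence is $A_{ti}(s_t, Q_{t-1})$, which by assumption is nonempty and compact valued, measurable in $s_t$, and continuous in $Q_{t-1}$. One subtlety to address is that in the abstract model $A_{ti}$ is a correspondence of the full history $h_{t-1}$, whereas here it depends only on $(s_t, Q_{t-1})$, and moreover it depends on $s_t$, which in the abstract setup is Nature's move in stage $t$, not part of $h_{t-1}$. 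I would note that this is reconciled by the standard device of splitting each period into a Nature-stage (where $s_t$ is drawn and becomes observable) followed by a simultaneous-move stage for the firms — or, equivalently, by absorbing $s_t$ into the state coordinate observed before the firms act — so that $A_{ti}$ becomes a correspondence of the preceding history in the required sense; sectional continuity on the $X$-coordinates then follows from continuity in $Q_{t-1}$, and measurability from measurability in $s_t$. The reference measures $\lambda_t$ are the uniform distributions on $[a_t,b_t]$, which are atomless for $t \ge 1$, as required.

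Next I would check ARM. By timing assumption~(1), the law of motion $\kappa_t(\cdot \mid s_1, Q_1, \ldots, s_{t-1}, Q_{t-1})$ is absolutely continuous with respect to the uniform distribution on $S_t$ with density $\varphi_t$, which is bounded, continuous in $(Q_1,\ldots,Q_{t-1})$, and measurable in $(s_1,\ldots,s_t)$. This gives exactly part~(1) of the ARM assumption, with $f_{t0} = \kappa_t$ and Radon--Nikodym derivative $\varphi_{t0} = \varphi_t$; and part~(2) holds because $\varphi_t$ is Borel measurable, sectionally continuous in the output (i.e. $X^{t-1}$) variables, and integrably bounded by the constant function equal to its uniform bound, which is $\lambda_t$-integrable since $\lambda_t$ is a probability measure. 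I would also record that $f_{t0} = \kappa_t$ is Borel measurable from $H_{t-1}$ to $\cM(S_t)$ and sectionally continuous on $X^{t-1}$, as the weak-convergence continuity in $(Q_1,\ldots,Q_{t-1})$ follows from continuity of the density together with the dominated convergence theorem (Scheffé's lemma). The payoff functions $u_i(q,s) = u_{1i}(q_1,s_1) + \sum_{t \ge 2} \beta_i^{t-1} u_{ti}(q_1,\ldots,q_t,s_t)$ are Borel, bounded (each $u_{ti}$ is bounded, being a product of bounded terms, and $\beta_i < 1$ so the series converges uniformly), sectionally continuous on $X^\infty$ (each $u_{ti}$ is continuous in the outputs since $P_t$ and $c_{ti}$ are), and can be taken strictly positive after adding a constant, which does not affect equilibria.

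Finally I would verify continuity at infinity: since each stage payoff $u_{ti}$ is bounded by some common $M$ and firm~$i$ discounts by $\beta_i \in [0,1)$, for two histories agreeing through stage $T-1$ the payoff difference is bounded by $\sum_{t \ge T} \beta_i^T M \le 2M \beta_i^T / (1-\beta_i)$ (up to indexing), so $w^T \to 0$ as $T \to \infty$, uniformly over the finitely many firms. With all hypotheses of Theorem~\ref{thm-general} confirmed, the existence of a subgame-perfect equilibrium follows. I expect the only genuine (though still routine) obstacle to be the bookkeeping in the first step: carefully phrasing the within-period timing so that the dependence of $A_{ti}$ and $\varphi_t$ on the contemporaneous shock $s_t$ is legitimate within the abstract model's convention that action sets depend on the \emph{past} history — essentially an observation about how a simultaneous-move-with-observed-shock stage is encoded — rather than any substantive analytic difficulty.
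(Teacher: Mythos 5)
Your proposal follows exactly the paper's own route: the paper's entire proof is the observation that condition~(1) of the oligopoly model delivers the ARM condition, that discounting with bounded stage payoffs gives continuity at infinity, and that Theorem~\ref{thm-general} then applies; your verification of these points (uniform reference measures $\lambda_t$ on $[a_t,b_t]$, the bounded density $\varphi_t$ as the Radon--Nikodym derivative, the geometric bound giving $w^T \to 0$) is the same argument spelled out in more detail, and is correct.

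The one place where you are too quick is precisely the point you label routine bookkeeping, namely encoding the within-period timing. In the model of Section~\ref{sec-model}, every stage $t \ge 1$ carries a state in $S_t$ drawn from $f_{t0}(\cdot|h_{t-1})$ that must be absolutely continuous with respect to an \emph{atomless} $\lambda_t$, and strategies and action correspondences depend only on $h_{t-1}$. If you split each period into a Nature stage (drawing $s_t$) followed by a firms' stage, the firms' stage has no genuine shock, so it cannot satisfy the atomless-reference-measure requirement unless you append payoff-irrelevant noise, which amounts to adding a public randomization device and weakens the solution concept. If instead you absorb $s_t$ into the state observed before the firms act (your second suggestion), then the law of the next shock, $\kappa_{t+1}(\cdot|s_1,Q_1,\ldots,s_t,Q_t)$, depends on the \emph{current} action profile $Q_t$, which the Theorem~\ref{thm-general} framework excludes but the ARM$'$ framework of Subsection~\ref{subsec-ARM'} explicitly allows. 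So the clean rigorous citation for the corollary is really Proposition~\ref{prop-infinite arm'} (with $\hat S_t$ degenerate and $\tilde f_{t0}$ given by $\kappa_{t+1}$), or Theorem~\ref{thm-general} only after handling this dependence. The paper's own two-line proof glosses over the same issue, so your write-up is no less rigorous than the original, but the fix is not purely notational and should be stated via the ARM$'$ result rather than asserted as standard.
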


\section{Variations of the main result}\label{sec-variations}

In this section, we will consider several variations of our main result.

In Subsection~\ref{subsec-ARM'}, we still consider dynamic games whose parameters are continuous in actions and measurable in states. We  partially relax the ARM condition in two ways. First, we allow the possibility that there is only one active player (but no Nature) at some stages, where the ARM type condition is dropped. Second, we introduce an additional weakly continuous component on the state transitions at any other stages. In addition, we allow the state transition in each period to depend on the current actions as well as on the previous history. Thus, we combine the models for dynamic games with perfect and almost perfect information. We show the existence of a subgame perfect equilibrium such that whenever there is only one active player at some stage, the player can play pure strategy as part of the equilibrium strategies. As a byproduct, we obtain a new existence result for stochastic games. The existence of pure-strategy subgame-perfect equilibria for dynamic games with perfect information (with or without Nature) is provided as an immediate corollary.

In Subsection~\ref{subsec-continuous}, we consider the special case of continuous dynamic games in the sense that all the model parameters are continuous in both action and state variables. We can obtain the corresponding results under a slightly weaker condition. All the previous existence results for continuous dynamic games with perfect and almost perfect information are covered as our special cases.

We will follow the setting and notations in Section~\ref{sec-model} as closely as possible. For simplicity, we only describe the changes we need to make on the model. All the proofs are left in the appendix.

\subsection{Dynamic games with partially perfect information and a generalized ARM condition}\label{subsec-ARM'}

In this subsection, we will generalize the model in Section \ref{sec-model} in three directions. The ARM condition is partially relaxed such that (1) perfect information may be allowed in some stages, and (2) the state transitions have a weakly continuous component in all other stages. In addition, the state transition in any period can depend on the action profile in the current stage as well as on the previous history. The fist change allows us to combine the models of dynamic games with perfect and almost perfect information. The second generalization implies that the state transitions need not be norm continuous in the Banach space of finite measures. The last modification covers the model of stochastic games as a special case.

The changes are described below.
\begin{enumerate}
\item The state space is a product space of two Polish spaces; that is, $S_t = \hat{S}_t \times \tilde{S}_t$ for each $t\ge 1$.

\item For each $i \in I$, the action correspondence $A_{ti}$ from $H_{t-1}$ to $X_{ti}$ is measurable, nonempty and compact valued, and sectionally continuous on $X^{t-1} \times \hat{S}^{t-1}$.  The additional component of Nature is given by a measurable, nonempty and closed valued correspondence $\hat{A}_{t0}$ from $\mbox{Gr}(A_t)$ to $\hat{S}^{t}$, which is sectionally continuous on $X^{t} \times \hat{S}^{t-1}$. Then $H_t = \mbox{Gr}(\hat{A}_{t0}) \times \tilde{S}_t$, and $H_{\infty}$ is the subset of $X^\infty \times S^\infty$ such that $(x,s) \in H_\infty$ if $(x^t,s^t) \in H_{t}$ for any $t\ge 0$.

 \item The choices of Nature depend not only on the history $h_{t-1}$, but also on the action profile $x_t$ in the current stage. The state transition $f_{t0}(h_{t-1},x_t) = \hat{f}_{t0}(h_{t-1},x_t)\diamond \tilde{f}_{t0}(h_{t-1},x_t)$, where $\hat{f}_{t0}$ is a transition probability from $\mbox{Gr}(A_t)$ to $\cM(\hat{S}_t)$ such that $\hat{f}_{t0}(\hat{A}_{t0} (h_{t-1},x_t) | h_{t-1},x_t) = 1$ for all $(h_{t-1},x_t) \in \mbox{Gr}(A_t)$, and $\tilde{f}_{t0}$ is a transition probability from $\mbox{Gr}(\hat{A}_{t0})$ to $\cM(\tilde{S}_t)$.

  \item For each $i \in I$, the payoff function $u_i$ is a Borel measurable mapping from $H_\infty$ to $\bR_{++}$ which is bounded by $\gamma > 0$, and sectionally continuous on $X^{\infty} \times \hat{S}^\infty$.

\end{enumerate}

We allow the possibility for the players to have perfect information in some stages. For $t\ge 1$, let
        $$N_t =
        \begin{cases}
        1,  & \mbox{if } f_{t0}(h_{t-1}, x_t) \equiv \delta_{s_t} \mbox{ for some } s_t \mbox{ and } \\
        \quad & \left| \{i \in I \colon A_{ti} \mbox{ is not point valued} \} \right| =1; \\
        0, & \mbox{otherwise},
        \end{cases} $$
        where $\left| K \right|$ represents the number of points in the set $K$. Thus, if $N_t = 1$ for some stage~$t$, then the player who is active in the period~$t$ is the only active player and has perfect information.

We will drop the ARM condition in those periods with only one active player, and weaken the ARM condition in other periods.
\begin{assum}[ARM$^{\prime}$]
\begin{enumerate}
  \item For any $t\ge 1$ with $N_t = 1$, $S_t$ is a singleton set $\{\acute{s}_t\}$ and $\lambda_t = \delta_{\acute{s}_t}$.
  \item For each $t\ge 1$ with $N_t = 0$, $\hat{f}_{t0}$ is sectionally continuous on $X^t \times \hat{S}^{t-1}$. The probability measure $\tilde{f}_{t0}(\cdot|h_{t-1}, x_t, \hat{s}_t)$ is absolutely continuous with respect to an atomless Borel probability measure $\lambda_t$ on $\tilde{S}_t$ for all $(h_{t-1},x_t, \hat{s}_t) \in \mbox{Gr}(\hat{A}_{t0})$, and $\varphi_{t0}(h_{t-1}, x_t, \hat{s}_t, \tilde{s}_t)$ is the corresponding density.\footnote{In this subsection, a property is said to hold for $\lambda^t$-almost all $h_t \in H_t$ if it is satisfied for $\lambda^t$-almost all $\tilde{s}^t \in \tilde{S}^t$ and all $(x^t, \hat{s}^t) \in H_t(\tilde{s}^t)$.}
  \item The mapping $\varphi_{t0}$ is Borel measurable and sectionally continuous on $X^{t}\times \hat{S}^t$, and integrably bounded in the sense that there is a $\lambda_t$-integrable function $\phi_t \colon \tilde{S}_t \to \bR_+$ such that $\varphi_{t0}(h_{t-1}, x_t, \hat{s}_t, \tilde{s}_t, ) \le \phi_t(\tilde{s}_t)$ for any $(h_{t-1}, x_t,\hat{s}_t)$.
      \end{enumerate}
\end{assum}

The following proposition shows that the existence result is still true in this more general setting.

\begin{prop}\label{prop-infinite arm'}
If an infinite-horizon dynamic game satisfies the ARM$^\prime$ condition and is continuous at infinity, then it possesses a subgame-perfect equilibrium $f$. In particular, for $j \in I$ and $t \ge 1$ such that $N_t = 1$ and player~$j$ is the only active player in this period, $f_{tj}$ can be deterministic. Furthermore, the equilibrium payoff correspondence $E_t$ is nonempty and compact valued, and essentially sectionally upper hemicontinuous on $X^{t-1} \times \hat{S}^{t-1}$.
\end{prop}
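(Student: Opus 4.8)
The plan is to adapt the three-step argument behind Theorem~\ref{thm-general} — backward induction, measurable (forward) selection, and approximation of the infinite horizon by finite horizons — with the modifications forced by the weakly continuous state component $\hat S_t$, the dependence of the transition $f_{t0}$ on the current action profile $x_t$, and the degenerate stages with $N_t=1$.

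\emph{Step 0 (reduction to finite horizon).} For each $T\ge 1$ I would freeze the game after stage $T$ (equivalently, make all $A_{ti}$ point valued for $t>T$) to obtain a finite-horizon game $G^T$ still satisfying ARM$^\prime$. Continuity at infinity bounds, uniformly over histories and over strategy profiles, the gap between the $G^T$-payoff and the original payoff by $w^T\to 0$; hence an SPE $f^T$ of $G^T$ is a $w^T$-SPE of the original game, and by taking cluster points of $\{f^T\}$ in a suitable topology on strategy profiles one extracts an SPE of the infinite-horizon game, while the finite-horizon payoff correspondences $E_t^T$ have an upper limit equal to the desired $E_t$. So it suffices to prove the statement for $G^T$ by backward induction and then pass to the limit — the latter requiring care with measurability since, unlike \cite{HLRR1990,Borgers1991,HRR1995}, we cannot approximate a continuous limit game by finite ones.

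\emph{Step 1 (backward induction in $G^T$).} Descending from $t=T$, suppose the continuation payoff correspondence $E_{t+1}$ from $H_t=\mbox{Gr}(\hat A_{t0})\times\tilde S_t$ to $\bR^n$ is nonempty and compact valued, Borel measurable, and essentially sectionally upper hemicontinuous on $X^t\times\hat S^t$. Fix $h_{t-1}$. If $N_t=1$ with active player $j$, the transition is deterministic and the stage ``game'' is the single-agent problem of choosing $x_{tj}\in A_{tj}(h_{t-1})$ together with a point of the (deterministic) continuation set so as to maximize coordinate $j$; compactness and upper hemicontinuity give a nonempty argmax, the measurable maximum theorem yields a jointly measurable \emph{pure} selection $f_{tj}$, and $E_t(h_{t-1})$ — the set of payoff vectors so produced — inherits compactness and essential sectional upper hemicontinuity on $X^{t-1}\times\hat S^{t-1}$; no reference measure is needed because no convexification is. If $N_t=0$, I would integrate $E_{t+1}$ against $f_{t0}(h_{t-1},x_t)=\hat f_{t0}\diamond\tilde f_{t0}$: absolute continuity of $\tilde f_{t0}$ w.r.t.\ the atomless $\lambda_t$ with sectionally continuous, integrably bounded density $\varphi_{t0}$ lets Lyapunov's theorem convexify the set of attainable expected continuation payoffs, while the density bound, the essential sectional upper hemicontinuity of $E_{t+1}$, and a Fatou-type argument make that set compact and sectionally upper hemicontinuous in $(x_t,\hat s_t)$; this is precisely a stochastic endogenous sharing rule, so the extension of \cite{SZ1990} to state-measurable payoff correspondences gives a mixed stage equilibrium together with a selection of $E_{t+1}$. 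Collecting all such outcomes yields $E_t(h_{t-1})$, whose compactness and essential sectional upper hemicontinuity on $X^{t-1}\times\hat S^{t-1}$ follow from continuity of $u$ in $(x,\hat s)$, the sectional continuity of $A_{ti}$, $\hat f_{t0}$ and $\varphi_{t0}$, and upper hemicontinuity of the stage-equilibrium correspondence.

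\emph{Step 2 (forward selection) and the main obstacle.} Step 1 produces for each $h_{t-1}$ a nonempty set of stage equilibria; these must be selected jointly measurably in $h_{t-1}$ and glued across stages. With public randomization this is routine; here I would invoke the ``measurable'' measurable choice theorem of \cite{Mertens2003} to get Borel measurable $f_{ti}$, define the induced path as in Definition~\ref{defn-path}, and verify that the one-shot-deviation property then holds at $\lambda^t$-almost every $h_t$ — the ``almost every'' being exactly where atomlessness of $\tilde f_{t0}$ and the convention on $\lambda^t$-a.e.\ histories enter — so that $f$ is an SPE of $G^T$; letting $T\to\infty$ as in Step 0 finishes the proof. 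I expect the binding difficulty to be the $N_t=0$ case of Step 1: simultaneously (i) obtaining a stage equilibrium for a game that is only measurable (not upper hemicontinuous) in the state, with an endogenous continuation correspondence that is itself only \emph{essentially} sectionally upper hemicontinuous; (ii) propagating exactly that property to $E_t$ through integration against a transition that now also depends on $x_t$ and carries a nontrivial weakly continuous part $\hat f_{t0}$; and (iii) carrying out the selection measurably via \cite{Mertens2003}. The infinite-horizon measurability bookkeeping in Step 0 is the secondary difficulty.
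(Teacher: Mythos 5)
Your Step~1 is essentially the paper's backward induction for the ARM$^\prime$ setting: for $N_t=1$ one uses the one-player, pure-strategy analogue of the Simon--Zame argument, which needs no convexification (and hence no reference measure), and for $N_t=0$ one first integrates the continuation correspondence over $\tilde S_t$ against the density $\varphi_{t0}$ (Lyapunov via atomless $\lambda_t$ gives convexity and compactness) and then against the weakly continuous component $\hat f_{t0}$ (Lemma~\ref{lem-upper}), after which the stochastic-sharing-rule result and the Mertens selection enter exactly as you describe. So the finite-horizon and forward-selection parts are on track.

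The genuine gap is your Step~0. Extracting an exact SPE ``by taking cluster points of $\{f^T\}$ in a suitable topology on strategy profiles'' is precisely the step that fails here: strategies are only measurable in the state variables, histories are uncountable, and there is no topology on strategy profiles in which (i) the sequence $\{f^T\}$ has cluster points and (ii) the $w^T$-optimality survives the limit subgame by subgame; this is why the paper explicitly notes it cannot use the finite-game approximation route of \cite{HLRR1990}, \cite{Borgers1991} and \cite{HRR1995}. What the paper does instead is to take limits of \emph{payoff-set correspondences}, not of strategies: it defines $Q^\tau_t$ by applying the backward-induction operator $\Phi$ finitely many times to the correspondence of all feasible continuation payoffs (not to equilibrium payoffs of a truncated game), proves compactness and sectional continuity of the induced path correspondences $\Delta_t$ (Lemmas~\ref{lem-strategy}--\ref{lem-measurable history}) and of $Q^\tau_t$ (Lemma~\ref{lem-continuation set}), shows $Q^\infty_t=\cap_\tau Q^\tau_t$ satisfies $Q^\infty_t=\Phi(Q^\infty_{t+1})$ (Lemma~\ref{lem-equi}), and only then constructs an SPE by applying the forward-induction Proposition~\ref{prop-forward} recursively to the limit correspondence, verifying optimality through a one-shot-deviation argument that uses continuity at infinity (Lemma~\ref{lem-equi strategy}); the identification $E_t=Q^\infty_t$ a.e.\ (Lemma~\ref{lem-equilibrium set}) is a separate two-inclusion argument. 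Your companion claim that ``the finite-horizon payoff correspondences $E^T_t$ have an upper limit equal to the desired $E_t$'' is likewise asserted rather than proved, and it is exactly where this machinery is needed; as written, the infinite-horizon part of your proposal does not go through.
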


\begin{rmk}\label{rmk-stochastic game}
The proposition above also implies a new existence result of subgame-perfect equilibria for stochastic games. Consider a standard stochastic game with uncountable states as in \cite{MP1987}. \cite{MP1987} proved the existence of a subgame-perfect equilibrium by assuming the state transitions to be norm continuous with respect to the actions in the previous stage. On the contrary, our Proposition~\ref{prop-infinite arm'} allows the state transitions to have a weakly continuous component.
\end{rmk}

Dynamic games with perfect information is a special class of dynamic games in which players move sequentially. As noted in Footnote \ref{fn-dgpi}, such games have been extensively studied and found wide applications in economics. As an immediate corollary, an equilibrium existence result for dynamic games with perfect information is given below.
\begin{coro}\label{coro-perfect general}
If a dynamic game with perfect information satisfies the ARM$^\prime$ condition and is continuous at infinity, then it possesses a pure-strategy subgame-perfect equilibrium.
\end{coro}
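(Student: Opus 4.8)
The plan is to derive this directly from Proposition~\ref{prop-infinite arm'} by exploiting the sequential-move structure of a perfect-information game. First I would set up the bookkeeping: in a dynamic game with perfect information the players in $I_0$ move one at a time, so for every stage $t\ge 1$ exactly one of two situations occurs — either there is a unique active player $j\in I$ who genuinely chooses at stage~$t$ while every other active player $i\neq j$ has a point-valued action correspondence $A_{ti}$ and Nature's transition is trivial, or Nature is the one who ``moves'' at stage~$t$ and every active player $i\in I$ has $A_{ti}$ point-valued. I would make this precise in terms of the model primitives: in the first case $S_t$ is a singleton $\{\acute s_t\}$, so $f_{t0}(h_{t-1},x_t)\equiv\delta_{\acute s_t}$, and $\bigl|\{i\in I\colon A_{ti}\text{ is not point valued}\}\bigr|=1$, whence $N_t=1$ by the definition of $N_t$; in the second case $N_t=0$ and ARM$^\prime$(2)--(3) are in force by hypothesis.

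Next I would apply Proposition~\ref{prop-infinite arm'}, which is available since the game satisfies ARM$^\prime$ and is continuous at infinity. It yields a subgame-perfect equilibrium $f=\{f_i\}_{i\in I}$ with the additional property that at every stage~$t$ with $N_t=1$ the (unique) active player's component $f_{tj}$ may be taken to be deterministic, i.e. $f_{tj}(\cdot\mid h_{t-1})=\delta_{g_{tj}(h_{t-1})}$ for some Borel measurable selection $g_{tj}$ of $A_{tj}$.

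Finally I would check that this $f$ is in fact a pure-strategy profile. By the first step, at any stage~$t$ where an active player $j$ genuinely moves we have $N_t=1$, so $f_{tj}$ is deterministic; and for every other active player $i\neq j$ at that stage, as well as for every active player at a stage where Nature moves, $A_{ti}(h_{t-1})$ is a singleton and the constraint $f_{ti}(A_{ti}(h_{t-1})\mid h_{t-1})=1$ forces $f_{ti}(\cdot\mid h_{t-1})$ to be the Dirac measure at the unique available action. Hence every $f_{ti}$ is deterministic, so $f$ is a pure-strategy subgame-perfect equilibrium, and the nonemptiness/compactness/upper hemicontinuity of $E_t$ is inherited verbatim from Proposition~\ref{prop-infinite arm'}.

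The substance of this corollary is therefore just the correct translation of ``perfect information'' into the statement $N_t=1$ at every decision node of an active player; the one place to be careful is verifying that the perfect-information hypothesis indeed forces Nature's transition to be degenerate (and $S_t$ a singleton) precisely at those stages, which is what allows the ``deterministic'' clause of Proposition~\ref{prop-infinite arm'} to cover all active players at all stages simultaneously. All the analytic work — backward induction, forward induction, and the infinite-horizon approximation — already resides inside Proposition~\ref{prop-infinite arm'}, so no further estimates are needed here.
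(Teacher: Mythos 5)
Your proposal is correct and is essentially the paper's own argument: the paper treats Corollary~\ref{coro-perfect general} as an immediate consequence of Proposition~\ref{prop-infinite arm'}, relying precisely on the translation you spell out — at stages where a single active player genuinely moves, $N_t=1$ and the proposition's deterministic clause applies, while at all other stages the point-valued action correspondences force Dirac strategies. Writing out this bookkeeping explicitly is the only content beyond the proposition, and you have done it accurately.
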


\subsection{Continuous dynamic games with partially perfect information}\label{subsec-continuous}

In this subsection, we will study an infinite-horizon dynamic game with a continuous structure.
As in the previous subsection, we allow the state transition to depend on the action profile in the current stage as well as on the previous history, and the players may have perfect information in some stages.
\begin{enumerate}
  \item For each $t \ge 1$, the choices of Nature depends not only on the history $h_{t-1}$, but also on the action profile $x_t$ in this stage. For any $t\ge 1$, suppose that $A_{t0}$ is a continuous, nonempty and closed valued correspondence from $\mbox{Gr}(A_t)$ to $S_t$. Then $H_t = \mbox{Gr}(A_{t0})$, and $H_{\infty}$ is the subset of $X^\infty \times S^\infty$ such that $(x,s) \in H_\infty$ if $(x^t,s^t) \in H_{t}$ for any $t\ge 0$.

  \item Nature's action is given by a continuous mapping $f_{t0}$ from $\mbox{Gr}(A_t)$ to $\cM(S_t)$ such that $f_{t0}(A_{t0} (h_{t-1},x_t) | h_{t-1},x_t) = 1$ for all $(h_{t-1},x_t) \in \mbox{Gr}(A_t)$.
  \item For each $t\ge 1$, let
      $$N_t =
      \begin{cases}
      1,  & \mbox{if } f_{t0}(h_{t-1}, x_t) \equiv \delta_{s_t} \mbox{ for some } s_t \mbox{ and } \\
        \quad & \left| \{i \in I \colon A_{ti} \mbox{ is not point valued} \} \right| =1; \\
        0, & \mbox{otherwise}.
      \end{cases}
      $$
\end{enumerate}

\begin{defn}
A dynamic game is said to be continuous if for each $t$ and $i$,
\begin{enumerate}
  \item the action correspondence $A_{ti}$ is continuous on $H_{t-1}$;
  \item the transition probability $f_{t0}$ is continuous on $\mbox{Gr}(A_t)$;
  \item the payoff function $u_i$ is continuous on $H_{\infty}$.
\end{enumerate}
\end{defn}

Note that the ``continuity at infinity'' condition is automatically satisfied in a continuous dynamic game.

Next, we propose the condition of ``atomless transitions'' on the state space, which means that the state transition is an atomless probability measure in any stage. This condition is slightly weaker than the ARM condition.

\begin{assum}[Atomless Transitions]
\begin{enumerate}
  \item For any $t\ge 1$ with $N_t = 1$, $S_t$ is a singleton set $\{\acute{s}_t\}$.
  \item For each $t \ge 1$ with $N_t = 0$, $f_{t0}(h_{t-1})$ is an atomless Borel probability measure for each $h_{t-1} \in H_{t-1}$.
\end{enumerate}
\end{assum}

Since we work with continuous dynamic games, we can adopt a slightly stronger notion of subgame-perfect equilibrium. That is, each  player's strategy is optimal in every subgame given the strategies of all other players.

\begin{defn}[SPE$'$]\label{defn-SPE'}
A subgame-perfect equilibrium is a strategy profile $f$ such that for all $i\in I$, $t \ge 0$,  and all $h_{t} \in H_{t}$, player~$i$ cannot improve his payoff in the subgame $h_t$ by a unilateral change in his strategy.
\end{defn}

The result on the equilibrium existence is presented below.
\begin{prop}\label{prop-continuous}
If a continuous dynamic game has atomless transitions, then it possesses a subgame-perfect equilibrium $f$. In particular, for $j \in I$ and $t \ge 1$ such that $N_t = 1$ and player~$j$ is the only active player in this period, $f_{tj}$ can be deterministic. In addition, $E_t$ is nonempty and compact valued, and upper hemicontinuous on $H_{t-1}$ for any $t\ge 1$.
\end{prop}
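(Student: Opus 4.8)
The plan is to follow the same three-step program (backward induction, forward induction, infinite-horizon approximation) used for Theorem~\ref{thm-general} / Proposition~\ref{prop-infinite arm'}, but to exploit continuity to simplify each step. In fact, the cleanest route is to \emph{deduce} Proposition~\ref{prop-continuous} from Proposition~\ref{prop-infinite arm'} wherever possible, and to upgrade the conclusions using continuity. First I would observe that a continuous dynamic game with atomless transitions is (essentially) a special case of the ARM$^\prime$ setting: take $\hat S_t$ trivial and $\tilde S_t = S_t$, and note that an atomless Borel probability measure $f_{t0}(h_{t-1},x_t)$ on a Polish space always admits \emph{some} atomless reference measure (e.g.\ $\lambda_t$ obtained from a Borel isomorphism $S_t \cong [0,1]$ pushing forward the relevant transitions), but the subtlety is that the ARM$^\prime$ condition demands a \emph{common} reference measure $\lambda_t$ with an integrably bounded family of densities, which need not exist here. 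So the reduction is not literal; instead I would re-run the induction, replacing at the crucial ``atomless reference measure'' step the appeal to the generalized Dvoretzky–Wald–Wolfowitz / purification argument by the classical continuous-game fixed-point argument (Glicksberg) together with the fact that atomlessness of $f_{t0}$ alone suffices to convexify the relevant payoff set at the stages where Nature randomizes.

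The key steps, in order, are as follows. (1) \emph{Backward induction / one-shot game.} Given a nonempty, compact-valued, upper hemicontinuous equilibrium-continuation-payoff correspondence $E_{t+1}$ on $H_t$, form the stage-$t$ game in which active players choose mixed actions in $\cM(A_{ti}(h_{t-1}))$, Nature draws $s_t \sim f_{t0}(h_{t-1},x_t)$, and continuation payoffs are selected from $E_{t+1}$. Using continuity of $A_{ti}$, $f_{t0}$, $u_i$ and upper hemicontinuity of $E_{t+1}$, together with the Stinchcombe–Zhou-type existence result for games with endogenous sharing rules cited in the introduction (here in its upper-hemicontinuous, not merely measurable, form), one obtains a Nash equilibrium of the stage game for each $h_{t-1}$, and shows the resulting $E_t$ is nonempty, compact-valued and upper hemicontinuous on $H_{t-1}$; the upper hemicontinuity follows from Berge's maximum theorem applied to the correspondence of stage-game equilibria, which is closed by the usual closed-graph argument for Nash equilibria in continuous games. (2) \emph{The perfect-information stages.} When $N_t=1$, the unique active player $j$ faces a single-agent continuous optimization over the compact set $A_{tj}(h_{t-1})$ with upper hemicontinuous payoff correspondence $E_{t+1}$; a measurable maximizing selection exists and can be taken pure — this is where I record that $f_{tj}$ is deterministic, and where atomlessness is \emph{not} needed. (3) \emph{Forward induction / joint measurability.} Assemble the stage-game equilibria into measurable strategies $f_{ti}\colon H_{t-1}\to \cM(X_{ti})$. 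Because everything is continuous in $(h_{t-1})$ this is substantially easier than in Theorem~\ref{thm-general}: a measurable selection from the (closed-graph) stage-equilibrium correspondence exists by Kuratowski–Ryll-Nardzewski, and no appeal to the Mertens ``measurable measurable choice'' theorem is required. (4) \emph{Infinite horizon.} Truncate at horizon $T$, obtain finite-horizon SPE $f^{(T)}$, and pass to a limit. Continuity at infinity is automatic; combined with compactness of the path spaces $\cM(H_\infty)$ and upper hemicontinuity of each $E_t$, a diagonal/subsequence argument produces a strategy profile that is a subgame-perfect equilibrium in the stronger SPE$'$ sense (optimality at \emph{every} $h_t$, not just almost every), since no reference measure and hence no ``almost all histories'' qualifier enters the construction.

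The main obstacle I expect is \textbf{Step~(4), the infinite-horizon limit carried out for every subgame simultaneously}. Unlike the continuous-state/continuous-parameter papers (HLRR, Börgers, HRR) one cannot simply invoke convergence of a sequence of finite \emph{games} to a limit game; one must instead show that the finite-horizon equilibrium continuation-payoff correspondences $E_t^{(T)}$ converge (in the Painlevé–Kuratowski / Hausdorff sense on compacts) to a limit correspondence $E_t$ that is still nonempty, compact-valued and upper hemicontinuous, and then build a single strategy profile consistent with $E_t$ in every subgame. Getting the upper hemicontinuity of the limit $E_t$ — i.e.\ that the ``limsup'' of equilibrium payoff sets is contained in an equilibrium payoff set of the infinite game — is the delicate point, and it is exactly here that continuity at infinity is used quantitatively (to bound $|u_i - u_i(\text{truncation})|$ by $w^T \to 0$) and where the atomless-transitions hypothesis re-enters to guarantee that the limiting stage games still have equilibria with the required convexified continuation payoffs. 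A secondary, more routine obstacle is verifying the closed-graph / upper hemicontinuity claims for the Nash-equilibrium correspondence of the stage games uniformly in $h_{t-1}$; this is standard but must be done with care because the action \emph{sets} $A_{ti}(h_{t-1})$ themselves vary continuously with $h_{t-1}$, so one works with the correspondence $h_{t-1}\mapsto \{(\sigma,v): \sigma \text{ is a stage-}t\text{ Nash equilibrium given continuations from }E_{t+1},\ v \text{ its payoff}\}$ and applies Berge's theorem to the composite.
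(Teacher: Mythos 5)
Your overall skeleton matches the paper's: for $N_t=0$ stages, atomlessness of $f_{t0}$ convexifies the expected continuation correspondence via $P_t(h_{t-1},x_t)=\int \mbox{co}\,Q_{t+1}\,\rmd f_{t0}$ (Lemma~\ref{lem-lyapunov}), and then the endogenous-sharing-rule result in its upper hemicontinuous form (Lemma~\ref{lem-SZ}; this is Simon--Zame, not Stinchcombe--Zhou, and it replaces any Glicksberg-type fixed point, since the stage game has a payoff \emph{correspondence}) yields a nonempty, compact valued, upper hemicontinuous $\Phi_t$; for $N_t=1$ stages a pure maximizer is selected, exactly as in the paper. However, there are two genuine gaps. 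First, your claim that the Mertens ``measurable measurable choice'' theorem can be dispensed with and that Kuratowski--Ryll-Nardzewski suffices is wrong. KRN only gives a measurable selection of the stage-equilibrium correspondence $\Psi_t$. To propagate the construction to period $t+1$ you must, in addition, produce a \emph{jointly} measurable selection $q_{t+1}$ of $Q_{t+1}$ satisfying the integral identity $g_t(h_{t-1},x_t)=\int_{S_t} q_{t+1}(h_{t-1},x_t,s_t)\,f_{t0}(\rmd s_t|h_{t-1},x_t)$ for every $(h_{t-1},x_t)$, i.e.\ the continuation values used to certify the period-$t$ Nash property must be realized by measurable continuation value functions that feed the period-$(t+1)$ selection. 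Since $Q_{t+1}$ is only upper hemicontinuous, its selections are merely measurable, and this parametrized decomposition is precisely Lemma~\ref{lem-mertens} (Step~3 of Proposition~\ref{prop-forward}); the paper explicitly keeps that step unchanged in the continuous case. Without it, the profile you ``assemble'' is not shown to attain the payoffs prescribed by $E_t$, so subgame perfection does not follow.

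Second, your infinite-horizon step is not the paper's and is left unresolved at exactly the point you flag. You propose truncating at horizon $T$, taking finite-horizon equilibrium correspondences $E_t^{(T)}$, and passing to a Painlev\'e--Kuratowski limit, then ``building a strategy profile consistent with $E_t$''; showing that the limsup of truncated-game equilibrium payoffs is itself supported by equilibria of the infinite game is the hard part, and you give no argument for it (this is the reason the paper avoids approximation by finite games altogether). The paper instead defines, for each $\tau$, the correspondence $Q_t^{\tau}$ whose values for $t>\tau$ are \emph{all feasible} continuation payoffs (generated by arbitrary strategy profiles through $\Delta_t$, shown compact valued and continuous via Lemmas~\ref{lem-measurable composition} and \ref{lem-measurable history}), applies the backward operator $\Phi$ for $t\le\tau$, sets $Q_t^{\infty}=\cap_\tau Q_t^{\tau}$, proves $Q_t^{\infty}=\Phi(Q_{t+1}^{\infty})=E_t$, and then obtains the equilibrium by recursive forward induction together with the one-shot deviation principle and continuity at infinity. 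That construction both delivers nonemptiness, compactness and upper hemicontinuity of $E_t$ on all of $H_{t-1}$ (hence the SPE$'$ conclusion) and bypasses the convergence-of-truncated-equilibria problem your route leaves open. As written, your proposal therefore needs the Mertens selection step restored and a concrete argument (or the paper's $Q_t^{\tau}$ device) to close the infinite-horizon limit.
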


\begin{rmk}
Proposition~\ref{prop-continuous} goes beyond the main result of \cite{HRR1995}. They proved the existence of a subgame-perfect correlated equilibrium in a continuous dynamic game with almost perfect information by introducing a public randomization device, which does not influence the payoffs, transitions or action correspondences. It is easy to see that their model automatically satisfies the condition of atomless transitions. The state in our model is completely endogenous in the sense that it affects all the model parameters such as payoffs, transitions, and action correspondences.
\end{rmk}

\begin{rmk}\label{rmk-stochastic cont}
Proposition~\ref{prop-continuous} above provides a new existence result for continuous stochastic games.  As remarked in the previous subsection, the existence of subgame-perfect equilibria has been proved for general stochastic games with a stronger continuity assumption on the state transitions, namely the norm continuity. On the contrary, we only need to require the state transitions to be weakly continuous.
\end{rmk}

\begin{rmk}
The condition of atomless transitions is minimal. In particular, the counterexample provided by \cite{LM2003}, which is a continuous dynamic game with perfect information and Nature, does not have any subgame-perfect equilibrium. In their example, Nature is active in the third period, but the state transitions could have atoms. Thus, our condition of atomless transitions is violated.
\end{rmk}

The next corollary follows from Proposition \ref{prop-continuous}, which presents the existence result for continuous dynamic games with perfect information (and Nature).

\begin{coro}\label{coro-perfect cont}
If a continuous dynamic game with perfect information has atomless transitions, then it possesses a pure-strategy subgame-perfect equilibrium.
\end{coro}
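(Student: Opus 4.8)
The plan is to read Corollary~\ref{coro-perfect cont} off Proposition~\ref{prop-continuous}, using the combinatorial structure of perfect information to force the hypothesis ``$N_t=1$'' at every stage that carries a genuine choice. A continuous dynamic game with perfect information is in particular a continuous dynamic game in the sense of Subsection~\ref{subsec-continuous}, so if it has atomless transitions then Proposition~\ref{prop-continuous} already supplies a subgame-perfect equilibrium $f$ in the sense of Definition~\ref{defn-SPE'}, together with the extra fact that at every stage $t$ with $N_t=1$ the unique genuinely active player there may be taken to play a deterministic strategy, and that each $E_t$ is nonempty, compact valued, and upper hemicontinuous. What remains is to check that, in the perfect-information case, this $f$ can be taken to be a \emph{pure}-strategy profile.

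First I would record the defining feature of perfect information: at each stage $t\ge 1$ at most one of the players in $I_0=\{0,1,\dots,n\}$ moves non-trivially. This yields a dichotomy. Either (i) some active player $j\in I$ is the (sole) mover at stage $t$, in which case Nature is degenerate at that stage, i.e.\ $f_{t0}(h_{t-1},x_t)\equiv\delta_{s_t}$, and $A_{ti}$ is point valued for every $i\in I\setminus\{j\}$, so exactly one active action correspondence is non-point-valued and therefore $N_t=1$ by the definition in Subsection~\ref{subsec-continuous}; or (ii) no active player moves non-trivially at stage $t$ (the mover is Nature, or the stage is degenerate for the active players), so that $A_{ti}$ is point valued for \emph{every} $i\in I$.

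In case (i) Proposition~\ref{prop-continuous} permits choosing $f_{tj}$ deterministic, while for each $i\neq j$ the requirement $f_{ti}(A_{ti}(h_{t-1})\mid h_{t-1})=1$ together with $A_{ti}(h_{t-1})$ being a single point forces $f_{ti}(\cdot\mid h_{t-1})$ to be the Dirac measure there; this is automatically deterministic, and Borel measurable since a measurable point-valued correspondence is a Borel map. In case (ii) the same reasoning forces each $f_{ti}$, $i\in I$, to be Dirac. Hence every component of $f$ is deterministic, i.e.\ $f$ is a pure-strategy subgame-perfect equilibrium. I expect no real obstacle beyond invoking Proposition~\ref{prop-continuous}; the only point that genuinely has to be nailed down is that ``perfect information'' does force the dichotomy (i)/(ii)---equivalently, that an active player and a non-degenerate Nature never move in the same stage---since it is exactly this that makes $N_t=1$ hold at every stage with a non-trivial active choice and so lets the deterministic-play clause of Proposition~\ref{prop-continuous} cover the whole profile.
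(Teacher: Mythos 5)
Your proposal is correct and matches the paper's own treatment: the paper derives Corollary~\ref{coro-perfect cont} directly from Proposition~\ref{prop-continuous}, exactly as you do, by noting that perfect information forces $N_t=1$ at every stage with a nontrivial active choice (Nature being degenerate there) so the deterministic-play clause applies, while at all other stages the point-valued action correspondences make the active players' strategies trivially Dirac. Your write-up simply makes explicit the dichotomy the paper leaves implicit.
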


\begin{rmk}
\cite{Harris1985}, \cite{HL1987}, \cite{Borgers1989} and \cite{HLRR1990} proved the existence of subgame-perfect equilibria in continuous dynamic games with perfect information. In particular, Nature is absent in all those papers. \cite{LM2003} provided an example of a five-stage continuous dynamic game with perfect information, in which Nature is present and no subgame-perfect equilibrium exists. The only known general existence result, to the best of our knowledge, for (finite or infinite horizon) continuous dynamic games with perfect information and Nature is the existence of subgame-perfect correlated equilibria via public randomization as in \cite{HRR1995}. Corollary~\ref{coro-perfect cont} covers all those existence results as special cases.
\end{rmk}

\section{Appendix}\label{sec-appendix}

\subsection{Technical preparations}\label{subsec-math}

In this subsection, we present several lemmas as the mathematical preparations for proving Theorem \ref{thm-general}. Since correspondences will be used extensively in the proofs, we collect, for the convenience of the reader, several known results as lemmas.

Let $(S, \cS)$ be a measurable space and $X$ a topological space with its Borel $\sigma$-algebra $\cB (X)$. A correspondence $\Psi$ from $S$ to $X$ is a function from $S$ to the space of all subsets of $X$. The upper inverse $\Psi^u$ of a subset $A \subseteq X$ is
$$\Psi^u(A) = \{s\in S \colon \Psi(s) \subseteq A \}.
$$
The lower inverse $\Psi^l$ of a subset $A \subseteq X$ is
$$\Psi^l(A) = \{s\in S \colon \Psi(s) \cap A \neq \emptyset\}.
$$
The correspondence $\Psi$ is
\begin{enumerate}
  \item weakly measurable, if $\Psi^l(O) \in \cS$ for each open subset $O \subseteq X$;
  \item measurable, if $\Psi^l(K) \in \cS$ for each closed subset $K \subseteq X$.
\end{enumerate}
The graph of $\Psi$ is denoted by $\mbox{Gr}(\Psi) = \{(s,x)\in S \times X \colon s\in S,  x\in \Psi(s)\}$. The correspondence $\Psi$ is said to have a measurable graph if $\mbox{Gr}(\Psi) \in \cS\otimes \cB(X)$.

If $S$ is a topological space, then $\Psi$ is
\begin{enumerate}
  \item upper hemicontinuous, if $\Psi^u(O)$ is open for each open subset $O \subseteq X$;
  \item lower hemicontinuous, if $\Psi^l(O)$ is open for each open subset $O \subseteq X$.
\end{enumerate}


\begin{lem}\label{lem-polish compact}
Suppose that $X$ is a Polish space and $\cK$ is the set of all nonempty compact subsets of $X$ endowed with the Hausdorff metric topology. Then $\cK$ is a Polish space.
\end{lem}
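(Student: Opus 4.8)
The plan is to exhibit an explicit complete separable metric on $\cK$. First I would fix a complete metric $d$ on $X$ compatible with its topology (available since $X$ is Polish), and without loss of generality assume $d \le 1$ by replacing $d$ with $d/(1+d)$. On $\cK$ define the Hausdorff distance
$$d_H(A,B) = \max\Bigl\{\,\sup_{a \in A} d(a,B),\ \sup_{b \in B} d(b,A)\,\Bigr\},$$
where $d(x,C) = \inf_{c \in C} d(x,c)$. Compactness of the members of $\cK$ guarantees that these suprema are finite and attained, the triangle inequality is a short computation, and the fact that compact sets are closed shows $d_H(A,B)=0$ forces $A=B$; thus $d_H$ is a genuine metric inducing the topology in the statement. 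It then remains to prove that $(\cK,d_H)$ is complete and separable.

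Completeness is the crux. Given a $d_H$-Cauchy sequence $(A_n)\subseteq\cK$, define the candidate limit
$$A = \bigl\{\, x \in X \colon x = \lim_{k} x_k \text{ for some strictly increasing } (n_k) \text{ with } x_k \in A_{n_k} \,\bigr\}.$$
I would then verify, in order: (i) $A$ is closed, by a diagonal argument; (ii) $A$ is totally bounded --- given $\varepsilon>0$ choose $N$ with $d_H(A_m,A_n)<\varepsilon$ for all $m,n \ge N$, take a finite $\varepsilon$-net of the compact set $A_N$, and check it is a $2\varepsilon$-net for $A$ --- so by (i) and completeness of $X$ the set $A$ is compact, i.e.\ $A\in\cK$; (iii) $A \ne \emptyset$ --- passing to a subsequence with $d_H(A_{n_k},A_{n_{k+1}}) < 2^{-k}$, build $x_k \in A_{n_k}$ with $d(x_k,x_{k+1}) < 2^{-k}$ using $d(x_k, A_{n_{k+1}}) \le d_H(A_{n_k},A_{n_{k+1}})$; this sequence is $d$-Cauchy, hence converges in $X$ to a point of $A$; (iv) $d_H(A_n,A) \to 0$ --- for the direction ``$A$ lies nearly inside $A_n$'' use that each $a \in A$ is a limit of points $x_k \in A_{n_k}$ with $n_k$ eventually $\ge N$, and for ``$A_n$ lies nearly inside $A$'' take $x \in A_n$ with $n$ large and run the telescoping selection of (iii) starting from $x$ to produce a point of $A$ within $\varepsilon$ of $x$. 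This establishes completeness.

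For separability, let $D \subseteq X$ be countable and dense and let $\mathcal{F}$ be the (countable) family of all nonempty finite subsets of $D$. Given $A \in \cK$ and $\varepsilon > 0$, total boundedness of $A$ yields a finite $\{a_1,\dots,a_k\} \subseteq A$ that is an $\varepsilon$-net of $A$; choosing $b_i \in D$ with $d(a_i,b_i)<\varepsilon$ and setting $B=\{b_1,\dots,b_k\} \in \mathcal{F}$ gives $d_H(A,B) < 2\varepsilon$. Hence $\mathcal{F}$ is dense in $(\cK,d_H)$, and together with completeness this shows $\cK$ is a complete separable metric space, i.e.\ Polish.

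The main obstacle is the completeness argument, specifically steps (ii)--(iv): correctly guessing the limit set $A$ and then showing \emph{simultaneously} that it is compact and that $A_n\to A$ in Hausdorff distance. All the genuine work is the $\varepsilon$-bookkeeping there --- transferring a finite net from $A_N$ to $A$, and the telescoping construction of convergent selections $x_k\in A_{n_k}$ --- whereas verifying the metric axioms and separability is routine.
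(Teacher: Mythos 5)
Your proof is correct. The only difference from the paper is one of packaging: the paper's entire proof is a citation to Aliprantis--Border (Theorem~3.88(2) for completeness of $\cK$ under the Hausdorff metric, Corollary~3.90 and Theorem~3.91 for separability), whereas you reprove exactly those hyperspace facts from scratch. Your completeness argument is the classical one --- candidate limit set of subsequential limits of selections, closedness by a diagonal argument, total boundedness by transporting a finite $\varepsilon$-net from $A_N$, nonemptiness and the estimate $d_H(A_n,A)\to 0$ via the telescoping choice of $x_k\in A_{n_k}$ with $d(x_k,x_{k+1})<2^{-k}$ --- and the $\varepsilon$-bookkeeping you sketch does close (the ``$2\varepsilon$-net'' claim works because $d(x,x_k)$ can be taken arbitrarily small). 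Separability via finite subsets of a countable dense set is likewise the standard argument. Two points you gloss over are harmless but worth a sentence if written out in full: replacing $d$ by $d/(1+d)$ preserves both the topology and completeness, and the Hausdorff metric topology on \emph{compact} sets does not depend on the choice of compatible metric (it coincides with the Vietoris topology), so your metric does induce ``the'' topology in the statement. In exchange for length, your route makes the lemma self-contained; the paper's route simply outsources the same two facts to a textbook.
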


\begin{proof}
By Theorem~3.88 (2) of \cite{AB2006}, $\cK$ is complete. In addition, Corollary~3.90 and Theorem~3.91 of \cite{AB2006} imply that $\cK$ is separable. Thus, $\cK$ is a Polish space.
\end{proof}

\begin{lem}\label{lem-measurable correspondence}
Let $(S, \cS)$ be a measurable space, $X$ a Polish space endowed with the Borel $\sigma$-algebra $\cB(X)$, and $\cK$ the space of nonempty compact subsets of $X$ endowed with its Hausdorff metric topology. Suppose that $\Psi\colon S\to X$ is a nonempty and closed valued correspondence.
\begin{enumerate}
  \item If $\Psi$ is weakly measurable, then it has a measurable graph.
  \item If $\Psi$ is compact valued, then the following statements are equivalent.
      \begin{enumerate}
          \item The correspondence $\Psi$ is weakly measurable.
          \item The correspondence $\Psi$ is measurable.
          \item The function $f\colon S\to \cK$, defined by $f(s) = \Psi(s)$, is Borel measurable.
      \end{enumerate}
  \item Suppose that $S$ is a topological space.
  If $\Psi$ is compact valued, then the function $f\colon S\to \cK$ defined by $f(s) = \Psi(s)$ is continuous if and only if the correspondence $\Psi$ is continuous.
  \item Suppose that $(S, \cS, \lambda)$ is a complete probability space. Then $\Psi$ is $\cS$-measurable if and only if it has a measurable graph.
  \item For a correspondence $\Psi\colon S\to X$ between two Polish spaces, the following statements are equivalent.
      \begin{enumerate}
        \item The correspondence $\Psi$ is lower hemicontinuous at a point $s\in S$.
        \item If $s_n \to s$, then for each $x \in \Psi(s)$, there exist a subsequence $\{s_{n_k}\}$ of $\{s_n\}$ and elements $x_k \in \Psi(s_{n_k})$ for each $k$ such that $x_k \to x$.
      \end{enumerate}
  \item For a correspondence $\Psi\colon S\to X$ between two Polish spaces, the following statements are equivalent.
      \begin{enumerate}
        \item The correspondence $\Psi$ is upper hemicontinuous at a point $s\in S$ and $\Psi(s)$ is compact.
        \item If a sequence $(s_n,x_n)$ in the graph of $\Psi$ satisfies $s_n \to s$, then the sequence $\{x_n\}$ has a limit in $\Psi(s)$.
      \end{enumerate}
  \item Given correspondences $F\colon X\to Y$ and $G\colon Y\to Z$, the composition $F$ and $G$ is defined by
      $$G(F(x)) = \cup_{y\in F(x)} G(y).
      $$
        The composition of upper hemicontinuous correspondences is upper hemicontinuous. The composition of lower hemicontinuous correspondences is lower hemicontinuous.
\end{enumerate}
\end{lem}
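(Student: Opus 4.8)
The plan is to recognize that Lemma~\ref{lem-measurable correspondence} is a compendium of standard facts about measurable and (hemi)continuous correspondences, and to deduce each of its seven items from the corresponding result in a standard reference, for which I would use \cite{AB2006} (the same source already invoked in Lemma~\ref{lem-polish compact}). The two structural features that make everything work are that $X$ is Polish throughout, hence separable and metrizable, and that the hyperspace $\cK$ of nonempty compact subsets under the Hausdorff metric is again Polish by Lemma~\ref{lem-polish compact}; together these let one pass freely between the ``setwise'' notions (weak measurability, measurability, hemicontinuity) and the ``pointwise'' notions (Borel measurability or continuity of the induced map $s \mapsto \Psi(s)$ valued in $\cK$), and between topological statements and statements about sequences.

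For part~(1), using a countable base $\{O_n\}$ of $X$ together with the closedness of the values, one expresses the complement of $\mbox{Gr}(\Psi)$ as a countable union built from the lower inverses $\Psi^l(O_n)$, which are measurable by weak measurability; this is exactly Theorem~18.6 of \cite{AB2006}. For part~(2), the equivalence of weak measurability and measurability for a closed-valued correspondence into a metrizable space, and their equivalence with Borel measurability of $s \mapsto \Psi(s) \in \cK$ in the compact-valued case, are Theorems~18.10 and~18.20 of \cite{AB2006}. Part~(3) is the continuity analogue: for a compact-valued correspondence, continuity of $\Psi$ is equivalent to continuity of the induced map into $\cK$ with the Hausdorff metric topology, which is Theorem~17.15 of \cite{AB2006}.

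For part~(4), one direction is just part~(1); for the converse I would invoke the measurable projection theorem, valid because $(S,\cS,\lambda)$ is a complete probability space: if $\mbox{Gr}(\Psi) \in \cS \otimes \cB(X)$ then for each closed $K \subseteq X$ the lower inverse $\Psi^l(K)$ is the projection onto $S$ of $\mbox{Gr}(\Psi) \cap (S \times K)$, and this projection lies in $\cS$ by Theorem~18.25 of \cite{AB2006}. Parts~(5) and~(6) are the sequential characterizations of lower and of upper hemicontinuity for correspondences between metrizable spaces (with compact value at the point for the upper case), which are Theorems~17.21 and~17.20 of \cite{AB2006}. Finally, part~(7), that the composition of two upper (respectively lower) hemicontinuous correspondences is again upper (respectively lower) hemicontinuous, is Theorem~17.23 of \cite{AB2006}.

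There is no substantive obstacle: the proof is pure bookkeeping, matching each clause to the appropriate theorem. The only points that warrant care are (i) in part~(4), completeness of the measure space is genuinely needed, since the projection of a $\cS \otimes \cB(X)$-measurable set is in general only universally measurable rather than Borel; and (ii) in parts~(5) and~(6), the reduction to sequences relies on first countability, which is available precisely because the Polish spaces are metrizable.
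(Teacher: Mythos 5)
Your proposal is correct and, for six of the seven items, takes exactly the paper's route: the paper's proof simply cites Theorems~18.6, 18.10, 17.15, 17.20, 17.21 and 17.23 of \cite{AB2006} for parts (1), (2), (3), (5), (6) and (7), just as you do. The one genuine difference is part (4): the paper cites Theorem~4.1(c) of \cite{Hess2002}, whereas you give a self-contained argument via the measurable projection theorem (Theorem~18.25 of \cite{AB2006}), writing $\Psi^l(K)$ as the projection of $\mbox{Gr}(\Psi)\cap(S\times K)$ and using completeness of $(S,\cS,\lambda)$. That argument is sound and arguably more transparent than the citation, since it isolates exactly where completeness enters; the only small bookkeeping point is that your phrase ``one direction is just part~(1)'' implicitly uses the standard fact that, for a closed-valued correspondence into a metrizable space, measurability implies weak measurability (Lemma~18.2 of \cite{AB2006}), since part~(1) is stated for weakly measurable correspondences. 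With that noted, both routes establish the lemma.
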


\begin{proof}
Properties~(1), (2), (3), (5), (6) and (7) are Theorems~18.6, 18.10, 17.15, 17.20, 17.21 and 17.23 of \cite{AB2006}, respectively. Property~(4) is Theorem~4.1 (c) of \cite{Hess2002}.
\end{proof}

\begin{lem}\label{lem-measurable selection}
\begin{enumerate}
  \item A correspondence $\Psi$ from a measurable space $(S, \cS)$ into a topological space $X$ is weakly measurable if and only if its closure correspondence $\overline{\Psi}$ is weakly measurable, where for each $s\in S$, $\overline{\Psi}(s) = \overline{\Psi(s)}$ and $\overline{\Psi(s)}$ is the closure of the set $\Psi(s)$ in $X$.
  \item For a sequence $\{\Psi_m\}$ of correspondences from a measurable space $(S, \cS)$ into a Polish space, the union correspondence $\Psi(s) = \cup_{m \ge 1}\Psi_m(s)$ is weakly measurable if each $\Psi_m$ is weakly measurable. If each $\Psi_m$ is weakly measurable and compact valued, then the intersection correspondence $\Phi(s) = \cap_{m \ge 1}\Psi_m(s)$ is weakly measurable.
  \item A weakly measurable, nonempty and closed valued correspondence from a measurable space into a Polish space admits a measurable selection.
  \item A correspondence with closed graph between compact metric spaces  is measurable.
  \item A nonempty and compact valued correspondence $\Psi$ from a measurable space $(S, \cS)$ into a Polish space is weakly measurable if and only if there exists a sequence $\{\psi_1, \psi_2. \ldots\}$ of measurable selections of $\Psi$ such that $\Psi(s) = \overline{\{\psi_1(s), \psi_2(s), \ldots\}}$ for each $s\in S$.
  \item The image of a compact set under a compact valued upper hemicontinuous correspondence is compact.\footnote{Given a correspondence $F\colon X\to Y$ and a subset $A$ of $X$,  the image of $A$ under $F$ is defined to be the set
      $\cup_{x\in A} F(x)$.} If the domain is compact, then the graph of a compact valued upper hemicontinuous correspondence is compact.
  \item The intersection of a correspondence with closed graph and an upper hemicontinuous compact valued correspondence is upper hemicontinuous.
  \item If the correspondence $\Psi \colon S\to \bR^l$ is compact valued and upper hemicontinuous, then the convex hull of $\Psi$ is also compact valued and upper hemicontinuous.
\end{enumerate}
\end{lem}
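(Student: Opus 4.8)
The plan is to verify the eight items one at a time, in each case reducing the assertion to a classical theorem on measurable multifunctions or on hemicontinuous correspondences; the natural reference throughout is Chapters~17 and~18 of \cite{AB2006}, supplemented by \cite{Hess2002} wherever a complete measure space is involved (as already in Lemma~\ref{lem-measurable correspondence}). I would split the write-up into the measure-theoretic stability and selection facts --- items (1)--(3) and (5) --- and the purely topological facts about hemicontinuity --- items (4) and (6)--(8).

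For the first group: item~(1) is immediate from $\overline{\Psi}{}^{l}(O)=\Psi^{l}(O)$ for every open $O$ (an open set meets $\overline{\Psi(s)}$ exactly when it meets $\Psi(s)$), which is the content of the corresponding lemma in Chapter~18 of \cite{AB2006}. For item~(2), $\Psi^{l}(O)=\bigcup_{m}\Psi_{m}^{l}(O)$ gives weak measurability of the countable union with no extra hypotheses, and the countable-intersection statement is the companion result of \cite{AB2006} for compact-valued weakly measurable correspondences, where compact values are precisely what keeps the intersection weakly measurable. Item~(3) is the Kuratowski--Ryll-Nardzewski measurable selection theorem and item~(5) is the Castaing representation; both hold for nonempty closed-valued (resp.\ compact-valued) weakly measurable correspondences into a Polish space and are stated as theorems in Chapter~18 of \cite{AB2006}.

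For the second group: in item~(4), since the range is compact metric, a closed graph forces upper hemicontinuity (the closed graph theorem for correspondences) and hence compact values, so the correspondence is weakly measurable, and then measurable by the equivalences in Lemma~\ref{lem-measurable correspondence}(2). Item~(6) combines the fact that a compact-valued upper hemicontinuous correspondence sends compact sets to compact sets with the observation that, over a compact domain, such a correspondence has closed graph sitting inside a compact product, hence compact graph. For item~(7), the intersection $F\cap G$ of a closed-graph $F$ with an upper hemicontinuous compact-valued $G$ again has closed graph and is contained in $G$, so the sequential characterization in Lemma~\ref{lem-measurable correspondence}(6) applies: any sequence in $\mathrm{Gr}(F\cap G)$ over $s_{n}\to s$ clusters, via $G$, at a point of $G(s)$, which lies in $F(s)$ by closedness of $\mathrm{Gr}(F)$. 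Finally, item~(8) is Carath\'eodory's theorem (the convex hull of a compact subset of $\bR^{l}$ is compact) together with the preservation of upper hemicontinuity under the convex-hull operation in $\bR^{l}$, a theorem in Chapter~17 of \cite{AB2006}.

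I do not expect a genuine obstacle: every item here is textbook material, and the only real task is to cite each result with its exact hypotheses, being careful about which items need the range merely metric as opposed to Polish, which need the domain compact, and which (as it turns out, none) need completeness of the underlying measure space. Accordingly the final write-up would, like the proofs of Lemmas~\ref{lem-polish compact} and~\ref{lem-measurable correspondence}, consist of an item-by-item list of the precise references in \cite{AB2006} and \cite{Hess2002}.
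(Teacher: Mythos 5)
Your proposal is correct and matches the paper's own treatment: the paper proves items (1)--(7) by citing Lemmas~18.3, 18.4, Theorems~18.13, 18.20, Corollary~18.15, Lemma~17.8 and Theorem~17.25 of \cite{AB2006}, exactly the results you identify (closure lemma, union/intersection stability, Kuratowski--Ryll-Nardzewski, closed-graph measurability, Castaing representation, image/graph compactness, intersection with a closed-graph correspondence). The only cosmetic difference is item~(8), where the paper cites Proposition~6 of \cite{Hd1974} rather than a convex-hull result in \cite{AB2006}, which is immaterial.
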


\begin{proof}
Properties~(1)-(7) are Lemmas~18.3 and 18.4, Theorems~18.13 and 18.20, Corollary~18.15, Lemma~17.8 and Theorem~17.25 in \cite{AB2006}, respectively. Property~8 is Proposition~6 in \citet[p.26]{Hd1974}.
\end{proof}

\begin{lem}\label{lem-lusin}
\begin{enumerate}
  \item Lusin's Theorem: Suppose that $S$ is a Borel subset of a Polish space, $\lambda$ is a Borel probability measure on $S$ and $\cS$ is the completion of $\cB(S)$ under $\lambda$. Let $X$ be a Polish space. If $f$ is an $\cS$-measurable mapping from $S$ to $X$, then for any $\epsilon >0$, there exists a compact subset $S_1 \subseteq S$ with $\lambda(S\setminus S_1) < \epsilon$ such that the restriction of $f$ to $S_1$ is continuous.
  \item Let $(S,\cS)$ be a measurable space, $X$ a Polish space, and $Y$ a separable Banach space. Let $\Psi \colon S\times X \to Y$ be an  $\cS\otimes \cB(X)$-measurable, nonempty, convex and compact valued correspondence which is sectionally continuous on $X$. Then there exists an $\cS\otimes \cB(X)$-measurable selection $\psi$ of $\Psi$ that is sectionally continuous on $X$.
  \item Let $(S,\cS, \lambda)$ be a finite measure space, $X$ a Polish space, and $Y$ a locally convex linear topological space. Let $F \colon S \to X$ be a closed-valued correspondence such that $\mbox{Gr}(F) \in \cS\otimes \cB(X)$, and $f\colon \mbox{Gr}(F) \to Y$ a measurable function which is sectionally continuous in $X$. Then there exists a measurable function $f' \colon S\times X \to Y$ such that (1) $f'$ is sectionally continuous in $X$, (2) for $\lambda$-almost all $s\in S$, $f'(s,x) = f(s,x)$ for all $x\in F(s)$ and $f' (s, X) \subseteq \mbox{co} f(s, F(s))$.\footnote{For any set $A$ in a linear topological space, \mbox{co}A denotes the convex hull of $A$.}
\end{enumerate}
\end{lem}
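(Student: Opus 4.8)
The plan is to prove the three parts separately, leaning on the measurable‑selection results in Lemmas~\ref{lem-measurable correspondence} and~\ref{lem-measurable selection} and on Lemma~\ref{lem-polish compact}. Part~(1) is the classical Lusin theorem for maps into a Polish space, and I would essentially cite it; the argument is that, $X$ being separable metric, $f$ is the uniform limit of a sequence of countably‑valued $\cS$‑measurable maps $f_k$ (let $f_k$ equal some fixed $x_{k,n}\in X$ on $A_{k,n}\in\cS$, the sets $A_{k,n}$, $n\ge 1$, partitioning $S$ into pieces that $f$ maps into sets of diameter $<2^{-k}$); by inner regularity of $\lambda$ with respect to compact subsets of $S$ (valid since $S$ is Borel in a Polish space) pick compact $C_{k,n}\subseteq A_{k,n}$ with $\sum_{k,n}\lambda(A_{k,n}\setminus C_{k,n})<\epsilon$, truncate each $k$‑th family to finitely many indices without losing too much mass, and let $S_1$ be the intersection over $k$ of the finite unions $\bigcup_{n\le N_k}C_{k,n}$; then $S_1$ is compact, $\lambda(S\setminus S_1)<\epsilon$, each $f_k$ is locally constant on $S_1$, and $f_k\to f$ uniformly there, so $f|_{S_1}$ is continuous.

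For part~(2) I would avoid any iterative construction and instead write down a canonical selection. It is classical that the separable Banach space $Y$ carries an equivalent strictly convex norm; fix one, denote it $\|\cdot\|$ with induced metric $d$, and for a nonempty convex compact $C\subseteq Y$ let $m(C)$ be its unique point of minimal norm. Then $m$ is continuous from the set of nonempty convex compact subsets of $Y$ (a subset of the Polish space $\cK(Y)$ of nonempty compact subsets of $Y$ from Lemma~\ref{lem-polish compact}, with the Hausdorff topology) into $Y$: if $C_j\to C$, then $\|m(C_j)\|=d(0,C_j)\to d(0,C)$, the set $\{m(C_j)\}$ lies within vanishing Hausdorff distance of the compact set $C$ and is therefore relatively compact, and any limit point of it lies in $C$ with minimal norm, hence equals $m(C)$ by strict convexity; thus $m(C_j)\to m(C)$. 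Now by Lemma~\ref{lem-measurable correspondence}(2) the $\cS\otimes\cB(X)$‑measurability of the compact‑valued $\Psi$ is equivalent to Borel measurability of $(s,x)\mapsto\Psi(s,x)$ into $\cK(Y)$, and by Lemma~\ref{lem-measurable correspondence}(3) its sectional continuity on $X$ is equivalent to continuity of $x\mapsto\Psi(s,x)$ into $\cK(Y)$; hence $\psi(s,x):=m(\Psi(s,x))$ is $\cS\otimes\cB(X)$‑measurable, sectionally continuous on $X$, and, since $m(C)\in C$, a selection of $\Psi$. Convexity of the values is used only to make $m$ single‑valued.

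Part~(3), a measurable‑parameter version of Dugundji's extension theorem, is where the work lies. After completing $(S,\cS,\lambda)$, Lemma~\ref{lem-measurable correspondence}(4) makes $F$ a measurable correspondence, so $s\mapsto d(d_k,F(s))$ is measurable for each point $d_k$ of a fixed countable dense set $D=\{d_1,d_2,\dots\}\subseteq X$, and from a Castaing representation $F(s)=\overline{\{\sigma_n(s)\}}$ one extracts measurable selections $a_k(s)\in F(s)$ with $\|d_k-a_k(s)\|\le 2\,d(d_k,F(s))$. On the open set $O_s:=X\setminus F(s)$ I would build, from the fixed countable family of rational balls centered at points of $D$, a Whitney‑type locally finite open cover---roughly those balls $B(d_k,q)$ with $q$ comparable to $d(d_k,F(s))$ and $\overline{B(d_k,q)}\subseteq O_s$---whose index set $J(s)$ depends measurably on $s$, together with a subordinate partition of unity $\{\mu_j(s,\cdot)\}_{j\in J(s)}$ given by explicit normalized truncated distance functions, hence jointly measurable in $(s,x)$ and continuous in $x$. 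Setting $f'(s,x)=f(s,x)$ on $\mbox{Gr}(F)$ and $f'(s,x)=\sum_{j\in J(s)}\mu_j(s,x)\,f(s,a_{k(j)}(s))$ for $x\in O_s$ (with $a_{k(j)}(s)\in F(s)$ the point attached to the ball indexed by $j$), joint measurability is immediate---$\mbox{Gr}(F)\in\cS\otimes\cB(X)$, and off the graph $f'$ is a locally finite sum of jointly measurable terms---the inclusion $f'(s,X)\subseteq\mbox{co}\,f(s,F(s))$ holds since every $a_{k(j)}(s)\in F(s)$, sectional continuity on the interior of $F(s)$ and on $O_s$ is clear, and at a boundary point $x_0\in F(s)$ it is the Dugundji estimate: balls of $J(s)$ meeting points close to $x_0$ have radius of order $d(\cdot,F(s))$, so their attached points $a_{k(j)}(s)$ converge to $x_0$, and continuity of $f(s,\cdot)$ on $F(s)$ forces the convex combinations to converge to $f(s,x_0)$.

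The hard part is the parametrized Dugundji construction in~(3): the cover of $O_s=X\setminus F(s)$ and its partition of unity must be chosen to be simultaneously (i) measurable in $s$, (ii) locally finite for each fixed $s$---without which $f'$ is not even well‑defined off the graph, let alone continuous---and (iii) subordinate to radii controlled tightly enough to run the boundary estimate. A purely measurable construction, or a purely topological (Dugundji) one, each yields two of these three properties; arranging all three at once is the crux, while everything else reduces to the selection theorems already available.
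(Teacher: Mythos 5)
The paper does not prove this lemma at all: part (1) is cited from Bogachev (Theorem 7.1.13), part (2) from Fierro, Mart\`{i}nez and Morales (2006, Theorem 1), and part (3) from Brown and Schreiber (1989, Theorem 2.7). Your parts (1) and (2) are correct self-contained arguments. Part (1) is the standard Lusin proof (countably-valued uniform approximations plus tightness), and part (2) is a genuinely different and rather elegant route compared with the cited Carath\'{e}odory-selection theorem: after a strictly convex renorming of the separable Banach space $Y$, the minimal-norm point $m(C)$ of a convex compact set is single-valued and continuous for the Hausdorff metric (your compactness argument for continuity is sound), so $\psi(s,x)=m(\Psi(s,x))$ inherits joint measurability from Lemma~\ref{lem-measurable correspondence}(2) and sectional continuity from Lemma~\ref{lem-measurable correspondence}(3). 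This buys a canonical, selection-theorem-free proof of (2), at the cost of using the strictly convex renorming of a separable Banach space, whereas the paper simply invokes the literature.

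Part (3), however, has a genuine gap, and you identify it yourself: the parametrized Dugundji construction is asserted, not carried out. The specific failure point is local finiteness. In a general Polish space $X$, the family of rational balls $B(d_k,q)$ with $q$ comparable to $d(d_k,F(s))$ and $\overline{B(d_k,q)}\subseteq X\setminus F(s)$ does cover $X\setminus F(s)$, but it is typically not locally finite (already for $X$ infinite-dimensional or non-locally-compact, infinitely many such balls of comparable radius accumulate at a point off $F(s)$). Without local finiteness the normalized sum defining $f'(s,\cdot)$ off $\mbox{Gr}(F)$ need not be well defined, let alone continuous, and the boundary estimate at points of $F(s)$ does not get off the ground. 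The classical Dugundji proof fixes this by passing to a locally finite refinement via paracompactness, but that choice is non-canonical and destroys measurability in $s$; conversely, a purely measurable choice of indices $J(s)$ does not deliver local finiteness. Reconciling the two is precisely the content of the Brown--Schreiber theorem the paper cites, and it is the one step your proposal does not supply. Note also that a cheaper Lusin-type patch (restricting to compact $S_m$ on which $f$ and $F$ are jointly continuous and extending continuously on $S_m\times X$) would only give $f'(s,X)\subseteq \mbox{co}\,f(\mbox{Gr}(F)\cap(S_m\times X))$, not the required per-$s$ inclusion $f'(s,X)\subseteq \mbox{co}\,f(s,F(s))$, so the per-$s$ construction cannot be sidestepped that way.
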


\begin{proof}
Lusin's theorem is Theorem~7.1.13 in \cite{Bogachev2007}. Properties~(2) and (3) are Theorem~1 and Theorem~2.7 in \cite{FMM2006} and \cite{BS1989}, respectively.
\end{proof}

The following lemma presents the convexity, compactness and continuity properties of integrals of correspondences

\begin{lem}\label{lem-lyapunov}
Let $(S, \cS, \lambda)$ be an atomless probability space, $X$ a Polish space, and $F$ a correspondence from $S$ to $\bR^l$. Denote
$$\int_S F(s) \lambda(\rmd s) = \left\{\int_S f(s) \lambda(\rmd s) \colon f \mbox{ is an integrable selection of } F \mbox{ on } S\right\}.$$
\begin{enumerate}
  \item If $F$ is measurable, nonempty and closed valued, and $\lambda$-integrably bounded by some integrable function $\psi \colon S\to \bR_+$ in the sense that for $\lambda$-almost all $s\in S$, $\|y\| \le \psi(s)$ for any $y \in F(s)$, then $\int_S F(s) \lambda(\rmd s)$ is nonempty, convex and compact, and
      $$\int_S F(s) \lambda(\rmd s) = \int_S \mbox{co}F(s) \lambda(\rmd s).$$
  \item If $G$ is a measurable, nonempty and closed valued correspondence from $S\times X\to \bR^l$ such that (1) $G(s, \cdot)$ is upper (resp. lower) hemicontinuous on $X$ for all $s\in S$, and (2) $G$ is $\lambda$-integrably bounded by some integrable function $\psi \colon S\to \bR_+$ in the sense that for $\lambda$-almost all $s\in S$, $\|y\| \le \psi(s)$ for any $x \in X$ and $y \in G(s, x)$,
        then $\int_S G(s,x) \lambda(\rmd s)$ is upper (resp. lower) hemicontinuous on $X$.
\end{enumerate}
\end{lem}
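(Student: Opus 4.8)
The plan is to obtain part~1 from the Lyapunov convexity theorem for atomless measures together with a measurable‑selection argument, and part~2 from a Dunford--Pettis weak‑compactness argument in $L^1(S,\lambda;\bR^l)$ combined with the sequential characterizations of hemicontinuity in Lemma~\ref{lem-measurable correspondence}. For part~1, I would argue in stages. Nonemptiness of $\int_S F(s)\,\lambda(\rmd s)$ is immediate, since $F$ is measurable, nonempty and closed valued, hence admits a measurable selection by Lemma~\ref{lem-measurable selection}(3), which is integrable because $\|F(s)\|\le\psi(s)$. As $F(s)$ is closed and norm‑bounded by $\psi(s)$, it is compact in $\bR^l$, so $\mbox{co}\,F(s)$ is compact by Carath\'eodory's theorem; writing any integrable selection $h$ of $\mbox{co}\,F$ pointwise as a convex combination of $l+1$ points of $F(s)$ with measurable weights and then using the atomlessness of $\lambda$ (the Lyapunov theorem) to remove the randomization produces an integrable selection $f$ of $F$ with $\int_S f\,\rmd\lambda=\int_S h\,\rmd\lambda$; this yields $\int_S F\,\rmd\lambda=\int_S\mbox{co}\,F\,\rmd\lambda$. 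Finally, the set $\mathcal{S}=\{f\in L^1(S,\lambda;\bR^l)\colon f(s)\in\mbox{co}\,F(s)\ \lambda\text{-a.e.}\}$ is convex, norm‑bounded by $\psi$ hence uniformly integrable and relatively weakly compact by Dunford--Pettis, and strongly (hence weakly) closed in $L^1$; since $f\mapsto\int_S f\,\rmd\lambda$ is continuous from $L^1$ with its weak topology into $\bR^l$, the set $\int_S\mbox{co}\,F\,\rmd\lambda$ is its image, hence convex and compact, and therefore so is $\int_S F\,\rmd\lambda$. (Alternatively one may simply cite the corresponding results of Aumann and Hildenbrand on integrals of correspondences.)

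For the upper hemicontinuous case of part~2, put $\Phi(x)=\int_S G(s,x)\,\lambda(\rmd s)$; each $\Phi(x)$ is nonempty, convex and compact by part~1, so by Lemma~\ref{lem-measurable correspondence}(6) it suffices to show that if $x_n\to x$ and $y_n\in\Phi(x_n)$ with $y_n\to y$, then $y\in\Phi(x)$. Replacing $G$ by its convex hull $\mbox{co}\,G$ changes nothing: $\mbox{co}\,G$ is still measurable, still compact valued (Carath\'eodory), still sectionally upper hemicontinuous (Lemma~\ref{lem-measurable selection}(8)), still bounded by $\psi$, and has the same integral by part~1; so we may assume $G$ is convex valued. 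Choose measurable selections $g_n(s)\in G(s,x_n)$ with $y_n=\int_S g_n\,\rmd\lambda$; since $\|g_n\|\le\psi$, the sequence is uniformly integrable, so after passing to a subsequence $g_n\rightharpoonup g$ weakly in $L^1$ and $y_n\to\int_S g\,\rmd\lambda=y$. To verify $g(s)\in G(s,x)$ for $\lambda$-a.e.\ $s$, apply Mazur's lemma: a sequence of convex combinations $h_n$ of the tails $\{g_m\}_{m\ge n}$ converges to $g$ in $L^1$, hence $\lambda$-a.e.\ along a further subsequence; for fixed $N$ and all $n\ge N$ one has $h_n(s)\in\mbox{co}\bigcup_{m\ge N}G(s,x_m)$, so $g(s)\in\overline{\mbox{co}}\bigcup_{m\ge N}G(s,x_m)$, and since upper hemicontinuity of $G(s,\cdot)$ forces $\bigcup_{m\ge N}G(s,x_m)$ into any prescribed neighborhood of $G(s,x)$ for $N$ large while $G(s,x)$ is convex and compact, letting $N\to\infty$ gives $g(s)\in G(s,x)$. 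Hence $y=\int_S g\,\rmd\lambda\in\Phi(x)$.

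For the lower hemicontinuous case, fix $x_n\to x$ and $y=\int_S g\,\rmd\lambda\in\Phi(x)$ with $g(s)\in G(s,x)$. By Lemma~\ref{lem-measurable correspondence}(5), lower hemicontinuity of $G(s,\cdot)$ at $x$ gives $d\bigl(g(s),G(s,x_n)\bigr)\to 0$ for each $s$. For each $n$ the correspondence $s\mapsto\{z\in G(s,x_n)\colon\|z-g(s)\|\le d(g(s),G(s,x_n))+1/n\}$ is measurable, nonempty and closed valued, so by Lemma~\ref{lem-measurable selection}(3) it admits a measurable selection $g_n$; then $g_n(s)\in G(s,x_n)$, $g_n(s)\to g(s)$ pointwise and $\|g_n\|\le\psi$, so by dominated convergence $y_n:=\int_S g_n\,\rmd\lambda\in\Phi(x_n)$ converges to $y$, and $\Phi$ is lower hemicontinuous by Lemma~\ref{lem-measurable correspondence}(5).

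The main obstacle is the upper hemicontinuous case of part~2 — precisely, showing that the weak $L^1$-limit $g$ is $\lambda$-a.e.\ a selection of $G(\cdot,x)$. This is where one must combine Dunford--Pettis weak compactness, Mazur's lemma (to convert weak $L^1$-convergence into a.e.\ convergence of convex combinations), and the set inclusion $\limsup_n G(s,x_n)\subseteq G(s,x)$ coming from upper hemicontinuity; the reduction to convex‑valued $G$ via part~1 is exactly what makes this step clean. Part~1 and the lower hemicontinuous case are otherwise routine selection theory plus dominated convergence.
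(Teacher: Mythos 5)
Your proposal is correct, but it takes a different route from the paper: the paper does not prove Lemma~\ref{lem-lyapunov} at all, it simply cites the corresponding results on integrals of correspondences in Section~D.II.4 of Hildenbrand (1974) (Theorems~2--4, Propositions~7--8, Problem~6) --- exactly the alternative you mention in your closing parenthesis. Your self-contained argument is sound: part~1 is the classical Aumann--Richter use of Lyapunov's theorem (measurable Carath\'eodory decomposition plus atomlessness to purify the randomization) together with a Dunford--Pettis/weak-closedness argument for compactness of $\int_S \mbox{co}F\,\rmd\lambda$; part~2's upper hemicontinuity via weak $L^1$ compactness of the selections, Mazur's lemma to get a.e.\ convergence of tail convex combinations, and the inclusion of $\overline{\mbox{co}}\bigcup_{m\ge N}G(s,x_m)$ in convex $\epsilon$-neighborhoods of $G(s,x)$ is a standard and valid substitute for Hildenbrand's Fatou-lemma-in-$l$-dimensions proofs, and the lower hemicontinuity argument via near-nearest-point measurable selections and dominated convergence is fine (the intersection defining your $g_n$ is weakly measurable by Lemma~\ref{lem-measurable selection}(2), since $G(\cdot,x_n)$ is compact valued). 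The only places you are terse --- joint measurability of $\mbox{co}\,G$, measurability of $s\mapsto d(g(s),G(s,x_n))$, and patching $g$ on a null set to get an exact selection of $G(\cdot,x)$ --- are routine. What the paper's citation buys is brevity; what your proof buys is transparency about where each hypothesis enters (atomlessness only in part~1, the integrable bound only through uniform integrability and domination), at the cost of reproving textbook material.
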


\begin{proof}
See Theorems~2, 3 and 4, Propositions~7 and 8, and Problem~6 in Section~D.II.4 of \cite{Hd1974}.
\end{proof}

The following result proves a measurable version of Lyapunov's theorem, which is taken from \cite{Mertens2003}.
Let $(S, \cS)$ and $(X, \cX)$ be measurable spaces. A transition probability from $S$ to $X$ is a mapping $f$ from $S$ to the space $\cM(X)$ of probability measures on $(X, \cX)$ such that $f(B| \cdot): s\to f(B|s)$ is $\cS$-measurable for each $B\in\cX$.

\begin{lem}\label{lem-mertens}
Let $f(\cdot|s)$ be a transition probability from a measurable space $(S, \cS)$ to another measurable space $(X, \cX)$ ($\cX$ is separable).\footnote{A $\sigma$-algebra is said to be separable if it is generated by a countable collection of sets.} Let $Q$ be a measurable, nonempty and compact valued correspondence from $S\times X$ to $\bR^l$, which is $f$-integrable in the sense that for any measurable selection $q$ of $Q$, $q(\cdot, s)$ is $f(\cdot|s)$-absolutely integrable for any $s \in S$. Let $\int Q\rmd f$ be the correspondence from $S$ to subsets of $\bR^{l}$ defined by
$$M(s) = \left(\int Q\rmd f\right)(s) =  \left\{\int_{X} q(s,x) f(\rmd x|s) \colon q \mbox{ is a measurable selection of } Q \right\}.
$$
Denote the graph of $M$ by $J$. Let $\cJ$ be the restriction of the product $\sigma$-algebra $\cS \otimes \cB (\bR^l)$ to $J$.

Then
\begin{enumerate}
  \item $M$ is a measurable, nonempty and compact valued correspondence;
  \item there exists a measurable, $\bR^l$-valued function $g$ on $(X\times J, \cX \otimes \cJ)$ such that $g(x,e,s) \in Q(x,s)$ and $e = \int_{X} g(x,e,s)f(\rmd x|s)$.
\end{enumerate}
\end{lem}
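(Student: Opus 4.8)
The plan is to prove the two conclusions separately: for each fixed $s\in S$ everything reduces to classical facts about Aumann integrals, and the real content of the lemma is that these facts hold \emph{measurably in $s$}, for which I would appeal to \cite{Mertens2003}.

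First I would settle the fibrewise statements. Fix $s$. Since $Q(\cdot,s)$ is measurable, nonempty and compact valued, it has a measurable selection (Lemma~\ref{lem-measurable selection}(3)), whose $f(\cdot|s)$-integral is well defined by $f$-integrability, so $M(s)\neq\emptyset$; the same hypothesis, applied to a measurable selection of maximal norm, forces $\psi_s(x):=\max_{z\in Q(x,s)}\|z\|$ to be $f(\cdot|s)$-integrable, i.e.\ $Q(\cdot,s)$ is integrably bounded. Then $M(s)$ is compact by the classical theory of Aumann integrals --- compactness of $\int_X Q(x,s)\,f(\rmd x|s)$ does not require atomlessness: split off the purely atomic part of $f(\cdot|s)$, on which the integral is a continuous image of the countable product of the compact sets $Q(x_n,s)$, and apply Lemma~\ref{lem-lyapunov}(1) to the atomless part, $M(s)$ being the Minkowski sum of the two. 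Finally, $e\in M(s)$ means, by the definition of $M$, that some measurable selection $q$ of $Q(\cdot,s)$ satisfies $e=\int_X q(x,s)\,f(\rmd x|s)$, so the fibrewise form of (2) is immediate.

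Next, for conclusion (1) I would upgrade compactness to measurability in $s$. Using a Castaing representation $Q(x,s)=\overline{\{q_n(x,s):n\ge1\}}$ (Lemma~\ref{lem-measurable selection}(5); here the separability of $\cX$ is used) together with dominated convergence, one obtains $M(s)=\overline{\bigcup_{k\ge1}M_k(s)}$, where $M_k(s)=\int_X Q_k(x,s)\,f(\rmd x|s)$ is the Aumann integral of the finite-valued correspondence $Q_k=\{q_1,\dots,q_k\}$. Each $s\mapsto M_k(s)$ is measurable --- this is the finite-valued instance of the present lemma, and is where the argument of \cite{Mertens2003} does its work --- so $M$ is weakly measurable as the closure of a countable union of weakly measurable correspondences (Lemma~\ref{lem-measurable selection}(1),(2)); being closed valued it then has $J=\mathrm{Gr}(M)\in\cS\otimes\cB(\bR^l)$ (Lemma~\ref{lem-measurable correspondence}(1)) and is Borel measurable as a map into the hyperspace $\cK$ of nonempty compact subsets of $\bR^l$ (Lemma~\ref{lem-measurable correspondence}(2)), which is conclusion (1). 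For conclusion (2) I would pull everything back along the projection $J\to S$, $(e,s)\mapsto s$: on the parameter space $(J,\cJ)$ carrying the transition probability $\bar f(\cdot|(e,s))=f(\cdot|s)$ and the correspondence $\bar Q((e,s),x)=Q(x,s)$, it suffices to build a single $\cX\otimes\cJ$-measurable selection $g$ of $\bar Q$ with $\int_X g((e,s),x)\,\bar f(\rmd x|(e,s))=e$ on all of $J$, and reading $g$ back as a function of $(x,e,s)$ gives the claim. Following \cite{Mertens2003}, $g$ would be the pointwise limit of $\cX\otimes\cJ$-measurable selections $g_n$ of successively finer dyadic discretizations of $\bar Q$: at stage $n$ a parametrized measurable selection theorem (of Kuratowski--Ryll-Nardzewski / Jankov--von~Neumann type, on $X\times J$) chooses, measurably in $(x,e,s)$, the dyadic cell of side $2^{-n}$ into which $g_{n+1}(x,e,s)$ falls, so that the accumulated residual $e-\int_X g_n\,\rmd\bar f$ shrinks geometrically; compactness of the values of $\bar Q$ forces $g_n\to g$ with $g$ a selection of $\bar Q$ and the residuals vanishing, so $\int_X g\,\rmd\bar f=e$.

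The hard part is this last construction, and behind it the measurability in $s$ of the finite-valued integrals $M_k$: every choice in the iteration must be made \emph{simultaneously} measurable in $x$, $e$ and $s$, and the residual-correction step has to survive the presence of atoms of $f(\cdot|s)$ --- which is precisely what blocks the easy route of convexifying $\bar Q$ and invoking the classical Lyapunov theorem. This is the ``measurable measurable choice'' difficulty isolated and resolved in \cite{Mertens2003}, on which the proof ultimately rests; alternatively one could simply verify the hypotheses above and quote that reference directly.
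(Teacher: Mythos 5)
The paper gives no proof of this lemma at all --- it is quoted directly from \cite{Mertens2003} --- and your proposal, after its (correct but routine) fibrewise preliminaries, likewise delegates the two genuinely hard steps (the measurability in $s$ of the integrals and the jointly measurable residual-correcting selection on $X\times J$) to that same reference, so in substance it is the same proof-by-citation. Just note that your intermediate sketch is not a self-contained argument: the finite-valued measurability claim and the dyadic iteration are exactly the content of Mertens's theorem, so checking that the hypotheses match and citing \cite{Mertens2003} directly, as the paper does, is the honest form of the argument.
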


Suppose that $(S_1, \cS_1)$ is a measurable space, $S_2$ is a Polish space endowed with the Borel $\sigma$-algebra, and $S = S_1 \times S_2$ which is endowed with the product $\sigma$-algebra $\cS$. Let $D$ be an $\cS$-measurable subset of $S$ such that $D(s_1)$ is compact for any $s_1 \in S_1$. The $\sigma$-algebra $\cD$ is the restriction of $\cS$ on $D$. Let $X$ be a Polish space, and $A$ a $\cD$-measurable, nonempty and closed valued correspondence from $D$ to $X$ which is sectionally continuous on $S_2$. The following lemma considers the property of upper hemicontinuity for the correspondence $M$ as defined in Lemma \ref{lem-mertens}.

\begin{lem}\label{lem-upper}
Let $f(\cdot|s)$ be a transition probability from $(D, \cD)$ to $\cM(X)$ such that $f(A(s)|s) = 1$ for any $s\in D$, which is sectionally continuous on $S_2$. Let $G$ be a bounded, measurable, nonempty, convex and compact valued correspondence from $\mbox{Gr}(A)$ to $\bR^l$,
which is
sectionally upper hemicontinuous on $S_2\times X$. Let $\int G\rmd f$ be the correspondence from $D$ to subsets of $\bR^{l}$ defined by
$$M(s) = \left(\int G\rmd f\right)(s) =  \left\{\int_{X} g(s,x) f(\rmd x|s) \colon g \mbox{ is a measurable selection of } G\right\}.
$$
Then $M$ is $\cS$-measurable, nonempty and compact valued, and sectionally upper hemicontinuous on $S_2$.
\end{lem}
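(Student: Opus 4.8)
The nonemptiness, compactness and ($\cD$-)measurability of $M$ are immediate consequences of Lemma~\ref{lem-mertens}. Indeed, extend $G$ to all of $D\times X$ by setting $G(s,x)=\{0\}$ for $(s,x)\notin\mbox{Gr}(A)$ --- this does not alter $M$, since $f(\cdot\,|s)$ is carried by $A(s)$ --- note that $\cB(X)$ is a separable $\sigma$-algebra because $X$ is Polish and that $G$ is $f$-integrable because it is bounded, and apply Lemma~\ref{lem-mertens}(1) with the measurable space $(D,\cD)$ in place of $(S,\cS)$. The real content is the sectional upper hemicontinuity on $S_2$, and the plan is to prove it by a weak-convergence argument for measures on $X\times\bR^l$.

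Fix $s_1\in S_1$ with $D(s_1)\ne\emptyset$ and abbreviate $D_0=D(s_1)$, $f_0(\cdot\,|s_2)=f(\cdot\,|s_1,s_2)$, $A_0=A(s_1,\cdot)$, $G_0=G(s_1,\cdot,\cdot)$, $M_0(\cdot)=M(s_1,\cdot)$. Here $D_0$ is a compact metric space, $A_0$ is continuous and closed valued on $D_0$, $f_0$ is weakly continuous on $D_0$ with $f_0(A_0(s_2)\,|s_2)=1$, and $G_0$ is upper hemicontinuous on $\mbox{Gr}(A_0)$ with convex compact values inside a fixed bounded set $B\subseteq\bR^l$. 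By the sequential characterization of upper hemicontinuity in Lemma~\ref{lem-measurable selection}(6) and the fact that $M_0$ takes values in the compact set $\overline{\mbox{co}}\,B$, it suffices to show: if $s_2^n\to s_2$ in $D_0$ and $e_n\in M_0(s_2^n)$ with $e_n\to e$, then $e\in M_0(s_2)$. For each $n$ pick a Borel selection $g_n$ of $G_0(s_2^n,\cdot)$ with $e_n=\int_X g_n\,\rmd f_0(\cdot\,|s_2^n)$ (obtained by restricting the global selection of $G$ witnessing $e_n\in M_0(s_2^n)$ to the section at $(s_1,s_2^n)$), and let $\nu_n\in\cM(X\times\bR^l)$ be the image of $f_0(\cdot\,|s_2^n)$ under $x\mapsto(x,g_n(x))$. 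Then $\nu_n$ has $X$-marginal $f_0(\cdot\,|s_2^n)$, satisfies $\int y\,\nu_n(\rmd(x,y))=e_n$, and is carried by $\mbox{Gr}(G_0(s_2^n,\cdot))$. Since the convergent sequence $\{f_0(\cdot\,|s_2^n)\}$ is tight in $\cM(X)$ (Prokhorov) and $B$ is bounded, $\{\nu_n\}$ is tight, so along a subsequence $\nu_n\rightharpoonup\nu$.

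It then remains to verify that $\nu$ has $X$-marginal $f_0(\cdot\,|s_2)$, that $\int y\,\nu(\rmd(x,y))=e$, and that $\nu$ is carried by $\mbox{Gr}(G_0(s_2,\cdot))$; granting these, disintegrate $\nu=f_0(\cdot\,|s_2)\otimes(\rho_x)$, note that $\rho_x$ is carried by the convex compact set $G_0(s_2,x)$ for $f_0(\cdot\,|s_2)$-almost every $x$, and set $g(x)$ equal to the barycenter of $\rho_x$: then $g$ is Borel, $g(x)\in G_0(s_2,x)$ a.e., and $\int_X g\,\rmd f_0(\cdot\,|s_2)=\int y\,\nu(\rmd(x,y))=e$ by Fubini, so after correcting $g$ on a null set to a genuine selection (Lemma~\ref{lem-measurable selection}(3)) we get $e\in M_0(s_2)$. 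The marginal claim is the weak continuity of $f_0$; the barycenter claim follows from $\nu_n\rightharpoonup\nu$ after multiplying the map $(x,y)\mapsto y$ by a compactly supported continuous cutoff that equals $1$ on $B$, which makes it bounded continuous without changing any of the integrals. For the support claim, first observe that $\mbox{Gr}(A_0)$ is closed in $D_0\times X$ (an upper hemicontinuous, closed-valued correspondence into a metric space has closed graph), whence $\mbox{Gr}(G_0)=\{(s_2',x,y):x\in A_0(s_2'),\ y\in G_0(s_2',x)\}$ is closed in $D_0\times X\times\bR^l$, using the sequential characterization of Lemma~\ref{lem-measurable selection}(6) for the upper hemicontinuous, compact-valued $G_0$; since $\delta_{s_2^n}\otimes\nu_n\rightharpoonup\delta_{s_2}\otimes\nu$ and each $\delta_{s_2^n}\otimes\nu_n$ is carried by the closed set $\mbox{Gr}(G_0)$, the portmanteau theorem yields $(\delta_{s_2}\otimes\nu)(\mbox{Gr}(G_0))=1$, i.e.\ $\nu$ is carried by $\mbox{Gr}(G_0(s_2,\cdot))$.

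The step I expect to be the main obstacle is this last one --- controlling the weak limit $\nu$ so that it remains carried by the graph of the \emph{limiting} section $G_0(s_2,\cdot)$ --- since it is there that the sectional upper hemicontinuity of $G$ and the closed-graph property of $A_0$ have to be turned, via closedness of $\mbox{Gr}(G_0)$ and the portmanteau theorem on the product space, into information about $\nu$. A minor accompanying issue is the routine reconciliation between Borel selections of the global correspondence $G$ on $\mbox{Gr}(A)$ and Borel selections of its fixed section $G_0(s_2,\cdot)$, needed both to produce the $g_n$ and to read the final barycenter as an element of $M(s_1,s_2)$.
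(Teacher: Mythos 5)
Your proposal is correct. The first half is exactly the paper's argument: extend $G$ by $\{0\}$ off $\mbox{Gr}(A)$ (harmless since $f(\cdot|s)$ lives on $A(s)$) and invoke Lemma~\ref{lem-mertens} for measurability, nonemptiness and compactness of $M$. Where you diverge is the sectional upper hemicontinuity: the paper disposes of this in one line by citing Lemma~2 of \cite{SZ1990} and Lemma~4 of \cite{RR2002}, whereas you reprove that step from scratch --- push each selection $g_n$ forward to a measure $\nu_n$ on $X\times\bR^l$, use tightness of the converging marginals $f_0(\cdot|s_2^n)$ plus boundedness of $G$ to extract a weak limit $\nu$, use closedness of $\mbox{Gr}(G_0)$ (via closedness of $\mbox{Gr}(A_0)$ and the sequential characterization of upper hemicontinuity) together with the portmanteau theorem to keep $\nu$ on the limiting graph, and then disintegrate and take barycenters, which is precisely where the convexity of the values of $G$ enters. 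This is in substance the same mechanism that underlies the cited Simon--Zame/Reny--Robson lemmas, so what your route buys is self-containedness (no appeal to external lemmas) at the cost of redoing their weak-convergence/disintegration work; the paper's route is shorter but leans on the literature. Two cosmetic points: the sequential characterization of upper hemicontinuity you use is Lemma~\ref{lem-measurable correspondence}(6), not Lemma~\ref{lem-measurable selection}(6); and the ``reconciliation'' you flag at the end (passing between jointly measurable selections of $G$ and Borel selections of the fixed section, by patching a global selection on the single slice $\{(s_1,s_2)\}\times X$) is indeed routine and closes the argument.
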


\begin{proof}
Define a correspondence $\tilde{G} \colon S\times X\to \bR^l$ as
$$\tilde{G} =
\begin{cases}
G(s,x), & \mbox{if } (s,x) \in \mbox{Gr}(A); \\
\{0\},  & \mbox{otherwise}.
\end{cases}
$$
Then $M(s) = \left(\int \tilde{G}\rmd f\right)(s) = \left(\int G\rmd f\right)(s)$.
The measurability, nonemptiness and compactness follows from Lemma~\ref{lem-mertens}. Given $s_1 \in S_1$ such that (1) $D(s_1) \neq \emptyset$, (2) $f(s_1, \cdot)$ and $G(s_1, \cdot, \cdot)$ is upper hemicontinuous. The upper hemicontinuity of $M(s_1, \cdot)$ follows from Lemma~2 in \cite{SZ1990} and Lemma~4 in \cite{RR2002}.
\end{proof}

Now we state some properties for transition correspondences.

\begin{lem}\label{lem-cont tran corre}
Suppose that $Y$ and $Z$ are Polish spaces. Let $G$ be a measurable, nonempty, convex and compact valued correspondence from $Y$ to $\cM(Z)$. Define a correspondence $G'$ from $\cM(Y)$ to $\cM(Z)$ as
$$G'(\nu) = \left\{\int_Y g(y) \nu(\rmd y)\colon g \mbox{ is a Borel measurable selection of } G \right\}.
$$
\begin{enumerate}
  \item The correspondence $G'$ is measurable, nonempty, convex and compact valued.
  \item The correspondence $G$ is upper hemicontinuous if and only if $G'$ is upper hemicontinuous. In addition, if $G$ is continuous, then $G'$ is continuous.
\end{enumerate}
\end{lem}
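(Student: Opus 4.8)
\medskip
\noindent\emph{Proof proposal.} The plan is to pass to joint laws. Being compact valued and measurable, $G$ is weakly measurable, and by Lemma~\ref{lem-measurable selection}(3),(5) it admits countably many Borel measurable selections $\{h_m\}_{m\ge1}$ with $G(y)=\overline{\{h_m(y):m\ge1\}}$. Viewing a measurable selection $g$ of $G$ as a transition probability from $Y$ to $Z$, form $\nu\diamond g\in\cM(Y\times Z)$; then $\int_Y g(y)\,\nu(\rmd y)=\mbox{Marg}_{Z}(\nu\diamond g)$. Writing $\Lambda(\nu)=\{\nu\diamond g:g\text{ a measurable selection of }G\}$, we get $G'(\nu)=\mbox{Marg}_{Z}\bigl(\Lambda(\nu)\bigr)$. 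Nonemptiness of $G'(\nu)$ is then immediate, and convexity holds because $G$ is convex valued while $g\mapsto\nu\diamond g$ and $\mbox{Marg}_{Z}$ are affine.

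For compactness I would show that $\Lambda(\nu)$ is compact in the weak topology of $\cM(Y\times Z)$, so that $G'(\nu)=\mbox{Marg}_{Z}\Lambda(\nu)$ is compact as a continuous image. Tightness of $\Lambda(\nu)$: the $Y$-marginal of each element is the fixed (hence tight) measure $\nu$, and for the $Z$-marginals each weakly compact set $G(y)\subseteq\cM(Z)$ is uniformly tight (Prokhorov), so along any compacts $C_k\uparrow Z$ the function $y\mapsto\sup_{m}h_m(y)(Z\setminus C_k)=\sup_{\mu\in G(y)}\mu(Z\setminus C_k)$ decreases pointwise to $0$, whence $\int_Y\sup_{m}h_m(y)(Z\setminus C_k)\,\nu(\rmd y)\downarrow0$ by monotone convergence; this dominates $\bigl(\int_Y g(y)\,\nu(\rmd y)\bigr)(Z\setminus C_k)$ uniformly over all selections $g$. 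Closedness of $\Lambda(\nu)$ is the heart of the matter: if $\nu\diamond g_n\to\rho$ weakly then $\mbox{Marg}_{Y}\rho=\nu$, so $\rho=\nu\diamond g$ for some transition probability $g$; putting $b_f(y)=\sup_{\mu\in G(y)}\int_Z f\,\rmd\mu$ for $f\in C_b(Z)$ (bounded, upper semicontinuous when $G$ has closed graph, measurable in general), one has $\int_Y h(y)\bigl(\int_Z f\,\rmd g_n(y)\bigr)\nu(\rmd y)\le\int_Y h(y)\,b_f(y)\,\nu(\rmd y)$ for every nonnegative $h\in C_b(Y)$, and letting $n\to\infty$ (the left side converges, as $h(y)f(z)\in C_b(Y\times Z)$) gives $\int_Z f\,\rmd g(y)\le b_f(y)$ for $\nu$-a.e.\ $y$. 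Applying this to a countable family $\{\sum_j q_jf_j:q_j\in\mathbb{Q},\ \{f_j\}\text{ convergence determining}\}$ and to its negatives, and using Hahn--Banach separation of the weakly compact convex set $G(y)$ from any point outside it, forces $g(y)\in G(y)$ for $\nu$-a.e.\ $y$; redefining $g$ on the null set leaves $\rho=\nu\diamond g$ unchanged and shows $\rho\in\Lambda(\nu)$.

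For Borel measurability of $G'$ I would reduce to finitely many coordinates through $\phi_k\colon\cM(Z)\to\bR^k$, $\phi_k(\mu)=\bigl(\int f_1\,\rmd\mu,\ldots,\int f_k\,\rmd\mu\bigr)$ with $|f_j|\le1$ and $\{f_j\}$ convergence determining: each $\phi_k\circ G$ is a bounded, measurable, convex, compact valued correspondence into $\bR^k$, so by Lemma~\ref{lem-lyapunov}(1) (after a routine lifting of selections of $\phi_k\circ G$ to selections of $G$ via Lemma~\ref{lem-measurable selection}(3)) the correspondence $\nu\mapsto\int_Y(\phi_k\circ G)\,\rmd\nu=\phi_k\bigl(G'(\nu)\bigr)$ is nonempty, convex, compact valued, Borel measurable in $\nu$, and upper hemicontinuous in $\nu$ when $G$ is; since a point $\mu$ of the tight set above lies in $G'(\nu)$ iff $\phi_k(\mu)\in\phi_k(G'(\nu))$ for all $k$, weak measurability of $G'$ follows and Lemma~\ref{lem-measurable correspondence}(2) gives the stated Borel measurability. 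For part~(2), if $G$ is upper hemicontinuous it has closed graph (Lemma~\ref{lem-measurable correspondence}(6)); given $\nu_n\to\nu$ and $\mu_n=\mbox{Marg}_{Z}(\nu_n\diamond g_n)\in G'(\nu_n)$, the laws $\nu_n\diamond g_n$ are tight (the $Y$-marginals converge, the $Z$-marginals are uniformly tight by the same estimate on $\overline{\{\nu_n\}}$), a subsequential weak limit $\rho$ has $\mbox{Marg}_{Y}\rho=\nu$, and the separation argument above---with $b_f$ upper semicontinuous, so $\limsup_n\int h\,b_f\,\rmd\nu_n\le\int h\,b_f\,\rmd\nu$ by the portmanteau theorem---gives $\rho=\nu\diamond g$ with $g$ a selection of $G$; so every limit point of $\{\mu_n\}$ lies in $G'(\nu)$, which by the sequential criterion of Lemma~\ref{lem-measurable correspondence}(6) is upper hemicontinuity of $G'$. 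Conversely $G(y)=G'(\delta_y)$ with $y\mapsto\delta_y$ a homeomorphic embedding of $Y$ into $\cM(Y)$, so $G$ inherits upper (resp.\ lower) hemicontinuity from $G'$ by composition (Lemma~\ref{lem-measurable correspondence}(7)).

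Lower hemicontinuity of $G'$ from that of $G$ I would obtain from the sequential criterion of Lemma~\ref{lem-measurable correspondence}(5): given $\nu_n\to\nu$ and $\mu=\int_Y g\,\rmd\nu\in G'(\nu)$, use Lusin's theorem (Lemma~\ref{lem-lusin}(1)) to make $g$ continuous off a set of small $\nu$-measure and density of finitely supported measures to produce selections $g_{n_k}$ of $G$ with $\int_Y g_{n_k}\,\rmd\nu_{n_k}\to\mu$; continuity then follows by combining this with the upper hemicontinuous case. The step I expect to be the genuine obstacle is the closedness of $\Lambda(\nu)$---equivalently, that a weak limit of the joint laws $\nu\diamond g_n$ again disintegrates along $\nu$ into $G$---since this is where the infinite dimensionality of $\cM(Z)$ bites and forces the pairing of a countable convergence-determining family with Hahn--Banach separation; the finite-dimensional Lyapunov inputs and the marginal/tightness bookkeeping are comparatively routine.
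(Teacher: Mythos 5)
Your route is genuinely different from the paper's: the paper disposes of part (1) and the upper hemicontinuity equivalence by citing Lemma~19.29 and Theorem~19.30 of \cite{AB2006}, and only proves the lower hemicontinuity statement in detail; you instead try to prove everything from scratch via joint laws $\nu\diamond g$. The compactness argument (closedness of $\Lambda(\nu)$ via disintegration, the support-type functions $b_f$, and Hahn--Banach separation with a countable family from $U(Z)\cong C(\bar Z)$) and the forward upper hemicontinuity direction are essentially sound. But two steps have genuine gaps. First, measurability of $G'$ is not actually established: you claim that $\nu\mapsto\int_Y(\phi_k\circ G)\,\rmd\nu=\phi_k(G'(\nu))$ is Borel measurable (and u.h.c.) ``by Lemma~\ref{lem-lyapunov}(1)'', but that lemma concerns a \emph{fixed} (atomless) measure and asserts nothing about how the integral of a correspondence depends on the integrating measure; measurability in $\nu$ is precisely the content of the Aliprantis--Border result the paper cites, and your proposal asserts it rather than proves it (the final passage from the countable family of finite-dimensional membership conditions to weak measurability of $G'$ is also left unargued). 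A smaller but real slip: the tightness step integrates $\sup_{\mu\in G(y)}\mu(Z\setminus C_k)$ along compacts $C_k\uparrow Z$ and invokes monotone convergence; a Polish $Z$ need not be $\sigma$-compact, and for a pre-chosen sequence $C_k$ the pointwise convergence to $0$ can fail. The conclusion (uniform tightness of $G'(\nu)$) is true, but the correct argument is Lusin-type: restrict to a compact $Y_\epsilon$ with $\nu(Y\setminus Y_\epsilon)$ small on which $G$ is continuous, so that $G(Y_\epsilon)$ is compact in $\cM(Z)$ and hence uniformly tight.

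Second, the lower hemicontinuity of $G'$ --- the one part the paper proves in detail --- is the thinnest part of your proposal, and the sketch as written does not work. Making $g$ continuous on a compact set $D$ with $\nu(Y\setminus D)$ small does not help directly, because weak convergence $\nu_j\to\nu$ only gives $\limsup_j\nu_j(D)\le\nu(D)$ for the closed set $D$; the approximating measures may put almost all their mass off $D$, so neither $\int g\,\rmd\nu_j$ nor the integral of any selection agreeing with $g$ only on $D$ is controlled, and ``density of finitely supported measures'' does not address this (lower hemicontinuity must be verified along the given sequence $\nu_j$, not along measures you choose). The missing idea is exactly the paper's: set $G_k=\{g\}$ on $D_k$ and $G_k=G$ off $D_k$, observe that $G_k$ is nonempty, closed, convex valued and lower hemicontinuous, and apply Michael's selection theorem to get a \emph{globally continuous} selection $g_k$ of $G$ that agrees with $g$ on $D_k$; global continuity is what allows passage to the limit in $\int g_k\,\rmd\nu_j$, and the $D_k$-agreement keeps $\int g_k\,\rmd\nu$ close to $\mu$. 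Without this extension/patching step the continuity half of part (2) is unproven. So, contrary to your own diagnosis, the closedness of $\Lambda(\nu)$ is fine; the genuine gaps are the Borel measurability of $G'$ and the lower hemicontinuity argument.
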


\begin{proof}
(1) is Lemma~19.29 of \cite{AB2006}. By Theorem~19.30 therein, $G$ is upper hemicontinuous if and only if $G'$ is upper hemicontinuous. We need to show that $G'$ is lower hemicontinuous if $G$ is lower hemicontinuous.

Let $Z$ be endowed with a totally bounded metric, and $U(Z)$ the space of bounded, real-valued and uniformly continuous functions on $Z$ endowed with the supremum norm. Pick a countable set $\{f_m\}_{m \ge 1} \subseteq U(Z)$ such that $\{f_m\}$ is dense in the unit ball of $U(Z)$. The weak$^*$ topology of $\cM(Z)$ is metrizable by the metric $d_z$, where
$$d_z(\mu_1,\mu_2) = \sum^\infty_{m=1} {1\over{2^m}} \left| \int_{Z}f_m(z)\mu_1(\rmd z) - \int_{Z}f_m(z)\mu_2(\rmd z) \right|$$
for each pair of $\mu_1, \mu_2 \in \cM(Z)$.

Suppose that $\{\nu_j\}_{j \ge 0}$ is a sequence in $\cM(Y)$ such that $\nu_j \to \nu_0$ as $j \to \infty$. Pick an arbitrary point $\mu_0 \in G'(\nu_0)$. By the definition of $G'$, there exists a Borel measurable selection $g$ of $G$ such that $\mu_0 = \int_Y g(y) \nu_0(\rmd y)$.

For each $k \ge 1$, by Lemma~\ref{lem-lusin} (Lusin's theorem), there exists a compact subset $D_k\subseteq Y$ such that $g$ is continuous on $D_k$ and $\nu_0(Y\setminus D_k) < \frac{1}{3k}$. Define a correspondence $G_k \colon Y\to \cM(X)$ as follows:
$$G_k(y) =
\begin{cases}
\{g(y)\}, & y\in D_k; \\
G(y),     & y \in Y\setminus D_k.
\end{cases}
$$
Then $G_k$ is nonempty, convex and compact valued, and lower hemicontinuous. By Theorem~3.22 in \cite{AB2006}, $Y$ is paracompact. Then by Michael's selection theorem (see Theorem~17.66 in \cite{AB2006}), it has a continuous selection $g_k$.

For each $k$, since $\nu_j \to \nu_0$ and $g_k$ is continuous, $\int_Y g_k(y) \nu_j(\rmd y) \to \int_Y g_k(y) \nu_0(\rmd y)$ in the sense that for any $m \ge 1$,
$$\int_Y \int_Z f_m(z) g_k(\rmd z|y) \nu_j(\rmd y) \to \int_Y \int_Z f_m(z) g_k(\rmd z| y) \nu_0(\rmd y).$$
Thus, there exists a point $\nu_{j_k}$ such that $\{j_k\}$ is an increasing sequence and
$$d_z\left(\int_Y g_k(y) \nu_{j_k}(\rmd y), \int_Y g_k(y) \nu_0(\rmd y)\right) < \frac{1}{3k}.$$
In addition, since $g_k$ coincides with $g$ on $D_k$ and $\nu_0(Y\setminus D_k) < \frac{1}{3k}$,
$$d_z\left( \int_{Y} g_k(y) \nu_0(\rmd y), \int_{Y} g(y) \nu_0(\rmd y) \right)  < \frac{2}{3k}.
$$
Thus,
$$d_z\left(\int_Y g_k(y) \nu_{j_k}(\rmd y), \int_{Y} g(y) \nu_0(\rmd y) \right)  < \frac{1}{k}.
$$
Let $\mu_{j_k} = \int_Y g_k(y) \nu_{j_k}(\rmd y)$ for each $k$. Then $\mu_{j_k} \in G'(\nu_{j_k})$ and $\mu_{j_k} \to \mu_0$ as $k \to \infty$. By Lemma~\ref{lem-measurable correspondence}, $G'$ is lower hemicontinuous.
\end{proof}

\begin{lem}\label{lem-measurable composition}
Let $X$, $Y$ and $Z$ be Polish spaces, and $G$ a measurable, nonempty and compact valued correspondence from $X$ to $\cM(Y)$. Suppose that $F$ is a measurable, nonempty, convex and compact valued correspondence from $X\times Y$ to $\cM(Z)$. Define a correspondence $\Pi$ from $X$ to $\cM(Y\times Z)$ as follows:
\begin{align*}
\Pi(x) =  \{
& g(x)\diamond f(x) \colon  g \mbox{ is a Borel measurable selection of } G, \\
& f \mbox{ is a Borel measurable selection of } F \}.
\end{align*}
\begin{enumerate}
  \item If $F$ is sectionally continuous on $Y$, then $\Pi$ is a measurable, nonempty and compact valued correspondence.
  \item If there exists a function $g$ from $X$ to $\cM(Y)$ such that $G(x) = \{g(x)\}$ for any $x \in X$, then $\Pi$ is a measurable, nonempty and compact valued correspondence.
  \item If both $G$ and $F$ are continuous correspondences, then $\Pi$ is a nonempty and compact valued, and continuous correspondence.\footnote{In Lemma~29 of \cite{HRR1995}, they showed that $\Pi$ is upper hemicontinuous if $G$ and $F$ are both upper hemicontinuous.}
  \item If $G(x) \equiv \{\lambda\}$ for some fixed Borel probability measure $\lambda \in \cM(Y)$ and $F$ is sectionally continuous on $X$, then $\Pi$ is a continuous, nonempty and compact valued correspondence.
\end{enumerate}
\end{lem}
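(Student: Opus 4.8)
I would prove all four parts at once by transporting known properties of the transition--correspondence integral --- from the non-parametrized setting of Lemma~\ref{lem-cont tran corre} and the finite-dimensional Lyapunov results of Lemmas~\ref{lem-mertens}, \ref{lem-lyapunov} and~\ref{lem-upper} --- to the parametrized, $\cM$-valued situation here. The unifying device is the identity
\[
\mu\diamond q=\int_Y\bigl(\delta_y\otimes q(y)\bigr)\,\mu(\rmd y)\in\cM(Y\times Z),
\]
valid for any $\mu\in\cM(Y)$ and any Borel transition probability $q$ from $Y$ to $\cM(Z)$. Put $\hat F(x,y)=\{\delta_y\otimes\rho\colon\rho\in F(x,y)\}$; since $\rho\mapsto\delta_y\otimes\rho$ is affine and continuous, $\hat F$ is again measurable, nonempty, convex and compact valued, and it is continuous (resp.\ sectionally continuous on $X$, on $Y$) whenever $F$ is. Because every $\mu\in G(x)$ and every Borel selection of $F(x,\cdot)$ is realized by some Borel selection of $G$ and of $F$ (modify a fixed selection on the Borel set $\{x\}$), one obtains
\[
\Pi(x)=\bigcup_{\mu\in G(x)}G'_x(\mu),\qquad G'_x(\mu):=\Bigl\{\textstyle\int_Y\hat q(y)\,\mu(\rmd y)\colon\hat q\text{ a Borel selection of }\hat F(x,\cdot)\Bigr\},
\]
where, for each fixed $x$, $G'_x$ is exactly the correspondence attached to $\hat F(x,\cdot)$ by Lemma~\ref{lem-cont tran corre} with $Z$ there replaced by $Y\times Z$.

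Nonemptiness is then immediate from Lemma~\ref{lem-measurable selection}(3). For compactness of the values: in parts~(1) and~(3) the correspondence $\hat F(x,\cdot)$ is continuous, so $G'_x$ is continuous and compact valued by Lemma~\ref{lem-cont tran corre}, and $\Pi(x)=G'_x(G(x))$ is the image of the compact set $G(x)$ under a compact-valued upper hemicontinuous correspondence, hence compact by Lemma~\ref{lem-measurable selection}(6); in parts~(2) and~(4), $G(x)$ is a singleton, so $\Pi(x)=G'_x(g(x))$ (resp.\ $G'_x(\lambda)$) is compact directly by Lemma~\ref{lem-cont tran corre}(1), which uses only measurability of $\hat F(x,\cdot)$. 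For measurability of $\Pi$ in parts~(1) and~(2): absorbing the parameter, $\hat F$ is a measurable, nonempty, convex, compact valued correspondence on $X\times Y$ and (in part~(2)) $g$ is a transition probability from $X$ to $Y$, so that $\Pi(x)=\{\int_Y\hat q(x,y)\,g(x)(\rmd y)\colon\hat q\text{ a Borel selection of }\hat F\}$ has exactly the shape treated by Lemma~\ref{lem-mertens}, the only discrepancy being that the values lie in $\cM(Y\times Z)$ instead of $\bR^l$. I would remove this by coordinatizing $\cM(Y\times Z)$ through $\nu\mapsto(\int h_m\,\rmd\nu)_{m\ge1}$ for a countable family $\{h_m\}$ dense in the unit ball of the bounded uniformly continuous functions (the metrization already used in the proof of Lemma~\ref{lem-cont tran corre}), applying Lemma~\ref{lem-mertens} to each finite block of coordinates, using compactness of $\Pi(x)$ to identify $\Pi(x)$ with the projective limit of these blocks, and patching via Lemma~\ref{lem-measurable selection}(2); the union over $\mu\in G(x)$ in part~(1) is then handled by taking a Castaing representation of $G$ and invoking Lemma~\ref{lem-measurable selection}(1),(2) together with the continuity of $G'_x$. (Alternatively one checks directly that $\Pi$ has a Borel graph through a measurable disintegration --- $\nu\in\Pi(x)$ iff $\mathrm{Marg}_Y\nu\in G(x)$ and the $\mathrm{Marg}_Y\nu$-disintegration of $\nu$ lies in $F(x,\cdot)$ almost surely --- and applies Lemma~\ref{lem-measurable correspondence}.) In parts~(3) and~(4), measurability is subsumed by the continuity conclusions.

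It remains to establish continuity. In part~(3), upper hemicontinuity of $\Pi$ is Lemma~29 of \cite{HRR1995} (quoted in the footnote), equivalently a coordinatized form of Lemma~\ref{lem-upper}. In part~(4), after the same coordinatization $\Pi(x)$ becomes the Aumann integral $\int_Y\hat F(x,y)\,\lambda(\rmd y)$ with $\lambda$ fixed and $\hat F(\cdot,y)$ continuous for every $y$, so both hemicontinuities follow from Lemma~\ref{lem-lyapunov}(2) applied coordinatewise. The one genuinely new ingredient, and the place I expect the real work, is lower hemicontinuity of $\Pi$ in part~(3): given $x_n\to x$ and $\nu=g\diamond q\in\Pi(x)$, lower hemicontinuity of $G$ and Lemma~\ref{lem-measurable correspondence}(5) yield $g_n\in G(x_n)$ with $g_n\to g$, and one must then produce Borel kernels $q_n$ with $q_n(\cdot)\in F(x_n,\cdot)$ and $g_n\diamond q_n\to g\diamond q$. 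I would do this exactly as in the proof of Lemma~\ref{lem-cont tran corre}: use Lusin's theorem (Lemma~\ref{lem-lusin}(1)) to replace $q$ by a continuous kernel on a compact subset of $Y$ of $g$-measure close to $1$, form the associated lower-hemicontinuous ``patched'' correspondence, extract continuous kernels near the $x_n$ by Michael's selection theorem, fill in the complement with arbitrary Borel selections of $F(x_n,\cdot)$, and control the total discrepancy with a metric metrizing weak convergence. This Lusin--Michael kernel construction, now made to vary with the converging parameter $x_n$, is the main obstacle; everything else is bookkeeping layered on the cited lemmas.
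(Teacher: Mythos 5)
Your skeleton is the paper's: the lift $\hat F(x,y)=\{\delta_y\otimes\rho\colon\rho\in F(x,y)\}$ (the paper's $\tilde F$), the identity $\mu\diamond q=\int_Y(\delta_y\otimes q(y))\,\mu(\rmd y)$, a Castaing representation of $G$ in part (1), and compactness of values via the continuity of the fixed-$x$ integral correspondence. Where you diverge is in how you extract measurability and the part-(3) continuity, and both of your substitutes have holes. The paper applies Lemma~\ref{lem-cont tran corre} once to $\tilde F$ viewed as a correspondence on the product $X\times Y$, obtaining a measurable (and, when $F$ is continuous, continuous) correspondence from $\cM(X\times Y)$ to $\cM(Y\times Z)$, and then simply composes with the Borel map $x\mapsto\delta_x\otimes g_k(x)$ (parts (1)--(2)) or with the continuous correspondence $x\mapsto\delta_x\otimes G(x)$ (part (3), via Lemma~\ref{lem-measurable correspondence}(7)). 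You instead rebuild measurability through coordinatization and Lemma~\ref{lem-mertens}; the final patching step fails as stated, because recovering $\Pi$ from the finite-moment correspondences $M_m$ requires intersecting the sets $\{\nu\colon\pi_m(\nu)\in M_m(x)\}$, which are closed but not compact in $\cM(Y\times Z)$, while the intersection clause of Lemma~\ref{lem-measurable selection}(2) that you invoke needs compact values. Your fallback --- a Borel graph via disintegration plus Lemma~\ref{lem-measurable correspondence} --- is also not available from the cited tools: the graph-to-measurability implication there (part (4)) presupposes a complete probability space on the domain, and $X$ carries no measure here. Since you already have $\hat F$ over $X\times Y$ in hand, the repair is exactly the paper's one-line route.

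Two further points. In part (3) the paper never reruns the Lusin--Michael argument: continuity of $\tilde F$ gives, by Lemma~\ref{lem-cont tran corre}(2), continuity of the induced correspondence on $\cM(X\times Y)$, and both hemicontinuities of $\Pi$ follow from composition, the parameter being absorbed into the measure $\delta_{x_n}\otimes g_n$. Your plan to redo the construction with $F(x_n,\cdot)$ varying along the sequence is likely completable (patch $F$ only on $\{x\}\times D_k$ and check the patched correspondence is still lower hemicontinuous), but it is genuinely more delicate than ``exactly as in Lemma~\ref{lem-cont tran corre}'', where the correspondence is fixed, and you have not supplied that extra argument. In part (4), ``Lemma~\ref{lem-lyapunov}(2) applied coordinatewise'' is not enough: continuity of every finite vector of moments $x\mapsto\{(\int h_j\,\rmd\nu)_{j\le m}\colon\nu\in\Pi(x)\}$ does not yield hemicontinuity of $\Pi$ in the weak topology without a tightness argument allowing you to extract weakly convergent subsequences from $\nu_n\in\Pi(x_n)$; the $Z$-marginals are not controlled by $\lambda$. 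That missing content is precisely what the paper imports from Corollary~18.37 of \cite{AB2006} (nonemptiness, compactness, hence upper hemicontinuity) and Proposition~4.8 of \cite{Sun1997} (lower hemicontinuity), and you need those or an equivalent substitute rather than the finite-dimensional Lemma~\ref{lem-lyapunov}.
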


\begin{proof}

(1) Define three correspondences $\tilde{F} \colon X\times Y \to \cM(Y \times Z)$, $\hat{F} \colon \cM(X\times Y) \to \cM(Y \times Z)$ and $\check{F} \colon X\times \cM(Y) \to \cM(Y \times Z)$ as follows:
$$\tilde{F}(x,y) = \{\delta_{y} \otimes \mu\colon \mu \in F(x,y)\},
$$
$$\hat{F}(\tau) = \left\{\int_{X\times Y}f(x,y) \tau(\rmd (x,y)) \colon f \mbox{ is a Borel measurable selection of } \tilde{F} \right\},
$$
$$\check{F}(x, \mu) = \hat{F}(\delta_x\otimes \mu).
$$
Since $F$ is measurable, nonempty, convex and compact valued, $\tilde{F}$ is measurable, nonempty, convex and compact valued.
By Lemma~\ref{lem-cont tran corre}, the correspondence $\hat{F}$ is measurable, nonempty, convex and compact valued, and $\check{F}(x, \cdot)$ is continuous on $\cM (Y)$ for any $x\in X$.

Since $G$ is measurable and compact valued, there exists a sequence of Borel measurable selections $\{g_k\}_{k \ge 1}$ of $G$ such that $G(x) = \overline{\{g_1(x), g_2(x), \ldots\}}$ for any $x\in X$ by Lemma~\ref{lem-measurable selection}~(5). For each $k \ge 1$, define a correspondence $\Pi^k$ from $X$ to $\cM(Y\times Z)$ by letting $\Pi^k(x) =  \check{F}(x, g_k(x)) = \hat{F} (x \otimes g_k(x))$. Then $\Pi^k$ is measurable, nonempty, convex and compact valued.

Fix any $x\in X$. It is clear that $\Pi(x) = \check{F}(x, G(x))$ is a nonempty valued. Since $G(x)$ is compact, and $\check{F}(x, \cdot)$ is compact valued and continuous, $\Pi(x)$ is compact by Lemma~\ref{lem-measurable selection}. Thus, $\overline{\{\Pi^1(x), \Pi^2(x), \ldots\}} \subseteq \Pi(x)$.

Fix any $x \in X$ and $\tau \in \Pi(x)$. There exists a point $\nu \in G(x)$ such that $\tau \in \check{F}(x, \nu)$. Since $\{g_k(x)\}_{k \ge 1}$ is dense in $G(x)$, it has a subsequence $\{g_{k_m}(x)\}$ such that $g_{k_m}(x) \to \nu$. As $\check{F}(x, \cdot)$ is continuous, $\check{F}(x, g_{k_m}(x)) \to \check{F}(x, \nu)$. That is,
$$\tau \in  \overline{\{\check{F}(x, g_1(x)), \check{F}(x, g_2(x)), \ldots\}} = \overline{\{\Pi^1(x), \Pi^2(x), \ldots\}}.$$
Therefore, $\overline{\{\Pi^1(x), \Pi^2(x), \ldots\}} = \Pi(x)$ for any $x\in X$. Lemma~\ref{lem-measurable selection} (5) implies that $\Pi$ is measurable.

\

(2) As in (1), the correspondence $\hat{F}$ is measurable, nonempty, convex and compact valued. If $G$ is a measurable function, then $\Pi(x) = \hat{F} (x \otimes G(x))$, which is measurable, nonempty and compact valued.

\

(3) We continue to work with the two correspondences $\tilde{F} \colon X\times Y \to \cM(Y \times Z)$ and $\hat{F} \colon \cM(X\times Y) \to \cM(Y \times Z)$ as in Part (1). By the condition on $F$, it is obvious that the correspondence $\tilde{F}$ is continuous, nonempty, convex and compact valued. Lemma~\ref{lem-cont tran corre} implies the properties for the correspondence $\hat{F}$. Define a correspondence $\hat{G}\colon X\to \cM(X\times Y)$ as $\hat{G}(x) = \delta_{x}\otimes G(x)$. Since $\hat{G}$ and $\hat{F}$ are both nonempty valued, $\Pi(x) = \hat{F}(\hat{G}(x))$ is nonempty. As $\hat{G}$ is compact valued and $\hat{F}$ is continuous, $\Pi$ is compact valued by Lemma~\ref{lem-measurable selection}. As $\hat{G}$ and $\hat{F}$ are both continuous, $\Pi$ is continuous by Lemma~\ref{lem-measurable correspondence}~(7).

\

(4) The lower hemicontinuity is from Proposition~4.8 in \cite{Sun1997}. The nonemptiness and compactness follow from Corollary~18.37 of \cite{AB2006} while the upper hemicontinuity follows from the compactness property easily.
\end{proof}

The following result presents a variant of Lemma~\ref{lem-mertens} in terms of transition correspondences.

\begin{lem}\label{lem-measurable payoff}
Let $X$ and $Y$ be Polish spaces, and $Z$ a compact subset of $\bR_+^l$.  Let $G$ be a measurable, nonempty and compact valued correspondence from $X$ to $\cM(Y)$. Suppose that $F$ is a measurable, nonempty, convex and compact valued correspondence from $X\times Y$ to $Z$. Define a correspondence $\Pi$ from $X$ to $Z$ as follows:
\begin{align*}
\Pi(x) =  \{
& \int_{Y} f(x,y) g(\rmd y |x) \colon  g \mbox{ is a Borel measurable selection of } G, \\
& f \mbox{ is a Borel measurable selection of } F \}.
\end{align*}
If $F$ is sectionally continuous on $Y$, then
\begin{enumerate}
  \item the correspondence $\tilde{F}\colon X\times \cM(Y) \to Z$ as $\tilde{F}(x,\nu) = \int_Y F(x,y) \nu(\rmd y)$ is sectionally continuous on $\cM(Y)$; and
  \item $\Pi$ is a measurable, nonempty and compact valued correspondence.
  \item If $F$ and $G$ are both continuous, then $\Pi$ is continuous.
\end{enumerate}
\end{lem}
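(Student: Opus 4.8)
The plan is to reduce everything to results already available, mainly Lemma~\ref{lem-measurable composition}, Lemma~\ref{lem-mertens}, and the standard continuity properties of integration of set-valued maps recorded in Lemmas~\ref{lem-measurable correspondence}--\ref{lem-cont tran corre}. The key observation is that $\Pi$ is obtained from $F$ and $G$ by first integrating $F$ against a measure $\nu\in\cM(Y)$ (producing the auxiliary correspondence $\tilde F$ on $X\times\cM(Y)$), and then composing with $G$, which supplies the measure $\nu=g(x)$. So I would carry the argument out in exactly the order the statement lists: first establish the sectional continuity of $\tilde F$ on $\cM(Y)$, then deduce the measurability--nonemptiness--compactness of $\Pi$, and finally, under the extra continuity hypothesis, the full continuity of $\Pi$.

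First I would treat part~(1). Fix $x\in X$ such that $F(x,\cdot)$ is continuous on $Y$ (these are all $x$, by hypothesis). I want to show $\nu\mapsto\tilde F(x,\nu)=\int_Y F(x,y)\,\nu(\rmd y)$ is upper and lower hemicontinuous on $\cM(Y)$ and compact valued. Nonemptiness and compactness of $\int_Y F(x,y)\,\nu(\rmd y)$ follow from Lemma~\ref{lem-mertens}~(1) applied with the trivial one-point parameter space carrying the measure $\nu$, using that $Z$ is a fixed compact subset of $\bR^l_+$ so the integrand is uniformly bounded (hence $f$-integrable for every measurable selection $f$). For the hemicontinuity, I would mimic the argument in the proof of Lemma~\ref{lem-cont tran corre}: for upper hemicontinuity use that $F(x,\cdot)$ has compact graph over the compact-valued closed-graph structure plus boundedness, so the image integral behaves upper hemicontinuously under weak convergence $\nu_j\to\nu_0$ — pick selections $f_j$ realizing points of $\tilde F(x,\nu_j)$ and pass to a subsequential limit, which by the Lyapunov-type closure in Lemma~\ref{lem-mertens}~(1) (or directly by Lemma~\ref{lem-lyapunov}~(2) after writing $\nu_j$ as a pushforward of Lebesgue measure) lies in $\tilde F(x,\nu_0)$. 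For lower hemicontinuity, given $\mu_0=\int_Y f(x,y)\,\nu_0(\rmd y)$ with $f(x,\cdot)$ a Borel selection, apply Lusin's theorem (Lemma~\ref{lem-lusin}~(1)) to find compact $D_k\subseteq Y$ with $\nu_0(Y\setminus D_k)$ small on which $f(x,\cdot)$ is continuous, patch with a Michael continuous selection of $F(x,\cdot)$ off $D_k$ to get a continuous selection $f_k$, integrate against $\nu_j$ and diagonalize exactly as in Lemma~\ref{lem-cont tran corre}. This gives a point of $\tilde F(x,\nu_{j_k})$ converging to $\mu_0$, so $\tilde F(x,\cdot)$ is lower hemicontinuous by Lemma~\ref{lem-measurable correspondence}~(5).

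For part~(2), observe that $\Pi(x)=\bigl\{\int_Y f(x,y)\,g(\rmd y\,|\,x):g\text{ sel.\ of }G,\ f\text{ sel.\ of }F\bigr\}=\tilde F\bigl(x,G(x)\bigr)$, i.e.\ $\Pi$ is the composition of the measurable compact-valued correspondence $x\mapsto(x,G(x))$ with $\tilde F$. Now $\tilde F$ itself is measurable: this follows from Lemma~\ref{lem-mertens}~(1) applied to the transition probability that sends $(x,\nu)$ to $\nu$ on $Y$ and to the correspondence $Q(x,\nu,y)=F(x,y)$, whose graph is $\cS\otimes\cB(\cM(Y))\otimes\cB(Y)$-measurable; part~(1) just proved gives the sectional continuity in $\nu$. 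Then I repeat verbatim the density argument of Lemma~\ref{lem-measurable composition}~(1): choose Borel selections $\{g_k\}$ of $G$ with $G(x)=\overline{\{g_1(x),g_2(x),\dots\}}$ (Lemma~\ref{lem-measurable selection}~(5)), set $\Pi^k(x)=\tilde F(x,g_k(x))$, each $\Pi^k$ measurable, and use the sectional continuity of $\tilde F(x,\cdot)$ together with compactness of $G(x)$ to conclude $\Pi(x)=\overline{\{\Pi^1(x),\Pi^2(x),\dots\}}$, hence $\Pi$ is measurable by Lemma~\ref{lem-measurable selection}~(5) again; nonemptiness is clear and compactness of $\Pi(x)=\tilde F(x,G(x))$ follows from Lemma~\ref{lem-measurable selection}~(6) since $G(x)$ is compact and $\tilde F(x,\cdot)$ is compact-valued upper hemicontinuous. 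For part~(3), if in addition $F$ and $G$ are continuous, then part~(1)'s argument upgrades: continuity of $F$ in $(x,y)$ jointly makes $\tilde F$ continuous on $X\times\cM(Y)$ (again by the Lemma~\ref{lem-cont tran corre} / Lemma~\ref{lem-measurable composition}~(3) pattern, using joint upper hemicontinuity plus the lower-hemicontinuity diagonalization uniformly in $x$), and $x\mapsto(x,G(x))$ is continuous when $G$ is; composition of continuous compact-valued correspondences is continuous (Lemma~\ref{lem-measurable selection}~(6)--(7) and Lemma~\ref{lem-measurable correspondence}~(7)), so $\Pi$ is continuous.

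The main obstacle I anticipate is the lower-hemicontinuity half of part~(1) — showing $\tilde F(x,\cdot)$ is lower hemicontinuous on $\cM(Y)$, and doing so with enough uniformity in $x$ to get the joint continuity needed for part~(3). Upper hemicontinuity and all the measurability bookkeeping are routine given Lemma~\ref{lem-mertens}, but the lower-hemicontinuity step genuinely requires the Lusin-plus-Michael-selection-plus-diagonalization construction, and one has to be careful that the approximating continuous selections $f_k$ of $F(x,\cdot)$ and the subsequence $\{j_k\}$ can be chosen measurably in $x$ when the stronger hypotheses of part~(3) are in force; here I would lean on the measurable selection machinery (Lemma~\ref{lem-lusin}~(2),(3)) to make those choices jointly measurable rather than pointwise.
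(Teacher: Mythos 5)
Your proposal is correct and follows essentially the same route as the paper: the paper's own proof simply notes that upper hemicontinuity of $\tilde F(x,\cdot)$ follows from Lemma~\ref{lem-upper}, lower hemicontinuity follows from the Lusin--Michael--diagonalization argument of Lemma~\ref{lem-cont tran corre}, and parts (2) and (3) follow the composition/density argument of Lemma~\ref{lem-measurable composition}, which is precisely your plan (your direct subsequence argument for upper hemicontinuity is just an unpacking of what Lemma~\ref{lem-upper} provides). No gap of substance; you could streamline by citing Lemma~\ref{lem-upper} for the upper hemicontinuity step instead of the pushforward aside.
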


\begin{proof}
The upper hemicontinuity of $\tilde{F}(x, \cdot)$ follows from Lemma~\ref{lem-upper}, and the proof for the lower hemicontinuity of $\tilde{F}(x, \cdot)$ is similar to that of Lemma~\ref{lem-cont tran corre}. The proof of (2) and (3) follows a similar argument as in the proof of Lemma~\ref{lem-measurable composition}.
\end{proof}

\begin{lem}\label{lem-lusin prep}
Let $S$, $X$ and $Y$ be Polish spaces endowed with the Borel $\sigma$-algebras, and $\lambda$ a Borel probability measure on $S$. Denote $\cS$ as the completion of the Borel $\sigma$-algebra $\cB(S)$ of $S$ under the probability measure $\lambda$. Suppose that $D$ is a $\cB(S) \otimes \cB(Y)$-measurable subset of $S\times Y$, where $D(s)$ is nonempty and compact for all $s\in S$. Let $A$ be a nonempty and compact valued correspondence from $D$ to $X$, which is sectionally continuous on $Y$ and has a $\cB(S \times Y\times X)$-measurable graph. Then
\begin{enumerate}
  \item[(i)] $\tilde{A}(s) = \mbox{Gr}(A(s, \cdot))$ is an $\cS$-measurable mapping from $S$ to the set of nonempty and compact subsets $\cK_{Y\times X}$ of $Y\times X$;
  \item[(ii)] there exist countably many disjoint compact subsets $\{S_m\}_{m\ge 1}$ of $S$ such that (1) $\lambda(\cup_{m \ge 1} S_m) = 1$, and (2) for each $m \ge 1$, $D_m = D\cap (S_m \times Y)$ is compact, and $A$ is nonempty and compact valued, and continuous on each $D_m$.
\end{enumerate}
\end{lem}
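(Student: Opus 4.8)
The plan is to treat both parts through one device: lift the fibrewise-continuous correspondence $A(s,\cdot)$ to a map into a hyperspace whose graph over the compact set $D(s)$ is a compact set depending $\cS$-measurably on $s$; part~(i) is then read off directly, and part~(ii) is obtained by a Lusin exhaustion. For a Polish space $W$ write $\cK_W$ for the Polish space (Lemma~\ref{lem-polish compact}) of nonempty compact subsets of $W$ under the Hausdorff metric. For (i), since $D(s)$ is compact and $A(s,\cdot)$ is continuous and compact valued on it, $\tilde A(s)=\mbox{Gr}(A(s,\cdot))$ is a nonempty compact subset of $Y\times X$ by Lemma~\ref{lem-measurable selection}(6); regarded as a correspondence from $S$ to $Y\times X$, $\tilde A$ has graph $\{(s,y,x):(s,y)\in D,\ x\in A(s,y)\}$, which is $\mbox{Gr}(A)$ under the identification $(S\times Y)\times X\cong S\times(Y\times X)$ and hence lies in $\cB(S)\otimes\cB(Y\times X)\subseteq\cS\otimes\cB(Y\times X)$. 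Since $(S,\cS,\lambda)$ is a complete probability space and $\tilde A$ is compact valued, Lemma~\ref{lem-measurable correspondence}(4) makes $\tilde A$ $\cS$-measurable, and Lemma~\ref{lem-measurable correspondence}(2) then gives that $s\mapsto\tilde A(s)$ is $\cS$-measurable into $\cK_{Y\times X}$, which is~(i).

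For (ii) I would first note that applying Lusin to the $(Y\times X)$-valued map $\tilde A$ does \emph{not} suffice: Hausdorff convergence of the graphs $\mbox{Gr}(A(s_n,\cdot))$ in $Y\times X$ is consistent with the fibre over a moving point degenerating, so the restricted $A$ may fail lower hemicontinuity. Instead I would use the \emph{single-valued} lift $B\colon D\to\cK_X$, $B(s,y)=A(s,y)$. For fixed $y$ the correspondence $s\mapsto A(s,y)$ has Borel graph (a section of $\mbox{Gr}(A)$), hence is $\cS$-measurable into $\cK_X$ by Lemma~\ref{lem-measurable correspondence}(4),(2); for fixed $s$, $y\mapsto A(s,y)\in\cK_X$ is continuous by the sectional continuity hypothesis and Lemma~\ref{lem-measurable correspondence}(3); so $B$ is jointly measurable (a Carath\'eodory map; or build it from a dense sequence of measurable selections of $\tilde A$). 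Put $\Gamma(s)=\mbox{Gr}(B(s,\cdot))=\{(y,A(s,y)):y\in D(s)\}\subseteq Y\times\cK_X$; as the graph of a continuous function on the compact $D(s)$ it is compact (Lemma~\ref{lem-measurable selection}(6)), the correspondence $\Gamma\colon S\to Y\times\cK_X$ has graph $\mbox{Gr}(B)\in\cS\otimes\cB(Y\times\cK_X)$, and therefore $s\mapsto\Gamma(s)$ is $\cS$-measurable into $\cK_{Y\times\cK_X}$ by Lemma~\ref{lem-measurable correspondence}(4),(2).

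Then I would apply Lusin's theorem (Lemma~\ref{lem-lusin}(1)) to $s\mapsto\Gamma(s)$ repeatedly on successive complements, obtaining disjoint compact $S_m\subseteq S$ with $\lambda(\bigcup_m S_m)=1$ on each of which $\Gamma$ is continuous. Fix $m$. As $D(s)$ is the projection of $\Gamma(s)$ onto $Y$ and projection is continuous between hyperspaces, $s\mapsto D(s)$ is continuous and compact valued on the compact $S_m$, so $D_m=D\cap(S_m\times Y)$ is compact by Lemma~\ref{lem-measurable selection}(6); likewise $\mbox{Gr}(B|_{D_m})=\mbox{Gr}(\Gamma|_{S_m})$ is compact, so all the sets $A(s,y)$ with $(s,y)\in D_m$ sit in a single compact $X_m\subseteq X$. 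To get continuity of $A$ on $D_m$, take $(s_n,y_n)\to(s_0,y_0)$ in $D_m$; since $A(s_n,y_n)\in\cK_{X_m}$ and $\cK_{X_m}$ is compact, a subsequence has $A(s_{n_k},y_{n_k})\to K^{\ast}$, so $(y_{n_k},A(s_{n_k},y_{n_k}))\to(y_0,K^{\ast})$ with each term in $\Gamma(s_{n_k})$; upper hemicontinuity of $\Gamma|_{S_m}$ (Lemma~\ref{lem-measurable correspondence}(6)) gives $(y_0,K^{\ast})\in\Gamma(s_0)=\mbox{Gr}(B(s_0,\cdot))$, and because this is the graph of a \emph{function} we get $K^{\ast}=A(s_0,y_0)$. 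Hence every convergent subsequence of $\{A(s_n,y_n)\}\subseteq\cK_{X_m}$ has limit $A(s_0,y_0)$, so $A(s_n,y_n)\to A(s_0,y_0)$, and by Lemma~\ref{lem-measurable correspondence}(3) $A$ is continuous (and, by hypothesis, nonempty and compact valued) on $D_m$.

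The step I expect to be the main obstacle is this lower-hemicontinuity verification on the pieces $D_m$, and the whole construction---working with the $\cK_X$-valued lift $B$ and its graph $\Gamma$ rather than with $\tilde A$---is there precisely to overcome it: it is the single-valuedness of $B(s,\cdot)$ that, together with upper hemicontinuity, identifies the limiting set $K^{\ast}$. The remaining ingredients (compactness of the various graphs over compact domains, continuity of the hyperspace projection, and the $\cS$-measurability of $s\mapsto\Gamma(s)$) are routine once these objects are in place.
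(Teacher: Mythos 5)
Your proof of part (i) is the paper's own argument verbatim; for part (ii), however, you take a genuinely different and in fact more careful route, and it is correct. The paper's proof applies Lusin's theorem directly to the two hyperspace-valued maps $s\mapsto \tilde{D}(s)=D(s)\in\cK_Y$ and $s\mapsto\tilde{A}(s)=\mbox{Gr}(A(s,\cdot))\in\cK_{Y\times X}$, extracts disjoint compact sets $S_m$ on which both are continuous, and then simply asserts that $A$ is continuous on $D_m$; no argument is offered for lower hemicontinuity. Your objection to that shortcut is substantive: Hausdorff continuity of $s\mapsto\mbox{Gr}(A(s,\cdot))$ together with sectional continuity does not by itself yield joint lower hemicontinuity (for instance, with $Y=X=[0,1]$, $A(1/n,y)=[0,\min(1,n|y-1/2|)]$ and $A(0,y)=[0,1]$: every section is continuous, the graphs converge in the Hausdorff metric to $[0,1]^2$, yet $A$ fails lower hemicontinuity at $(0,1/2)$). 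Your device of applying Lusin instead to $\Gamma(s)=\mbox{Gr}(B(s,\cdot))\subseteq Y\times\cK_X$, where $B(s,y)=A(s,y)$ is the single-valued $\cK_X$-valued lift, is exactly what repairs this: on a piece where $\Gamma$ is Hausdorff continuous, compactness of $\cK_{X_m}$ plus the fact that $\Gamma(s_0)$ is the graph of a \emph{function} pins the subsequential limit down to $K^{\ast}=A(s_0,y_0)$, and your subsequence argument, together with Lemma~\ref{lem-measurable correspondence}~(3), gives continuity of $A$ on $D_m$; the compactness of $D_m$ and the exhaustion with $\lambda(\cup_m S_m)=1$ go through as in the paper. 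So your approach buys a complete justification of the step the paper leaves implicit, at the cost of one extra layer of hyperspace. The only point you should tighten is the joint measurability of $B$ on the non-product domain $D$, which you need so that $\mbox{Gr}(\Gamma)\in\cS\otimes\cB(Y\times\cK_X)$ before invoking Lemma~\ref{lem-measurable correspondence}~(4),(2) and Lusin: the textbook Carath\'eodory lemma is stated for product domains, but your parenthetical fix works --- take a Castaing sequence of $\cS$-measurable selections $\{d_j\}$ of the compact-valued measurable correspondence $\tilde{D}$ (Lemma~\ref{lem-measurable selection}~(5)), approximate $B(s,y)$ by $A(s,d_{j(k,s,y)}(s))$ with $d_{j(k,s,y)}(s)$ an almost-nearest point of $\{d_j(s)\}$ to $y$, and pass to the pointwise limit using the continuity of $A(s,\cdot)$ on $D(s)$. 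With that detail supplied, your argument establishes the lemma.
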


\begin{proof}
(i) $A(s,\cdot)$ is continuous and $D(s)$ is compact, $\mbox{Gr}(A(s,\cdot)) \subseteq Y\times X$ is compact by Lemma~\ref{lem-measurable selection}.  Thus, $\tilde{A}$ is nonempty and compact valued. Since $A$ has a measurable graph, $\tilde{A}$ is an $\cS$-measurable mapping from $S$ to the set of nonempty and compact subsets $\cK_{Y\times X}$ of $Y\times X$ by Lemma~\ref{lem-measurable correspondence}~(4).

(ii) Define a correspondence $\tilde{D}$ from $S$ to $Y$ such that $\tilde{D}(s) = \{y \in Y \colon (s,y) \in D\}$. Then $\tilde{D}$ is nonempty and compact valued. As in (i), $\tilde{D}$ is $\cS$-measurable. By Lemma~\ref{lem-lusin} (Lusin's~Theorem), there exists a compact subset $S_1 \subseteq S$ such that $\lambda(S\setminus S_1) < \frac{1}{2}$, $\tilde{D}$ and $\tilde{A}$ are continuous functions on $S_1$. By Lemma~\ref{lem-measurable correspondence}~(3), $\tilde{D}$ and $\tilde{A}$ are continuous correspondences on $S_1$. Let $D_1  = \{(s,y)\in D\colon s\in S_1, y \in \tilde{D}(s)\}$. Since $S_1$ is compact and $\tilde{D}$ is continuous, $D_1$ is compact (see Lemma~\ref{lem-measurable selection}~(6)).

Following the same procedure, for any $m\ge 1$, there exists a compact subset $S_m \subseteq S$ such that (1) $S_m\cap (\cup_{1\le k \le m-1} S_k) = \emptyset$ and $D_m = D\cap (S_m \times Y)$ is compact, (2) $\lambda(S_m) > 0$ and $\lambda\left(S\setminus (\cup_{1\le k \le m}S_m)\right) < \frac{1}{2m}$, and (3) $A$ is nonempty and compact valued, and continuous on $D_m$. This completes the proof.
\end{proof}

\begin{lem}\label{lem-countable partition}
Let $S$ and $X$ be Polish spaces, and $\lambda$ a Borel probability measure on $S$. Suppose that $\{S_k\}_{k \ge 1}$ is a sequence of disjoint compact subsets of $S$ such that $\lambda(\cup_{k \ge 1} S_k) = 1$. For each $k$, define a probability measure on $S_k$ as $\lambda_k(D) = \frac{\lambda(D)}{\lambda(S_k)}$ for any measurable subset $D \subseteq S_k$. Let $\{\nu_m\}_{m \ge 0}$ be a sequence of transition probabilities from $S$ to $\cM(X)$, and $\tau_m = \lambda\diamond \nu_m$ for any $m \ge 0$. Then $\tau_m$ weakly converges to $\tau_0$ if and only if $\lambda_k \diamond \nu_m$ weakly converges to $\lambda_k \diamond \nu_0$ for each $k \ge 1$.
\end{lem}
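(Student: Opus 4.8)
The plan is to reduce everything to one elementary identity plus a limiting argument. Since the $S_k$ are pairwise disjoint and $\lambda\big(\bigcup_{k\ge 1}S_k\big)=1$, checking measurable rectangles $A\times B\subseteq S\times X$ shows that, as Borel probability measures on $S\times X$,
$$\tau_m=\sum_{k\ge 1}p_k\,\mu_m^k,\qquad p_k:=\lambda(S_k)>0,\qquad \mu_m^k:=\lambda_k\diamond\nu_m,$$
where $\lambda_k$ is viewed as a measure on $S$ carried by $S_k$, so that each $\mu_m^k$ is a probability measure carried by the closed set $S_k\times X$, and $\sum_{k\ge 1}p_k=1$. (Here one uses $\cB(S\times X)=\cB(S)\otimes\cB(X)$ for Polish $S,X$, so agreement on rectangles suffices.) Everything then reduces to relating weak convergence of the sum to weak convergence of the summands; write $C_b(S\times X)$ for the bounded continuous real functions on $S\times X$.

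For the ``if'' direction I would argue directly. Assuming $\mu_m^k$ converges weakly to $\mu_0^k$ for every $k$, take $g\in C_b(S\times X)$ with $|g|\le C$; then $\int g\,\rmd\tau_m=\sum_k p_k\int g\,\rmd\mu_m^k$, each summand converges to $p_k\int g\,\rmd\mu_0^k$ as $m\to\infty$ and is bounded in absolute value by $Cp_k$ with $\sum_k Cp_k<\infty$, so dominated convergence over the counting measure on $\bN$ gives $\int g\,\rmd\tau_m\to\int g\,\rmd\tau_0$. I expect this half to be routine.

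The ``only if'' direction is where the real work lies, and it is the step I expect to be the main obstacle. Fix $k$ and assume $\tau_m$ converges weakly to $\tau_0$; I must recover weak convergence of $\mu_m^k$ to $\mu_0^k$. The difficulty is that the compact sets $S_j$ with $j\ne k$ may accumulate onto $S_k$, so there is no single continuous function separating $S_k$ from $\bigcup_{j\ne k}S_j$. I would instead use a shrinking family of Urysohn functions: fixing a compatible metric $\rho$ on $S$, set $U_\varepsilon=\{s\in S:\rho(s,S_k)<\varepsilon\}$ and $\psi_\varepsilon(s)=\max\{0,\,1-\varepsilon^{-1}\rho(s,S_k)\}$, so $\psi_\varepsilon\in C_b(S)$, $\psi_\varepsilon\equiv 1$ on $S_k$, and $\psi_\varepsilon\equiv 0$ off $U_\varepsilon$. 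For $g\in C_b(S\times X)$ with $|g|\le C$, applying weak convergence of $\tau_m$ to $\tau_0$ against the bounded continuous function $(s,x)\mapsto\psi_\varepsilon(s)g(s,x)$ and splitting along the partition gives $\int\psi_\varepsilon g\,\rmd\tau_m=p_k\int g\,\rmd\mu_m^k+R_m(\varepsilon)$, using $\psi_\varepsilon\equiv 1$ on $S_k$; and since $\psi_\varepsilon\le\mathbf{1}_{U_\varepsilon}$ and $\bigcup_{j\ne k}S_j$ is disjoint from $S_k$, the remainder obeys $|R_m(\varepsilon)|\le C\sum_{j\ne k}\lambda(S_j\cap U_\varepsilon)\le C\,\lambda(U_\varepsilon\setminus S_k)$, a bound uniform in $m$.

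To finish I would let $m\to\infty$ first and $\varepsilon\downarrow 0$ second. For fixed $\varepsilon$, $\int\psi_\varepsilon g\,\rmd\tau_m\to\int\psi_\varepsilon g\,\rmd\tau_0$, so taking $\limsup$ and $\liminf$ in $m$ confines both $\limsup_m p_k\int g\,\rmd\mu_m^k$ and $\liminf_m p_k\int g\,\rmd\mu_m^k$ to within $C\,\lambda(U_\varepsilon\setminus S_k)$ of $\int\psi_\varepsilon g\,\rmd\tau_0$. Now letting $\varepsilon\downarrow 0$: since $S_k$ is compact, hence closed, $U_\varepsilon\downarrow S_k$, so $\lambda(U_\varepsilon\setminus S_k)\to 0$ by continuity from above of the finite measure $\lambda$, while $\psi_\varepsilon g\to\mathbf{1}_{S_k}g$ pointwise and boundedly, whence $\int\psi_\varepsilon g\,\rmd\tau_0\to\int_{S_k\times X}g\,\rmd\tau_0=p_k\int g\,\rmd\mu_0^k$ by dominated convergence. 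Dividing by $p_k>0$ gives $\int g\,\rmd\mu_m^k\to\int g\,\rmd\mu_0^k$ for every $g\in C_b(S\times X)$; since the $\mu_m^k$ are all carried by the closed set $S_k\times X$, this is exactly weak convergence of $\mu_m^k$ to $\mu_0^k$ (test functions pass between $S_k\times X$ and $S\times X$ by restriction and, conversely, by the Tietze extension theorem). The one point deserving care is the uniformity in $m$ of the bound on $R_m(\varepsilon)$: it is precisely this that legitimizes performing the $m$-limit before the $\varepsilon$-limit.
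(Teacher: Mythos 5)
Your proof is correct, but it runs on a different engine than the paper's. Both arguments start from the same decomposition of $\tau_m$ over the partition $\{S_k\}$, but the paper works entirely with the closed-set portmanteau criterion: for the direction you found hardest (``$\tau_m\Rightarrow\tau_0$ implies $\lambda_k\diamond\nu_m\Rightarrow\lambda_k\diamond\nu_0$''), it simply observes that any closed $E\subseteq S_k\times X$ is closed in $S\times X$ and that $\lambda_k\diamond\nu_m(E)=\lambda(S_k)^{-1}\,\lambda\diamond\nu_m(E)$, so $\limsup_m\lambda_k\diamond\nu_m(E)\le\lambda_k\diamond\nu_0(E)$ follows in one line (closedness of $S_k\times X$ making closed subsets of $S_k\times X$ closed in $S\times X$ is exactly what replaces your Urysohn-cutoff/iterated-limit machinery, and it uses only that $S_k$ is closed, as does your argument). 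Conversely, in the direction the paper finds slightly delicate, it decomposes a closed $E$ into $E_k=E\cap(S_k\times X)$ and interchanges $\limsup_m$ with the infinite sum by appealing to ``subadditivity of the limit superior''; for an infinite series this tacitly needs the domination $\lambda\diamond\nu_m(E_k)\le\lambda(S_k)$ (a reverse-Fatou argument), a point your test-function version handles more transparently via dominated convergence over the counting measure with the summable bound $C\lambda(S_k)$. So your route is longer where the paper is short (the cutoff functions $\psi_\varepsilon$, the uniform-in-$m$ remainder bound, and the iterated $m$-then-$\varepsilon$ limit are all dispensable once one tests against closed sets), and cleaner where the paper is terse; the only cosmetic caveats are that $\lambda(S_k)>0$ is implicit in the definition of $\lambda_k$ (as you assumed), and your closing Tietze remark is needed only if one insists on reading the conclusion as weak convergence on $S_k\times X$ rather than on $S\times X$.
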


\begin{proof}
First, we assume that $\tau_m$ weakly converges to $\tau_0$. For any closed subset $E \subseteq S_k \times X$, we have $\limsup_{m \to \infty} \tau_m(E) \le \tau_0(E)$. That is, $\limsup_{m \to \infty} \lambda \diamond \nu_m(E) \le \lambda \diamond \nu_0(E)$. For any $k$, $\frac{1}{\lambda(S_k)}\limsup_{m \to \infty} \lambda \diamond \nu_m(E) \le \frac{1}{\lambda(S_k)} \lambda \diamond \nu_0(E)$, which implies that $\limsup_{m \to \infty} \lambda_k \diamond \nu_m(E) \le \lambda_k \diamond \nu_0(E)$. Thus, $\lambda_k \diamond \nu_m$ weakly converges to $\lambda_k \diamond \nu_0$ for each $k \ge 1$.

Second, we consider the case that $\lambda_k \diamond \nu_m$ weakly converges to $\lambda_k \diamond \nu_0$ for each $k \ge 1$. For any closed subset $E \subseteq S\times X$, let $E_k = E\cap (S_k \times X)$ for each $k \ge 1$. Then $\{E_k\}$ are disjoint closed subsets and $\limsup_{m \to \infty} \lambda_k \diamond \nu_m(E_k) \le  \lambda_k \diamond \nu_0(E_k)$. Since $\lambda_k \diamond \nu_m (E') = \frac{1}{\lambda(S_k)} \lambda\diamond \nu_m(E')$ for any $k$, $m$ and measurable subset $E' \subseteq S_k \times X$, we have that $\limsup_{m \to \infty} \lambda \diamond \nu_m(E_k) \le  \lambda \diamond \nu_0(E_k)$. Thus,
$$\sum_{k \ge 1}\limsup_{m \to \infty} \lambda \diamond \nu_m(E_k) \le  \sum_{k \ge 1}\lambda \diamond \nu_0(E_k) = \lambda \diamond \nu_0(E).
$$
Since the limit superior is subadditive, we have
$$\sum_{k \ge 1}\limsup_{m \to \infty} \lambda \diamond \nu_m(E_k) \ge  \limsup_{m \to \infty} \sum_{k \ge 1}\lambda \diamond \nu_m(E_k) = \limsup_{m \to \infty} \lambda \diamond \nu_m(E).
$$
Therefore, $\limsup_{m \to \infty} \lambda \diamond \nu_m(E) \le \lambda \diamond \nu_0(E)$, which implies that $\tau_m$ weakly converges to $\tau_0$.
\end{proof}

\begin{lem}\label{lem-continuous composition}
Suppose that $X, Y$ and $S$ are Polish spaces and $Z$ is a compact metric space. Let $\lambda$ be a Borel probability measure on $S$, and $A$ a nonempty and compact valued correspondence from $Z\times S$ to $X$ which is sectionally upper hemicontinuous on $Z$ and has a $\cB(Z \times S \times X)$-measurable graph. Let $G$ be a nonempty and compact valued, continuous correspondence from $Z$ to $\cM(X\times S)$. We assume that for any $z\in Z$ and $\tau \in G(z)$, the marginal of $\tau$ on $S$ is $\lambda$ and $\tau (\mbox{Gr}(A(z, \cdot))) = 1$. Let $F$ be a measurable, nonempty, convex and compact valued correspondence from $\mbox{Gr}(A) \to \cM(Y)$ such that $F$ is sectionally continuous on $Z\times X$. Define a correspondence $\Pi$ from $Z$ to $\cM (X \times S \times Y)$ by letting
\begin{align*}
\Pi(z)
& = \{g(z) \diamond f(z, \cdot) \colon  g \mbox{ is a Borel measurable selection of } G,  \\
& \qquad f \mbox{ is a Borel measurable selection of } F \}.
\end{align*}
Then the correspondence $\Pi$ is nonempty and compact valued, and continuous.
\end{lem}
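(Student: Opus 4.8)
The plan is to factor $\Pi$ through a joining operation on the compact set $\mbox{Gr}(G)$ and then to reduce the continuity of that operation to a continuous structure by a Lusin-type partition of the state space $S$. Concretely, I would define a correspondence $\Theta$ from $\mbox{Gr}(G)\subseteq Z\times\cM(X\times S)$ to $\cM(X\times S\times Y)$ by $\Theta(z,\nu)=\{\nu\diamond f(z,\cdot)\colon f\text{ a Borel measurable selection of }F\}$, where $f(z,\cdot)$ is the transition probability $(x,s)\mapsto f(z,s,x)$; this is well defined because $\nu(\mbox{Gr}(A(z,\cdot)))=1$, so the values of $f$ outside $\mbox{Gr}(A)$ are irrelevant. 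Then $\Pi(z)=\Theta(\{z\}\times G(z))$, since every element of $G(z)$ equals $g(z)$ for some Borel measurable selection $g$ of $G$ ($G$ being measurable, nonempty and closed valued). Because $G$ is continuous, nonempty and compact valued and $Z$ is compact, $\mbox{Gr}(G)$ is compact and $z\mapsto\{z\}\times G(z)$ is a continuous, nonempty and compact valued correspondence (Lemma~\ref{lem-measurable selection}). Hence it suffices to show that $\Theta$ is nonempty and compact valued and continuous on $\mbox{Gr}(G)$; the same properties of $\Pi$ then follow by composition, using Lemma~\ref{lem-measurable selection} (composition of lower hemicontinuous correspondences; image of a compact set under a compact valued upper hemicontinuous correspondence) and Lemma~\ref{lem-measurable correspondence} (composition of upper hemicontinuous correspondences). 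Nonemptiness of $\Theta$ is immediate from the measurable selection part of Lemma~\ref{lem-measurable selection}.

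The next step is a Lusin partition of $S$, which tames the only problematic coordinate, the state: $F$ is merely measurable in $s$ and $A$ is merely sectionally upper hemicontinuous in $z$. However, since $Z$ is compact, $\mbox{Gr}(A(\cdot,s))$ and $\mbox{Gr}(F(\cdot,s,\cdot))$ are \emph{compact} subsets of $Z\times X$ and $Z\times X\times\cM(Y)$ for each fixed $s$ (Lemma~\ref{lem-measurable selection}, Lemma~\ref{lem-polish compact}), and $s\mapsto\mbox{Gr}(A(\cdot,s))$ and $s\mapsto\mbox{Gr}(F(\cdot,s,\cdot))$ are measurable by Lemma~\ref{lem-measurable correspondence}, using the measurable graphs together with completeness of the probability space. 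Arguing as in the proof of Lemma~\ref{lem-lusin prep}(ii), Lusin's theorem (Lemma~\ref{lem-lusin}) then produces disjoint compact sets $\{S_m\}_{m\ge1}$ with $\lambda(\cup_{m}S_m)=1$ such that on each $Z\times S_m$ the correspondence $A$ is upper hemicontinuous with compact graph $K_m$, and $F$ restricted to $K_m$ is a nonempty, convex and compact valued correspondence with compact image, each of whose sections $F(\cdot,s,\cdot)$ ($s\in S_m$) is continuous.

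On each piece $Z\times S_m$, the joining operation that defines $\Theta$ has the continuous structure covered by the earlier lemmas: upper hemicontinuity of $\Theta$ restricted to the piece follows from compactness of $K_m$ and upper hemicontinuity of $F|_{K_m}$ by a closed-graph-plus-tightness argument in the spirit of Lemma~\ref{lem-upper}, while lower hemicontinuity is obtained by transporting a given Borel measurable selection of $F$ with the help of Lusin's theorem and Michael's selection theorem, as in the proofs of Lemma~\ref{lem-cont tran corre} and Lemma~\ref{lem-measurable composition}(3)--(4). The pieces are then glued together via Lemma~\ref{lem-countable partition}: a sequence of measures on $X\times S\times Y$ with $S$-marginal $\lambda$ converges weakly if and only if its normalized restrictions to the sets $X\times S_m\times Y$ converge weakly for every $m$. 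The feature that makes this decomposition work — and the reason Lemma~\ref{lem-measurable composition} cannot be invoked directly — is that every $\nu\in G(z)$ has $S$-marginal $\lambda$, so $\nu(X\times\cup_{m>M}S_m)=\lambda(\cup_{m>M}S_m)\to0$ as $M\to\infty$, uniformly in $z$ and $\nu$; combined with the compact image of $F$ on each piece this yields uniform tightness of $\{\nu\diamond f(z,\cdot)\}$, hence compactness of the values of $\Theta$, and it lets the piecewise weak convergences be assembled into global weak convergence, giving continuity of $\Theta$.

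I expect the main obstacle to be the lower hemicontinuity of $\Theta$ on a single piece. Given $\mu_0=\nu_0\diamond h(z_0,\cdot)\in\Theta(z_0,\nu_0)$ and $z_n\to z_0$, one first chooses $\nu_n\in G(z_n)$ with $\nu_n\to\nu_0$ (possible by lower hemicontinuity of $G$), and then has to construct Borel measurable selections $f_n$ of $F$ with $\nu_n\diamond f_n(z_n,\cdot)\to\mu_0$; here $\nu_n\to\nu_0$ weakly but all the $\nu_n$ share the $S$-marginal $\lambda$. Since weak convergence of the $\nu_n$ does not entail almost-sure convergence of their disintegrations over $S$, the state coordinate cannot be frozen directly; the remedy is to use Lusin's theorem to replace $h(z_0,\cdot)$ by a selection that is continuous on a compact set of states on which $A$ and $F$ are also continuous, extend it to a nearby continuous selection of $F$ by Michael's selection theorem, and absorb the small remaining mass using the fixed $S$-marginal. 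That is the step in which compactness of $Z$, the sectional continuity of $F$ in $(z,x)$, the common $S$-marginal $\lambda$, and the measurability of the graphs are all used at once.
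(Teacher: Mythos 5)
Your overall architecture is the paper's: a Lusin-type partition of $S$ into countably many disjoint compact pieces on which the data become continuous, the continuous-case machinery (Lemma~\ref{lem-measurable composition}, Lemma~\ref{lem-cont tran corre}) applied piecewise, and a gluing step via Lemma~\ref{lem-countable partition} that exploits the common $S$-marginal $\lambda$; the factorization through $\Theta$ on $\mbox{Gr}(G)$ and the tightness phrasing of compactness are only cosmetic reorganizations. However, there is a genuine gap in how you produce the pieces. You apply Lusin's theorem to the hyperspace-valued maps $s\mapsto\mbox{Gr}(A(\cdot,s))$ and $s\mapsto\mbox{Gr}(F(\cdot,s,\cdot))$, i.e.\ you get Hausdorff-metric continuity of the graph maps on each $S_m$. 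That does give joint \emph{upper} hemicontinuity of $F$ on the piece (closed graph plus compactness), but it does \emph{not} give joint \emph{lower} hemicontinuity, which is exactly what your piece-level lower hemicontinuity argument needs: the Michael-selection construction (as in Lemma~\ref{lem-cont tran corre}) requires the modified correspondence, hence $F$ itself, to be lower hemicontinuous jointly in $(z,x,s)$ on the piece. A simple example shows the failure: take $Y=\{0,1\}$, identify $\cM(Y)$ with $[0,1]$, and let $F(\cdot,s_n,\cdot)=[0,g_n(\cdot)]$ where $g_n$ is continuous, equal to $1$ except on a $1/n$-neighborhood of a fixed point where it dips to $0$, while $F(\cdot,s_0,\cdot)\equiv[0,1]$. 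Each section is continuous, convex and compact valued, and the graphs converge in the Hausdorff metric to the graph of the limit section, so $\{s_0\}\cup\{s_n\}$ could perfectly well be one of your Lusin pieces; yet $F$ is not jointly lower hemicontinuous there, and no continuous selection through the value $1$ at the dip point exists near $s_0$. So the step ``a compact set of states on which $A$ and $F$ are also continuous'' is not delivered by the construction you describe.

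The paper closes exactly this hole before partitioning: it first extends $F$ from $\mbox{Gr}(A)$ to a measurable correspondence $\tilde F$ on all of $Z\times X\times S$ that is still sectionally continuous on $Z\times X$, using Lemma~\ref{lem-lusin}(3) (the Carath\'eodory-type extension of Brown--Schreiber). Once the sections share the fixed compact domain $K_1=\tilde A(S_1)$, the map $s\mapsto\tilde F(\cdot,s,\cdot)$ can be viewed as a measurable map into the Polish space $C(K_1,\cK_{\cM(Y)})$ with the sup metric, and a second application of Lusin's theorem yields \emph{uniform} convergence of sections on the piece, i.e.\ joint continuity (in particular joint lower hemicontinuity) of $F$ on $\mbox{Gr}(A)\cap(S_m\times Z\times X)$. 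Sup-metric continuity of the section map is strictly stronger than Hausdorff continuity of the graph map (the example above separates them), and it is what lets the paper invoke Lemma~\ref{lem-measurable composition}(3) wholesale on each piece and then glue with Lemma~\ref{lem-countable partition}, as you intend. If you insert this extension-plus-function-space-Lusin step into your partition, the rest of your outline goes through along the paper's lines.
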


\begin{proof}
Let $\cS$ be the completion of $\cB(S)$ under the probability measure $\lambda$. By Lemma~\ref{lem-lusin prep}, $\tilde{A}(s) = \mbox{Gr}(A(s, \cdot))$ can be viewed as an $\cS$-measurable mapping from $S$ to the set of nonempty and compact subsets $\cK_{Z\times X}$ of $Z\times X$. For any $s \in S$, the correspondence $F_s = F(\cdot, s)$ is continuous on $\tilde{A}(s)$. By Lemma~\ref{lem-lusin}, there exists a measurable, nonempty and compact valued correspondence $\tilde{F}$ from $Z \times X \times S$ to $\cM(Y)$ and a Borel measurable subset $S'$ of $S$ with $\lambda(S')=1$
such that for each $s \in S'$, $\tilde{F}_s$ is continuous on $Z \times X$, and the restriction of $\tilde{F}_s$ to $\tilde{A}(s)$ is $F_s$.

By Lemma~\ref{lem-lusin} (Lusin's theorem), there exists a compact subset $S_1 \subseteq S'$ such that $\tilde{A}$ is continuous on $S_1$ and $\lambda(S_1) > \frac{1}{2}$. Let $K_1 = \tilde{A}(S_1)$. Then $K_1 \subseteq Z\times X$ is compact.

Let $C(K_1, \cK_{\cM(Y)})$ be the space of continuous functions from $K_1$ to $\cK_{\cM(Y)}$, where $\cK_{\cM(Y)}$ is the set of nonempty and compact subsets of $\cM(Y)$. Suppose that the restriction of $\cS$ on $S_1$ is $\cS_1$.  Let $\tilde{F}_1$ be the restriction of $\tilde{F}$ to $K_1 \times S_1$. Then $\tilde{F}_1$ can be viewed as an $\cS_1$-measurable function from $S_1$ to $C(K_1, \cK_{\cM(Y)})$ (see Theorem~4.55 in \cite{AB2006}). Again by Lemma~\ref{lem-lusin} (Lusin's theorem), there exists a compact subset of $S_1$, say itself, such that $\lambda(S_1) > \frac{1}{2}$ and $\tilde{F}_1$ is continuous on $S_1$. As a result, $\tilde{F}_1$ is a continuous correspondence on $\mbox{Gr}(A) \cap (S_1\times Z\times X)$, so is $F$. Let $\lambda_1$ be a probability measure on $S_1$ such that $\lambda_1(D) = \frac{\lambda(D)}{\lambda(S_1)}$ for any measurable subset $D\subseteq S_1$.

Fix any $z\in Z$ and $\tau \in G(z)$. By the definition of $G$, there exists a transition probability $\nu$ from $S$ to $X$ such that $\lambda\diamond \nu = \tau$. Define a correspondence $G_1$ from $Z$ to $\cM(X\times S)$ as follows: for any $z\in Z$, $G_1(z)$ is the set of all $\tau_1 = \lambda_1\diamond \nu$ such that $\tau = \lambda\diamond \nu \in G(z)$. It can be easily checked that $G_1$ is also a nonempty and compact valued, and continuous correspondence. Let
\begin{align*}
\Pi_1(z)
& = \{\tau_1 \diamond f(z, \cdot) \colon \tau_1 = \lambda_1\diamond \nu \in G_1(z),  \\
& \qquad f \mbox{ is a Borel measurable selection of } \tilde{F} \}.
\end{align*}
By Lemma~\ref{lem-measurable composition}, $\Pi_1$ is nonempty and compact valued, and continuous. Furthermore, it is easy to see that for any $z$, $\Pi_1(z)$ coincides with the set
$$\{(\lambda_1\diamond \nu) \diamond f(z, \cdot) \colon \lambda\diamond \nu \in G(z), f \mbox{ is a Borel measurable selection of } F \}.
$$

Repeat this procedure, one can find a sequence of compact subsets $\{S_t\}$ such that (1) for any $t\ge 1$, $S_t\subseteq S'$, $S_t \cap (S_1\cup \ldots S_{t-1}) = \emptyset$ and $\lambda(S_1\cup \ldots \cup S_t) \ge \frac{t}{t+1}$, (2) $F$ is continuous on $\mbox{Gr}(A) \cap (S_t\times Z\times X)$, $\lambda_t$ is a probability measure on $S_t$ such that $\lambda_t(D) = \frac{\lambda(D)}{\lambda(S_t)}$ for any measurable subset $D\subseteq S_t$, and (3) the correspondence
\begin{align*}
\Pi_t(z)
& = \{(\lambda_t\diamond \nu) \diamond f(z, \cdot) \colon \lambda\diamond \nu \in G(z),  \\
& \qquad f \mbox{ is a Borel measurable selection of } F \}.
\end{align*}
is nonempty and compact valued, and continuous.

Pick a sequence $\{z_k\}$, $\{\nu_k\}$ and $\{f_k\}$ such that $(\lambda\diamond \nu_k) \diamond  f_k(z_k, \cdot) \in \Pi(z_k)$, $z_k \to z_0$ and $(\lambda\diamond \nu_k) \diamond  f_k(z_k, \cdot)$ weakly converges to some $\kappa$. It is easy to see that $(\lambda_t \diamond \nu_k) \diamond  f_k(z_k, \cdot) \in \Pi_t(z_k)$ for each $t$. As $\Pi_1$ is compact valued and continuous, it has a subsequence, say itself, such that $z_k$ converges to some $z_0 \in Z$ and $(\lambda_1 \diamond \nu_k) \diamond  f_k(z_k, \cdot)$ weakly converges to some $(\lambda_1 \diamond \mu^1) \diamond  f^1(z_0, \cdot) \in \Pi_1(z_0)$. Repeat this procedure, one can get a sequence of $\{\mu^m\}$ and $f^m$. Let  $\mu(s) = \mu^m(s)$ and $f(z_0, s,x) = f^m(z_0, s,x)$ for any $x\in A(z_0, s)$ when $s\in S_m$. By Lemma~\ref{lem-countable partition}, $(\lambda \diamond \mu) \diamond  f(z_0, \cdot) = \kappa$, which implies that $\Pi$ is upper hemicontinuous.

Similarly, the compactness and lower hemicontinuity of $\Pi$ follow from the compactness and lower hemicontinuity of $\Pi_t$ for each $t$.
\end{proof}

\begin{lem}\label{lem-topology}
Let $S$ and $X$ be Polish spaces, and $A$ a measurable, nonempty and compact valued correspondence from $S$ to $X$. Suppose that $\lambda$ is a Borel probability measure on $S$ and $\{\nu_n\}_{1 \le n \le \infty}$ is a sequence of transition probabilities from $S$ to $\cM(X)$ such that $\nu_n(A(s)|s) = 1$ for each $s$ and $n$. For each $n \ge 1$, let $\tau_n = \lambda \diamond \nu_n$. Assume that the sequence $\{\tau_n\}$ of Borel probability measures on $S \times X$ converges weakly to a Borel probability measure $\tau_\infty$ on $S \times X$.
Let $\{g_n\}_{1 \le n \le \infty}$ be a sequence of functions satisfying the following three properties.
\begin{enumerate}

\item For each $n$ between $1$ and $\infty$, $g_n \colon S\times X \to \bR_+$ is measurable and sectionally continuous on $X$.

\item For any $s \in S$ and any sequence $x_n \to x_\infty$ in $X$, $g_n(s,x_n) \to g_\infty(s,x_\infty)$ as $n \to \infty$.

\item The sequence $\{g_n\}_{1 \le n \le \infty}$ is integrably bounded in the sense that there exists a $\lambda$-integrable function $\psi \colon S \to \bR_+$ such that for any $n$, $s$ and $x$, $g_n(s,x) \le \psi(s)$.
\end{enumerate}
Then we have
$$\int_{S\times X} g_n(s,x) \tau_n(\rmd (s,x))  \to \int_{S\times X} g_\infty(s,x) \tau_\infty(\rmd (s,x)).
$$
\end{lem}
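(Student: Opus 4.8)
The plan is to deduce the convergence from the weak convergence of $\{\tau_n\}$ by combining a Lusin-type decomposition of $S$ with a preliminary truncation that keeps every residual error term controllable uniformly in $n$. Two preliminaries come first. Since the coordinate projection $S\times X\to S$ is continuous and $\mbox{Marg}_S\tau_n=\lambda$ for every finite $n$, weak convergence gives $\mbox{Marg}_S\tau_\infty=\lambda$ as well. Moreover each $\tau_n$, $1\le n\le\infty$, is carried by $\mbox{Gr}(A)$: for finite $n$ this is immediate from $\nu_n(A(s)|s)=1$, so that $\tau_n(\mbox{Gr}(A))=\int_S\nu_n(A(s)|s)\,\lambda(\rmd s)=1$. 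To obtain it for $\tau_\infty$, I would apply Lemma~\ref{lem-lusin prep}(ii) twice --- first to $A$ itself, then, on each resulting piece, to the singleton-valued map $(s,x)\mapsto(g_1(s,x),g_2(s,x),\ldots,g_\infty(s,x))$ into $\bR^{\bN}$ restricted to $\mbox{Gr}(A)$ --- to produce pairwise disjoint compact sets $\{T_j\}_{j\ge1}\subseteq S$ with $\lambda(\cup_j T_j)=1$ such that on each $T_j$ the correspondence $A$ is continuous and compact valued (hence $\mbox{Gr}(A|_{T_j})$ is compact) and every $g_n$, $1\le n\le\infty$, is jointly continuous on $\mbox{Gr}(A|_{T_j})$. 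For each finite $M$ the set $\cup_{j\le M}\mbox{Gr}(A|_{T_j})$ is compact, and $\tau_n\bigl(\cup_{j\le M}\mbox{Gr}(A|_{T_j})\bigr)=\sum_{j\le M}\lambda(T_j)$ for all finite $n$; since compact sets are closed, the portmanteau theorem gives $\tau_\infty\bigl(\cup_{j\le M}\mbox{Gr}(A|_{T_j})\bigr)\ge\sum_{j\le M}\lambda(T_j)$, and letting $M\to\infty$ yields $\tau_\infty(\mbox{Gr}(A))=1$.

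Now fix $\varepsilon>0$. Since $\psi$ is $\lambda$-integrable I would first choose $R>0$ with $\int_S(\psi-R)^+\,\rmd\lambda<\varepsilon$ and set $g_n^R:=g_n\wedge R$. Each $g_n^R$ is measurable, sectionally continuous on $X$, bounded by $R$, jointly continuous on every $\mbox{Gr}(A|_{T_j})$, and still satisfies $g_n^R(s,x_n)\to g_\infty^R(s,x_\infty)$ whenever $x_n\to x_\infty$; and because $0\le g_n\le\psi$ and $\mbox{Marg}_S\tau_n=\lambda$,
\begin{equation*}
\Bigl|\int g_n\,\rmd\tau_n-\int g_n^R\,\rmd\tau_n\Bigr|\le\int_{S\times X}(\psi(s)-R)^+\,\tau_n(\rmd(s,x))=\int_S(\psi-R)^+\,\rmd\lambda<\varepsilon
\end{equation*}
for every $n\le\infty$. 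Next I would pick $J$ so large that $\lambda(S\setminus B)<\varepsilon/R$, where $B:=\cup_{j\le J}T_j$, and put $K:=\mbox{Gr}(A)\cap(B\times X)=\cup_{j\le J}\mbox{Gr}(A|_{T_j})$, which is compact. Since each $\tau_n$ is carried by $\mbox{Gr}(A)$ and has $S$-marginal $\lambda$, one has $\tau_n(\mbox{Gr}(A)\setminus K)=\lambda(S\setminus B)<\varepsilon/R$, hence $\bigl|\int g_n^R\,\rmd\tau_n-\int_K g_n^R\,\rmd\tau_n\bigr|<\varepsilon$ for every $n\le\infty$. It thus remains to prove $\int_K g_n^R\,\rmd\tau_n\to\int_K g_\infty^R\,\rmd\tau_\infty$.

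For this I would split $\int_K g_n^R\,\rmd\tau_n-\int_K g_\infty^R\,\rmd\tau_\infty$ into $\int_K(g_n^R-g_\infty^R)\,\rmd\tau_n$ and $\bigl(\int_K g_\infty^R\,\rmd\tau_n-\int_K g_\infty^R\,\rmd\tau_\infty\bigr)$. For the second piece, the finitely many compacta $\mbox{Gr}(A|_{T_j})$, $j\le J$, are pairwise at positive distance, hence relatively clopen in $K$, so $g_\infty^R$ --- continuous on each of them --- is continuous on $K$; I would extend it by Tietze's theorem to $\bar g\in C_b(S\times X)$ with $0\le\bar g\le R$. Then weak convergence gives $\int\bar g\,\rmd\tau_n\to\int\bar g\,\rmd\tau_\infty$, while $\bigl|\int\bar g\,\rmd\tau_n-\int_K g_\infty^R\,\rmd\tau_n\bigr|\le R\,\tau_n\bigl((S\times X)\setminus K\bigr)\le R\,\lambda(S\setminus B)<\varepsilon$ for every $n\le\infty$; hence $\limsup_n\bigl|\int_K g_\infty^R\,\rmd\tau_n-\int_K g_\infty^R\,\rmd\tau_\infty\bigr|\le2\varepsilon$. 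For the first piece, fix $s$: continuous convergence of $g_n^R(s,\cdot)$ to the continuous function $g_\infty^R(s,\cdot)$ forces uniform convergence on the compact set $A(s)$, so $\rho_n(s):=\sup_{x\in A(s)}|g_n^R(s,x)-g_\infty^R(s,x)|\to0$; by Berge's maximum theorem $\rho_n$ is continuous on each $T_j$, hence Borel on $\cup_j T_j$, and $0\le\rho_n\le R$, so dominated convergence gives $\int_S\rho_n\,\rmd\lambda\to0$; since $K\subseteq\mbox{Gr}(A)$ and $\tau_n$ has $S$-marginal $\lambda$,
\begin{equation*}
\Bigl|\int_K(g_n^R-g_\infty^R)\,\rmd\tau_n\Bigr|\le\int_{S\times X}\rho_n(s)\,\tau_n(\rmd(s,x))=\int_S\rho_n\,\rmd\lambda\longrightarrow0.
\end{equation*}

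Collecting the four displayed estimates would give $\limsup_n\bigl|\int g_n\,\rmd\tau_n-\int g_\infty\,\rmd\tau_\infty\bigr|\le C\varepsilon$ for an absolute constant $C$, and letting $\varepsilon\downarrow0$ finishes the proof. The crux is that both the integrand $g_n$ and the measure $\tau_n$ vary with $n$ while $g_n$ is only measurable --- not continuous --- in $s$, so that no continuous-mapping theorem applies directly. The way around this is to truncate \emph{before} fixing the exhausting compactum $B$ (so the residual mass error is $R\,\lambda(S\setminus B)$ rather than a quantity that could grow with $B$) and only afterwards to pass, via the two Lusin decompositions, to compact pieces of $S$ on which $g_\infty$ is genuinely jointly continuous --- which is exactly what legitimizes the Tietze extension and the weak-convergence step.
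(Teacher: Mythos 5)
Your proof is correct, but it takes a genuinely different route from the paper's. The paper encodes the sequence $\{g_n\}$ into a single Carath\'eodory function $\tilde g(s,x,y)$ via an auxiliary convergent sequence $y_n=\frac1n$, performs one Lusin step to make $A$ and the associated compact-graph correspondence continuous on a single compact $S_1$ with $\int_{S\setminus S_1}\psi\,\rmd\lambda$ small, deduces $|g_n-g_\infty|<\epsilon/3$ on $\mbox{Gr}(A|_{S_1})$ from uniform continuity, and -- crucially -- invokes Theorem~2.1.3 of \cite{CDV2004} to get the fixed-integrand convergence $\int g_\infty\,\rmd\tau_n\to\int g_\infty\,\rmd\tau_\infty$. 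You instead prove that fixed-integrand step from scratch: you first identify $\mbox{Marg}_S\tau_\infty=\lambda$ and $\tau_\infty(\mbox{Gr}(A))=1$ (via a countable Lusin exhaustion and portmanteau), truncate at level $R$ so that all residual-mass errors are bounded by $R\,\lambda(S\setminus B)$ uniformly in $n$, and then use a Tietze extension of $g_\infty^R$ off the compact $K$ together with ordinary weak convergence; the varying-integrand part is handled by fiberwise uniform convergence on the compact fibers $A(s)$ (from continuous convergence) plus dominated convergence, with Berge's theorem supplying measurability of $\rho_n$ on the Lusin pieces. What each approach buys: yours is self-contained and makes explicit exactly where the fixed $S$-marginal and the support of $\tau_\infty$ enter, at the cost of the countable exhaustion and a somewhat glossed double application of Lemma~\ref{lem-lusin prep} to obtain, on common compact pieces, simultaneous joint continuity of all $g_n$, $1\le n\le\infty$ (this is essentially the paper's compact-graph correspondence device for its $G$, and to run it through an $\bR^{\bN}$-valued map you should note that sectional continuity in the product topology and Borel measurability of the graph are what the lemma needs); the paper's argument is shorter because it outsources the Carath\'eodory-integrand convergence to the cited theorem and needs only one exceptional set, controlled directly by $\int_{S\setminus S_1}\psi\,\rmd\lambda$ rather than by truncation.
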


\begin{proof}
By Theorem~2.1.3 in \cite{CDV2004}, for any integrably bounded function $g \colon S\times X\to \bR_+$ which is sectionally continuous on $X$, we have
\begin{equation}\label{equa-rcd}
\int_{S\times X} g(s,x) \tau_n(\rmd (s,x))  \to \int_{S\times X} g(s,x) \tau_\infty(\rmd (s,x)).
\end{equation}

Let $\{y_n\}_{1 \le n \le \infty}$ be a sequence such that $y_n  = \frac{1}{n}$ and $y_\infty = 0$. Then $y_n \to y_\infty$. Define a mapping $\tilde{g}$ from $S\times X \times  \{y_1, \ldots, y_\infty\}$ such that $\tilde{g}(s,x,y_n) = g_n(s,x)$. Then $\tilde{g}$ is measurable in $S$ and continuous in $X \times  \{y_1, \ldots, y_\infty\}$. Define a correspondence $G$ from $S$ to $X \times \{y_1, \ldots, y_\infty\} \times \bR_+$ such that
$$G(s) = \left\{(x,y_n,c) \colon c \in \tilde{g}(s,x,y_n), x\in A(s), 1\le n \le \infty \right\}.
$$
For any $s$, $A(s)\times \{y_1, \ldots, y_\infty\}$ is compact and $\tilde{g}(s, \cdot, \cdot)$ is continuous. By Lemma~\ref{lem-measurable selection}~(6), $G(s)$ is compact. By Lemma~\ref{lem-measurable correspondence}~(2), $G$ can be viewed as a measurable mapping from $S$ to the space of nonempty compact subsets of $X \times \{y_1, \ldots, y_\infty\} \times \bR_+$. Similarly, $A$ can be viewed as a measurable mapping from $S$ to the space of nonempty compact subsets of $X$.

Fix an arbitrary $\epsilon > 0$. By Lemma~\ref{lem-lusin} (Lusin's theorem), there exists a compact subset $S_1 \subseteq S$ such that $A$ and $G$ are continuous on $S_1$ and $\lambda(S\setminus S_1) < \epsilon$. Without loss of generality, we can assume that $\lambda(S\setminus S_1)$ is sufficiently small such that $\int_{S\setminus S_1} \psi(s) \lambda(\rmd s) < \frac{\epsilon}{6}$. Thus, for any $n$,
$$\int_{(S\setminus S_1) \times X} \psi(s) \tau_n(\rmd (s,x)) = \int_{(S\setminus S_1)} \psi(s) \nu_n(X) \lambda(\rmd s) < \frac{\epsilon}{6}.$$

By Lemma~\ref{lem-measurable selection}~(6), the set $E = \{(s,x) \colon s\in S_1, x\in A(s)\}$ is compact. Since $G$ is continuous on $S_1$, $\tilde{g}$ is continuous on $E \times \{y_1, \ldots, y_\infty\}$. Since $E \times \{y_1, \ldots, y_\infty\}$ is compact, $\tilde{g}$ is uniformly continuous on $E \times \{y_1, \ldots, y_\infty\}$. Thus, there exists a positive integer $N_1 > 0$ such that for any $n \ge N_1$, $|g_n(s,x) - g_\infty(s,x)| < \frac{\epsilon}{3}$ for any $(s,x) \in E$.

By Equation~(\ref{equa-rcd}), there exists a positive integer $N_2$ such that for any $n \ge N_2$,
$$\left|\int_{S\times X} g_\infty(s,x) \tau_n(\rmd (s,x)) - \int_{S\times X} g_\infty(s,x) \tau_\infty(\rmd (s,x)) \right| < \frac{\epsilon}{3}.
$$
Let $N_0 = \max\{N_1, N_2\}$. For any $n \ge N_0$,
\begin{align*}
& \quad \left|\int_{S\times X} g_n(s,x) \tau_n(\rmd (s,x)) - \int_{S\times X} g_\infty(s,x) \tau_\infty(\rmd (s,x)) \right|  \\
& \le \left|\int_{S\times X} g_n(s,x) \tau_n(\rmd (s,x)) - \int_{S\times X} g_\infty(s,x) \tau_n(\rmd (s,x)) \right|  \\
& + \left|\int_{S\times X} g_\infty(s,x) \tau_n(\rmd (s,x)) - \int_{S\times X} g_\infty(s,x) \tau_\infty(\rmd (s,x)) \right|  \\
& \le \left|\int_{S_1\times X} g_n(s,x) \tau_n(\rmd (s,x)) - \int_{S_1\times X} g_\infty(s,x) \tau_n(\rmd (s,x)) \right|  \\
& + \left|\int_{(S\setminus S_1) \times X} g_n(s,x) \tau_n(\rmd (s,x)) - \int_{(S\setminus S_1)\times X} g_\infty(s,x) \tau_n(\rmd (s,x)) \right|  \\
& + \left|\int_{S\times X} g_\infty(s,x) \tau_n(\rmd (s,x)) - \int_{S\times X} g_\infty(s,x) \tau_\infty(\rmd (s,x)) \right| \\
& \le \int_{E} \left| g_n(s,x) - g_\infty(s,x) \right|  \tau_n(\rmd (s,x)) + 2\cdot \int_{(S\setminus S_1) \times X} \psi(s) \tau_n(\rmd (s,x))   \\
& + \left|\int_{S\times X} g_\infty(s,x) \tau_n(\rmd (s,x)) - \int_{S\times X} g_\infty(s,x) \tau_\infty(\rmd (s,x)) \right|  \\
& < \frac{\epsilon}{3} + 2\cdot \frac{\epsilon}{6} + \frac{\epsilon}{3} \\
& = \epsilon.
\end{align*}
This completes the proof.
\end{proof}

The following result is Lemma~6 of \cite{RR2002}.

\begin{lem}\label{lem-RR}
Suppose that $H$ and $X$ are compact metric spaces. Let $P \colon H \times X \to \bR^n$ be a nonempty valued and upper hemicontinuous correspondence, and the mappings $f \colon H \to \cM(X)$ and  $\mu \colon H \to \triangle(X)$ be measurable. In addition, suppose that $\mu(\cdot|h) = p(h,\cdot)\circ f(\cdot|h)$ such that $p(h,\cdot)$ is a measurable selection of $P(h, \cdot)$. Then there exists a jointly Borel measurable selection $g$ of $P$ such that $\mu(\cdot|h) = g(h,\cdot)\circ f(\cdot|h)$; that is, $g(h, x) = p(h, x)$ for $f(\cdot|h)$-almost all $x$.
\end{lem}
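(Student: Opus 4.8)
The plan is to split the argument into two parts --- producing a \emph{jointly} Borel measurable version of the Radon--Nikodym density $h\mapsto\rmd\mu(\cdot|h)/\rmd f(\cdot|h)$, and then turning that version into a selection of $P$ at \emph{every} point without disturbing the integral it induces --- preceded by two easy reductions. Since $H$ and $X$ are compact metric and $P$ is upper hemicontinuous with nonempty compact values, its range is contained in a fixed compact set $K\subseteq\bR^n$ and $\mbox{Gr}(P)$ is compact, hence Borel, in $H\times X\times\bR^n$ (Lemma~\ref{lem-measurable selection}~(6)); regarded as a correspondence from the compact metric space $H\times X$ into $K$, $P$ is then measurable and admits a jointly Borel measurable selection $g_0$ (Lemma~\ref{lem-measurable selection}~(3),(4)). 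Also, $\mu(\cdot|h)=p(h,\cdot)\circ f(\cdot|h)$ means exactly that $\mu(\cdot|h)\ll f(\cdot|h)$ with a density $p(h,\cdot)$ valued in $P(h,\cdot)$ for $f(\cdot|h)$-almost every $x$, so proving $\mu(\cdot|h)=g(h,\cdot)\circ f(\cdot|h)$ is the same as proving $g(h,\cdot)=p(h,\cdot)$ $f(\cdot|h)$-a.e.

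For the density, I would fix a refining sequence of finite Borel partitions $\cP_1\prec\cP_2\prec\cdots$ of $X$ with vanishing mesh, so that $\sigma\big(\bigcup_k\cP_k\big)=\cB(X)$, and set, for $x$ in the cell $A\in\cP_k$,
\[
g_k(h,x)=
\begin{cases}
\mu(A|h)/f(A|h), & f(A|h)>0,\\
0, & f(A|h)=0.
\end{cases}
\]
Each $g_k$ is jointly Borel measurable because $\cP_k$ is finite and $h\mapsto\mu(A|h)$, $h\mapsto f(A|h)$ are measurable. For each fixed $h$, $\big(g_k(h,\cdot)\big)_k$ is the martingale of conditional expectations of the integrable function $p(h,\cdot)$ relative to $\big(\sigma(\cP_k)\big)_k$ under $f(\cdot|h)$, so it converges $f(\cdot|h)$-a.e.\ to $p(h,\cdot)$ by the martingale convergence theorem. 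Hence $g_\infty:=\limsup_k g_k$, taken coordinatewise and set to $0$ on the Borel set where a coordinate is not finite, is jointly Borel measurable and, for every $h$, equals $p(h,\cdot)$ for $f(\cdot|h)$-almost all $x$.

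To finish, let $N=\{(h,x): g_\infty(h,x)\notin P(h,x)\}=(\mbox{id},g_\infty)^{-1}\big((H\times X\times\bR^n)\setminus\mbox{Gr}(P)\big)$, which is Borel and whose section $N_h$ is $f(\cdot|h)$-null for every $h$. Put $g=g_\infty$ off $N$ and $g=g_0$ on $N$. Then $g$ is jointly Borel measurable, $g(h,x)\in P(h,x)$ for all $(h,x)$, and for every $h$ and Borel $B\subseteq X$,
\[
\int_B g(h,x)\,f(\rmd x|h)=\int_B g_\infty(h,x)\,f(\rmd x|h)=\int_B p(h,x)\,f(\rmd x|h)=\mu(B|h),
\]
because altering $g_\infty$ on the $f(\cdot|h)$-null set $N_h$ does not change the integral; equivalently $g(h,\cdot)=p(h,\cdot)$ $f(\cdot|h)$-a.e., which is the claim.

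The hard part is the construction of $g_\infty$: the data supply the densities $p(h,\cdot)$ only fiberwise in $h$, so joint measurability of a common version has to be manufactured, and the martingale argument --- which uses only separability of $X$ --- is the transparent way to do it; the von Neumann--Aumann selection $g_0$ and the null-set patch are then routine. As an alternative one could build $g$ directly from the measurable selection theorem of \cite{Mertens2003} (Lemma~\ref{lem-mertens}) applied to the integral constraints coming from a countable algebra generating $\cB(X)$, but this is less direct.
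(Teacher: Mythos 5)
The paper never proves this lemma internally: it is imported verbatim as Lemma~6 of \cite{RR2002}, so there is no argument in the text to compare yours against, and what matters is whether your self-contained proof stands on its own. It does. The differentiation/martingale construction is sound: refining finite Borel partitions with vanishing mesh exist and generate $\cB(X)$; for each $h$ the cells of zero $f(\cdot|h)$-measure across all levels form an $f(\cdot|h)$-null set, off which $g_k(h,\cdot)$ is a version of the conditional expectation of $p(h,\cdot)$ given $\sigma(\cP_k)$ under $f(\cdot|h)$, so L\'evy's upward theorem gives $g_k(h,\cdot)\to p(h,\cdot)$ $f(\cdot|h)$-a.e.; joint Borel measurability of each $g_k$ follows because Borel measurability of $f$ and $\mu$ into $\cM(X)$ and $\triangle(X)$ yields measurability of $h\mapsto f(A|h)$ and $h\mapsto\mu(A|h)$ for every Borel $A$; and the final patch on the Borel set $N$ (whose $h$-sections are $f(\cdot|h)$-null) makes $g$ a selection of $P$ everywhere without changing any of the integrals, which is exactly the claim. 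The one point you should state explicitly rather than slip in is that your opening sentence strengthens the hypothesis from ``nonempty valued and upper hemicontinuous'' to ``nonempty and \emph{compact} valued'': compactness of the values (hence compactness of $\mbox{Gr}(P)$ over the compact domain) is what gives the uniform bound making $p(h,\cdot)$ integrable and the ratios $\mu(A|h)/f(A|h)$ bounded, and it is what delivers the jointly Borel selection $g_0$ via Lemma~\ref{lem-measurable selection}; without closed/compact values those steps would not go through as written. This added assumption is harmless for the paper --- in its only application (Step~2 of the proof of Proposition~\ref{prop-forward}) $P_t$ is bounded, nonempty, convex and compact valued --- but it is an assumption, so record it. Your closing remark is also apt: one could alternatively extract $g$ from Lemma~\ref{lem-mertens}, but the martingale route is more elementary and needs only separability of $X$.
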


\subsection{Discontinuous games with endogenous stochastic sharing rules}\label{subsec-discontinuous game}

\cite{SZ1990} proved the existence of a Nash equilibrium in discontinuous games with endogenous sharing rules. In particular, they considered a static game with a payoff correspondence $P$ that is bounded, nonempty, convex and compact valued, and upper hemicontinuous. They showed that there exists a Borel measurable selection $p$ of the payoff correspondence, namely the endogenous sharing rule, and a mixed strategy profile $\alpha$ such that $\alpha$ is a Nash equilibrium when players take $p$ as the payoff function.

In this subsection, we shall consider discontinuous games with endogenous stochastic sharing rules. That is, we allow the payoff correspondence to depend on some state variable in a measurable way as follows:
\begin{enumerate}
  \item let $S$ be a Borel subset of a Polish space, $Y$ a Polish space, and $\lambda$ a Borel probability measure on $S$;
  \item $D$ is a $\cB(S) \otimes \cB(Y)$-measurable subset of $S\times Y$, where $D(s)$ is compact for all $s\in S$ and $\lambda\left(\{s\in S\colon D(s) \neq \emptyset\}\right) > 0$;
  \item $X = \prod_{1\le i \le n} X_i$, where each $X_i$ is a Polish space;
  \item for each $i$, $A_i$ is a measurable, nonempty and compact valued correspondence from $D$ to $X_i$, which is sectionally continuous on $Y$;
  \item $A = \prod_{1 \le i\le n}A_i$ and $E = \mbox{Gr}(A)$;
  \item $P$ is a bounded, measurable, nonempty, convex and compact valued correspondence from $E$ to $\bR^n$ which is essentially sectionally upper hemicontinuous on $Y\times X$.
\end{enumerate}
A stochastic sharing rule at $(s,y) \in D$ is a Borel measurable selection of the correspondence $P(s,y,\cdot)$; i.e., a Borel measurable function $p \colon A(s,y) \to \bR^n$ such that $p(x) \in P(s,y,x)$ for  all $x \in A(s,y)$. Given $(s,y) \in D$, $P(s,y,\cdot)$ represents the set of all possible payoff profiles, and a sharing rule $p$ is a particular choice of the payoff profile.


Now we shall prove the following proposition.

\begin{prop}\label{prop-extension SZ}
There exists a $\cB(D)$-measurable, nonempty and compact valued correspondence $\Phi$ from $D$ to $\bR^n \times \cM(X) \times \triangle(X)$ such that $\Phi$ is essentially sectionally upper hemicontinuous on $Y$, and for $\lambda$-almost all $s \in S$ with $D(s)\neq \emptyset$ and $y\in D(s)$, $\Phi(s,y)$ is the set of points $(v,\alpha,\mu)$ that
\begin{enumerate}
  \item $v = \int_{X} p(s,y,x) \alpha(\rmd x)$ such that $p(s,y,\cdot)$ is a Borel measurable selection of $P(s,y, \cdot)$;\footnote{Note that we require $p(s,y,\cdot)$ to be measurable for each $(s,y)$, but $p$ may not be jointly measurable.}
  \item $\alpha \in \otimes_{i\in I}\cM(A_i(s,y))$ is a Nash equilibrium in the subgame $(s,y)$ with payoff $p(s,y, \cdot)$ and action space $A_i(s,y)$ for each player $i$;
  \item $\mu = p(s,y,\cdot)\circ \alpha$.\footnote{The finite measure $\mu = p(s,y,\cdot) \circ \alpha$ if $\mu(B) = \int_{B} p(s,y, x) \alpha(\rmd x)$  for any Borel subset $B \subseteq X$.}
\end{enumerate}
In addition, denote the restriction of $\Phi$ on the first component $\bR^n$ as $\Phi|_{\bR^n}$, which is a correspondence from $D$ to $\bR^n$. Then $\Phi|_{\bR^n}$ is bounded, measurable, nonempty and compact valued, and essentially sectionally upper hemicontinuous on $Y$.
\end{prop}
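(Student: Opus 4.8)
The plan is to reduce to compact ``pieces'' of the parameter space on which all the data become jointly well behaved, and there to reprove the Simon--Zame theorem \cite{SZ1990} with the pair $(s,y)$ playing the role of a parameter. By the essential sectional upper hemicontinuity of $P$ there is a Borel set $\bar S\subseteq S$ with $\lambda(\bar S)=1$ on which $P(s,\cdot,\cdot)$ is upper hemicontinuous. Iterating Lusin's theorem (Lemma~\ref{lem-lusin}) together with Lemma~\ref{lem-lusin prep}, exactly as in the proof of Lemma~\ref{lem-continuous composition}, I would extract pairwise disjoint compact sets $\{S_m\}_{m\ge1}\subseteq\bar S$ with $\lambda(\bigcup_m S_m)=1$ such that on each $S_m$: the set $D_m=D\cap(S_m\times Y)$ is compact; $A=\prod_i A_i$ is continuous and compact valued on $D_m$, so that $A(D_m)$ lies in a fixed compact set $K=\prod_i K_i\subseteq X$ (Lemma~\ref{lem-measurable selection}(6)); and $P$ is jointly upper hemicontinuous, bounded, nonempty, convex and compact valued on $\mbox{Gr}(A)\cap(S_m\times Y\times X)$, say with $\|P\|\le c$. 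It then suffices to show that on each $D_m$ the set $\Phi_m(s,y)$ of triples $(v,\alpha,\mu)$ described in (1)--(3) is nonempty for every $(s,y)\in D_m$ and that $\Phi_m$ has closed graph: its range then lies in the compact metric space $[-c,c]^n\times\prod_i\cM(K_i)\times\{\nu\in\triangle(K)\colon\|\nu\|\le c\}$, so $\Phi_m$ is upper hemicontinuous with compact values (Lemma~\ref{lem-measurable correspondence}(6)) and measurable (Lemma~\ref{lem-measurable selection}(4)); and one concludes by putting $\Phi=\bigcup_m\Phi_m$ on $\bigcup_m S_m$, which is Borel measurable by Lemma~\ref{lem-measurable selection}(2), and defining $\Phi$ arbitrarily on the remaining $\lambda$-null set of $s$, where the description is not required to hold. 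From now on fix such an $m$ and write $D,A,P$ for the restricted data.

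For nonemptiness, fix $(s,y)\in D$: the static game with action spaces $A_i(s,y)\subseteq K_i$ and payoff correspondence $P(s,y,\cdot)$ meets the hypotheses of \cite{SZ1990}, which yields a Borel selection $p(s,y,\cdot)$ of $P(s,y,\cdot)$ and a Nash equilibrium $\alpha\in\prod_i\cM(A_i(s,y))$ of the game with payoff $p(s,y,\cdot)$; thus $\bigl(\int p\,\rmd\alpha,\ \alpha,\ p(s,y,\cdot)\circ\alpha\bigr)\in\Phi_m(s,y)$. The Simon--Zame construction discretizes the action spaces using measurable selections $\{a_i^j\}_j$ of $A_i$ with $A_i(s,y)=\overline{\{a_i^j(s,y)\}_j}$ (Lemma~\ref{lem-measurable selection}(5)), solves each finite game by Nash's theorem, and pushes the finite-game equilibrium mixture, mixed with the uniform distribution on the grid, forward under $x\mapsto(x,q^k(s,y,x))$ to a probability on the fixed compact set $K\times[-c,c]^n$; along a subnet these converge, and one disintegrates the limit once to obtain a single selection $p(s,y,\cdot)$ on all of $A(s,y)$ (the conditional averages lie in $\mbox{co}\,P=P$, using convexity).

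The closed-graph property is the hard part. Let $(s_j,y_j)\to(s_0,y_0)$ in $D$ and $(v_j,\alpha_j,\mu_j)\in\Phi_m(s_j,y_j)$ converge to $(v_0,\alpha_0,\mu_0)$; write $p_j$ for the corresponding sharing rules. For each $j$ form the image $\rho_j\in\cM\bigl(K\times[-c,c]^n\bigr)$ of an auxiliary probability $\xi_j$ on $X$ that has full support on $A(s_j,y_j)$ and dominates $\tfrac12\alpha_j$ (such $\xi_j\to\xi_0$ exist because $A$ is continuous on the compact $D$), pushed forward by $x\mapsto(x,p_j(x))$. Passing to a subnet, $\rho_j\to\rho$; since $\mbox{Gr}(P(s,\cdot,\cdot))$ is closed, $\rho$ is carried by $\mbox{Gr}(P(s_0,y_0,\cdot))$, has $X$-marginal $\xi_0$ (hence support $A(s_0,y_0)$ and domination of $\tfrac12\alpha_0$), and disintegrating $\rho=\xi_0\diamond\eta_x$ \emph{once} and setting $p(s_0,y_0,x)=\int z\,\eta_x(\rmd z)\in\mbox{co}\,P(s_0,y_0,x)=P(s_0,y_0,x)$ yields a \emph{single} Borel selection of $P(s_0,y_0,\cdot)$ on all of $A(s_0,y_0)$. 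As in \cite{SZ1990} one verifies that this one rule simultaneously gives $v_0=\int p(s_0,y_0,\cdot)\,\rmd\alpha_0$ and $\mu_0=p(s_0,y_0,\cdot)\circ\alpha_0$ and makes no unilateral deviation profitable --- the deviation inequalities survive because they involve only the bounded continuous integrand $z\mapsto z^i$ on $K\times[-c,c]^n$, and lower hemicontinuity of $A_i$ lets one approximate deviations at $(s_0,y_0)$ by deviations at $(s_j,y_j)$. Hence $(v_0,\alpha_0,\mu_0)\in\Phi_m(s_0,y_0)$. I expect this step --- distilling from the perturbed equilibria one sharing rule consistent at once with the limiting equilibrium payoff, with $\mu_0$, and with all deviations, robustly enough to survive both the discretization limit and the perturbation of $(s,y)$ --- to be the principal difficulty.

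Finally $\Phi|_{\bR^n}$ is the image of $\Phi$ under the coordinate projection $(v,\alpha,\mu)\mapsto v$; projections preserve nonemptiness, compact-valuedness, Borel measurability and essential sectional upper hemicontinuity on $Y$, and $v\in[-c,c]^n$ gives boundedness.
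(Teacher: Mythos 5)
Your overall architecture is the paper's: reduce via Lusin's theorem (Lemmas~\ref{lem-lusin} and \ref{lem-lusin prep}) to countably many disjoint compact sets $S_m$ with $\lambda(\cup_m S_m)=1$ on which $D_m=D\cap(S_m\times Y)$ is compact, each $A_i$ is continuous and $P$ is upper hemicontinuous; establish the desired properties of $\Phi$ piecewise; handle the remaining $\lambda$-null set separately (here ``arbitrarily'' is a bit glib --- $\Phi$ must still be measurable, nonempty and compact valued there, which the paper arranges with a single-valued map built from measurable selections of the $A_i$ and of $P$); and obtain $\Phi|_{\bR^n}$ by projection, using compactness of $\mbox{Gr}(\Phi_m)$ for measurability. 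The one place you depart from the paper is on each compact piece: the paper simply invokes its Lemma~\ref{lem-SZ}, i.e.\ the parametrized Simon--Zame result taken from \cite{SZ1990} and Lemma~4 of \cite{RR2002}, whereas you set out to re-prove that lemma.

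That re-derivation has a genuine gap, and it sits exactly where you flag ``the principal difficulty.'' Defining the limit sharing rule $p(s_0,y_0,\cdot)$ by a single barycentric disintegration of $\rho$ along the dominating measure $\xi_0$ only pins the rule down $\xi_0$-almost everywhere. The Nash inequalities for $\alpha_0$, however, involve integrals of $p_i(s_0,y_0,x_i',\cdot)$ against $\delta_{x_i'}\otimes\alpha_{0,-i}$, and such unilateral-deviation slices are $\xi_0$-null whenever $\xi_{0,i}$ has no atom at $x_i'$; full support of $\xi_0$ does not help, and domination of $\tfrac12\alpha_0$ only controls on-path objects ($v_0$ and $\mu_0$), for which disintegrating along $\alpha_0$ itself already suffices. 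Consequently weak convergence of $\rho_j$ tells you nothing about the constructed rule at the profiles that matter for deviations, and ``continuity of $z\mapsto z_i$ plus lower hemicontinuity of $A_i$'' cannot close this: one needs the additional machinery of \cite{SZ1990}/\cite{RR2002} --- e.g.\ taking limits of the pushforwards of $\delta_{x_i^j}\otimes\alpha_{j,-i}$ for the deviations, or modifying the rule off the equilibrium path down to $\min P_{i}$ and checking this stays a selection and deters deviations (the device the paper uses later, in the proof of Proposition~\ref{prop-forward}) --- or one should simply cite the parametrized lemma, as the paper does via Lemma~\ref{lem-SZ}. If you make that substitution, the rest of your proposal coincides with the paper's proof.
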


If $D$ is a closed subset, $P$ is upper hemicontinuous on $E$ and $A_i$ is continuous on $D$ for each $i \in I$, then Proposition~\ref{prop-extension SZ} is reduced to be the following lemma (see  \cite{SZ1990} and \citet[Lemma~4]{RR2002}).


\begin{lem}\label{lem-SZ}
Assume that $D$ is a closed subset, $P$ is upper hemicontinuous on $E$ and $A_i$ is continuous on $D$ for each $i \in I$.
Consider the correspondence $\Phi \colon D \to \bR^n \times \cM(X) \times \triangle(X)$ defined as follows: $(v,\alpha,\mu) \in \Phi(s,y)$ if
\begin{enumerate}
  \item $v = \int_{X} p(s,y,x) \alpha(\rmd x)$ such that $p(s,y,\cdot)$ is a Borel measurable selection of $P(s,y, \cdot)$;
  \item $\alpha \in \otimes_{i\in I}\cM(A_i(s,y))$ is a Nash equilibrium in the subgame $(s,y)$ with payoff $p(s,y, \cdot)$ and action space $A_i(s,y)$ for each player $i$;
  \item $\mu = p(s,y,\cdot)\circ \alpha$.
\end{enumerate}
Then $\Phi$ is nonempty and compact valued, and upper hemicontinuous on $D$.
\end{lem}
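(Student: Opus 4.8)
The plan is to obtain this as a state-dependent form of the Simon--Zame theorem, essentially reproducing the arguments of \cite{SZ1990} and \citet[Lemma~4]{RR2002}. For each fixed $(s,y)\in D$, the set of triples described by conditions (1)--(3) is precisely the set of mixed-strategy Nash equilibria --- together with their induced payoff vectors and outcome measures --- of the static game with action sets $\{A_i(s,y)\}_{i\in I}$ and payoff correspondence $P(s,y,\cdot)$ on the compact product $A(s,y)=\prod_i A_i(s,y)$; the latter correspondence is bounded, nonempty, convex and compact valued and upper hemicontinuous. Nonemptiness of $\Phi(s,y)$ is then exactly the theorem of \cite{SZ1990}: there is a Borel measurable selection $p$ of $P(s,y,\cdot)$ and an equilibrium $\alpha\in\otimes_i\cM(A_i(s,y))$ of the game with payoff $p$, so that $v=\int_X p\,\rmd\alpha$ and $\mu=p\circ\alpha$ give a point of $\Phi(s,y)$. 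Moreover every such triple lies in a fixed compact set: the $v$-coordinate in $\overline{\mathrm{co}}\,P(E)$, the $\alpha$-coordinate in $\otimes_i\cM(\bar X_i)$, and the $\mu$-coordinate among measures supported in $\bar X=\prod_i\bar X_i$ of total mass at most $\gamma$, where $\bar X_i$ is a compact set containing every $A_i(s,y)$ (such $\bar X_i$ exists since $A_i$ is continuous and compact valued, hence has compact image). So the substance of the lemma is compactness of $\Phi(s,y)$ and upper hemicontinuity of $\Phi$ on $D$; granting these, $\cB(D)$-measurability is automatic, since upper hemicontinuity makes $\Phi^l(C)$ closed for every closed set $C$ in the range (it is the complement in $D$ of the open set $\Phi^u(C^c)$).

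Compactness and upper hemicontinuity I would derive from a single sequential argument. Let $(s_k,y_k)\to(s_0,y_0)$ in $D$ and $(v_k,\alpha_k,\mu_k)\in\Phi(s_k,y_k)$ with witnessing Borel selections $p_k$ of $P(s_k,y_k,\cdot)$. Since everything now ranges over compacta, after passing to a subsequence $\alpha_k\to\alpha_0$, $v_k\to v_0$, $\mu_k\to\mu_0$, and the pushforward $\theta_k$ of $\alpha_k$ under $x\mapsto(x,p_k(x))$ converges weakly to some $\theta_0\in\cM(\bar X\times K)$, with $K$ a compact set containing $P(E)$. Then: (i) $\alpha_0\in\otimes_i\cM(A_i(s_0,y_0))$ by upper hemicontinuity of the $A_i$ (Lemma~\ref{lem-measurable correspondence}(6)); (ii) $\theta_0$ is concentrated on $\mbox{Gr}(P(s_0,y_0,\cdot))$, because $P$ is upper hemicontinuous on $E$ and $A$ is continuous, so $\mbox{Gr}(P(s_k,y_k,\cdot))$ converges into the closed set $\mbox{Gr}(P(s_0,y_0,\cdot))$ in the upper Hausdorff sense; (iii) disintegrating $\theta_0$ over its $\bar X$-marginal $\alpha_0$ and taking, for $\alpha_0$-almost every $x$, the barycenter of the $\theta_0$-fibre (which lies in the convex compact set $P(s_0,y_0,x)$), then correcting on an $\alpha_0$-null set by a Borel selection of $P(s_0,y_0,\cdot)$ (which exists by Lemma~\ref{lem-measurable selection}(3)), yields a Borel selection $p_0$ with $v_0=\int_X p_0\,\rmd\alpha_0$ and $\mu_0=p_0\circ\alpha_0$. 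It then remains to show that $\alpha_0$ is a Nash equilibrium of the game with payoff $p_0$, after which upper hemicontinuity of $\Phi$ follows; taking a constant sequence $(s_k,y_k)\equiv(s_0,y_0)$ gives compactness of $\Phi(s_0,y_0)$, and the measurability remark above finishes the proof.

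The last step --- preservation of the equilibrium property under the limit --- is where I expect the main difficulty. Since the witnessing selections $p_k$ are only Borel, not continuous, the deviation payoffs $\int p_{k,i}(a_i,\cdot)\,\rmd\alpha_{k,-i}$ need not converge to $\int p_{0,i}(a_i,\cdot)\,\rmd\alpha_{0,-i}$, and the limit of a deviation outcome measure produces a fibrewise barycenter of $P(s_0,y_0,\cdot)$ conditioned on the deviating coordinate, which a priori differs from $p_0$ (obtained by conditioning on $\alpha_{0,i}$). This is exactly the obstruction resolved in \cite{SZ1990}: there one approximates $P$ by continuous functions, obtains equilibria of the resulting continuous games by Glicksberg's theorem, and passes to the limit while tracking every player's deviation outcome measure and using the upper hemicontinuity of $P$ to bound limit deviation payoffs by $v_0$; the extra bookkeeping needed to carry the state $(s,y)$ along is supplied by \citet[Lemma~4]{RR2002}. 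My proof would appeal to those arguments rather than reproduce the full detail.
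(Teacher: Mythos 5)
Your proposal is correct and takes essentially the same route as the paper: the paper offers no independent proof of this lemma, presenting it as a known result with exactly the citations you invoke (\cite{SZ1990} and \citet[Lemma~4]{RR2002}), and your sketch of the limit argument — identifying the preservation of the equilibrium property under limits of Borel (non-continuous) sharing rules as the step that must be borrowed from those sources — faithfully reflects how those arguments run. No gap beyond what the paper itself delegates to the cited works.
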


We shall now prove Proposition~\ref{prop-extension SZ}.

\begin{proof}[Proof of Proposition~\ref{prop-extension SZ}]
There exists a Borel subset $\hat{S}\subseteq S$ with $\lambda(\hat{S}) = 1$ such that $D(s) \neq \emptyset$ for each $s\in \hat{S}$, and $P$ is sectionally upper hemicontinuous on $Y$ when it is restricted on $D\cap (\hat{S}\times Y)$. Without loss of generality, we assume that $\hat{S} = S$.

Suppose that $\cS$ is the completion of $\cB(S)$ under the probability measure $\lambda$. Let $\cD$ and $\cE$ be the restrictions of $\cS\otimes \cB(Y)$ and $\cS\otimes \cB(Y)\otimes \cB(X)$ on $D$ and $E$, respectively.

Define a correspondence $\tilde{D}$ from $S$ to $Y$ such that $\tilde{D}(s) = \{y\in Y\colon (s,y)\in D\}$. Then $\tilde{D}$ is nonempty and compact valued. By Lemma~\ref{lem-measurable correspondence}~(4), $\tilde{D}$ is $\cS$-measurable.

Since $\tilde{D}(s)$ is compact and $A(s, \cdot)$ is upper hemicontinuous for any $s\in S$, $E(s)$ is compact by Lemma~\ref{lem-measurable selection}~(6).
Define a correspondence $\Gamma$ from $S$ to $Y\times X \times \bR^n$ as $\Gamma(s) = \mbox{Gr}(P(s,\cdot,\cdot))$.  For all $s$, $P(s,\cdot,\cdot)$ is bounded, upper hemicontinuous and compact valued on $E(s)$, hence it has a compact graph. As a result, $\Gamma$ is compact valued.  By Lemma~\ref{lem-measurable correspondence}~(1), $P$ has an $\cS\otimes \cB(Y\times X\times \bR^n)$-measurable graph. Since $\mbox{Gr}(\Gamma) = \mbox{Gr}(P)$, $\mbox{Gr}(\Gamma)$ is $\cS\otimes \cB(Y\times X\times \bR^n)$-measurable. Due to Lemma~\ref{lem-measurable correspondence}~(4), the correspondence $\Gamma$ is $\cS$-measurable. We can view $\Gamma$ as a function from $S$ into the space $\cK$ of nonempty compact subsets of $Y\times X\times \bR^n$.  By Lemma~\ref{lem-polish compact}, $\cK$ is a Polish space endowed with the Hausdorff metric topology. Then by Lemma~\ref{lem-measurable correspondence} (2), $\Gamma$ is an $\cS$-measurable function from $S$ to $\cK$. One can also define a correspondence $\tilde{A_i}$ from $S$ to $Y \times X$ as $\tilde{A_i}(s) = \mbox{Gr}(A_i(s,\cdot))$. It is easy to show that $\tilde{A_i}$ can be viewed as an $\cS$-measurable function from $S$ to the space of nonempty compact subsets of $Y\times X$, which is endowed with the Hausdorff metric topology. By a similar argument, $\tilde{D}$ can be viewed as an $\cS$-measurable function from $S$ to the space of nonempty compact subsets of $Y$.

By Lemma~\ref{lem-lusin} (Lusin's~Theorem), there exists a compact subset $S_1 \subseteq S$ such that $\lambda(S\setminus S_1) < \frac{1}{2}$, $\Gamma$, $\tilde{D}$ and $\{\tilde{A_i}\}_{1\le i \le n}$ are continuous functions on $S_1$. By Lemma~\ref{lem-measurable correspondence}~(3), $\Gamma$, $\tilde{D}$ and $\tilde{A_i}$ are continuous correspondences on $S_1$. Let $D_1  = \{(s,y)\in D\colon s\in S_1, y \in \tilde{D}(s)\}$. Since $S_1$ is compact and $\tilde{D}$ is continuous, $D_1$ is compact (see Lemma~\ref{lem-measurable selection}~(6)). Similarly, $E_1 = E \cap (S_1\times Y\times X)$ is also compact. Thus, $P$ is an upper hemicontinuous correspondence on $E_1$. Define a correspondence $\Phi_1$ from $D_1$ to $\bR^n \times \cM(X) \times \triangle(X)$ as in Lemma~\ref{lem-SZ}, then it is nonempty and compact valued, and upper hemicontinuous on $D_1$.

Following the same procedure, for any $m\ge 1$, there exists a compact subset $S_m \subseteq S$ such that (1) $S_m\cap (\cup_{1\le k \le m-1} S_k) = \emptyset$ and $D_m = D\cap (S_m \times Y)$ is compact, (2) $\lambda(S_m) > 0$ and $\lambda\left(S\setminus (\cup_{1\le k \le m}S_m)\right) < \frac{1}{2m}$, and (3) there is a nonempty and compact valued, upper hemicontinuous correspondence $\Phi_m$ from $D_m$ to $\bR^n \times \cM(X) \times \triangle(X)$, which satisfies conditions~(1)-(3) in Lemma~\ref{lem-SZ}. Thus, we have countably many disjoint sets $\{S_m\}_{m\ge 1}$ such that (1) $\lambda(\cup_{m \ge 1} S_m) = 1$, (2) $\Phi_m$ is nonempty and compact valued, and upper hemicontinuous on each $D_m$, $m \ge 1$.

Since $A_i$ is a $\cB(S)\otimes\cB(Y)$-measurable, nonempty and compact valued correspondence, it has a Borel measurable selection $a_i$ by Lemma~\ref{lem-measurable selection}~(3). Fix a Borel measurable selection $p$ of $P$ (such a selection exists also due to Lemma~\ref{lem-measurable selection}~(3)). Define a mapping $(v_0, \alpha_0, \mu_0)$ from $D$ to $\bR^n \times \cM(X) \times \triangle(X)$ such that (1) $\alpha_i(s,y) = \delta_{a_i(s,y)}$ and $\alpha_0(s,y) = \otimes_{i \in I} \alpha_i(s,y)$; (2) $v_0(s,y) = p(s,y,a_1(s,y). \ldots, a_n(s,y)) $ and (3) $\mu_0(s,y) = p(s,y,\cdot)\circ \alpha_0$. Let $D_0 = D\setminus (\cup_{m \ge 1} D_m)$ and $\Phi_0(s,y) = \{(v_0(s,y), \alpha_0(s,y), \mu_0(s,y))\}$ for $(s,y) \in D_0$. Then, $\Phi_0$ is $\cB(S)\otimes\cB(Y)$-measurable, nonempty and compact valued.


Let $\Phi(s,y) = \Phi_m(s,y)$ if $(s,y) \in D_m$ for some $m \ge 0$. Then, $\Phi(s,y)$ satisfies conditions~(1)-(3) if $(s,y) \in D_m$ for $m \ge 1$. That is, $\Phi$ is $\cB(D)$-measurable, nonempty and compact valued, and essentially sectionally upper hemicontinuous on $Y$, and satisfies conditions~(1)-(3) for $\lambda$-almost all $s\in S$.

Then consider $\Phi|_{\bR^n}$, which is the restriction of $\Phi$ on the first component $\bR^n$. Let $\Phi_m|_{\bR^n}$ be the restriction of $\Phi_m$ on the first component $\bR^n$ with the domain $D_m$ for each $m \ge 0$. It is obvious that $\Phi_0|_{\bR^n}$ is measurable, nonempty and compact valued. For each $m \ge 1$, $D_m$ is compact and $\Phi_m$ is upper hemicontinuous and compact valued. By Lemma~\ref{lem-measurable selection}~(6), $\mbox{Gr}(\Phi_m)$ is compact. Thus, $\mbox{Gr}(\Phi_m|_{\bR^n})$ is also compact. By Lemma~\ref{lem-measurable selection}~(4), $\Phi_m|_{\bR^n}$ is measurable. In addition, $\Phi_m|_{\bR^n}$ is nonempty and compact valued, and upper hemicontinuous on $D_m$. Notice that $\Phi|_{\bR^n}(s,y) = \Phi_m|_{\bR^n}(s,y)$ if $(s,y) \in D_m$ for some $m \ge 0$. Thus, $\Phi|_{\bR^n}$ is measurable, nonempty and compact valued, and essentially sectionally upper hemicontinuous on $Y$.

The proof is complete.
\end{proof}

\subsection{Proofs of Theorem~\ref{thm-general} and Proposition~\ref{prop-general}}\label{subsec-main proof}

\subsubsection{Backward induction}\label{subsubsec-backward}

For any $t\ge 1$, suppose that  the correspondence $Q_{t+1}$ from $H_{t}$ to $\bR^n$ is bounded, measurable, nonempty and compact valued, and essentially sectionally upper hemicontinuous on $X^t$. For any $h_{t-1} \in H_{t-1}$ and $x_{t} \in A_{t}(h_{t-1})$, let
\begin{align*}
P_{t}(h_{t-1}, x_{t})
& = \int_{S_{t}} Q_{t+1}(h_{t-1}, x_{t}, s_{t}) f_{t0}(\rmd s_{t}|h_{t-1}) \\
& = \int_{S_{t}} Q_{t+1}(h_{t-1}, x_{t}, s_{t}) \varphi_{t0}(h_{t-1}, s_{t}) \lambda_{t}(\rmd s_{t}).
\end{align*}
It is obvious that the correspondence $P_{t}$ is measurable and nonempty valued.  Since $Q_{t+1}$ is bounded, $P_t$ is bounded. For $\lambda^t$-almost all $s^{t} \in S^{t}$, $Q_{t+1}(\cdot, s^{t})$ is bounded and upper hemicontinuous on $H_t(s^t)$, and $\varphi_{t 0}(s^{t}, \cdot)$ is continuous on $\mbox{Gr}(A_0^{t})(s^{t})$. As $\varphi_{t0}$ is integrably bounded, $P_{t}(s^{t-1},\cdot)$ is also upper hemicontinuous on $\mbox{Gr}(A^{t})(s^{t-1})$ for $\lambda^{t-1}$-almost all $s^{t-1} \in S^{t-1}$ (see Lemma~\ref{lem-lyapunov}); that is, the correspondence $P_t$ is essentially sectionally upper hemicontinuous on $X^{t}$. Again by Lemma~\ref{lem-lyapunov}, $P_{t}$ is convex and compact valued since $\lambda_{t}$ is an atomless probability measure. That is, $P_t \colon \mbox{Gr}(A^{t}) \to \bR^n$ is a bounded, measurable, nonempty, convex and compact valued correspondence which is essentially sectionally upper hemicontinuous on $X^t$.

By Proposition~\ref{prop-extension SZ}, there exists a bounded, measurable, nonempty and compact valued correspondence $\Phi_t$ from $H_{t-1}$ to $\bR^n \times \cM(X_t) \times \triangle(X_t)$ such that $\Phi_t$ is essentially sectionally upper hemicontinuous on $X^{t-1}$, and for $\lambda^{t-1}$-almost all $h_{t-1} \in H_{t-1}$, $(v,\alpha,\mu) \in \Phi_t(h_{t-1})$ if
\begin{enumerate}
  \item $v = \int_{A_t(h_{t-1})} p_t(h_{t-1},x) \alpha(\rmd x)$ such that $p_t(h_{t-1},\cdot)$ is a Borel measurable selection of $P_t(h_{t-1}, \cdot)$;
  \item $\alpha \in \otimes_{i\in I}\cM(A_{ti}(h_{t-1}))$ is a Nash equilibrium in the subgame $h_{t-1}$ with payoff $p_t(h_{t-1}, \cdot)$ and action space $\prod_{i\in I} A_{ti}(h_{t-1})$;
  \item $\mu = p_t(h_{t-1},\cdot)\circ \alpha$.
\end{enumerate}
Denote the restriction of $\Phi_t$ on the first component $\bR^n$ as $\Phi(Q_{t+1})$, which is a correspondence from $H_{t-1}$ to $\bR^n$. By Proposition~\ref{prop-extension SZ}, $\Phi(Q_{t+1})$ is bounded, measurable, nonempty and compact valued, and essentially sectionally upper hemicontinuous  on $X^{t-1}$.

\subsubsection{Forward induction}\label{subsubsec-forward}


The following proposition presents the result on the step of forward induction.

\begin{prop}\label{prop-forward}
For any $t\ge 1$ and any Borel measurable selection $q_{t}$ of $\Phi(Q_{t+1})$, there exists a Borel measurable selection $q_{t+1}$ of $Q_{t+1}$ and a Borel measurable mapping $f_{t} \colon H_{t-1}\to \otimes_{i\in I}\cM(X_{ti})$ such that for $\lambda^{t-1}$-almost all $h_{t-1} \in H_{t-1}$,
\begin{enumerate}
   \item $f_t(h_{t-1}) \in \otimes_{i\in I}\cM(A_{ti}(h_{t-1}))$;
   \item $q_{t}(h_{t-1}) = \int_{A_{t}(h_{t-1})} \int_{S_{t}} q_{t+1}(h_{t-1},x_t,s_{t}) f_{t0}(\rmd s_{t}|h_{t-1}) f_t(\rmd x_t|h_{t-1})$;
   \item $f_t(\cdot|h_{t-1})$ is a Nash equilibrium in the subgame $h_{t-1}$ with action spaces $A_{ti}(h_{t-1}), i \in I$ and the payoff functions
        $$\int_{S_{t}} q_{t+1}(h_{t-1},\cdot,s_{t}) f_{t0}(\rmd s_{t}|h_{t-1}).$$
\end{enumerate}
\end{prop}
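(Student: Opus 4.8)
The plan is to carry out the three-part construction that the name ``forward induction'' suggests: first lift the given payoff selection $q_t$ to a jointly measurable selection of the full correspondence $\Phi_t$ (thereby producing the candidate strategy $f_t$ together with a sharing rule defined section by section); then upgrade that sharing rule to a genuinely jointly Borel measurable selection of $P_t$; and finally use Mertens' measurable version of Lyapunov's theorem to realize the sharing rule through a jointly measurable selection $q_{t+1}$ of $Q_{t+1}$.

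\textbf{Step I (lifting $q_t$).} Consider the fibre correspondence $\Psi$ from $H_{t-1}$ to $\cM(X_t)\times\triangle(X_t)$ given by $\Psi(h_{t-1})=\{(\alpha,\mu)\colon(q_t(h_{t-1}),\alpha,\mu)\in\Phi_t(h_{t-1})\}$. Since $q_t$ is Borel and $\Phi_t$ is measurable with nonempty compact values (see Subsection~\ref{subsubsec-backward}), $\Psi$ has a measurable graph and compact values, hence admits a Borel measurable selection $(f_t,\mu_t)$ by Lemma~\ref{lem-measurable selection}. Unwinding the description of $\Phi_t$, for $\lambda^{t-1}$-almost all $h_{t-1}$ there is a Borel measurable selection $p_t(h_{t-1},\cdot)$ of $P_t(h_{t-1},\cdot)$ for which $f_t(h_{t-1})\in\otimes_{i\in I}\cM(A_{ti}(h_{t-1}))$ is a Nash equilibrium under the payoff $p_t(h_{t-1},\cdot)$, with $q_t(h_{t-1})=\int_{A_t(h_{t-1})}p_t(h_{t-1},x)\,f_t(\rmd x|h_{t-1})$ and $\mu_t(h_{t-1})=p_t(h_{t-1},\cdot)\circ f_t(h_{t-1})$. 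This $f_t$ is the strategy asked for in the statement; what is still missing is a version of $p_t$ that is measurable jointly in $(h_{t-1},x_t)$, and hence can be ``fed'' into $Q_{t+1}$.

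\textbf{Step II (a jointly measurable sharing rule).} Because $P_t$ is only essentially sectionally upper hemicontinuous on $X^t$ and $H_{t-1}$ fails to be compact, I first apply Lusin's theorem in the form of Lemma~\ref{lem-lusin prep} to split $S^{t-1}$ into countably many disjoint compact sets $\{S_m\}_{m\ge1}$ with $\lambda^{t-1}(\cup_mS_m)=1$ such that, over each $S_m$, the relevant graphs are compact and $A^t$, $P_t$, $\varphi_{t0}$, $f_{t0}$ become continuous. On each such piece $P_t$ is an upper hemicontinuous, convex compact valued correspondence between compact metric spaces, so Lemma~\ref{lem-RR} applies with $f=f_t$ and $\mu=\mu_t$ and produces a jointly Borel measurable selection of $P_t$ over that piece inducing $\mu_t$; piecing these together over $m$ gives a jointly Borel measurable selection $p_t^{*}$ of $P_t$ with $\mu_t(h_{t-1})=p_t^{*}(h_{t-1},\cdot)\circ f_t(h_{t-1})$, so in particular $p_t^{*}=p_t$ holds $f_t(\cdot|h_{t-1})$-almost everywhere and $\int p_t^{*}(h_{t-1},x)\,f_t(\rmd x|h_{t-1})=q_t(h_{t-1})$ for $\lambda^{t-1}$-almost all $h_{t-1}$. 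The main obstacle lies here: one must also guarantee that $f_t(h_{t-1})$ remains a Nash equilibrium under $p_t^{*}(h_{t-1},\cdot)$. The Nash inequalities constrain the sharing rule along unilateral pure deviations, i.e.\ on a slice strictly larger than the $f_t$-full-measure set on which $p_t^{*}$ automatically agrees with $p_t$, so this has to be secured by choosing the selection in Lemma~\ref{lem-RR} on each compact piece with care, exploiting convexity and upper hemicontinuity of $P_t$ together with the equilibrium inequalities already available for $p_t$.

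\textbf{Step III (realization through $Q_{t+1}$ and conclusion).} Apply Lemma~\ref{lem-mertens} with $\mbox{Gr}(A_t)$ as the parameter space, the separable space $S_t$ in the role of the inner variable, the transition probability $(h_{t-1},x_t)\mapsto f_{t0}(\cdot|h_{t-1})$, and $Q_{t+1}$ as the compact valued integrand. The associated integral correspondence is exactly $P_t$, its graph is $J=\mbox{Gr}(P_t)$, and Lemma~\ref{lem-mertens} yields a jointly measurable map $g$ on $S_t\times J$ with $g(s_t,h_{t-1},x_t,e)\in Q_{t+1}(h_{t-1},x_t,s_t)$ and $e=\int_{S_t}g(s_t,h_{t-1},x_t,e)\,f_{t0}(\rmd s_t|h_{t-1})$. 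Since $p_t^{*}$ is jointly Borel measurable, $(h_{t-1},x_t)\mapsto(h_{t-1},x_t,p_t^{*}(h_{t-1},x_t))$ is a Borel map into $J$, and
\[
q_{t+1}(h_{t-1},x_t,s_t):=g\bigl(s_t,h_{t-1},x_t,p_t^{*}(h_{t-1},x_t)\bigr)
\]
is a jointly Borel measurable selection of $Q_{t+1}$ with $\int_{S_t}q_{t+1}(h_{t-1},x_t,s_t)\,f_{t0}(\rmd s_t|h_{t-1})=p_t^{*}(h_{t-1},x_t)$ identically on $\mbox{Gr}(A_t)$. With this $q_{t+1}$ and the $f_t$ of Step~I, conclusion~(1) is immediate; conclusion~(2) follows by integrating the last identity against $f_t(\cdot|h_{t-1})$ and using $\int p_t^{*}(h_{t-1},\cdot)\,\rmd f_t=q_t(h_{t-1})$; and conclusion~(3) is precisely the Nash property established in Step~II, because the displayed payoff function $\int_{S_t}q_{t+1}(h_{t-1},\cdot,s_t)f_{t0}(\rmd s_t|h_{t-1})$ equals $p_t^{*}(h_{t-1},\cdot)$.
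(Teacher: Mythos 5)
Your outline follows the paper's three-step scheme, but at the decisive point you only name the difficulty instead of resolving it. In Step~II you correctly observe that Lemma~\ref{lem-RR} only forces $p_t^{*}=p_t$ on an $f_t(\cdot|h_{t-1})$-full-measure set, while the Nash inequalities for $f_t(h_{t-1})$ must hold against \emph{all} unilateral pure deviations, i.e.\ off the support of $f_t(\cdot|h_{t-1})$; you then assert this ``has to be secured by choosing the selection in Lemma~\ref{lem-RR} with care, exploiting convexity and upper hemicontinuity.'' That is exactly the gap: Lemma~\ref{lem-RR} gives no control whatsoever off the support, and no careful choice inside that lemma produces it. The paper closes the gap by an explicit off-path modification: for each player $i$ it defines the deviation set $B_{ti}(h_{t-1})$ of actions $y_i$ whose expected payoff under $g$ against $f_{t(-i)}$ exceeds the equilibrium value, shows $B_{ti}$ has a Borel measurable graph and satisfies $f_{ti}(B_{ti}(h_{t-1})|h_{t-1})=0$, and then redefines the sharing rule to equal a measurable selection $\beta_i'$ of $P_t$ whose $i$-th coordinate is $\min P_{ti}(h_{t-1},x_t)$ precisely on the event that exactly one player $i$ deviates into $B_{ti}$ (keeping $g$ elsewhere). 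One must then check that this piecewise-defined $g_t$ is still a Borel selection of $P_t$ and verify, by the chain of integral (in)equalities in the paper's Step~2, that $f_t(\cdot|h_{t-1})$ is a Nash equilibrium for $g_t(h_{t-1},\cdot)$. None of this is in your proposal, and without it conclusion~(3) is unproven; note also that what makes the punishment work is the availability of the coordinatewise minimum selection of $P_t$ and the measurability of the $B_{ti}$, not convexity per se.

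A secondary, repairable inaccuracy: in Step~I, a measurable graph plus compact values does not by itself yield a \emph{Borel} measurable selection of $\Psi$ via Lemma~\ref{lem-measurable selection}~(3), which needs weak measurability (a measurable graph only gives measurability after completing the measure, hence a selection measurable for the completion). The paper gets Borel measurability by using the countable decomposition of $H_{t-1}$ into pieces on which $\mbox{Gr}(\Phi_t)$ is compact, applying Lemma~\ref{lem-measurable selection}~(4) on each piece and gluing; since you invoke the analogous decomposition in Step~II anyway, this can be fixed, but as written the selection step is not justified. Step~III (composing Mertens' selection $g$ with the jointly measurable sharing rule) matches the paper and is fine once a correct $g_t$ from Step~II is available.
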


\begin{proof}
We divide the proof into three steps. In step~1, we show that there exist Borel measurable mappings $f_t\colon H_{t-1} \to \otimes_{i\in I}\cM(X_{ti})$ and $\mu_t\colon H_{t-1}\to \triangle(X_t)$ such that $(q_t,f_t,\mu_t)$ is a selection of $\Phi_t$. In step~2, we obtain a Borel measurable selection $g_t$ of $P_t$ such that for $\lambda^{t-1}$-almost all $h_{t-1} \in H_{t-1}$,
\begin{enumerate}
  \item $q_t(h_{t-1}) = \int_{A_t(h_{t-1})} g_t(h_{t-1},x) f_t(\rmd x|h_{t-1})$;
  \item $f_t(h_{t-1})$ is a Nash equilibrium in the subgame $h_{t-1}$ with payoff $g_t(h_{t-1}, \cdot)$ and action space $A_{t}(h_{t-1})$;
\end{enumerate}
In step~3, we show that there exists a Borel measurable selection $q_{t+1}$ of $Q_{t+1}$ such that for all $h_{t-1} \in H_{t-1}$ and $x_t \in A_t(h_{t-1})$,
$$g_t(h_{t-1},x_t) = \int_{S_{t}} q_{t+1}(h_{t-1},x_t,s_{t}) f_{t0}(\rmd s_{t}|h_{t-1}).$$
Combining Steps~1-3, the proof is complete.

\

Step~1. Let $\Psi_t\colon \mbox{Gr}(\Phi_t(Q_{t+1})) \to \cM(X_t)\times \triangle(X_t)$ be
$$\Psi_t(h_{t-1}, v) = \{(\alpha, \mu) \colon (v,\alpha,\mu)\in \Phi_t(h_{t-1})\}.$$
Recall the construction of $\Phi_t$ and the proof of Proposition~\ref{prop-extension SZ}, $H_{t-1}$ can be divided into countably many Borel subsets $\{H^m_{t-1}\}_{m \ge 0}$ such that
\begin{enumerate}
  \item $H_{t-1} = \cup_{m \ge 0} H_{t-1}^m$ and $\frac{\lambda^{t-1}(\cup_{m \ge 1}\mbox{proj}_{S^{t-1}}(H_{t-1}^m))}{\lambda^{t-1}(\mbox{proj}_{S^{t-1}}(H_{t-1}))} = 1$, where $\mbox{proj}_{S^{t-1}}(H_{t-1}^m)$ and $\mbox{proj}_{S^{t-1}}(H_{t-1})$ are projections of $H_{t-1}^m$ and $H_{t-1}$ on $S^{t-1}$;
  \item for $m \ge 1$, $H_{t-1}^m$ is compact, $\Phi_t$ is upper hemicontinuous on $H_{t-1}^m$, and $P_t$ is upper hemicontinuous on
      $$\{(h_{t-1}, x_t)\colon h_{t-1} \in H_{t-1}^m, x_t \in A_t(h_{t-1})\};
      $$
  \item there exists a Borel measurable mapping $(v_0,\alpha_0,\mu_0)$ from $H^0_{t-1}$ to $\bR^n \times \cM(X_t) \times \triangle(X_t)$ such that $\Phi_t(h_{t-1}) \equiv \{(v_0(h_{t-1}),\alpha_0(h_{t-1}),\mu_0(h_{t-1}))\}$ for any $h_{t-1} \in H^0_{t-1}$.
\end{enumerate}
Denote the restriction of $\Phi_t$ on $H_{t-1}^m$ as $\Phi^m_t$. For $m \ge 1$, $\mbox{Gr}(\Phi^m_t)$ is compact, and hence the correspondence $\Psi^m_t(h_{t-1}, v) = \{(\alpha, \mu) \colon (v,\alpha,\mu)\in \Phi^m_t(h_{t-1})\}$ has a compact graph. For $m \ge 1$, $\Psi^m_t$ is measurable by Lemma~\ref{lem-measurable selection}~(4), and has a Borel measurable selection $\psi^m_t$ due to Lemma~\ref{lem-measurable selection}~(3). Define $\psi^0_t(h_{t-1}, v_0(h_{t-1})) = (\alpha_0(h_{t-1}),\mu_0(h_{t-1}))$ for $h_{t-1}\in H^0_{t-1}$. For $(h_{t-1}, v) \in \mbox{Gr}(\Phi(Q_{t+1}))$, let $\psi_t(h_{t-1}, v) = \psi^m_t(h_{t-1}, v)$ if $h_{t-1}\in H^m_{t-1}$. Then $\psi_t$ is a Borel measurable selection of $\Psi_t$.

Given a Borel measurable selection $q_t$ of $\Phi(Q_{t+1})$,  let
$$\phi_t(h_{t-1}) = (q_t(h_{t-1}), \psi_t(h_{t-1}, q_t(h_{t-1}))).$$
Then $\phi_t$ is a Borel measurable selection of $\Phi_t$.
Denote $\tilde{H}_{t-1} = \cup_{m \ge 1}H^m_{t-1}$. By the construction of $\Phi_t$, there exists Borel measurable mappings $f_t\colon H_{t-1} \to \otimes_{i\in I}\cM(X_{ti})$ and $\mu_t\colon H_{t-1}\to \triangle(X_t)$ such that for all $h_{t-1} \in \tilde{H}_{t-1}$,
\begin{enumerate}
  \item $q_t(h_{t-1}) = \int_{A_t(h_{t-1})} p_t(h_{t-1},x) f_t(\rmd x|h_{t-1})$ such that $p_t(h_{t-1},\cdot)$ is a Borel measurable selection of $P_t(h_{t-1}, \cdot)$;
  \item $f_t(h_{t-1}) \in \otimes_{i\in I}\cM(A_{ti}(h_{t-1}))$ is a Nash equilibrium in the subgame $h_{t-1}$ with payoff $p_t(h_{t-1}, \cdot)$ and action space $\prod_{i\in I} A_{ti}(h_{t-1})$;
  \item $\mu_t(\cdot|h_{t-1}) = p_t(h_{t-1},\cdot)\circ f_t(\cdot|h_{t-1})$.
\end{enumerate}

\

Step~2. Since $P_t$ is upper hemicontinuous on $\{(h_{t-1}, x_t)\colon h_{t-1} \in H_{t-1}^m, x_t \in A_t(h_{t-1})\}$, due to Lemma~\ref{lem-RR}, there exists a Borel measurable mapping $g^m$ such that (1) $g^m(h_{t-1}, x_t) \in P_t(h_{t-1}, x_t)$ for any $h_{t-1} \in H_{t-1}^m$ and $x_t \in A_t(h_{t-1})$, and (2) $g^m(h_{t-1}, x_t) = p_t(h_{t-1}, x_t)$ for $f_{t}(\cdot|h_{t-1})$-almost all $x_{t}$. Fix an arbitrary Borel measurable selection $g'$ of $P_t$. Define a Borel measurable mapping from $\mbox{Gr}(A_t)$ to $\bR^n$ as
$$g(h_{t-1},x_t) =
\begin{cases}
g^m(h_{t-1},x_t)  & \mbox{if } h_{t-1} \in H^m_{t-1} \mbox{ for } m \ge 1; \\
g'(h_{t-1},x_t) & \mbox{otherwise.}
\end{cases}
$$
Then $g$ is a Borel measurable selection of $P_t$.

In a subgame $h_{t-1} \in \tilde{H}_{t-1}$,  let
$$B_{ti}(h_{t-1}) = \{ y_i \in A_{ti}(h_{t-1}) \colon$$
$$ \int_{A_{t(-i)}(h_{t-1})} g_i(h_{t-1}, y_i,x_{t(-i)}) f_{t(-i)}(\rmd x_{t(-i)}|h_{t-1}) > \int_{A_{t}(h_{t-1})} p_{ti}(h_{t-1}, x_t) f_t(\rmd x_t|h_{t-1}) \}.
$$
Since $g(h_{t-1}, x_t) = p_t(h_{t-1}, x_t)$ for $f_{t}(\cdot|h_{t-1})$-almost all $x_{t}$,
$$\int_{A_{t}(h_{t-1})} g(h_{t-1}, x_t) f_t(\rmd x_t|h_{t-1}) =  \int_{A_{t}(h_{t-1})} p_{t}(h_{t-1}, x_t) f_t(\rmd x_t|h_{t-1}).$$
Thus, $B_{ti}$ is a measurable correspondence from $\tilde{H}_{t-1}$ to $A_{ti}(h_{t-1})$. Let $B_{ti}^c(h_{t-1}) = A_{ti}(h_{t-1})\setminus B_{ti}(h_{t-1})$ for each $h_{t-1} \in H_{t-1}$. Then $B_{ti}^c$ is a measurable and closed valued correspondence, which has a Borel measurable graph by Lemma~\ref{lem-measurable correspondence}. As a result, $B_{ti}$ also has a Borel measurable graph. As $f_t(h_{t-1})$ is a Nash equilibrium in the subgame $h_{t-1}\in \tilde{H}_{t-1}$ with payoff $p_t(h_{t-1}, \cdot)$, $f_{ti}(B_{ti}(h_{t-1})|h_{t-1}) = 0$.

Denote $\beta_i(h_{t-1}, x_t) = \min P_{ti}(h_{t-1},x_t)$, where $P_{ti}(h_{t-1},x_t)$ is the projection of $P_{t}(h_{t-1},x_t)$ on the $i$-th dimension. Then the correspondence $P_{ti}$ is measurable and compact valued, and $\beta_i$ is Borel measurable. Let $\Lambda_i(h_{t-1}, x_t) = \{\beta_i(h_{t-1}, x_t)\} \times [0, \gamma]^{n-1}$, where $\gamma > 0$ is the upper bound of $P_t$. Denote $\Lambda_i'(h_{t-1}, x_t) = \Lambda_i(h_{t-1}, x_t) \cap P_t(h_{t-1}, x_t)$. Then $\Lambda'_i$ is a measurable and compact valued correspondence, and hence has a Borel measurable selection $\beta'_i$. Note that $\beta'_i$ is a Borel measurable selection of $P_t$. Let
$$g_t(h_{t-1},x_t) =$$
$$\begin{cases}
\beta_i'(h_{t-1},x_t)  & \mbox{if } h_{t-1} \in \tilde{H}_{t-1}, x_{ti} \in B_{ti}(h_{t-1}) \mbox{ and } x_{tj} \notin B_{tj}(h_{t-1}), \forall j \neq i; \\
g(h_{t-1},x_t) & \mbox{otherwise.}
\end{cases}
$$
Notice that
\begin{align*}
& \quad \{(h_{t-1}, x_t) \in \mbox{Gr}(A_t) \colon h_{t-1} \in \tilde{H}_{t-1}, x_{ti} \in B_{ti}(h_{t-1}) \mbox{ and } x_{tj} \notin B_{tj}(h_{t-1}), \forall j \neq i;\}  \\
& = \mbox{Gr}(A_t) \cap \cup_{i \in I} \left( (\mbox{Gr}(B_{ti}) \times \prod_{j \neq i} X_{tj} ) \setminus (\cup_{j \neq i}  (\mbox{Gr}(B_{tj}) \times \prod_{k \neq j} X_{tk}) ) \right),
\end{align*}
which is a Borel set. As a result, $g_t$  is a Borel measurable selection of $P_t$. Moreover, $g_t(h_{t-1}, x_t) = p_t(h_{t-1}, x_t)$ for all $h_{t-1} \in \tilde{H}_{t-1}$ and $f_{t}(\cdot|h_{t-1})$-almost all $x_{t}$.

Fix a subgame $h_{t-1} \in \tilde{H}_{t-1}$. We will show that $f_t(\cdot|h_{t-1})$ is a Nash equilibrium given the payoff $g_t(h_{t-1}, \cdot)$ in the subgame $h_{t-1}$. Suppose that player~$i$ deviates to some action $\tilde{x}_{ti}$.

If $\tilde{x}_{ti} \in B_{ti}(h_{t-1})$, then player~$i$'s expected payoff is
\begin{align*}
& \qquad \int_{A_{t(-i)}(h_{t-1})} g_{ti}(h_{t-1}, \tilde{x}_{ti}, x_{t(-i)}) f_{t(-i)}(\rmd x_{t(-i)} |h_{t-1})  \\
& = \int_{\prod_{j \neq i} B^c_{tj}(h_{t-1})} g_{ti}(h_{t-1}, \tilde{x}_{ti}, x_{t(-i)}) f_{t(-i)}(\rmd x_{t(-i)} |h_{t-1})  \\
& = \int_{\prod_{j \neq i} B^c_{tj}(h_{t-1})} \beta_i(h_{t-1}, \tilde{x}_{ti}, x_{t(-i)}) f_{t(-i)}(\rmd x_{t(-i)} |h_{t-1})  \\
& \le \int_{\prod_{j \neq i} B^c_{tj}(h_{t-1})} p_{ti}(h_{t-1}, \tilde{x}_{ti}, x_{t(-i)}) f_{t(-i)}(\rmd x_{t(-i)} |h_{t-1})  \\
& = \int_{A_{t(-i)}(h_{t-1})} p_{ti}(h_{t-1}, \tilde{x}_{ti}, x_{t(-i)}) f_{t(-i)}(\rmd x_{t(-i)} |h_{t-1}) \\
& \le \int_{A_{t}(h_{t-1})} p_{ti}(h_{t-1}, x_{t}) f_{t}(\rmd x_{t} |h_{t-1}) \\
& = \int_{A_{t}(h_{t-1})} g_{ti}(h_{t-1}, x_{t}) f_{t}(\rmd x_{t} |h_{t-1}).
\end{align*}
The first and the third equalities hold since $f_{tj}(B_{tj}(h_{t-1})|h_{t-1}) = 0$ for each $j$, and hence $f_{t(-i)}(\prod_{j \neq i} B^c_{tj}(h_{t-1})|h_{t-1}) = f_{t(-i)}(A_{t(-i)}(h_{t-1})|h_{t-1})$. The second equality and the first inequality are due to the fact that $g_{ti}(h_{t-1}, \tilde{x}_{ti}, x_{t(-i)})  = \beta_i(h_{t-1}, \tilde{x}_{ti}, x_{t(-i)}) = \min P_{ti}(h_{t-1}, \tilde{x}_{ti}, x_{t(-i)}) \le p_{ti}(h_{t-1}, \tilde{x}_{ti}, x_{t(-i)})$ for $x_{t(-i)} \in \prod_{j \neq i} B^c_{tj}(h_{t-1})$. The second inequality holds since $f_t(\cdot|h_{t-1})$ is a Nash equilibrium given the payoff $p_t(h_{t-1}, \cdot)$ in the subgame $h_{t-1}$. The fourth equality follows from the fact that $g_t(h_{t-1}, x_t) = p_t(h_{t-1}, x_t)$ for $f_{t}(\cdot|h_{t-1})$-almost all $x_{t}$.

If $\tilde{x}_{ti} \notin B_{ti}(h_{t-1})$, then player~$i$'s expected payoff is
\begin{align*}
& \qquad \int_{A_{t(-i)}(h_{t-1})} g_{ti}(h_{t-1}, \tilde{x}_{ti}, x_{t(-i)}) f_{t(-i)}(\rmd x_{t(-i)} |h_{t-1})  \\
& = \int_{\prod_{j \neq i} B^c_{tj}(h_{t-1})} g_{ti}(h_{t-1}, \tilde{x}_{ti}, x_{t(-i)}) f_{t(-i)}(\rmd x_{t(-i)} |h_{t-1})  \\
& = \int_{\prod_{j \neq i} B^c_{tj}(h_{t-1})} g_i(h_{t-1}, \tilde{x}_{ti}, x_{t(-i)}) f_{t(-i)}(\rmd x_{t(-i)} |h_{t-1})  \\
& = \int_{A_{t(-i)}(h_{t-1})} g_i(h_{t-1}, \tilde{x}_{ti}, x_{t(-i)}) f_{t(-i)}(\rmd x_{t(-i)} |h_{t-1})  \\
& \le \int_{A_{t}(h_{t-1})} p_{ti}(h_{t-1}, x_{t}) f_{t}(\rmd x_{t} |h_{t-1})  \\
& = \int_{A_{t}(h_{t-1})} g_{ti}(h_{t-1}, x_{t}) f_{t}(\rmd x_{t} |h_{t-1}).
\end{align*}
The first and the third equalities hold since
$$f_{t(-i)}\left(\prod_{j \neq i} B^c_{tj}(h_{t-1})|h_{t-1} \right) = f_{t(-i)}(A_{t(-i)}(h_{t-1})|h_{t-1}).$$
The second equality is due to the fact that $g_{ti}(h_{t-1}, \tilde{x}_{ti}, x_{t(-i)})  = g_i(h_{t-1}, \tilde{x}_{ti}, x_{t(-i)})$ for $x_{t(-i)} \in \prod_{j \neq i} B^c_{tj}(h_{t-1})$. The first inequality follows from the definition of $B_{ti}$, and the fourth equality holds since $g_t(h_{t-1}, x_t) = p_t(h_{t-1}, x_t)$ for $f_{t}(\cdot|h_{t-1})$-almost all $x_{t}$.

Thus, player~$i$ cannot improve his payoff in the subgame $h_t$ by a unilateral change in his strategy for any $i \in I$, which implies that $f_t(\cdot|h_{t-1})$ is a Nash equilibrium given the payoff $g_t(h_{t-1}, \cdot)$ in the subgame $h_{t-1}$.

\

Step~3.
For any $(h_{t-1}, x_{t}) \in \mbox{Gr}(A_{t})$,
$$ P_{t}(h_{t-1}, x_{t})  = \int_{S_{t}} Q_{t+1}(h_{t-1}, x_{t}, s_{t}) f_{t0}(\rmd s_{t}|h_{t-1}).
$$
By Lemma~\ref{lem-mertens}, there exists a Borel measurable mapping $q$ from $\mbox{Gr}(P_{t})\times S_t$ to $\bR^n$ such that
\begin{enumerate}
  \item $q(h_{t-1}, x_t, e, s_{t}) \in Q_{t+1}(h_{t-1},x_t,s_{t})$ for any $(h_{t-1}, x_t,e,  s_{t}) \in \mbox{Gr}(P_{t})\times S_t$;
  \item $e = \int_{S_{t}} q(h_{t-1}, x_t,e,  s_{t}) f_{t0}(\rmd s_{t}|h_{t-1})$ for any $(h_{t-1}, x_t,e) \in \mbox{Gr}(P_{t})$, where $(h_{t-1}, x_{t}) \in \mbox{Gr}(A_{t})$.
\end{enumerate}
Let
$$q_{t+1}(h_{t-1},x_t,s_{t}) = q(h_{t-1},x_t, g_{t}(h_{t-1},x_t), s_{t})$$
for any $(h_{t-1}, x_t, s_{t}) \in H_t$. Then $q_{t+1}$ is a Borel measurable selection of $Q_{t+1}$.

For $(h_{t-1}, x_{t}) \in \mbox{Gr}(A_{t})$,
\begin{align*}
g_{t}(h_{t-1},x_t)
& = \int_{S_{t}} q(h_{t-1}, x_t,g_{t}(h_{t-1},x_t), s_{t}) f_{t0}(\rmd s_{t}|h_{t-1}) \\
& = \int_{S_{t}} q_{t+1}(h_{t-1},x_t,s_{t}) f_{t0}(\rmd s_{t}|h_{t-1}).
\end{align*}

Therefore, we have a Borel measurable selection $q_{t+1}$ of $Q_{t+1}$, and a Borel measurable mapping $f_{t} \colon H_{t-1}\to \otimes_{i\in I}\cM(X_{ti})$ such that for all $h_{t-1} \in \tilde{H}_{t-1}$, properties~(1)-(3) are satisfied. The proof is complete.
\end{proof}

If a dynamic game has only $T$ stages for some positive integer $T \ge 1$, then let $Q_{T+1}(h_T) = \{u(h_T)\}$ for any $h_T \in H_T$, and $Q_t = \Phi(Q_{t+1})$ for $1 \le t \le T-1$. We can start with the backward induction from the last period and stop at the initial period, then run the forward induction from the initial period to the last period. Thus, the following corollary is immediate.

\begin{coro}\label{coro-finite horizon}
Any finite-horizon dynamic game with the ARM condition has a subgame-perfect equilibrium.
\end{coro}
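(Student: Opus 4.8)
The plan is to read off Corollary~\ref{coro-finite horizon} from the two halves of the machinery already in place --- the backward induction of Subsection~\ref{subsubsec-backward} and the forward induction of Proposition~\ref{prop-forward} --- which is why the paper calls it immediate. Suppose the game stops at stage $T$, so $A_{ti}$ is point valued for every $i\in I$ and $t>T$, and $u=(u_1,\ldots,u_n)$ is a bounded map on $H_T$ that is sectionally continuous on $X^T$. I would take $Q_{T+1}(h_T)=\{u(h_T)\}$, which is bounded, measurable, nonempty and compact valued (a singleton) and, being single valued and sectionally continuous on $X^T$, essentially sectionally upper hemicontinuous on $X^T$; then define $Q_t=\Phi(Q_{t+1})$ descending from $t=T$ to $t=1$. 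The content of Subsection~\ref{subsubsec-backward} is precisely that, under ARM, if $Q_{t+1}$ is bounded, measurable, nonempty and compact valued and essentially sectionally upper hemicontinuous on $X^t$, then the correspondence $P_t$ obtained by integrating $Q_{t+1}$ against $f_{t0}=\varphi_{t0}\lambda_t$ is also convex valued (here atomlessness of $\lambda_t$ enters via Lemma~\ref{lem-lyapunov}) and essentially sectionally upper hemicontinuous on $X^t$, whence Proposition~\ref{prop-extension SZ} delivers $Q_t$ with the same properties on $X^{t-1}$. So every $Q_t$, $1\le t\le T+1$, is a legitimate input for the forward step.

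Next I would run the forward recursion. Pick any Borel measurable selection $q_1$ of $Q_1=\Phi(Q_2)$ (available by Lemma~\ref{lem-measurable selection}(3)) and invoke Proposition~\ref{prop-forward} successively for $t=1,2,\ldots,T$. This produces Borel measurable selections $q_{t+1}$ of $Q_{t+1}$ and Borel measurable maps $f_t\colon H_{t-1}\to\otimes_{i\in I}\cM(X_{ti})$ such that, for $\lambda^{t-1}$-almost all $h_{t-1}$: $f_t(h_{t-1})\in\otimes_{i\in I}\cM(A_{ti}(h_{t-1}))$; $q_t(h_{t-1})$ equals $\int_{A_t(h_{t-1})}\int_{S_t}q_{t+1}(h_{t-1},x_t,s_t)\,f_{t0}(\rmd s_t|h_{t-1})\,f_t(\rmd x_t|h_{t-1})$; and $f_t(\cdot|h_{t-1})$ is a Nash equilibrium of the stage game at $h_{t-1}$ with payoff functions $\int_{S_t}q_{t+1}(h_{t-1},\cdot,s_t)\,f_{t0}(\rmd s_t|h_{t-1})$. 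For $t>T$ I set $f_{ti}(h_{t-1})$ to the Dirac measure at the unique available action; note $q_{T+1}$ is forced to equal $u$. Assembling the $f_{ti}$ gives a strategy profile $f=\{f_i\}_{i\in I}$.

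To verify subgame perfection I would show, by backward induction on $t$ from $T$ down to $0$, that for $\lambda^t$-almost all $h_t$ the selection $q_{t+1}(h_t)$ is exactly the vector of expected continuation payoffs induced by $f$ in the subgame $h_t$. The base case $t=T$ is the identity $q_{T+1}=u$ with the game ending after stage $T$. For the inductive step, Proposition~\ref{prop-forward}(2) gives that the payoff induced by $f$ in the subgame $h_{t-1}$ is $q_t(h_{t-1})$ for $\lambda^{t-1}$-almost all $h_{t-1}$; the ``$\lambda^t$-almost all $h_t$'' supplied by the inductive hypothesis inside that integral is harmless because $f_{t0}(\cdot|h_{t-1})\ll\lambda_t$, so a $\lambda_t$-null set of stage-$t$ states is $f_{t0}(\cdot|h_{t-1})$-null. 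Given this identification, property~(3) of the forward step says that at $\lambda^t$-almost every $h_t$ the stage-$(t+1)$ play $f_{t+1}(\cdot|h_t)$ is a Nash equilibrium of the one-shot game whose payoff is the \emph{true} continuation value $\int_{S_{t+1}}q_{t+2}(h_t,\cdot,s_{t+1})\,f_{(t+1)0}(\rmd s_{t+1}|h_t)$, so no player profits from a one-shot deviation at stage $t+1$; since the horizon is finite, the one-shot deviation principle (an ordinary backward induction over the finitely many remaining stages) upgrades this to the statement that no unilateral change of strategy improves player~$i$ in the subgame $h_t$. As this holds for every $i\in I$, every $t\ge0$, and $\lambda^t$-almost all $h_t$, the profile $f$ is a subgame-perfect equilibrium in the sense of Definition~\ref{defn-SPE}.

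Essentially all the substantive content is already isolated in Proposition~\ref{prop-extension SZ} and Proposition~\ref{prop-forward}; the one genuinely delicate point left is the null-set bookkeeping in the last step, namely checking that the $\lambda^{t-1}$-null sets of histories off which the backward and forward conclusions hold stay negligible once one integrates along the induced path. Absolute continuity of each state transition $f_{t0}$ with respect to the atomless reference measure $\lambda_t$ --- exactly the ARM hypothesis --- is what makes that propagation go through, and is the step I would write out most carefully; everything else (existence of the Borel selections, the finite-horizon one-shot-deviation verification) is routine.
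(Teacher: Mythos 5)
Your proposal is correct and follows exactly the route the paper intends for this corollary: set $Q_{T+1}(h_T)=\{u(h_T)\}$, iterate the backward-induction operator $\Phi$ of Subsection~\ref{subsubsec-backward} down to $t=1$, then apply Proposition~\ref{prop-forward} forward to extract the strategies, with the finite-horizon one-shot-deviation check and the ARM absolute-continuity handling of the $\lambda^t$-null sets. The paper treats this as immediate and gives no further detail, so your write-up is simply a fuller version of the same argument.
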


\subsubsection{Infinite horizon case}\label{subsubsec-proof general}

Pick a sequence $\xi = (\xi_1, \xi_2, \ldots)$ such that (1) $\xi_{m}$ is a transition probability from $H_{m-1}$ to $\cM(X_{m})$ for any $m \ge 1$, and (2) $\xi_{m}(A_{m}(h_{m-1})|h_{m-1}) = 1$ for any $m \ge 1$ and $h_{m-1} \in H_{m-1}$. Denote the set of all such $\xi$ as $\Upsilon$.

Fix any $t\ge 1$, define correspondences $\Xi_t^t$ and $\Delta_t^t$ as follows: in the subgame $h_{t-1}$,
$$\Xi_t^t(h_{t-1}) = \cM(A_{t}(h_{t-1})) \otimes \lambda_{t},$$
and
$$\Delta_t^t(h_{t-1}) = \cM(A_{t}(h_{t-1})) \otimes f_{t0}(h_{t-1}).$$
For any $m_1 > t$, suppose that the correspondences $\Xi_t^{m_1 - 1}$ and $\Delta_t^{m_1 - 1}$ have been defined. Then we can define correspondences $\Xi_t^{m_1} \colon H_{t-1}\to \cM\left(\prod_{t \le m \le m_1}(X_m \times S_m) \right)$ and $\Delta_t^{m_1} \colon H_{t-1}\to \cM\left(\prod_{t \le m \le m_1}(X_m \times S_m) \right)$ as follows:
\begin{align*}
\Xi_t^{m_1}(h_{t-1}) =
& \{ g(h_{t-1})\diamond (\xi_{m_1}(h_{t-1}, \cdot)\otimes \lambda_{m_1}) \colon  \\
& g \mbox{ is a Borel  measurable selection of } \Xi_t^{m_1-1}, \\
& \xi_{m_1} \mbox{ is a Borel measurable selection of } \cM(A_{m_1})\},
\end{align*}
and
\begin{align*}
\Delta_t^{m_1}(h_{t-1}) =
& \{ g(h_{t-1})\diamond (\xi_{m_1}(h_{t-1}, \cdot)\otimes f_{m_10}(h_{t-1}, \cdot)) \colon  \\
& g \mbox{ is a Borel  measurable selection of } \Delta_t^{m_1-1}, \\
& \xi_{m_1} \mbox{ is a Borel measurable selection of } \cM(A_{m_1})\},
\end{align*}
where $\cM(A_{m_1})$ is regarded as a correspondence from $H_{m_1-1}$ to the set of Borel probabilities on $X_{m_1}$. For any $m_1 \ge t$, let $\rho^{m_1}_{(h_{t-1}, \xi)} \in \Xi_t^{m_1}$ be the probability on $\prod_{t \le m \le m_1}(X_m \times S_m)$ which is induced by $\{\lambda_{m}\}_{t \le m \le m_1}$ and $\{\xi_m\}_{t \le m \le m_1}$, and $\varrho^{m_1}_{(h_{t-1}, \xi)} \in \Delta_t^{m_1}$ be the probability on $\prod_{t \le m \le m_1}(X_m \times S_m)$ which is induced by $\{f_{m0}\}_{t \le m \le m_1}$ and $\{\xi_m\}_{t \le m \le m_1}$. Then $\Xi^{m_1}_t(h_{t-1})$ is the set of all such $\rho^{m_1}_{(h_{t-1}, \xi)}$, and $\Delta^{m_1}_t(h_{t-1})$ is the set of all such $\varrho^{m_1}_{(h_{t-1}, \xi)}$. Notice that $\varrho^{m_1}_{(h_{t-1}, \xi)} \in \Delta^{m_1}_t(h_{t-1})$ if and only if $\rho^{m_1}_{(h_{t-1}, \xi)} \in \Xi^{m_1}_t(h_{t-1})$, and $\varrho^{m_1}_{(h_{t-1}, \xi)}$ and $\rho^{m_1}_{(h_{t-1}, \xi)}$ can be both regarded as probability measures on $H_{m_1}(h_{t-1})$.

Similarly, let $\rho_{(h_{t-1}, \xi)}$ be the probability on $\prod_{m \ge t}(X_m \times S_m)$ induced by $\{\lambda_{m}\}_{m \ge t}$ and $\{\xi_m\}_{m \ge t}$, and $\varrho_{(h_{t-1}, \xi)}$ the probability on $\prod_{m \ge t}(X_m \times S_m)$ induced by $\{f_{m0}\}_{m \ge t}$ and $\{\xi_m\}_{m \ge t}$. Denote the correspondence
$$\Xi_t \colon H_{t-1} \to \cM(\prod_{ m \ge t}(X_m \times S_m))$$
as the set of all such $\rho_{(h_{t-1}, \xi)}$, and
$$\Delta_t \colon H_{t-1} \to \cM(\prod_{ m \ge t}(X_m \times S_m))$$
as the set of all such $\varrho_{(h_{t-1}, \xi)}$.

\begin{lem}\label{lem-strategy}
For any $m_1 \ge t$ and $h_{t-1} \in H_{t-1}$,
$$\varrho^{m_1}_{(h_{t-1}, \xi)} = \left(\prod_{t \le m \le m_1} \varphi_{m0}(h_{t-1}, \cdot) \right) \circ \rho^{m_1}_{(h_{t-1}, \xi)}.\footnote{For $m \ge t \ge 1$ and $h_{t-1} \in H_{t-1}$, the function $\varphi_{m0}(h_{t-1}, \cdot)$ is defined on $H_{m-1}(h_{t-1})\times S_m$, which is measurable and sectionally continuous on $\prod_{t \le k \le m-1} X_k$. By Lemma~\ref{lem-lusin}, $\varphi_{m0}(h_{t-1}, \cdot)$ can be extended to be a measurable function $\acute{\varphi}_{m0}(h_{t-1},\cdot)$ on the product space $\prod_{t \le k \le m-1} X_k \times \prod_{t \le k \le m} S_k$, which is also sectionally continuous on $\prod_{t \le k \le m-1} X_k$. Given any $\xi \in \Upsilon$, since $\rho^{m}_{(h_{t-1}, \xi)}$ concentrates on $H_{m}(h_{t-1})$,
$\varphi_{m0}(h_{t-1}, \cdot) \circ \rho^{m}_{(h_{t-1}, \xi)} = \acute{\varphi}_{m0}(h_{t-1}, \cdot)  \circ \rho^{m}_{(h_{t-1}, \xi)}$.
For notational simplicity, we still use $\varphi_{m0}(h_{t-1}, \cdot)$, instead of $\acute{\varphi}_{m0}(h_{t-1},\cdot)$, to denote the above extension. Similarly, we can work with a suitable extension of the payoff function $u$ as needed.}
$$
\end{lem}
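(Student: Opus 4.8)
The plan is to argue by induction on $m_1 \ge t$, for a fixed subgame $h_{t-1} \in H_{t-1}$ and a fixed $\xi \in \Upsilon$. Throughout, it suffices to verify the claimed identity on the $\pi$-system of measurable rectangles that generates the Borel $\sigma$-algebra of $\prod_{t\le m\le m_1}(X_m\times S_m)$: once the two measures agree there, and since both are probability measures (this is seen by taking the whole space), a monotone-class argument extends the equality to all Borel sets. I will also use that $\rho^{m_1}_{(h_{t-1},\xi)}$ and $\varrho^{m_1}_{(h_{t-1},\xi)}$ are both carried by $H_{m_1}(h_{t-1})$, so that it is immaterial whether each factor $\varphi_{m0}(h_{t-1},\cdot)$ is read on its original domain $H_{m-1}(h_{t-1})\times S_m$ or on the measurable extension to the product space mentioned in the footnote.

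\textbf{Base case.} For $m_1 = t$ we have $\rho^t_{(h_{t-1},\xi)} = \xi_t(\cdot|h_{t-1})\otimes \lambda_t$ and $\varrho^t_{(h_{t-1},\xi)} = \xi_t(\cdot|h_{t-1})\otimes f_{t0}(\cdot|h_{t-1})$ on $X_t\times S_t$. By the ARM condition, $f_{t0}(\rmd s_t|h_{t-1}) = \varphi_{t0}(h_{t-1},s_t)\,\lambda_t(\rmd s_t)$, and since $\varphi_{t0}(h_{t-1},\cdot)$ does not depend on $x_t$, Tonelli's theorem gives, for all Borel $A\subseteq X_t$ and $B\subseteq S_t$,
$$\varrho^t_{(h_{t-1},\xi)}(A\times B) = \int_{A}\int_{B}\varphi_{t0}(h_{t-1},s_t)\,\lambda_t(\rmd s_t)\,\xi_t(\rmd x_t|h_{t-1}) = \int_{A\times B}\varphi_{t0}(h_{t-1},s_t)\,\rho^t_{(h_{t-1},\xi)}(\rmd(x_t,s_t)),$$
which is the asserted identity for $m_1 = t$.

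\textbf{Inductive step.} Assume the identity holds for $m_1-1$. By the construction of $\Xi_t^{m_1}$ and $\Delta_t^{m_1}$,
$$\rho^{m_1}_{(h_{t-1},\xi)} = \rho^{m_1-1}_{(h_{t-1},\xi)}\diamond\bigl(\xi_{m_1}\otimes\lambda_{m_1}\bigr),\qquad \varrho^{m_1}_{(h_{t-1},\xi)} = \varrho^{m_1-1}_{(h_{t-1},\xi)}\diamond\bigl(\xi_{m_1}\otimes f_{m_10}\bigr).$$
Writing a path point as $(\omega_{m_1-1},x_{m_1},s_{m_1})$ and taking a rectangle $C\times(A\times B)$ with $C$ Borel in $\prod_{t\le m\le m_1-1}(X_m\times S_m)$, I would expand $\varrho^{m_1}_{(h_{t-1},\xi)}(C\times(A\times B))$ via the definition of $\diamond$, substitute $f_{m_10}(\rmd s_{m_1}|h_{m_1-1}) = \varphi_{m_10}(h_{m_1-1},s_{m_1})\lambda_{m_1}(\rmd s_{m_1})$ in the innermost integral, apply the inductive hypothesis to rewrite the outer $\varrho^{m_1-1}_{(h_{t-1},\xi)}$-integral against $\rho^{m_1-1}_{(h_{t-1},\xi)}$ with density $\prod_{t\le m\le m_1-1}\varphi_{m0}(h_{t-1},\cdot)$, and then recombine by Tonelli's theorem. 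The key bookkeeping point is that $\varphi_{m_10}(h_{m_1-1},s_{m_1})$ does not depend on $x_{m_1}$ and $\prod_{t\le m\le m_1-1}\varphi_{m0}$ does not depend on $(x_{m_1},s_{m_1})$, so both density factors can be pulled through the nested integrals that define the $\diamond$-product; this yields
$$\varrho^{m_1}_{(h_{t-1},\xi)}(C\times(A\times B)) = \int_{C\times(A\times B)}\ \prod_{t\le m\le m_1}\varphi_{m0}(h_{t-1},\cdot)\ \ \rho^{m_1}_{(h_{t-1},\xi)}(\rmd\omega).$$
Since such rectangles generate the product $\sigma$-algebra and both sides are probability measures, the identity extends to all Borel sets, completing the induction.

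\textbf{Expected difficulty.} The argument is essentially routine disintegration and Tonelli bookkeeping; the only points requiring a little care are tracking precisely which coordinates each factor $\varphi_{m0}$ depends on so the densities pull through the $\diamond$-products correctly, and observing that the rectangle identity automatically propagates to the full Borel $\sigma$-algebra because both sides have total mass one.
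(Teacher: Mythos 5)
Your proposal is correct and follows essentially the same route as the paper: induction on $m_1$, with the base case computed directly from the ARM density $\varphi_{t0}$, and the inductive step obtained by expanding the $\diamond$-product, substituting $f_{m_10}(\rmd s_{m_1}|h_{m_1-1}) = \varphi_{m_10}(h_{m_1-1},s_{m_1})\lambda_{m_1}(\rmd s_{m_1})$, invoking the inductive hypothesis, and recombining via Fubini--Tonelli on generating rectangles. The only cosmetic difference is that the paper verifies the identity on full product rectangles $\prod_{t\le m\le m_1}(C_m\times D_m)$ while you use rectangles of the form $C\times(A\times B)$ and make the $\pi$-system extension explicit; this changes nothing of substance.
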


\begin{proof}
Fix $\xi \in \Upsilon$, and Borel subsets $C_m \subseteq X_m$ and $D_m \subseteq S_m$ for $m \ge t$. First, we have
\begin{align*}
& \qquad \varrho^{t}_{(h_{t-1}, \xi)}(C_t \times D_t) \\
& = \xi_t(C_t|h_{t-1}) \cdot f_{t0} (D_t|h_{t-1}) \\
& = \int_{X_t\times S_t} \delta_{C_t\times D_t}(x_t,s_t)\varphi_{t0}(h_{t-1}, s_t) (\xi_t(h_{t-1})\otimes \lambda_t) (\rmd (x_t,s_t)),
\end{align*}
which implies that $\varrho^{t}_{(h_{t-1}, \xi)} = \varphi_{t0}(h_{t-1}, \cdot) \circ \rho^{t}_{(h_{t-1}, \xi)}$.

Suppose that $\varrho^{m_2}_{(h_{t-1}, \xi)} = \left(\prod_{t \le m \le m_2} \varphi_{m0}(h_{t-1}, \cdot) \right) \circ \rho^{m_2}_{(h_{t-1}, \xi)}$ for some $m_2 \ge t$. Then
\begin{align*}
& \quad \varrho^{m_2+1}_{(h_{t-1}, \xi)} \left( \prod_{t \le m \le m_2 + 1} (C_m \times D_m) \right) \\
& = \varrho^{m_2}_{(h_{t-1}, \xi)} \diamond (\xi_{m_2 + 1}(h_{t-1}, \cdot) \otimes f_{(m_2+1)0} (h_{t-1}, \cdot)) \left( \prod_{t \le m \le m_2 + 1} (C_m \times D_m) \right) \\
& = \int_{\prod_{t\le m \le m_2} (X_m \times S_m)} \int_{X_{m_2+1} \times S_{m_2+1} } \delta_{\prod_{t \le m \le m_2 + 1} (C_m \times D_m)} (x_t, \ldots, x_{m_2+1}, s_t, \ldots, s_{m_2+1}) \cdot \\
& \quad \xi_{m_2 + 1} \otimes f_{(m_2+1)0} (\rmd (x_{m_2+1}, s_{m_2+1}) | h_{t-1}, x_t, \ldots, x_{m_2}, s_t, \ldots, s_{m_2}) \\
& \quad \varrho^{m_2}_{(h_{t-1}, \xi)}(\rmd (x_t, \ldots, x_{m_2}, s_t, \ldots, s_{m_2}) | h_{t-1}) \\
& = \int_{\prod_{t\le m \le m_2} (X_m \times S_m)} \int_{S_{m_2+1}}\int_{X_{m_2+1} } \delta_{\prod_{t \le m \le m_2 + 1} (C_m \times D_m)} (x_t, \ldots, x_{m_2+1}, s_t, \ldots, s_{m_2+1}) \cdot \\
& \quad \varphi_{(m_2+1)0}(h_{t-1}, x_t, \ldots, x_{m_2}, s_t, \ldots, s_{m_2+1}) \xi_{m_2 + 1} (\rmd x_{m_2+1} | h_{t-1}, x_t, \ldots, x_{m_2}, s_t, \ldots, s_{m_2})\\
& \quad  \lambda_{(m_2+1)0}(\rmd s_{m_2+1}) \prod_{t \le m \le m_2} \varphi_{m0}(h_{t-1}, x_t, \ldots, x_{m-1}, s_t, \ldots, s_{m}) \\
& \quad \rho^{m_2}_{(h_{t-1}, \xi)}(\rmd (x_t, \ldots, x_{m_2}, s_t, \ldots, s_{m_2}) | h_{t-1}) \\
& = \int_{\prod_{t\le m \le m_2+1} (X_m \times S_m)} \delta_{\prod_{t \le m \le m_2 + 1} (C_m \times D_m)} (x_t, \ldots, x_{m_2+1}, s_t, \ldots, s_{m_2+1}) \cdot \\
& \quad \prod_{t \le m \le m_2 +1} \varphi_{m0}(h_{t-1}, x_t, \ldots, x_{m-1}, s_t, \ldots, s_{m}) \rho^{m_2+1}_{(h_{t-1}, \xi)}(\rmd (x_t, \ldots, x_{m_2}, s_t, \ldots, s_{m_2}) | h_{t-1}),
\end{align*}
which implies that
$$\varrho^{m_2+1}_{(h_{t-1}, \xi)} = \left(\prod_{t \le m \le m_2+1} \varphi_{m0}(h_{t-1}, \cdot) \right) \circ \rho^{m_2+1}_{(h_{t-1}, \xi)}.$$
The proof is complete.
\end{proof}

\begin{lem}\label{lem-measurable history}
\begin{enumerate}
  \item For any $t \ge 1$, the correspondence $\Delta_t^{m_1}$ is nonempty and compact valued, and sectionally continuous on $X^{t-1}$ for any $m_1 \ge t$.
  \item For any $t \ge 1$, the correspondence $\Delta_t$ is nonempty and compact valued, and sectionally continuous on $X^{t-1}$.
\end{enumerate}
\end{lem}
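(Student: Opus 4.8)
The plan is to obtain both parts by first proving the analogous properties for the auxiliary correspondences $\Xi_t^{m_1}$ and $\Xi_t$ — whose state coordinates are governed by the \emph{fixed} atomless measures $\lambda_m$ — and then transferring to $\Delta_t^{m_1}$ and $\Delta_t$ through the Radon--Nikodym identity of Lemma~\ref{lem-strategy}.

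For Part~1 I would induct on $m_1 \ge t$. In the base case $m_1 = t$ one has $\Xi_t^t(h_{t-1}) = \cM(A_t(h_{t-1})) \otimes \lambda_t$ and $\Delta_t^t(h_{t-1}) = \cM(A_t(h_{t-1})) \otimes f_{t0}(h_{t-1})$: since $A_t$ is nonempty, compact valued, and sectionally continuous on $X^{t-1}$, the lifted correspondence $h_{t-1} \mapsto \cM(A_t(h_{t-1}))$ is again nonempty, compact valued, and sectionally continuous on $X^{t-1}$ (continuity of $K \mapsto \cM(K)$ together with Lemma~\ref{lem-measurable correspondence}), and tensoring with the fixed $\lambda_t$, resp.\ with the sectionally continuous $f_{t0}$, preserves these properties because the product map on measures is jointly continuous; joint measurability comes from Lemma~\ref{lem-measurable composition}. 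For the inductive step, fix a state section $s^{t-1}$; then $Z = H_{t-1}(s^{t-1})$ is a compact metric space, and by the inductive hypothesis the restriction of $\Xi_t^{m_1-1}$ to $Z$ is a continuous, nonempty, compact-valued correspondence into $\cM\big(\prod_{t \le m \le m_1-1}(X_m \times S_m)\big)$ whose marginal on the state coordinates is the \emph{fixed} product measure $\otimes_{t \le m \le m_1-1}\lambda_m$ and which is supported on the graph of the feasible-action-path correspondence (the composition of the sectionally continuous $A_m$, hence sectionally upper hemicontinuous on $Z$ with a measurable graph). I would then invoke Lemma~\ref{lem-continuous composition} with $G = \Xi_t^{m_1-1}$, state space $S = \prod_{t \le m \le m_1-1}S_m$, reference measure $\lambda = \otimes \lambda_m$, and $F$ equal to the one-step transition correspondence $h_{m_1-1} \mapsto \cM(A_{m_1}(h_{m_1-1})) \otimes \lambda_{m_1}$, which is nonempty, convex (as $\beta \mapsto \beta \otimes \lambda_{m_1}$ is affine), compact valued, and sectionally continuous on $X^{m_1-1}$ by the base-case argument applied at stage $m_1$. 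Lemma~\ref{lem-continuous composition} then gives that the restriction of $\Xi_t^{m_1}$ to $Z$ is continuous, nonempty, and compact valued; since $s^{t-1}$ was arbitrary this yields sectional continuity on $X^{t-1}$, with joint measurability from Lemma~\ref{lem-measurable composition}. Finally, by Lemma~\ref{lem-strategy}, $\Delta_t^{m_1}(h_{t-1})$ is the image of $\Xi_t^{m_1}(h_{t-1})$ under the map $\rho \mapsto \big(\prod_{t \le m \le m_1}\varphi_{m0}(h_{t-1},\cdot)\big) \circ \rho$, the density product being extended to the whole product space as in the footnote to Lemma~\ref{lem-strategy}; this extension is sectionally continuous on the action coordinates and integrably bounded by the fixed $\lambda$-integrable function $\prod_{t \le m \le m_1}\phi_m$, so re-weighting a measure by it is a continuous operation (test against bounded continuous functions and apply Lemma~\ref{lem-topology}), whence $\Delta_t^{m_1}$ inherits nonemptiness, compactness, and sectional continuity on $X^{t-1}$ from $\Xi_t^{m_1}$.

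For Part~2, the key point is that $\prod_{m \ge t}(X_m \times S_m)$ is Polish and weak convergence of probabilities on it is equivalent to convergence of every finite-dimensional marginal; consequently $\Xi_t(h_{t-1})$, resp.\ $\Delta_t(h_{t-1})$, is exactly the set of $\mu$ whose truncation to $\prod_{t \le m \le m_1}(X_m \times S_m)$ lies in $\Xi_t^{m_1}(h_{t-1})$, resp.\ $\Delta_t^{m_1}(h_{t-1})$, for every $m_1 \ge t$ — the inclusion ``$\subseteq$'' is immediate from the definitions, and the reverse follows by disintegrating the marginals to recover a consistent sequence of kernels $\xi$ and applying the Ionescu--Tulcea construction. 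Nonemptiness is then clear (pick any Borel selection of the $\cM(A_m)$, which exists by Lemma~\ref{lem-measurable selection}). Compactness: the set is an intersection of the closed sets $\{\mu : \mbox{Marg}_{m_1}\mu \in \Xi_t^{m_1}(h_{t-1})\}$ and is tight, since the state marginals are the tight measures $\otimes\lambda_m$ and the action marginals sit on the compact feasible-path sets, uniformly so when $h_{t-1}$ ranges over the compact section $H_{t-1}(s^{t-1})$. Sectional continuity: fixing $s^{t-1}$ and $x^{(n)} \to x$, for the upper part one uses this uniform tightness to extract a weakly convergent subsequence of any $\mu^{(n)} \in \Xi_t(x^{(n)}, s^{t-1})$ and passes each finite marginal through the sectional upper hemicontinuity plus compact-valuedness of $\Xi_t^{m_1}$ from Part~1; the lower part is a diagonal argument lifting the finite marginals of a target measure via the sectional lower hemicontinuity of each $\Xi_t^{m_1}$ (using the sequential characterization in Lemma~\ref{lem-measurable correspondence}). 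Finally $\Delta_t$ is obtained from $\Xi_t$ exactly as in Part~1: its $m_1$-marginal is $\Delta_t^{m_1}$, equal to the $m_1$-marginal of $\Xi_t$ re-weighted by $\prod_{t \le m \le m_1}\varphi_{m0}(h_{t-1},\cdot)$ (Lemma~\ref{lem-strategy}), a re-weighting that is continuous on each finite marginal because the bound $\prod_{t \le m \le m_1}\phi_m$ is a \emph{fixed} integrable function; hence $\Delta_t$ inherits all three properties.

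The main obstacle is the inductive step of Part~1. Because the true state transitions $f_{m0}$ depend on the entire past history, there is no fixed reference measure on the state coordinates, so Lemma~\ref{lem-continuous composition} — which crucially requires the state marginal of $G$ to equal a fixed $\lambda$ — cannot be applied to $\Delta_t^{m_1-1}$ directly; routing through $\Xi_t^{m_1-1}$, whose state marginal is the fixed product $\otimes\lambda_m$, and then reinstating the densities via Lemma~\ref{lem-strategy} is what makes the argument work, and it is precisely here that the integrable-boundedness clause of the ARM condition is needed to control the densities when passing to weak limits on each finite block of coordinates. A secondary, bookkeeping difficulty is that every continuity statement is only ``sectional on $X^{t-1}$'', forcing one to argue section by section on the compact sets $H_{t-1}(s^{t-1})$ while separately verifying joint measurability.
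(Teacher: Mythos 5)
Your Part~1 is essentially the paper's own argument: induct on $m_1$ to get nonemptiness, compactness and sectional continuity of $\Xi_t^{m_1}$ (base case $\cM(A_t(\cdot))\otimes\lambda_t$; inductive step via the feasible-action-path correspondence $A_t^{m_1}$ and Lemma~\ref{lem-continuous composition}), then transfer to $\Delta_t^{m_1}$ by re-weighting with $\prod_{t\le m\le m_1}\varphi_{m0}$ through Lemma~\ref{lem-strategy} and Lemma~\ref{lem-topology}. Part~2 differs only in packaging: the paper extracts a limiting strategy profile stage by stage (compactness of $\Xi_t^t$, then Lemma~\ref{lem-continuous composition} again, then a diagonal argument), whereas you characterize $\Xi_t(h_{t-1})$ and $\Delta_t(h_{t-1})$ as projective limits---the measures all of whose finite truncations lie in $\Xi_t^{m_1}(h_{t-1})$, resp.\ $\Delta_t^{m_1}(h_{t-1})$---and use that weak convergence on the countable product is equivalent to convergence of all finite-dimensional marginals. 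Both are diagonalizations over finite horizons; your route trades the repeated use of Lemma~\ref{lem-continuous composition} for a disintegration/Ionescu--Tulcea step, where, strictly speaking, the disintegrated kernels have the required product form only almost everywhere and must be patched off a null set by a fixed measurable selection of $\cM(A_m)$ (harmless, since this does not change the induced path measure).

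One justification does need repair: the tightness claim that ``the action marginals sit on the compact feasible-path sets, uniformly so when $h_{t-1}$ ranges over the compact section.'' The correspondences $A_m$ are only \emph{measurable} in the state variables, so for fixed $h_{t-1}$ the union of the compact sets $A_t^{m}(h_{t-1},s)$ over $s$ in the (generally non-compact) support of $\otimes_k\lambda_k$ need not be relatively compact; uniform compactness of feasible paths is exactly what the ARM setting does not provide. The conclusion you need is nevertheless available from what you have already proved: each $\Xi_t^{m_1}(h_{t-1})$ is compact by your Part~1, a compact set of Borel probability measures on a Polish space is uniformly tight by Prokhorov's theorem, hence every coordinate-marginal family is uniformly tight and the family $\Xi_t(h_{t-1})$ (and the relevant sequences along a convergent sequence of histories in a fixed section) is tight; alternatively, tightness is automatic in your sequential arguments because a weakly convergent sequence of finite-dimensional marginals is uniformly tight. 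With that substitution, the rest of your Part~2---closedness under the marginal maps, upper and lower hemicontinuity by passing finite marginals through the sectional continuity of $\Xi_t^{m_1}$, and the blockwise transfer to $\Delta_t$ via Lemma~\ref{lem-strategy} and Lemma~\ref{lem-topology}---goes through and delivers the same conclusions as the paper.
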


\begin{proof}
(1) We first show that the correspondence $\Xi_t^{m_1}$ is nonempty and compact valued, and sectionally continuous on $X^{t-1}$ for any $m_1 \ge t$

Consider the case $m_1 = t \ge 1$, where
$$\Xi_t^t(h_{t-1}) = \cM(A_{t}(h_{t-1})) \otimes \lambda_{t}.$$
Since $A_{ti}$ is nonempty and compact valued, and sectionally continuous on $X^{t-1}$, $\Xi_t^t$ is nonempty and compact valued, and sectionally continuous on $X^{t-1}$.

Now suppose that $\Xi_t^{m_2}$ is nonempty and compact valued, and sectionally continuous on $X^{t-1}$ for some $m_2 \ge t\ge 1$. Notice that
\begin{align*}
\Xi_t^{m_2+1}(h_{t-1}) =
& \{ g(h_{t-1})\diamond (\xi_{m_2+1}(h_{t-1}, \cdot)\otimes \lambda_{(m_2+1)}) \colon  \\
& g \mbox{ is a Borel  measurable selection of } \Xi_t^{m_2}, \\
& \xi_{m_2+1} \mbox{ is a Borel measurable selection of } \cM(A_{m_2+1})\}.
\end{align*}

Define a correspondence $A_t^t$ from $H_{t-1}\times S_t$ to $X_t$ as $A_t^t(h_{t-1}, s_t) = A_t(h_{t-1})$. Then $A_t^t$ is nonempty and compact valued, sectionally continuous on $X_{t-1}$, and has a $\cB(X^{t} \times S^{t})$-measurable graph. For any $(s_1, \ldots, s_{t})$, since $H_{t-1}(s_1, \ldots, s_{t-1})$ is compact and $A_{t}(\cdot, s_1, \ldots, s_{t-1})$ is continuous and compact valued, $A_t^t(\cdot, s_1, \ldots, s_{t})$ has a compact graph by Lemma~\ref{lem-measurable selection}~(6). For any $h_{t-1}\in H_{t-1}$ and $\tau \in \Xi_t^t(h_{t-1})$, the marginal of $\tau$ on $S_t$ is $\lambda_t$ and $\tau (\mbox{Gr}(A_t^t(h_{t-1}, \cdot))) = 1$.

For any $m_1 > t$, suppose that the correspondence
$$A_t^{m_1 - 1} \colon H_{t-1}\times \prod_{t \le m \le m_1-1} S_m \to \prod_{t \le m \le m_1-1} X_m$$
has been defined such that
\begin{enumerate}
  \item it is nonempty and compact valued, sectionally upper hemicontinuous on $X_{t-1}$, and has a $\cB(X^{m_1-1} \times S^{m_1-1})$-measurable graph;
  \item for any $(s_1, \ldots. s_{m_1-1})$, $A_t^{m_1 - 1}(\cdot, s_1, \ldots. s_{m_1-1})$ has a compact graph;
  \item for any $h_{t-1}\in H_{t-1}$ and $\tau \in \Xi_t^{m_1-1}(h_{t-1})$, the marginal of $\tau$ on $\prod_{t \le m \le m_1-1} S_m$ is $\otimes_{t \le m \le m_1-1} \lambda_m$ and $\tau (\mbox{Gr}(A_t^{m_1 - 1}(h_{t-1}, \cdot))) = 1$.
\end{enumerate}

We define a correspondence $A_t^{m_1} \colon H_{t-1}\times \prod_{t \le m \le m_1} S_m \to \prod_{t \le m \le m_1} X_m$ as follows:
\begin{align*}
A_t^{m_1}(h_{t-1}, s_t, \ldots, s_{m_1})  =
& \{(x_t, \ldots, x_{m_1}) \colon \\
& x_{m_1} \in A_{m_1}(h_{t-1}, x_t, \ldots, x_{m_1-1}, s_t, \ldots, s_{m_1-1}), \\
& (x_t, \ldots, x_{m_1-1}) \in A_t^{m_1 - 1}(h_{t-1}, s_t, \ldots, s_{m_1-1}) \}.
\end{align*}
It is obvious that $A_t^{m_1}$ is nonempty valued. For any $(s_1, \ldots, s_{m_1})$, since $A_t^{m_1 - 1}(\cdot, s_1, \ldots. s_{m_1-1})$ has a compact graph and $A_{m_1}(\cdot, s_1, \ldots, s_{m_1-1})$ is continuous and compact valued, $A_t^{m_1}(\cdot, s_1, \ldots. s_{m_1})$ has a compact graph by Lemma~\ref{lem-measurable selection}~(6), which implies that $A_t^{m_1}$ is compact valued and sectionally upper hemicontinuous on $X_{t-1}$. In addition, $\mbox{Gr}(A_t^{m_1}) = \mbox{Gr}(A_{m_1}) \times S_{m_1}$, which is $\cB(X^{m_1} \times S^{m_1})$-measurable. For any $h_{t-1}\in H_{t-1}$ and $\tau \in \Xi_t^{m_1}(h_{t-1})$, it is obvious that the marginal of $\tau$ on $\prod_{t \le m \le m_1} S_m$ is $\otimes_{t \le m \le m_1} \lambda_m$ and $\tau (\mbox{Gr}(A_t^{m_1}(h_{t-1}, \cdot))) = 1$.

By Lemma~\ref{lem-continuous composition}, $\Xi_t^{m_2+1}$ is nonempty and compact valued, and sectionally continuous on $X^{t-1}$.

\

Now we show that the correspondence $\Delta_t^{m_1}$ is nonempty and compact valued, and sectionally continuous on $X^{t-1}$ for any $m_1 \ge t$.

Given $s^{t-1}$ and a sequence $\{x_0^k,x_1^k, \ldots, x_{t-1}^k\} \in H_{t-1}(s^{t-1})$ for $1\le k \le \infty$. Let $h^k_{t-1} = (s^{t-1}, (x_0^k,x_1^k, \ldots, x_{t-1}^k))$. It is obvious that $\Delta_t^{m_1}$ is nonempty valued, we first show that $\Delta_t^{m_1}$ is sectionally upper hemicontinuous on $X^{t-1}$.  Suppose that $\varrho^{m_1}_{(h^k_{t-1}, \xi^k)} \in \Delta_t^{m_1}(h_{t-1}^k)$ for $1 \le k < \infty$ and $(x_0^k,x_1^k, \ldots, x_{t-1}^k) \to (x_0^\infty,x_1^\infty, \ldots, x_{t-1}^\infty)$, we need to show that there exists some $\xi^\infty$ such that a subsequence of $\varrho^{m_1}_{(h^k_{t-1}, \xi^k)}$ weakly converges to $\varrho^{m_1}_{(h^\infty_{t-1}, \xi^\infty)}$ and $\varrho^{m_1}_{(h^\infty_{t-1}, \xi^\infty)} \in \Delta_t^{m_1}(h_{t-1}^\infty)$.

Since $\Xi_t^{m_1}$ is sectionally upper hemicontinuous on $X^{t-1}$, there exists some $\xi^\infty$ such that a subsequence of $\rho^{m_1}_{(h^k_{t-1}, \xi^k)}$, say itself, weakly converges to $\rho^{m_1}_{(h^\infty_{t-1}, \xi^\infty)}$ and $\rho^{m_1}_{(h^\infty_{t-1}, \xi^\infty)} \in \Xi_t^{m_1}(h_{t-1}^\infty)$. Then $\varrho^{m_1}_{(h^\infty_{t-1}, \xi^\infty)} \in \Delta_t^{m_1}(h_{t-1}^\infty)$.

For any bounded continuous function $\psi$ on $\prod_{t \le m \le m_1} (X_m\times S_m)$, let
$$\chi_k(x_t, \ldots, x_{m_1}, s_t, \ldots, s_{m_1}) = $$
$$\psi(x_t, \ldots, x_{m_1}, s_t, \ldots, s_{m_1}) \cdot \prod_{t \le m \le m_1} \varphi_{m0}(h^k_{t-1}, x_t, \ldots, x_{m-1}, s_t, \ldots, s_{m}).$$
Then $\{\chi_k\}$ is a sequence of functions satisfying the following three properties.
\begin{enumerate}

\item For each $k$, $\chi_k$ is jointly measurable and sectionally continuous on $X$.

\item For any $(s_t, \ldots, s_{m_1})$ and any sequence $(x_t^k, \ldots, x_{m_1}^k) \to (x_t^\infty, \ldots, x^\infty_{m_1})$ in $X$, $\chi_k(x_t^k, \ldots, x_{m_1}^k, s_t, \ldots, s_{m_1}) \to \chi_\infty(x_t^\infty, \ldots, x_{m_1}^\infty, s_t, \ldots, s_{m_1})$ as $k \to \infty$.

\item The sequence $\{\chi_k\}_{1 \le k \le \infty}$ is integrably bounded.
\end{enumerate}
By Lemma~\ref{lem-topology}, as $k \to \infty$,
$$\int_{\prod_{t\le m \le m_1}(X_m \times S_m)} \chi_k(x_t, \ldots, x_{m_1}, s_t, \ldots, s_{m_1}) \rho^{m_1}_{(h^k_{t-1}, \xi^k)} (\rmd (x_t, \ldots, x_{m_1}, s_t, \ldots, s_{m_1}))
$$
$$\to \int_{\prod_{t\le m \le m_1}(X_m \times S_m)} \chi_\infty(x_t, \ldots, x_{m_1}, s_t, \ldots, s_{m_1}) \rho^{m_1}_{(h^\infty_{t-1}, \xi^\infty)} (\rmd (x_t, \ldots, x_{m_1}, s_t, \ldots, s_{m_1})).
$$
Then by Lemma~\ref{lem-strategy},
$$\int_{\prod_{t\le m \le m_1}(X_m \times S_m)} \psi(x_t, \ldots, x_{m_1}, s_t, \ldots, s_{m_1}) \varrho^{m_1}_{(h^k_{t-1}, \xi^k)} (\rmd (x_t, \ldots, x_{m_1}, s_t, \ldots, s_{m_1}))
$$
$$\to \int_{\prod_{t\le m \le m_1}(X_m \times S_m)} \psi(x_t, \ldots, x_{m_1}, s_t, \ldots, s_{m_1}) \varrho^{m_1}_{(h^\infty_{t-1}, \xi^\infty)} (\rmd (x_t, \ldots, x_{m_1}, s_t, \ldots, s_{m_1})),
$$
which implies that $\varrho^{m_1}_{(h^k_{t-1}, \xi^k)}$ weakly converges to $\varrho^{m_1}_{(h^\infty_{t-1}, \xi^\infty)}$. Therefore, $\Delta_t^{m_1}$ is sectionally upper hemicontinuous on $X^{t-1}$. If one chooses $h_{t-1}^1 = h_{t-1}^2 = \cdots = h_{t-1}^\infty$, then we indeed show that $\Delta_t^{m_1}$ is compact valued.

In the argument above, we indeed proved that if $\rho^{m_1}_{(h^k_{t-1}, \xi^k)}$ weakly converges to $\rho^{m_1}_{(h^\infty_{t-1}, \xi^\infty)}$, then $\varrho^{m_1}_{(h^k_{t-1}, \xi^k)}$ weakly converges to $\varrho^{m_1}_{(h^\infty_{t-1}, \xi^\infty)}$.

The left is to show that $\Delta_t^{m_1}$ is sectionally lower hemicontinuous on $X^{t-1}$. Suppose that $(x_0^k,x_1^k, \ldots, x_{t-1}^k) \to (x_0^\infty,x_1^\infty, \ldots, x_{t-1}^\infty)$ and $\varrho^{m_1}_{(h^\infty_{t-1}, \xi^\infty)} \in \Delta_t^{m_1}(h_{t-1}^\infty)$, we need to show that there exists a subsequence $\{(x_0^{k_m},x_1^{k_m}, \ldots, x_{t-1}^{k_m})\}$ of $\{(x_0^k,x_1^k, \ldots, x_{t-1}^k)\}$ and $\varrho^{m_1}_{(h^{k_m}_{t-1}, \xi^{k_m})} \in \Delta_t^{m_1}(h_{t-1}^{k_m})$ for each $k_m$ such that $\varrho^{m_1}_{(h^{k_m}_{t-1}, \xi^{k_m})}$ weakly converges to $\varrho^{m_1}_{(h^\infty_{t-1}, \xi^\infty)}$.

Since $\varrho^{m_1}_{(h^\infty_{t-1}, \xi^\infty)} \in \Delta_t^{m_1}(h_{t-1}^\infty)$, we have $\rho^{m_1}_{(h^\infty_{t-1}, \xi^\infty)} \in \Xi_t^{m_1}(h_{t-1}^\infty)$. Because $\Xi_t^{m_1}$ is sectionally lower hemicontinuous on $X^{t-1}$, there exists a subsequence of $\{(x_0^k,x_1^k, \ldots, x_{t-1}^k)\}$, say itself, and $\rho^{m_1}_{(h^{k}_{t-1}, \xi^{k})} \in \Xi_t^{m_1}(h_{t-1}^{k})$ for each $k$ such that $\rho^{m_1}_{(h^{k}_{t-1}, \xi^{k})}$ weakly converges to $\rho^{m_1}_{(h^\infty_{t-1}, \xi^\infty)}$. As a result, $\varrho^{m_1}_{(h^{k}_{t-1}, \xi^{k})}$ weakly converges to $\varrho^{m_1}_{(h^\infty_{t-1}, \xi^\infty)}$, which implies that $\Delta_t^{m_1}$ is sectionally lower hemicontinuous on $X^{t-1}$.

Therefore, $\Delta_t^{m_1}$ is nonempty and compact valued, and sectionally continuous on $X^{t-1}$ for any $m_1 \ge t$.

\

(2) We show that $\Delta_t$ is nonempty and compact valued, and sectionally continuous on $X^{t-1}$.

It is obvious that $\Delta_t$ is nonempty valued, we first prove that it is compact valued.

Given $h_{t-1}$ and a sequence $\{\tau^k\}\subseteq \Delta_t(h_{t-1})$,  there exists a sequence of $\{\xi^k\}_{k \ge 1}$ such that $\xi^k = (\xi^k_1, \xi^k_2, \ldots) \in \Upsilon$ and $\tau^k = \varrho_{(h_{t-1}, \xi^k)}$ for each $k$.

By (1), $\Xi_t^t$ is compact. Then there exists a measurable mapping $g_t$ such that (1) $g^t = (\xi^1_1, \ldots, \xi^1_{t-1}, g_t, \xi^1_{t+1}, \ldots) \in \Upsilon$, and (2) $\rho^{t}_{(h_{t-1}, \xi^k)}$ weakly converges to $\rho^{t}_{(h_{t-1}, g^{t})}$. Note that $\{\xi_{t+1}^k\}$ is a Borel measurable selection of $\cM(A_{t+1})$. By Lemma~\ref{lem-continuous composition}, there is a Borel measurable selection $g_{t+1}$ of $\cM(A_{t+1})$ such that there is a subsequence of $\{\rho^{t+1}_{(h_{t-1}, \xi^k)}\}$, say itself, which weakly converges to $\rho^{t+1}_{(h_{t-1}, g^{t+1})}$, where $g^{t+1} = (\xi^1_1, \ldots, \xi^1_{t-1}, g_t, g_{t+1}, \xi^1_{t+2}, \ldots) \in \Upsilon$.

Repeat this procedure, one can construct a Borel measurable mapping $g$ such that $\rho_{(h_{t-1}, g)}$ is a convergent point of $\{\rho_{(h_{t-1}, \xi^k)}\}$. Thus, $\varrho_{(h_{t-1}, g)}$ is a convergent point of $\{\varrho_{(h_{t-1}, \xi^k)}\}$.

The sectional upper hemicontinuity of $\Delta_t$ follows a similar argument as above. In particular, given $s^{t-1}$ and a sequence $\{x_0^k,x_1^k, \ldots, x_{t-1}^k\}\subseteq H_{t-1}(s^{t-1})$ for $k \ge 0$. Let $h^k_{t-1} = (s^{t-1}, (x_0^k,x_1^k, \ldots, x_{t-1}^k))$. Suppose that $(x_0^k,x_1^k, \ldots, x_{t-1}^k) \to (x_0^0,x_1^0, \ldots, x_{t-1}^0)$. If $\{\tau^k\}\subseteq \Delta_t(h_{t-1}^k)$ for $k \ge 1$ and $\tau^k \to \tau^0$, then one can show that $\tau^0 \in \Delta_t(h^0_{t-1})$ by repeating a similar argument as in the proof above.

Finally, we consider the sectional lower hemicontinuity of $\Delta_t$. Suppose that $\tau^0 \in \Delta_t(h^0_{t-1})$. Then there exists some $\xi \in \Upsilon$ such that $\tau^0 = \varrho_{(h^0_{t-1}, \xi)}$. Denote $\tilde{\tau}^m = \varrho^m_{(h^0_{t-1}, \xi)} \in \Delta^m_t(h_{t-1}^0)$ for $m \ge t$. As $\Delta^m_t$ is continuous, for each $m$, there exists some $\xi^m \in \Upsilon$ such that $d(\varrho^m_{(h^{k_m}_{t-1}, \xi^m)}, \tilde{\tau}^m) \le \frac{1}{m}$ for $k_m$ sufficiently large, where $d$ is the Prokhorov metric. Let $\tau^m = \varrho_{(h^{k_m}_{t-1}, \xi^m)}$. Then $\tau^m$ weakly converges to $\tau^0$, which implies that $\Delta_t$ is sectionally lower hemicontinuous.
\end{proof}

Define a correspondence $Q^\tau_t \colon H_{t-1} \to \bR^n_{++}$ as follows:
$$Q^\tau_t(h_{t-1}) =$$
$$\begin{cases}
\{\int_{\prod_{m \ge t} (X_m \times S_m)} u(h_{t-1}, x, s) \varrho_{(h_{t-1}, \xi)}(\rmd (x,s)) \colon  \varrho_{(h_{t-1}, \xi)} \in \Delta_t(h_{t-1}) \}; & t > \tau; \\
\Phi (Q^\tau_{t+1})(h_{t-1}) & t \le \tau.
\end{cases}
$$
Denote $Q_t^\infty = \cap_{\tau \ge 1} Q_t^\tau$.

\begin{lem}\label{lem-continuation set}
For any $t, \tau \ge 1$, $Q^\tau_{t}$ is bounded, measurable, nonempty and compact valued, and essentially sectionally upper hemicontinuous  on $X^{t-1}$.
\end{lem}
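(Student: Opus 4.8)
The plan is to proceed by a double induction: an outer induction on $\tau$ and, for each fixed $\tau$, a backward (downward) induction on $t$. For the base of the outer induction we must treat the two cases in the definition of $Q^\tau_t$ separately; for the inductive step we rely on the already-established machinery of Subsection~\ref{subsubsec-backward}, namely that the operator $\Phi(\cdot)$ preserves the stated properties.

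First I would handle the case $t > \tau$, where
$$Q^\tau_t(h_{t-1}) = \left\{\int_{\prod_{m \ge t}(X_m\times S_m)} u(h_{t-1},x,s)\,\varrho_{(h_{t-1},\xi)}(\rmd(x,s)) \colon \varrho_{(h_{t-1},\xi)} \in \Delta_t(h_{t-1})\right\}.$$
Here I would invoke Lemma~\ref{lem-measurable history}, which gives that $\Delta_t$ is nonempty and compact valued and sectionally continuous on $X^{t-1}$. Composing this with the payoff map $u$, which is Borel measurable, bounded by $\gamma$, and sectionally continuous on $X^\infty$: nonemptiness is immediate; boundedness by $\gamma$ is immediate since $u$ takes values in $(0,\gamma]$; measurability follows because integration against a measurable family of measures of a jointly measurable bounded function is measurable; compactness of $Q^\tau_t(h_{t-1})$ follows since it is the continuous image (integration of the fixed continuous section $u(h_{t-1},\cdot)$) of the compact set $\Delta_t(h_{t-1})$. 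For essential sectional upper hemicontinuity on $X^{t-1}$: fix $s^{t-1}$ outside a $\lambda^{t-1}$-null set so that $u(s^{t-1},\cdot)$ is continuous and the $\varphi_{m0}(s^{t-1},\cdot)$ are continuous in the action coordinates; then if $x^{t-1,k}\to x^{t-1,\infty}$ and $v_k\in Q^\tau_t(h_{t-1}^k)$ with $v_k\to v$, pick $\varrho_k\in\Delta_t(h_{t-1}^k)$ realizing $v_k$; by sectional upper hemicontinuity of $\Delta_t$ a subsequence of $\varrho_k$ converges weakly to some $\varrho_\infty\in\Delta_t(h_{t-1}^\infty)$, and then Lemma~\ref{lem-topology} (applied with the integrable bound $\gamma$ and the sectional continuity of $u$) gives $v_k = \int u\,\rmd\varrho_k \to \int u\,\rmd\varrho_\infty \in Q^\tau_t(h_{t-1}^\infty)$, so $v\in Q^\tau_t(h_{t-1}^\infty)$.

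For the case $t\le\tau$ we have $Q^\tau_t = \Phi(Q^\tau_{t+1})$. By the (downward) induction hypothesis on $t$ within the fixed $\tau$ — the base case $t=\tau+1$ being the case just treated — $Q^\tau_{t+1}$ is bounded, measurable, nonempty and compact valued, and essentially sectionally upper hemicontinuous on $X^t$. These are exactly the hypotheses under which the construction in Subsection~\ref{subsubsec-backward} shows that $P_t$ (the integral of $Q^\tau_{t+1}$ against $f_{t0}$) is bounded, measurable, nonempty, convex and compact valued, and essentially sectionally upper hemicontinuous on $X^t$ (using Lemma~\ref{lem-lyapunov} and the ARM condition for the convexity and compactness and the preservation of essential sectional upper hemicontinuity), and then Proposition~\ref{prop-extension SZ} yields that $\Phi(Q^\tau_{t+1}) = \Phi_t|_{\bR^n}$ is bounded, measurable, nonempty and compact valued, and essentially sectionally upper hemicontinuous on $X^{t-1}$. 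This closes the downward induction on $t$, and since the argument is uniform in $\tau$ it closes the outer induction as well. Note there is in fact no genuine dependence across different values of $\tau$ — for each $\tau$ the chain $Q^\tau_{\tau+1}, Q^\tau_\tau, \ldots, Q^\tau_1$ is self-contained — so the ``outer induction on $\tau$'' is really just ``for all $\tau$, run the backward induction''.

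The main obstacle is the case $t>\tau$: one must verify that passing from the weakly convergent measures $\varrho_{(h_{t-1}^k,\xi^k)}$ to convergence of the payoff integrals $\int u\,\rmd\varrho$ is legitimate despite $u$ being only measurable (not continuous) in the state variables. This is precisely the role of Lemma~\ref{lem-topology}: it handles integration of a sequence of functions that are merely sectionally continuous in the action coordinates against weakly convergent measures, provided the functions are integrably bounded — which holds here since $u$ is bounded by $\gamma$ and $\lambda$-integrably so. All the remaining verifications (measurability, boundedness, nonemptiness, compactness) are routine consequences of the cited lemmas.
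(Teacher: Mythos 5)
Your skeleton (handle $t>\tau$ directly via $\Delta_t$ and Lemma~\ref{lem-measurable history}, then run the backward-induction machinery of Subsection~\ref{subsubsec-backward} and Proposition~\ref{prop-extension SZ} for $t\le\tau$) is the paper's, and your treatment of $t\le\tau$ is fine. The gap is in the case $t>\tau$, where you lean on Lemma~\ref{lem-topology} in a way it does not support. That lemma requires all the measures $\tau_n=\lambda\diamond\nu_n$, including the limit, to share the \emph{same} marginal $\lambda$ on the coordinate in which the integrands are merely measurable; its proof uses exactly this (Lusin's theorem on $S$ together with the uniform bound $\int_{(S\setminus S_1)\times X}\psi\,\rmd\tau_n=\int_{S\setminus S_1}\psi\,\rmd\lambda$), and without it the conclusion is false (take $g$ an indicator in $s$ and $\tau_n=\delta_{s_n}\otimes\mu$, $s_n\to s$). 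The measures $\varrho_{(h^k_{t-1},\xi^k)}$ do not have a common state marginal: their state distributions are generated by the history-dependent transitions $f_{m0}$ and vary with $k$. The paper's way around this is to transfer to the reference measures $\rho_{(h,\xi)}$, whose state marginal \emph{is} $\otimes_m\lambda_m$, by folding the densities $\prod_m\varphi_{m0}$ into the integrand (Lemma~\ref{lem-strategy}); but that identity is stage-by-stage and only works over a finite horizon, while $u$ depends on the whole infinite history. Hence the paper first truncates: continuity at infinity gives $M$ with $w^{M+1}$ small, the tail is replaced by a fixed continuation built from sectionally continuous selections (Lemmas~\ref{lem-lusin} and \ref{lem-measurable composition}) to produce a sectionally continuous $V_M$ on $H_M$, and only then is Lemma~\ref{lem-topology} applied to the finite-horizon marginals $\varrho^M_{(h^k_{t-1},\xi^k)}$. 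Your proof never invokes continuity at infinity at this point and performs neither the truncation nor the density transfer, so the central convergence $\int u\,\rmd\varrho_k\to\int u\,\rmd\varrho_\infty$ is unproved; the same objection defeats your compactness argument, since $\varrho\mapsto\int u(h_{t-1},\cdot)\,\rmd\varrho$ is not weakly continuous when $u$ is only measurable in the states.

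A second, independent gap is measurability for $t>\tau$. Lemma~\ref{lem-measurable history} asserts no measurability of $\Delta_t$ in the state variables (only nonemptiness, compactness and sectional continuity on $X^{t-1}$), so the one-line claim that ``integration of a jointly measurable bounded function against a measurable family of measures is measurable'' has nothing to attach to: $Q^\tau_t(h_{t-1})$ is the set of values over all $\varrho\in\Delta_t(h_{t-1})$, and it is the measurability of this correspondence that must be established. The paper devotes a separate argument to it (Step~2): for each finite horizon $M$ it constructs value correspondences $W^j_M$ by backward recursion using Lemmas~\ref{lem-measurable payoff}, \ref{lem-lyapunov} and \ref{lem-measurable composition}, proves each is measurable and sectionally continuous, and then shows $Q^\tau_t=\overline{\cup_{M\ge t}W^{t-1}_M}$ — again via continuity at infinity — so that measurability follows from Lemma~\ref{lem-measurable selection}. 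You would need this construction, or some substitute for it, to close the proof.
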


\begin{proof}
We prove the lemma  in three steps.

Step 1. Fix $t > \tau$. We will show that $Q^\tau_{t}$ is bounded, nonempty and compact valued, and sectionally upper hemicontinuous on $X^{t-1}$.

The boundedness and nonemptiness of $Q^\tau_{t}$ are obvious. We shall prove that $Q^\tau_{t}$ is sectionally upper hemicontinuous on $X^{t-1}$. Given $s^{t-1}$ and a sequence $\{x_0^k,x_1^k, \ldots, x_{t-1}^k\}\subseteq H_{t-1}(s^{t-1})$ for $k \ge 0$. Let $h^k_{t-1} = (s^{t-1}, (x_0^k,x_1^k, \ldots, x_{t-1}^k))$. Suppose that $a^k \in Q_t^\tau(h_{t-1}^k)$ for $k \ge 1$, $(x_0^k,x_1^k, \ldots, x_{t-1}^k) \to (x_0^0,x_1^0, \ldots, x_{t-1}^0)$ and $a^k \to a^0$, we need to show that $a^0 \in Q_t^\tau(h_{t-1}^0)$.

By the definition, there exists a sequence $\{\xi^k\}_{k \ge 1}$ such that
$$a^k = \int_{\prod_{m \ge t} (X_m \times S_m)} u(h^k_{t-1}, x, s) \varrho_{(h^k_{t-1}, \xi^k)}(\rmd (x,s)),
$$
where $\xi^k = (\xi^k_1,\xi^k_2, \ldots) \in \Upsilon$ for each $k$. As $\Delta_t$ is compact valued and sectionally continuous on $X^{t-1}$, there exist some $\varrho_{(h^0_{t-1}, \xi^0)} \in \Delta_t(h^0_{t-1})$ and a subsequence of $\varrho_{(h^k_{t-1}, \xi^k)}$, say itself, which weakly converges to $\varrho_{(h^0_{t-1}, \xi^0)}$ for $\xi^0 = (\xi^0_1, \xi^0_2, \ldots) \in \Upsilon$.

We shall show that
$$a^0 = \int_{\prod_{m \ge t} (X_m \times S_m)} u(h^0_{t-1}, x, s) \varrho_{(h^0_{t-1}, \xi^0)}(\rmd (x,s)).
$$
For this aim, we only need to show that for any $\delta > 0$,
\begin{equation}\label{equa-a}
\left|a^0 - \int_{\prod_{m \ge t} (X_m \times S_m)} u(h^0_{t-1}, x, s) \varrho_{(h^0_{t-1}, \xi^0)}(\rmd (x,s)) \right| < \delta.
\end{equation}

Since the game is continuous at infinity, there exists a positive integer $M \ge t$ such that $w^m < \frac{1}{5}\delta$ for any $m > M$.

For each $j > M$, by Lemma~\ref{lem-lusin}, there exists a measurable selection $\xi'_{j}$ of $\cM(A_{j})$ such that $\xi'_{j}$ is sectionally continuous on $X^{j-1}$. Let $\mu \colon H_M \to \prod_{m > M}(X_m \times S_m)$ be the transition probability which is induced by $(\xi'_{M+1}, \xi'_{M+2}, \ldots)$ and $\{f_{(M+1)0}, f_{(M+2)0}, \ldots\}$. By Lemma~\ref{lem-measurable composition}, $\mu$ is measurable and sectionally continuous on $X^M$. Let
$$V_M(h_{t-1}, x_t, \ldots, x_{M}, s_t, \ldots, s_{M}) =$$
$$\int_{\prod_{m > M} (X_m \times S_m)} u(h_{t-1}, x_t, \ldots, x_{M}, s_t, \ldots, s_{M}, x, s) \rmd \mu (x,s|h_{t-1}, x_t, \ldots, x_{M}, s_t, \ldots, s_{M}).$$
Then $V_M$ is bounded and measurable. In addition, $V_M$ is sectionally continuous on $X^{M}$ by Lemma~\ref{lem-topology}.

For any $k \ge 0$, we have
\begin{align*}
& \quad \big|\int_{\prod_{m \ge t} (X_m \times S_m)} u(h^k_{t-1}, x, s) \varrho_{(h^k_{t-1}, \xi^k)}(\rmd (x,s)) \\
& - \int_{\prod_{t \le m \le M} (X_m \times S_m)} V_M(h^k_{t-1}, x_t, \ldots, x_{M}, s_t, \ldots, s_{M}) \varrho_{(h^k_{t-1}, \xi^k)}^M(\rmd (x_t, \ldots, x_{M}, s_t, \ldots, s_{M})) \big|   \\
& \le w^{M+1} \\
& < \frac{1}{5}\delta.
\end{align*}

Since $\varrho_{(h^k_{t-1}, \xi^k)}$ weakly converges to $\varrho_{(h^0_{t-1}, \xi^0)}$ and $\varrho^M_{(h^k_{t-1}, \xi^k)}$  is the marginal of $\varrho_{(h^k_{t-1}, \xi^k)}$  on $\prod_{t \le m \le M}(X_m\times S_m)$ for any $k \ge 0$, the sequence $\varrho^M_{(h^k_{t-1}, \xi^k)}$ also weakly converges to $\varrho^M_{(h^0_{t-1}, \xi^0)}$. By Lemma~\ref{lem-topology}, we have
\begin{align*}
& \quad |\int_{\prod_{t \le m \le M} (X_m \times S_m)}V_M(h^k_{t-1}, x_t, \ldots, x_{M}, s_t, \ldots, s_{M})  \varrho^M_{(h^k_{t-1}, \xi^k)}(\rmd (x_t, \ldots, x_{M}, s_t, \ldots, s_{M})) \\
& - \int_{\prod_{t \le m \le M} (X_m \times S_m)}V_M(h^0_{t-1}, x_t, \ldots, x_{M}, s_t, \ldots, s_{M}) \varrho^M_{(h^0_{t-1}, \xi^0)} (\rmd (x_t, \ldots, x_{M}, s_t, \ldots, s_{M})) |   \\
& < \frac{1}{5}\delta
\end{align*}
for $k \ge K_1$, where $K_1$ is a sufficiently large positive integer. In addition, there exists a positive integer $K_2$ such that $|a^k - a^0| < \frac{1}{5}\delta$ for $k \ge K_2$.  Combining the inequalities above, we prove inequality~(\ref{equa-a}), which implies that $Q^\tau_{t}$ is sectionally upper hemicontinuous on $X^{t-1}$ for $t > \tau$.

Furthermore, to prove that $Q^\tau_{t}$ is compact valued, we only need to consider the case that $\{x_0^k,x_1^k, \ldots, x_{t-1}^k\} = \{x_0^0,x_1^0, \ldots, x_{t-1}^0\}$ for any $k \ge 0$, and repeat the above proof.

\

Step 2. Fix $t > \tau$, we will show that $Q_{t}^\tau$ is measurable.

Fix a sequence $(\xi'_{1}, \xi'_{2}, \ldots)$, where $\xi'_{j}$ is a selection of $\cM(A_{j})$ measurable in $s^{j-1}$ and continuous in $x^{j-1}$ for each $j$. For any $M \ge t$, let
$$ W^M_M(h_{t-1}, x_t, \ldots, x_{M}, s_t, \ldots, s_{M}) = $$
$$\left\{\int_{\prod_{m > M} (X_m \times S_m)} u(h_{t-1}, x_t, \ldots, x_{M}, s_t, \ldots, s_{M}, x, s)  \varrho_{(h_{t-1}, x_t, \ldots, x_{M}, s_t, \ldots, s_{M}, \xi')} (\rmd(x,s)) \right\}.
$$
By Lemma~\ref{lem-measurable composition}, $\varrho_{(h_{t-1}, x_t, \ldots, x_{M}, s_t, \ldots, s_{M}, \xi')}$ is measurable from $H_M$ to $\cM\left(\prod_{m > M} (X_m \times S_m)\right)$, and sectionally continuous on $X^M$. Thus, $W^M_M$ is bounded, measurable, nonempty, convex and compact valued. By Lemma~\ref{lem-topology}, $W^M_M$ is sectionally continuous on $X^{M}$.

Suppose that  for some $t \le j \le M$, $W^j_M$ has been defined such that it is bounded, measurable, nonempty, convex and compact valued, and sectionally continuous on $X^{j}$. Let
\begin{align*}
& \quad W^{j-1}_M(h_{t-1}, x_t,\ldots, x_{j-1},s_t,\ldots, s_{j-1}) = \\
& \big\{\int_{X_j \times S_j} w^j_M(h_{t-1},x_t,\ldots, x_j,s_t,\ldots, s_j) \varrho^j_{(h_{t-1}, x_t,\ldots, x_{j-1},s_t,\ldots, s_{j-1}, \xi)}(\rmd (x_j,s_j)) \colon \\
& \quad \varrho^j_{(h_{t-1}, x_t,\ldots, x_{j-1},s_t,\ldots, s_{j-1}, \xi)} \in \Delta_j^j(h_{t-1}, x_t,\ldots, x_{j-1},s_t,\ldots, s_{j-1}), \\
& \quad w^j_M  \mbox{ is a Borel measurable selection of } W^j_M \big\}.
\end{align*}

Let $\check{S}_j = S_j$. Since
\begin{align*}
& \quad \int_{X_j \times S_j} W^j_M(h_{t-1},x_t,\ldots, x_j,s_t,\ldots, s_j) \varrho^j_{(h_{t-1}, x_t,\ldots, x_{j-1},s_t,\ldots, s_{j-1}, \xi)}(\rmd (x_j,s_j)) \\
& = \int_{S_j} \int_{X_j \times \check{S}_j} W^j_M(h_{t-1},x_t,\ldots, x_j,s_t,\ldots, s_j) \rho^j_{(h_{t-1}, x_t,\ldots, x_{j-1},s_t,\ldots, s_{j-1}, \xi)}(\rmd (x_j,\check{s}_j)) \\
& \qquad \cdot  \varphi_{j0}(h_{t-1},x_t,\ldots, x_{j-1},s_t,\ldots, s_j)\lambda_j(\rmd s_j),
\end{align*}
we have
\begin{align*}
& \quad W^{j-1}_M(h_{t-1},x_t,\ldots, x_{j-1},s_t,\ldots, s_{j-1}) = \\
& \big\{\int_{S_j}\int_{X_j \times \check{S}_j} w^j_M(h_{t-1},x_t,\ldots, x_j,s_t,\ldots, s_j) \rho^j_{(h_{t-1}, x_t,\ldots, x_{j-1},s_t,\ldots, s_{j-1}, \xi)}(\rmd (x_j,\check{s}_j)) \\
& \quad \cdot \varphi_{j0}(h_{t-1},x_t,\ldots, x_{j-1},s_t,\ldots, s_j) \lambda_j(\rmd s_j)\colon \\
& \quad \rho^j_{(h_{t-1}, x_t,\ldots, x_{j-1},s_t,\ldots, s_{j-1}, \xi)} \in \Xi_j^j(h_{t-1}, x_t,\ldots, x_{j-1},s_t,\ldots, s_{j-1}), \\
& \quad w^j_M  \mbox{ is a Borel measurable selection of } W^j_M \big\}.
\end{align*}

Let
\begin{align*}
& \quad \check{W}^{j}_M(h_{t-1},x_t,\ldots, x_{j-1},s_t,\ldots, s_j) = \\
& \big\{\int_{X_j \times \check{S}_j} w^j_M(h_{t-1},x_t,\ldots, x_j,s_t,\ldots, s_j) \cdot \rho^j_{(h_{t-1}, x_t,\ldots, x_{j-1},s_t,\ldots, s_{j-1}, \xi)}(\rmd (x_j,\check{s}_j)) \colon \\
& \quad \rho^j_{(h_{t-1}, x_t,\ldots, x_{j-1},s_t,\ldots, s_{j-1}, \xi)} \in \Xi_j^j(h_{t-1}, x_t,\ldots, x_{j-1},s_t,\ldots, s_{j-1}), \\
& \quad w^j_M  \mbox{ is a Borel measurable selection of } W^j_M \big\}.
\end{align*}
Since $W^j_M(h_{t-1},x_t,\ldots, x_j,s_t,\ldots, s_j)$ is continuous in $x_j$ and does not depend on $\check{s}_j$, it is continuous in $(x_j, \check{s}_j)$. In addition, $W^j_M$ is bounded, measurable, nonempty, convex and compact valued. By Lemma~\ref{lem-measurable payoff}, $\check{W}^{j}_M$ is bounded, measurable, nonempty and compact valued, and sectionally continuous on $X^{j-1}$.

It is easy to see that
$$W^{j-1}_M(h_{t-1},x_t,\ldots, x_{j-1},s_t,\ldots, s_{j-1}) = $$
$$\int_{S_j} \check{W}^{j}_M(h_{t-1},x_t,\ldots, x_{j-1},s_t,\ldots, s_j) \varphi_{j0}(h_{t-1},x_t,\ldots, x_{j-1},s_t,\ldots, s_j) \lambda_j(\rmd s_j).
$$
By Lemma~\ref{lem-lyapunov}, it is bounded, measurable, nonempty and compact valued, and sectionally continuous on $X^{j-1}$.

Let $W = \overline{\cup_{M\ge t} W^{t-1}_M}$. That is, $W$ is the closure of $\cup_{M\ge t} W_M$, which is measurable due to Lemma~\ref{lem-measurable selection}.

First, $W \subseteq Q_t^\tau$ because $W^{t-1}_M \subseteq Q_t^\tau$ for each $M \ge t$ and $Q_t^\tau$ is compact valued. Second, fix $h_{t-1}$ and $q \in Q_t^\tau(h_{t-1})$. Then there exists a mapping $\xi \in \Upsilon$ such that
$$q = \int_{\prod_{m \ge t} (X_m \times S_m)} u(h_{t-1}, x, s) \varrho_{(h_{t-1}, \xi)}(\rmd (x,s)).$$
For $M\ge t$, let
$$V_M(h_{t-1}, x_t, \ldots, x_M, s_t, \ldots, s_M) = $$
$$\int_{\prod_{m > M} (X_m \times S_m)} u(h_{t-1}, x_t, \ldots, x_{M}, s_t, \ldots, s_{M}, x, s)  \varrho_{(h_{t-1}, x_t, \ldots, x_{M}, s_t, \ldots, s_{M}, \xi)} (x,s)$$
and
$$q_M = \int_{\prod_{t \le m \le M} (X_m \times S_m)} V_M(h_{t-1}, x, s) \varrho^M_{(h_{t-1}, \xi)}(\rmd (x,s)).
$$
Because the dynamic game is continuous at infinity, $q_M \to q$, which implies that $q \in W(h_{t-1})$ and $Q_t^\tau \subseteq W$.

Therefore, $W = Q_t^\tau$, and hence $Q_t^\tau$  is measurable for $t > \tau$.

\

Step 3. For $t \le \tau$, we can start with $Q_{\tau+1}^\tau$. Repeating the backward induction in Section~\ref{subsubsec-backward}, we have that $Q_{t}^\tau$ is also bounded, measurable, nonempty and compact valued, and essentially sectionally upper hemicontinuous on $X^{t-1}$.
\end{proof}

The following three lemmas show that $Q_t^{\infty}(h_{t-1}) = \Phi (Q_{t+1}^\infty)(h_{t-1}) = E_t(h_{t-1})$ for $\lambda^{t-1}$-almost all $h_{t-1}\in H_{t-1}$.\footnote{The proofs for Lemmas~\ref{lem-equi} and \ref{lem-equilibrium set} follow the standard ideas with some modifications; see, for example, \cite{Harris1990}, \cite{HRR1995} and \cite{Mariotti2000}.}

\begin{lem}\label{lem-equi}
\begin{enumerate}
  \item The correspondence $Q_t^{\infty}$ is bounded, measurable, nonempty and compact valued, and essentially sectionally upper hemicontinuous  on $X^{t-1}$.
  \item For any $t\ge 1$, $Q_t^{\infty}(h_{t-1}) = \Phi (Q_{t+1}^\infty)(h_{t-1})$ for $\lambda^{t-1}$-almost all $h_{t-1}\in H_{t-1}$.
\end{enumerate}
\end{lem}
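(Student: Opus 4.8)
The plan is to combine the regularity of each $Q_t^\tau$ established in Lemma~\ref{lem-continuation set} with two structural facts and then push the one-stage operator $\Phi$ to the limit. First, $\Phi$ is monotone in its continuation argument: a larger $Q_{t+1}$ gives a larger integral correspondence $P_t=\int_{S_t}Q_{t+1}(\cdot,s_t)\,f_{t0}(\rmd s_t\mid\cdot)$, and a Nash equilibrium of the stage game for a payoff selection of the smaller $P_t$ is a fortiori one for the same selection inside the larger $P_t$, so $\Phi$ preserves inclusion (by the construction of Subsection~\ref{subsubsec-backward} and Proposition~\ref{prop-extension SZ}); the same holds for the full triple correspondence $\Phi_t$. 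Second, for $\tau<s$ the correspondence $Q_s^\tau$ does not depend on $\tau$; writing $\overline Q_s$ for this common ``all feasible continuations'' correspondence, $Q_t^\tau$ is obtained from $\overline Q_{\tau+1}$ by $\tau-t+1$ applications of $\Phi$ whenever $\tau\ge t$, while $\Phi(\overline Q_{s+1})\subseteq\overline Q_s$ because every equilibrium continuation payoff is itself a feasible continuation payoff — this is seen by composing the stage-$s$ equilibrium strategy furnished by Proposition~\ref{prop-forward} with feasible continuations from stage $s+1$, a measurable selection argument using Lemma~\ref{lem-mertens}. Monotonicity together with $\Phi(\overline Q_{\tau+2})\subseteq\overline Q_{\tau+1}$ gives $Q_t^{\tau+1}\subseteq Q_t^\tau$ for $\tau\ge t$, so $Q_t^\infty=\bigcap_{\tau\ge t}Q_t^\tau$ is a decreasing intersection.

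Assertion (1) is then immediate: $Q_t^\infty$ is bounded by $\gamma$; pointwise it is a decreasing intersection of nonempty compact subsets of $[0,\gamma]^n$, hence nonempty and compact valued; it is measurable by Lemma~\ref{lem-continuation set}, Lemma~\ref{lem-measurable selection}(2) and Lemma~\ref{lem-measurable correspondence}(2) (a countable intersection of weakly measurable compact-valued correspondences is weakly measurable, hence measurable); and, intersecting the countably many $\lambda^{t-1}$-conull sets of states on which the sections $Q_t^\tau(\cdot,s^{t-1})$ are upper hemicontinuous, the section of $Q_t^\infty$ on the resulting conull set is a decreasing intersection of compact-valued upper hemicontinuous correspondences into $\bR^n$, hence upper hemicontinuous by the sequential characterisation in Lemma~\ref{lem-measurable correspondence}(6) together with a diagonal argument.

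For assertion (2), the inclusion $\Phi(Q_{t+1}^\infty)\subseteq Q_t^\infty$ follows from $Q_{t+1}^\infty\subseteq Q_{t+1}^\tau$, monotonicity of $\Phi$, and $Q_t^\tau=\Phi(Q_{t+1}^\tau)$ for $\tau\ge t$, intersected over $\tau$. The reverse inclusion, $Q_t^\infty(h_{t-1})\subseteq\Phi(Q_{t+1}^\infty)(h_{t-1})$ for $\lambda^{t-1}$-almost all $h_{t-1}$, carries the substance. Fix a $\lambda^{t-1}$-conull set of histories on which, for every $\tau$, $Q_t^\tau=\Phi(Q_{t+1}^\tau)$ and the description (Proposition~\ref{prop-extension SZ}) of the associated triple correspondence $\Phi_t^\tau$ (with $\Phi(Q_{t+1}^\tau)=\Phi_t^\tau|_{\bR^n}$) hold. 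Given $q\in Q_t^\infty(h_{t-1})$, for each $\tau$ choose $(q,\alpha^\tau,\mu^\tau)\in\Phi_t^\tau(h_{t-1})$; since $\alpha^\tau\in\otimes_{i\in I}\cM(A_{ti}(h_{t-1}))$ and the $\mu^\tau$ are supported on the compact set $A_t(h_{t-1})$ with uniformly bounded total variation, pass to a subsequence with $\alpha^\tau\Rightarrow\alpha$ and $\mu^\tau\Rightarrow\mu$. The $\Phi_t^\tau(h_{t-1})$ are decreasing and compact, so $(q,\alpha,\mu)\in\Phi_t^{\tau_0}(h_{t-1})$ for every $\tau_0$, i.e.\ $(q,\alpha,\mu)\in\bigcap_\tau\Phi_t^\tau(h_{t-1})$; it then suffices to show this intersection lies in the triple correspondence $\Phi_t^\infty(h_{t-1})$ built from $Q_{t+1}^\infty$, since then $q\in\Phi(Q_{t+1}^\infty)(h_{t-1})$ by definition.

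I expect this last step to be the main obstacle, and it has two ingredients. (i) The integral correspondence commutes with the intersection: $\int_{S_t}Q_{t+1}^\infty(h_{t-1},\cdot,s_t)\,f_{t0}(\rmd s_t\mid h_{t-1})=\bigcap_\tau\int_{S_t}Q_{t+1}^\tau(h_{t-1},\cdot,s_t)\,f_{t0}(\rmd s_t\mid h_{t-1})$; the nontrivial inclusion is obtained by choosing, for a point of the right-hand side, measurable selections of the $Q_{t+1}^\tau(h_{t-1},\cdot,\cdot)$ with common integral, extracting a weak $L^1$ limit (uniform boundedness gives uniform integrability), and applying Mazur's theorem, the atomlessness of $f_{t0}(\cdot\mid h_{t-1})$, the elementary identity $\bigcap_\tau\mathrm{co}\,K_\tau=\mathrm{co}\bigcap_\tau K_\tau$ for decreasing nonempty compact sets in $\bR^n$, and the identity $\int_S F=\int_S\mathrm{co}\,F$ of Lemma~\ref{lem-lyapunov}. (ii) From $(q,\alpha,\mu)\in\bigcap_\tau\Phi_t^\tau(h_{t-1})$ one must exhibit a single Borel selection $p^\infty$ of $\bigl(\int_{S_t}Q_{t+1}^\infty\,f_{t0}\bigr)(h_{t-1},\cdot)$ with $q=\int p^\infty\,\rmd\alpha$, $\mu=p^\infty\circ\alpha$, and $\alpha$ a Nash equilibrium of the stage game with payoff $p^\infty$; the density selections attached to $(q,\alpha,\mu)\in\Phi_t^\tau(h_{t-1})$ coincide $\alpha$-almost everywhere and are sections of the intersection integral there, so I would use that common density on the support of $\alpha$ and the pointwise-minimal selection of the intersection correspondence off the support, exactly as in Step~2 of the proof of Proposition~\ref{prop-forward} — for each player the off-support minimum is the increasing limit of the $\tau$-level minima, so the equilibrium inequalities pass to the limit by monotone convergence, joint measurability of $p^\infty$ is restored via Lemma~\ref{lem-RR}, and the action-continuity of the relevant integrals is supplied by Lemma~\ref{lem-topology}. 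Assembling (i) and (ii) yields $q\in\Phi(Q_{t+1}^\infty)(h_{t-1})$ for $\lambda^{t-1}$-almost every $h_{t-1}$, completing the proof.
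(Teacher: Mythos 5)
Your part (1) and the easy inclusion $\Phi(Q_{t+1}^\infty)\subseteq Q_t^\infty$ coincide with the paper's argument (monotonicity of $Q_t^\tau$ in $\tau$, decreasing intersections of nonempty compacta, Lemma~\ref{lem-measurable selection}), but for the reverse inclusion you take a genuinely different route. The paper regards the sequence $\{Q_{t+1}^\tau\}_{1\le\tau\le\infty}$ as a single correspondence on $H_t\times\{1,2,\ldots,\infty\}$, with the compactified index treated as an extra coordinate, re-runs the backward-induction step of Subsection~\ref{subsubsec-backward} (Proposition~\ref{prop-extension SZ}) on this extended correspondence, and then reads off $a\in\Phi(Q_{t+1}^\infty)(h_{t-1})$ directly from essential sectional upper hemicontinuity at $\tau=\infty$; no new limit analysis is needed. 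You instead prove the required closedness by hand: you pass to limits of the triples $(q,\alpha^\tau,\mu^\tau)$ inside the nested compact sets $\Phi_t^\tau(h_{t-1})$, commute the intersection with the transition integral ($\bigcap_\tau P_t^\tau=P_t^\infty$) via Dunford--Pettis/Mazur, the identity $\bigcap_\tau\mathrm{co}\,K_\tau=\mathrm{co}\bigcap_\tau K_\tau$ for decreasing compacta and Lemma~\ref{lem-lyapunov}, exploit that the condition $\mu=p^\tau\circ\alpha$ forces all the densities $p^\tau$ to agree $\alpha$-a.e., and then repair the sharing rule off the support of $\alpha$ with the minimal-selection device of Step~2 of Proposition~\ref{prop-forward}, passing the Nash inequalities to the limit by monotone convergence of $\min P^\tau_{ti}\uparrow\min P^\infty_{ti}$. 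Both arguments work: the paper's device is shorter and reuses the already-established regularity of the $\Phi$-operator, while yours makes explicit why the limit payoff retains the equilibrium structure, at the cost of essentially re-proving a special case of that upper hemicontinuity and of extra almost-everywhere bookkeeping (the monotonicity $Q_t^{\tau+1}\subseteq Q_t^\tau$, which the paper merely asserts, the $\alpha$-a.e. agreement of the $p^\tau$, and the measurable construction of $p^\infty$). Two small points: ``the pointwise-minimal selection'' must be read player-by-player, i.e.\ a selection of $\bigl(\{\min P^\infty_{ti}\}\times[0,\gamma]^{n-1}\bigr)\cap P_t^\infty$ used only on player~$i$'s unilateral-deviation set, exactly as in the Step~2 construction you cite; and Lemma~\ref{lem-RR} is not actually needed, since the sharing rule only has to be measurable in $x_t$ for each fixed $h_{t-1}$.
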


\begin{proof}
(1) It is obvious that $Q_t^{\infty}$ is bounded. By Lemma~\ref{lem-measurable selection}~(2), $Q_t^{\infty}$ is measurable. It is easy to see that if $\tau_1 \ge \tau_2$, then $Q_t^{\tau_1} \subseteq Q_t^{\tau_2}$. Since $Q_t^\tau$ is nonempty and compact valued, $Q_t^\infty$ is nonempty and compact valued.

Fix any $s^{t-1} \in S^{t-1}$ such that $Q_t^\tau(\cdot, s^{t-1})$ is upper hemicontinuous on $H_{t-1}(s^{t-1})$ for any $\tau$. By Lemma~\ref{lem-measurable selection}~(7), $Q_t^\infty(\cdot, s^{t-1})$ is upper hemicontinuous on $H_{t-1}(s^{t-1})$. Since $Q_t^\tau$ is essentially upper hemicontinuous on $X^{t-1}$ for each $\tau$, $Q_t^\infty$ is essentially upper upper hemicontinuous on $X^{t-1}$.

\

(2) For any $\tau$, $\Phi (Q_{t+1}^\infty)(h_{t-1}) \subseteq \Phi (Q_{t+1}^\tau) (h_{t-1}) \subseteq Q_{t}^\tau(h_{t-1})$, and hence $\Phi (Q_{t+1}^\infty) (h_{t-1}) \subseteq Q_t^{\infty}(h_{t-1})$.

The space $\{1, 2, \ldots \infty\}$ is a countable compact set endowed with the following metric: $d(k,m) = |\frac{1}{k} - \frac{1}{m}|$ for any $1 \le k, m \le \infty$. The sequence $\{Q_{t+1}^\tau\}_{1\le \tau \le \infty}$ can be regarded as a correspondence $Q_{t+1}$ from $H_{t} \times \{1, 2, \ldots, \infty\}$ to $\bR^n$, which is measurable, nonempty and compact valued, and essentially sectionally upper hemicontinuous on $X^{t}\times \{1, 2, \ldots, \infty\}$.
The backward induction in Section~\ref{subsubsec-backward} shows that $\Phi (Q_{t+1})$ is measurable, nonempty and compact valued, and essentially sectionally upper hemicontinuous on $X^{t}\times \{1, 2, \ldots, \infty\}$.

Since $\Phi (Q_{t+1})$ is essentially sectionally upper hemicontinuous on $X^{t}\times \{1, 2, \ldots, \infty\}$, there exists a measurable subset $\check{S}^{t-1} \subseteq S^{t-1}$ such that $\lambda^{t-1}(\check{S}^{t-1}) = 1$, and $\Phi (Q_{t+1})(\cdot, \cdot, \check{s}^{t-1})$ is upper hemicontinuous for any $\check{s}^{t-1} \in \check{S}^{t-1}$. Fix $\check{s}^{t-1} \in \check{S}^{t-1}$. For $h_{t-1} = (x^{t-1}, \check{s}^{t-1})\in H_{t-1}$ and $a \in Q_t^\infty(h_{t-1})$, by its definition, $a \in Q_t^\tau(h_{t-1}) = \Phi (Q_{t+1}^\tau) (h_{t-1})$ for $\tau \ge t$. Thus, $a \in \Phi (Q_{t+1}^\infty) (h_{t-1})$.

In summary, $Q_t^{\infty}(h_{t-1}) = \Phi (Q_{t+1}^\infty) (h_{t-1})$ for $\lambda^{t-1}$-almost all $h_{t-1}\in H_{t-1}$.
\end{proof}

Though the definition of $Q^\tau_t$ involves correlated strategies for $\tau < t$, the following lemma shows that one can work with mixed strategies in terms of equilibrium payoffs, due to the combination of backward inductions in multiple steps.

\begin{lem}\label{lem-equi strategy}
If $c_{t}$ is a measurable selection of $\Phi (Q_{t+1}^\infty)$, then $c_{t}(h_{t-1})$ is a subgame-perfect equilibrium payoff vector for $\lambda^{t-1}$-almost all $h_{t-1}\in H_{t-1}$.
\end{lem}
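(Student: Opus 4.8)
The plan is to construct, by iterating the forward-induction step of Proposition~\ref{prop-forward}, a strategy profile $f=\{f_i\}_{i\in I}$ for the subgame $h_{t-1}$ (i.e.\ we only need $\{f_{si}\}_{s\ge t}$) together with Borel measurable selections $q_s$ of $Q_s^\infty$ for every $s\ge t$ with $q_t=c_t$, and then to check that $f$ is a subgame-perfect equilibrium whose payoff in the subgame $h_{t-1}$ equals $c_t(h_{t-1})$ for $\lambda^{t-1}$-almost all $h_{t-1}$. By Lemma~\ref{lem-equi}, $Q_{s+1}^\infty$ is bounded, measurable, nonempty and compact valued, and essentially sectionally upper hemicontinuous on $X^s$, and $Q_s^\infty=\Phi(Q_{s+1}^\infty)$ off a $\lambda^{s-1}$-null set; hence $\Phi(Q_{s+1}^\infty)$ is well defined and Proposition~\ref{prop-forward} applies with $Q_{s+1}:=Q_{s+1}^\infty$. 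Feeding $c_t$ (a selection of $\Phi(Q_{t+1}^\infty)$) into Proposition~\ref{prop-forward} yields $f_t$ and a selection $q_{t+1}$ of $Q_{t+1}^\infty$; after redefining $q_{t+1}$ on the null set where $Q_{t+1}^\infty\neq\Phi(Q_{t+2}^\infty)$ by a fixed measurable selection of $\Phi(Q_{t+2}^\infty)$, feed it back to obtain $f_{t+1}$ and $q_{t+2}$, and continue. This produces $f$ and $\{q_s\}_{s\ge t}$ so that conditions~(1)--(3) of Proposition~\ref{prop-forward} hold at each stage $s\ge t$ for $\lambda^{s-1}$-almost all $h_{s-1}$.

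\emph{Step A (identifying $q_s$ with the payoff of $f$).} I would show that for every $s\ge t$ and $\lambda^{s-1}$-almost all $h_{s-1}$, $q_s(h_{s-1})$ equals the payoff of $f$ in the subgame $h_{s-1}$ (Definition~\ref{defn-path}). Let $\tau$ be the path induced by $f$ in the subgame $h_{s-1}$. Iterating condition~(2) of Proposition~\ref{prop-forward} along $\tau$ gives, for every $m\ge1$,
\[ q_s(h_{s-1})=\int_{H_{s+m-1}} q_{s+m}(h_{s+m-1})\,\bigl(\mbox{Marg}_{H_{s+m-1}}\tau\bigr)(\rmd h_{s+m-1}); \]
the exceptional sets entering at the intermediate stages are $\tau$-null provided $h_{s-1}$ avoids a $\lambda^{s-1}$-null set, because under the ARM condition the state-marginals of $\tau$ on the later coordinates have densities $\varphi_{r0}$ with respect to the $\lambda_r$, so a $\lambda^{r}$-null set of histories stays null after disintegrating $\tau$ (Fubini). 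On the other hand, disintegrating $\int_{H_\infty}u\,\rmd\tau$ over $\mbox{Marg}_{H_{s+m-1}}\tau$ gives $\int_{H_{s+m-1}}\bigl(\int_{H_\infty(h_{s+m-1})}u\,\rmd\tau_{h_{s+m-1}}\bigr)\,(\mbox{Marg}_{H_{s+m-1}}\tau)(\rmd h_{s+m-1})$, where $\tau_{h_{s+m-1}}$ is the path of $f$ in the subgame $h_{s+m-1}$. Since $q_{s+m}(h_{s+m-1})\in Q_{s+m}^\infty(h_{s+m-1})$, which is contained in the set $Q_{s+m}^{s+m-1}(h_{s+m-1})$ of continuation payoffs $\{\int u(h_{s+m-1},\cdot)\,\rmd\varrho:\varrho\in\Delta_{s+m}(h_{s+m-1})\}$, and such a $\varrho$ and $\tau_{h_{s+m-1}}$ are both concentrated on histories agreeing with $h_{s+m-1}$ on the first $s+m-1$ coordinates, the two integrands differ pointwise by at most $w^{s+m}$, so $|q_{s+m}(h_{s+m-1})-\int_{H_\infty(h_{s+m-1})}u\,\rmd\tau_{h_{s+m-1}}|\le w^{s+m}$. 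Combining the two displays, $|q_s(h_{s-1})-\int_{H_\infty}u\,\rmd\tau|\le w^{s+m}$, and letting $m\to\infty$ (continuity at infinity) proves the claim. The case $s=t$ gives that the payoff of $f$ in the subgame $h_{t-1}$ is $q_t(h_{t-1})=c_t(h_{t-1})$ for $\lambda^{t-1}$-almost all $h_{t-1}$.

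\emph{Step B (subgame perfection).} By condition~(3) of Proposition~\ref{prop-forward}, for every $s\ge t$ and $\lambda^{s-1}$-almost all $h_{s-1}$ the profile $f_s(\cdot|h_{s-1})\in\otimes_{i\in I}\cM(A_{si}(h_{s-1}))$ is a Nash equilibrium of the one-shot game with payoffs $x_s\mapsto\int_{S_s}q_{s+1}(h_{s-1},x_s,s_s)\,f_{s0}(\rmd s_s|h_{s-1})$, and by Step~A applied at stage $s+1$ (together with $f_{s0}(\cdot|h_{s-1})\ll\lambda_s$ and Fubini) these payoffs equal the corresponding continuation payoffs of $f$; thus no player can gain by a one-stage deviation against the continuation play prescribed by $f$. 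A standard unimprovability (one-shot-deviation) argument then upgrades this to full optimality: if player $i$ switched to some $f_i'$ in a subgame $h_{r}$ with $r\ge t-1$, truncating the deviation after stage $r+m$ and reverting to $f_i$ thereafter changes $i$'s payoff by at most $w^{r+m+1}$, and applying the no-one-stage-gain property at stages $r+m,\dots,r+1$ shows the truncated deviation is unprofitable, so in the limit $f_i'$ is unprofitable. Taking the union, over the countably many stages $s$, of the exceptional $\lambda^{s-1}$-null sets---which pull back to $\lambda^{r}$-null sets of subgames $h_r$ under $f$ by the ARM/Fubini argument of Step~A---we obtain that for $\lambda^{t-1}$-almost all $h_{t-1}$, $f$ is a subgame-perfect equilibrium of the subgame $h_{t-1}$ with payoff $c_t(h_{t-1})$ there; hence $c_t(h_{t-1})\in E_t(h_{t-1})$.

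The step I expect to be the main obstacle is the measure-theoretic bookkeeping common to Steps~A and~B: the optimality and payoff identities coming out of Proposition~\ref{prop-forward} hold only off $\lambda^{s-1}$-null sets, and one must verify that these countably many exceptional sets are mutually compatible and propagate correctly when one iterates forward and then pushes forward along the path of $f$---this is precisely where the ARM condition enters, since the densities $\varphi_{s0}$ (via Fubini) guarantee that a $\lambda^{s-1}$-null set of histories remains null after one further move of the active players and Nature---together with the two limiting arguments that rely on $w^{T}\to0$. A secondary point worth stressing is that, although $Q_t^\tau$ for $\tau<t$ was defined through correlated continuation strategies, Proposition~\ref{prop-forward} returns at each stage an \emph{independent} mixed profile $f_s\in\otimes_{i\in I}\cM(X_{si})$, so the $f$ built above is a genuine strategy profile in the sense of the model---exactly the content signalled in the remark preceding the lemma.
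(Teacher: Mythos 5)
Your proposal is correct, and its skeleton coincides with the paper's: iterate Proposition~\ref{prop-forward} with $Q_{s+1}:=Q_{s+1}^\infty$ (using Lemma~\ref{lem-equi} to keep the recursion inside $Q^\infty$ up to null sets), then combine a one-shot-deviation argument with continuity at infinity, relying on the ARM/Fubini observation that the exceptional sets are ``null in states, uniform in actions'' and hence null along any induced path. Where you genuinely differ is in how the approximation is organized. The paper does not prove your Step~A for every stage; instead it uses $c_k\in Q_k^\infty\subseteq \Phi^{M-k+1}(Q_{M+1}^M)$ to manufacture an auxiliary profile $f'$ that agrees with $f$ up to stage $M$ and follows some tail profile $\xi$ thereafter, for which the stagewise one-shot optimality and payoff identities are exact, and then closes the deviation argument with a $\delta/3$ comparison between $f$ and $f'$, recovering the payoff identity $c_1(h_0)=\int u\,\rmd\varrho_{(h_0,f)}$ only at the end by letting $M\to\infty$. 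You instead identify the recursively produced selections $q_s$ with the true continuation values of $f$ once and for all, via the inclusion $Q_{s+m}^\infty\subseteq Q_{s+m}^{s+m-1}$ (whose elements are payoffs of measures concentrated on extensions of $h_{s+m-1}$) and the bound $w^{s+m}\to 0$, and then invoke the exact one-shot deviation principle against $f$'s own continuation payoffs. The ingredients are the same, but your route isolates the value identification in Step~A and needs no auxiliary tail profile, at the cost of having to state and propagate the value identity (and its null sets) at every stage $s$, which is precisely the bookkeeping you flag and handle correctly with $f_{r0}(\cdot\,|\,h_{r-1})\ll\lambda_r$ and Fubini; the paper's version trades that for a single cut-off $M$ per $\delta$ and the extra profile $f'$.
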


\begin{proof}
Without loss of generality, we only prove the case $t = 1$. Suppose that $c_{1}$ is a measurable selection of $\Phi (Q_{2}^\infty)$. Apply Proposition~\ref{prop-forward} recursively to obtain Borel measurable mappings $\{f_{ki} \}_{i\in I}$ for $k\ge 1$. That is, for any $k\ge 1$, there exists a Borel measurable selection $c_k$ of $Q_k^\infty$ such that for $\lambda_{k-1}$-almost all $h_{k-1}\in H_{k-1}$,
\begin{enumerate}
  \item $f_k(h_{k-1})$ is a Nash equilibrium in the subgame $h_{k-1}$, where the action space is $A_{ki}(h_{k-1})$ for player~$i \in I$, and the payoff function is given by
        $$\int_{S_k} c_{k+1} (h_{k-1}, \cdot, s_k) f_{k0}(\rmd s_k| h_{k-1}).$$
  \item $$c_k(h_{k-1}) = \int_{A_k(h_{k-1})} \int_{S_k} c_{k+1}(h_{k-1}, x_k, s_k) f_{k0}(\rmd s_k|h_{k-1}) f_k(\rmd x_k|h_{k-1}).$$
\end{enumerate}
We need to show that $c_{1}(h_{0})$ is a subgame-perfect equilibrium payoff vector for $\lambda_{0}$-almost all $h_{0}\in H_{0}$.

First, we show that $\{f_{ki} \}_{i\in I}$ is a subgame-perfect equilibrium. Fix a player~$j$ and a strategy $g_j = \{g_{kj} \}_{k \ge 1}$. By the one-step deviation principle, it suffices to show that for any $t' \ge 1$, $\lambda^{t'-1}$-almost all $h_{t'-1}$, and any $\delta > 0$,
\begin{align*}
& \quad \int_{\prod_{m \ge t'} (X_m \times S_m)} u_{j}(h_{t'-1}, x, s) \varrho_{(h_{t'-1}, f)}(\rmd (x,s)) \\
& > \int_{\prod_{m \ge t'} (X_m \times S_m)} u_{j}(h_{t'-1}, x, s) \varrho_{(h_{t'-1}, (f_{-j}, \tilde{g}_j))}(\rmd (x,s)) - \delta,
\end{align*}
where $\tilde{g}_j = (g_{1j}, \ldots, g_{t'j}, f_{(t'+1)j}, f_{(t'+2)j}, \ldots )$.

Since the game is continuous at infinity, there exists a positive integer $M > t'$ such that $w^m < \frac{1}{3}\delta$ for any $m \ge M$. By Lemma~\ref{lem-equi}, $c_{k}(h_{k-1}) \in \Phi (Q_{k+1}^\infty)(h_{k-1}) = Q_k^\infty(h_{k-1}) = \cap_{\tau \ge 1} Q_k^\tau(h_{k-1})$ for $\lambda_{k-1}$-almost all $h_{k-1}\in H_{k-1}$. Since $Q_k^\tau = \Phi^{\tau-k+1} (Q_{\tau+1}^\tau)$ for $k \le \tau$, $c_{k}(h_{k-1}) \in \cap_{\tau \ge 1} \Phi^{\tau-k+1} (Q_{\tau+1}^\tau)(h_{k-1}) \subseteq \Phi^{M-k+1} (Q_{M+1}^M)(h_{k-1})$ for $\lambda_{k-1}$-almost all $h_{k-1}\in H_{k-1}$.

Thus, there exists a Borel measurable selection $w$ of $Q_{M+1}^M$ and a strategy profile $\xi$ such that for $\lambda_{M-1}$-almost all $h_{M-1}\in H_{M-1}$,
\begin{enumerate}
  \item $f_{M}(h_{M-1})$ is a Nash equilibrium in the subgame $h_{M-1}$, where the action space is $A_{Mi}(h_{M-1})$ for player~$i \in I$, and the payoff function is given by
        $$\int_{S_M} w (h_{M-1}, \cdot, s_M) f_{M0}(\rmd s_M| h_{M-1}).$$
  \item $$c_M(h_{M-1}) = \int_{A_M(h_{M-1})} \int_{S_M} w(h_{M-1}, x_M, s_M) f_{M0}(\rmd s_M|h_{M-1}) f_M(\rmd x_M|h_{M-1}).$$
  \item $w (h_M) = \int_{\prod_{m \ge M+1} (X_m \times S_m)} u(h_{M}, x, s) \varrho_{(h_{M}, \xi)}(\rmd (x,s))$.
\end{enumerate}

Therefore, we have
\begin{enumerate}
  \item for $\lambda_{0}$-almost all $h_{0}\in H_{0}$,
      $$c_1(h_0) = \quad \int_{\prod_{m \ge 1} (X_m \times S_m)} u(h_{0}, x, s) \varrho_{(h_{0}, f')}(\rmd (x,s)),$$
      where $f'_k$ is $f_k$ if $k \le M$, and $\xi_k$ if $k \ge M+1$;
  \item for $1\le k \le M$, $f_k(h_{k-1})$ is a Nash equilibrium in the subgame $h_{k-1}$ for $\lambda^{k-1}$-almost all $h_{k-1} \in H_{k-1}$, where the action space is $A_{ki}(h_{k-1})$ for player~$i \in I$, and the payoff function is given by
        $$\int_{S_k}\int_{\prod_{m \ge k+1} (X_m \times S_m)} u (h_{k-1}, x_k,s_k, x, s) \varrho_{((h_{k-1},x_k,s_k), f')}(\rmd (x,s)) f_{k0}(\rmd s_k| h_{k-1}).$$
\end{enumerate}

Let $\tilde{g}'_j = (g_{1j}, \ldots, g_{t'j}, f_{(t'+1)j}, \ldots, f_{Mj}, \xi_{(M+1)j}, \ldots)$. By (2), for $\lambda^{t'-1}$-almost all $h_{t'-1} \in H_{t'-1}$,
\begin{align*}
& \quad \int_{\prod_{m \ge t'} (X_m \times S_m)} u_{j} (h_{t'-1}, x, s) \varrho_{(h_{t'-1}, (f'_{-j}, f'_j) )}(\rmd (x,s)) \\
& \ge  \int_{\prod_{m \ge t'} (X_m \times S_m)} u_{j} (h_{t'-1}, x, s) \varrho_{(h_{t'-1}, (f'_{-j},\tilde{g}'_j) )}(\rmd (x,s)).
\end{align*}

In addition, for any $h_{t'-1}$,
\begin{align*}
& \quad \int_{\prod_{m \ge t'} (X_m \times S_m)} u_{j} (h_{t'-1}, x, s) \varrho_{(h_{t'-1}, (f'_{-j}, \tilde{g}'_j))}(\rmd (x,s)) \\
& > \int_{\prod_{m \ge t'} (X_m \times S_m)} u_{j} (h_{t'-1}, x, s) \varrho_{(h_{t'-1}, (f_{-j}, \tilde{g}_j) )}(\rmd (x,s)) - \frac{1}{3}\delta,
\end{align*}
and
\begin{align*}
& \quad \int_{\prod_{m \ge t'} (X_m \times S_m)} u_{j} (h_{t'-1}, x, s) \varrho_{(h_{t'-1}, (f_{-j}, f_j))}(\rmd (x,s)) \\
& > \int_{\prod_{m \ge t'} (X_m \times S_m)} u_{j} (h_{t'-1}, x, s) \varrho_{(h_{t'-1}, (f'_{-j}, f'_j) )}(\rmd (x,s)) - \frac{1}{3}\delta.
\end{align*}
Therefore, for $\lambda^{t'-1}$-almost all $h_{t'-1} \in H_{t'-1}$,
\begin{align*}
& \quad \int_{\prod_{m \ge t'} (X_m \times S_m)} u_{j} (h_{t'-1}, x, s) \varrho_{(h_{t'-1}, (f_{-j}, f_j))}(\rmd (x,s)) \\
& > \int_{\prod_{m \ge t'} (X_m \times S_m)} u_{j} (h_{t'-1}, x, s) \varrho_{(h_{t'-1}, (f_{-j}, \tilde{g}_j) )}(\rmd (x,s)) - \delta,
\end{align*}
which implies that $\{f_{ki} \}_{i\in I}$ is a subgame-perfect equilibrium.

Then we show that for $\lambda_{0}$-almost all $h_{0}\in H_{0}$,
$$c_1(h_0) = \quad \int_{\prod_{m \ge 1} (X_m \times S_m)} u(h_{0}, x, s) \varrho_{(h_{0}, f)}(\rmd (x,s)).$$
As shown in (1), for any positive integer $M$, there exists a strategy profile $\xi$ such that for $\lambda_{0}$-almost all $h_{0}\in H_{0}$,
$$c_1(h_0) = \quad \int_{\prod_{m \ge 1} (X_m \times S_m)} u(h_{0}, x, s) \varrho_{(h_{0}, f')}(\rmd (x,s)),$$
where $f'_k$ is $f_k$ if $k \le M$, and $\xi_k$ if $k \ge M+1$. Since the game is continuous at infinity, $\int_{\prod_{m \ge 1} (X_m \times S_m)} u(h_{0}, x, s) \varrho_{(h_{0}, f)}(\rmd (x,s))$ and $\int_{\prod_{m \ge 1} (X_m \times S_m)} u(h_{0}, x, s) \varrho_{(h_{0}, f')}(\rmd (x,s))$ are arbitrarily close when $M$ is sufficiently large. Thus, for $\lambda_{0}$-almost all $h_{0}\in H_{0}$,
$$c_1(h_0) = \quad \int_{\prod_{m \ge 1} (X_m \times S_m)} u(h_{0}, x, s) \varrho_{(h_{0}, f)}(\rmd (x,s)).$$
This completes the proof.
\end{proof}

For $t \ge 1$ and $h_{t-1} \in H_{t-1}$, recall that $E_t(h_{t-1})$ is the set of payoff vectors of subgame-perfect equilibria in the subgame $h_{t-1}$. Then we shall show the following lemma.

\begin{lem}\label{lem-equilibrium set}
For any $t\ge 1$, $E_t(h_{t-1}) = Q_t^\infty(h_{t-1})$ for $\lambda^{t-1}$-almost all $h_{t-1}\in H_{t-1}$.
\end{lem}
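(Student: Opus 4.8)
The plan is to prove the two inclusions $E_t(h_{t-1}) \subseteq Q_t^\infty(h_{t-1})$ and $Q_t^\infty(h_{t-1}) \supseteq E_t(h_{t-1})$ separately, for $\lambda^{t-1}$-almost all $h_{t-1}$, relying on the two facts already established: Lemma~\ref{lem-equi} (that $Q_t^\infty = \Phi(Q_{t+1}^\infty)$ a.e. and that $Q_t^\infty$ is measurable, nonempty, compact valued and essentially sectionally u.h.c.) and Lemma~\ref{lem-equi strategy} (that every measurable selection of $\Phi(Q_{t+1}^\infty)$ is a subgame-perfect equilibrium payoff a.e.). Without loss of generality I will argue the case $t=1$; the general $t$ is identical after relabeling.

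First I would show $Q_1^\infty(h_0) \subseteq E_1(h_0)$ for $\lambda_0$-almost all $h_0$. Take a countable family $\{c^k_1\}_{k\ge 1}$ of measurable selections of $\Phi(Q_2^\infty)$ whose pointwise closure is all of $\Phi(Q_2^\infty)(h_0) = Q_1^\infty(h_0)$ a.e.\ (such a family exists by Lemma~\ref{lem-measurable selection}~(5), since $Q_1^\infty$ is measurable and compact valued). By Lemma~\ref{lem-equi strategy}, for each $k$ there is a set of full $\lambda_0$-measure on which $c^k_1(h_0)$ is the payoff vector of a subgame-perfect equilibrium; intersecting these countably many full-measure sets, we get a single full-measure set $H_0^\ast$ on which every $c^k_1(h_0)$ lies in $E_1(h_0)$. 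Since $E_1(h_0)$ is closed — this follows from the ``continuous at infinity'' assumption together with the fact that a pointwise limit of subgame-perfect equilibria (along the product topology on strategy profiles, using the tail-uniform bound $w^T\to 0$) is again a subgame-perfect equilibrium, and the payoff map is continuous in that topology — the closure $\overline{\{c^k_1(h_0)\}_k} = Q_1^\infty(h_0)$ is contained in $E_1(h_0)$ for every $h_0 \in H_0^\ast$.

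Next I would show the reverse inclusion $E_1(h_0) \subseteq Q_1^\infty(h_0)$. Let $f$ be any subgame-perfect equilibrium and write $v(h_{t-1})$ for its continuation payoff vector in the subgame $h_{t-1}$; I claim $v(h_{t-1}) \in Q_t^\tau(h_{t-1})$ for every $\tau$ and $\lambda^{t-1}$-almost all $h_{t-1}$, whence $v(h_0)\in Q_1^\infty(h_0)$. Fix $\tau$. Beyond stage $\tau$, the continuation of $f$ induces, for each history, an element of $\Delta_t(\cdot)$, so $v(h_t) \in Q_t^\tau(h_t)$ for $t>\tau$ directly from the definition of $Q_t^\tau$ on that range. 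Then I descend by backward induction from $\tau$ to $1$: given that $v(h_t) \in Q^\tau_{t+1}(h_t) $ a.e., the equilibrium condition for $f$ at stage $t$ says that $f_t(\cdot|h_{t-1})$ is a Nash equilibrium of the one-shot game with payoff $\int_{S_t} v(h_{t-1},\cdot,s_t)\,f_{t0}(ds_t|h_{t-1})$, and $v(h_{t-1})$ is the corresponding equilibrium payoff; since $\int_{S_t} v(h_{t-1},\cdot,s_t)\,f_{t0}(ds_t|h_{t-1})$ is a selection of $P_t(h_{t-1},\cdot) = \int_{S_t} Q^\tau_{t+1}(h_{t-1},\cdot,s_t) f_{t0}(ds_t|h_{t-1})$, this exhibits $v(h_{t-1})$ as an element of $\Phi(Q^\tau_{t+1})(h_{t-1}) = Q^\tau_t(h_{t-1})$ for $\lambda^{t-1}$-a.e.\ $h_{t-1}$, using the characterization of $\Phi$ in Section~\ref{subsubsec-backward} and Proposition~\ref{prop-extension SZ}. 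Iterating down to $t=1$ gives $v(h_0)\in Q_1^\tau(h_0)$ a.e.; intersecting over the countably many $\tau$ yields $v(h_0) \in Q_1^\infty(h_0)$ a.e., and hence $E_1(h_0)\subseteq Q_1^\infty(h_0)$.

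The main obstacle I anticipate is the bookkeeping of ``$\lambda^{t-1}$-almost all'' qualifiers in the downward backward-induction argument: each step from $t+1$ to $t$ discards a $\lambda^t$-null set of histories, and one must check that these exceptional sets aggregate into a single $\lambda_0$-null set at stage $0$ (this uses that the transition $f_{t0}$ has a density with respect to the atomless reference measure $\lambda_t$, so null sets downstream pull back to null sets upstream, exactly the mechanism already exploited in Lemmas~\ref{lem-equi} and \ref{lem-equi strategy}). A secondary subtlety is justifying closedness of $E_t(h_{t-1})$: one needs the tail-uniform estimate $w^T\to 0$ to guarantee that payoffs along a convergent net of equilibrium strategy profiles converge to the payoff of the limit profile and that the limit profile still satisfies the (one-step-deviation form of the) equilibrium inequalities — but this is the same compactness argument that underlies the construction of $Q_t^\infty$, so no genuinely new machinery is required.
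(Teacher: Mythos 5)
Your second inclusion ($E_t\subseteq Q_t^\infty$ a.e.) is essentially the paper's own argument: for $t>\tau$ the inclusion $E_t\subseteq Q_t^\tau$ is immediate because $Q_t^\tau$ then collects the payoffs of \emph{all} strategy profiles via $\Delta_t$, and the descent from stage $\tau+1$ down to $t$ uses exactly the characterization of $\Phi(Q^{\tau}_{t+1})$ from Subsection~\ref{subsubsec-backward} and Proposition~\ref{prop-extension SZ}; the a.e.\ bookkeeping you worry about is handled as in the paper because the exceptional sets at each step come from the fixed construction of $\Phi(Q^\tau_{t+1})$ and from the induction hypothesis, not from the particular equilibrium, so they aggregate into one null set.

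The genuine gap is in the inclusion $Q_t^\infty\subseteq E_t$. Your Castaing-family step only yields that a countable \emph{dense} subset of $Q_t^\infty(h_{t-1})$ lies in $E_t(h_{t-1})$ a.e., and you close the argument by asserting that $E_t(h_{t-1})$ is closed, to be proved by taking pointwise limits of SPE profiles and claiming the payoff map is continuous and the equilibrium property survives the limit. That claim is not available in this setting: strategies and payoffs are only measurable in histories/states, and history-by-history weak convergence of the kernels $f^k_m(\cdot\,|\,h_{m-1})$ does not imply weak convergence of the induced path measures $\varrho_{(h_{t-1},f^k)}$ (composition of kernels is not continuous in this topology), so neither payoff convergence nor preservation of the one-shot-deviation inequalities follows. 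Moreover, compactness of $E_t$ is precisely part of Proposition~\ref{prop-general}, which the paper \emph{deduces} from $E_t=Q_t^\infty$; using it (or proving it by a limit-of-equilibria argument) as an input here is close to circular, and your remark that it is ``the same compactness argument that underlies $Q_t^\infty$'' is off --- the compactness of $Q^\tau_t$ rests on compactness of $\Delta_t$, the paths of \emph{arbitrary} strategy profiles, where no equilibrium condition has to pass to the limit. The paper bridges this step without any closedness of $E_t$: the null sets in Proposition~\ref{prop-forward} and Lemma~\ref{lem-equi strategy} are determined by the fixed constructions (the full-measure sets $\tilde H_{k-1}$ from Proposition~\ref{prop-extension SZ} and the a.e.\ identities $Q^\infty_k=\Phi(Q^\infty_{k+1})$ of Lemma~\ref{lem-equi}), hence are the same for every selection; so on a single full-measure set of histories \emph{every} point $v\in\Phi(Q^\infty_{t+1})(h_{t-1})$ --- being the value at $h_{t-1}$ of some measurable selection (modify any selection at that one history) --- is a subgame-perfect equilibrium payoff, which gives $\Phi(Q^\infty_{t+1})(h_{t-1})\subseteq E_t(h_{t-1})$ a.e.\ directly. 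You should replace your closedness argument by this selection-independence observation (or prove it explicitly); as written, the first inclusion does not go through.
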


\begin{proof}
(1) We will first prove the following claim: for any $t$ and $\tau$, if $E_{t+1}(h_t) \subseteq Q_{t+1}^\tau(h_t)$ for $\lambda^{t}$-almost all $h_{t}\in H_{t}$, then $E_{t}(h_{t-1}) \subseteq Q_{t}^\tau(h_{t-1})$  for $\lambda^{t-1}$-almost all $h_{t-1}\in H_{t-1}$. We only need to consider the case that $t \le \tau$.


By the construction of $\Phi (Q_{t+1}^\tau)$ in Subsection~\ref{subsubsec-backward}, there exists a measurable subset $\acute{S}^{t-1} \subseteq S^{t-1}$ with $\lambda^{t-1}(\acute{S}^{t-1}) = 1$ such that for any $c_t$ and $h_{t-1} = (x^{t-1}, \acute{s}^{t-1}) \in H_{t-1}$ with $\acute{s}^{t-1} \in \acute{S}^{t-1}$, if
\begin{enumerate}
   \item $c_{t} = \int_{A_{t}(h_{t-1})} \int_{S_{t}} q_{t+1}(h_{t-1},x_t,s_{t}) f_{t0}(\rmd s_{t}|h_{t-1}) \alpha(\rmd  x_t)$, where $q_{t+1}(h_{t-1},\cdot)$ is measurable and $q_{t+1}(h_{t-1},x_t,s_t) \in Q^\tau_{t+1}(h_{t-1},x_t,s_t)$ for $\lambda_t$-almost all $s_t \in S_t$ and $x_t \in A_t(h_{t-1})$;
   \item $\alpha \in \otimes_{i\in I}\cM(A_{ti}(h_{t-1}))$ is a Nash equilibrium in the subgame $h_{t-1}$ with payoff $\int_{S_{t}} q_{t+1}(h_{t-1},\cdot,s_{t}) f_{t0}(\rmd s_{t}|h_{t-1})$ and action space $\prod_{i\in I} A_{ti}(h_{t-1})$,
\end{enumerate}
then $c_t \in \Phi (Q_{t+1}^\tau)(h_{t-1})$.

Fix a subgame $h_{t-1} = (x^{t-1}, \acute{s}^{t-1})$ such that $\acute{s}^{t-1} \in \acute{S}^{t-1}$. Pick a point $c_t \in E_{t}(\acute{s}^{t-1})$. There exists a strategy profile $f$ such that $f$ is a subgame-perfect equilibrium in the subgame $h_{t-1}$ and the payoff is $c_t$. Let $c_{t+1}(h_{t-1}, x_{t},s_t)$ be the payoff vector induced by $\{f_{ti} \}_{i\in I}$ in the subgame $(h_t, x_{t},s_t) \in \mbox{Gr}(A_t)\times S_t$. Then we have
\begin{enumerate}
  \item $c_{t} = \int_{A_{t}(h_{t-1})} \int_{S_{t}} c_{t+1}(h_{t-1},x_t,s_{t}) f_{t0}(\rmd s_{t}|h_{t-1}) f_t(\rmd x_t|h_{t-1})$;
  \item $f_t(\cdot|h_{t-1})$ is a Nash equilibrium in the subgame $h_{t-1}$ with action space $A_t(h_{t-1})$ and payoff $\int_{S_{t}} c_{t+1}(h_{t-1},\cdot,s_{t}) f_{t0}(\rmd s_{t}|h_{t-1})$.
\end{enumerate}
Since $f$ is a subgame-perfect equilibrium in the subgame $h_{t-1}$,  $c_{t+1}(h_{t-1}, x_t, s_t) \in E_{t+1}(h_{t-1},x_t, s_t) \subseteq Q_{t+1}^\tau (h_{t-1},x_t,s_t) $ for $\lambda_{t}$-almost all $s_{t}\in S_{t}$ and $x_t \in A_t(h_{t-1})$, which implies that $c_t \in  \Phi (Q_{t+1}^\tau) (h_{t-1}) = Q_{t}^\tau(h_{t-1})$.

Therefore, $E_t(h_{t-1}) \subseteq Q_t^\tau(h_{t-1})$ for $\lambda^{t-1}$-almost all $h_{t-1}\in H_{t-1}$.

(2) For any $t > \tau$, $E_{t} \subseteq Q_{t}^\tau$. If $t \le \tau$, we can start with $E_{\tau+1} \subseteq Q_{\tau+1}^\tau$ and repeat the argument in (1), then we can show that $E_t(h_{t-1}) \subseteq Q_t^\tau(h_{t-1})$ for $\lambda^{t-1}$-almost all $h_{t-1}\in H_{t-1}$. Thus, $E_t(h_{t-1}) \subseteq Q_t^\infty(h_{t-1})$ for $\lambda^{t-1}$-almost all $h_{t-1}\in H_{t-1}$.

(3) Suppose that $c_{t}$ is a measurable selection from $\Phi (Q_{t+1}^\infty)$. Apply Proposition~\ref{prop-forward} recursively to obtain Borel measurable mappings $\{f_{ki} \}_{i\in I}$ for $k\ge t$. By Lemma~\ref{lem-equi strategy}, $c_{t}(h_{t-1})$ is a subgame-perfect equilibrium payoff vector for $\lambda^{t-1}$-almost all $h_{t-1}\in H_{t-1}$. Consequently, $\Phi (Q_{t+1}^\infty)(h_{t-1}) \subseteq E_t(h_{t-1})$ for $\lambda^{t-1}$-almost all $h_{t-1}\in H_{t-1}$.

By Lemma~\ref{lem-equi}, $E_t(h_{t-1}) = Q_t^\infty(h_{t-1}) = \Phi (Q_{t+1}^\infty)(h_{t-1})$ for $\lambda^{t-1}$-almost all $h_{t-1}\in H_{t-1}$.
\end{proof}

Therefore, we have proved  Theorem~\ref{thm-general} and Proposition~\ref{prop-general}.

\subsection{Proof of Proposition~\ref{prop-infinite arm'}}\label{subsec-proof acc'}

We will highlight the needed changes in comparison with the proofs presented in Subsections~\ref{subsubsec-backward}-\ref{subsubsec-proof general}.

1. Backward induction. We first consider stage~$t$ with $N_t = 1$.

If $N_t = 1$, then $S_t = \{\acute{s_t}\}$. Thus, $P_{t}(h_{t-1}, x_{t}) = Q_{t+1}(h_{t-1}, x_{t}, \acute{s_t})$, which is nonempty and compact valued, and essentially sectionally upper hemicontinuous on $X^t \times \hat{S}^{t-1}$. Notice that $P_t$ may not be convex valued.

We first assume that $P_t$ is upper hemicontinuous. Suppose that $j$ is the player who is active in this period. Consider the correspondence $\Phi_t \colon H_{t-1} \to \bR^n \times \cM(X_t) \times \triangle(X_t)$ defined as follows: $(v,\alpha,\mu) \in \Phi_t(h_{t-1})$ if
\begin{enumerate}
  \item $v =  p_t(h_{t-1},A_{t(-j)}(h_{t-1}), x^*_{tj})$ such that $p_t(h_{t-1},\cdot)$ is a measurable selection of $P_t(h_{t-1}, \cdot)$;\footnote{Note that $A_{t(-j)}$ is point valued since all players other than $j$ are inactive.}
  \item $x^*_{tj} \in A_{tj}(h_{t-1})$ is a maximization point of player~$j$ given the payoff function $p_{tj}(h_{t-1}, A_{t(-j)}(h_{t-1}), \cdot)$ and the action space $A_{tj}(h_{t-1})$, $\alpha_i = \delta_{A_{ti}(h_{t-1})}$ for $i \neq j$ and $\alpha_j = \delta_{x^*_{tj}}$;
  \item $\mu = \delta_{p_t(h_{t-1},A_{t(-j)}(h_{t-1}), x^*_{tj})}$.
\end{enumerate}
This is a single agent problem. We need to show that $\Phi_t$ is nonempty and compact valued, and upper hemicontinuous.

If $P_t$ is nonempty, convex and compact valued, and upper hemicontinuous, then we can use Lemma~\ref{lem-SZ}, the main result of \cite{SZ1990}, to   prove the nonemptiness, compactness, and upper hemicontinuity of $\Phi_t$. In \cite{SZ1990}, the only step they need the convexity of $P_t$ for the proof of their main theorem is Lemma~2 therein. However, the one-player pure-strategy version of their Lemma~2, stated in the following, directly follows from the upper hemicontinuity of $P_t$ without requiring the convexity.
{\small\begin{quote}
Let $Z$ be a compact metric space, and $\{z_n\}_{n \ge 0} \subseteq Z$. Let $P\colon Z \to \bR_+$ be a bounded, upper hemicontinuous correspondence with nonempty and compact values. For each $n \ge 1$, let $q_n$ be a Borel measurable selection of $P$ such that $q_n(z_n) = d_n$. If $z_n$ converges to $z_0$ and $d_n$ converges to some $d_0$, then $d_0 \in P(z_0)$.
\end{quote}}
Repeat the argument in the proof of the main theorem of \cite{SZ1990}, one can show that $\Phi_t$ is nonempty and compact valued, and upper hemicontinuous.

Then we go back to the case that $P_t$ is nonempty and compact valued, and essentially sectionally upper hemicontinuous on $X^t \times \hat{S}^{t-1}$. Recall that we proved Proposition~\ref{prop-extension SZ} based on Lemma~\ref{lem-SZ}. If $P_t$ is essentially sectionally upper hemicontinuous on $X^t \times \hat{S}^{t-1}$, we can show the following result based on a similar argument as in Subsections~\ref{subsec-discontinuous game}: there exists a bounded,  measurable, nonempty and compact valued correspondence $\Phi_t$ from $H_{t-1}$ to $\bR^n \times \cM(X_t) \times \triangle(X_t)$ such that $\Phi_t$ is essentially sectionally upper hemicontinuous on $X^{t-1} \times \hat{S}^{t-1}$, and for $\lambda^{t-1}$-almost all $h_{t-1} \in H_{t-1}$, $(v,\alpha,\mu) \in \Phi_t(h_{t-1})$ if
\begin{enumerate}
  \item $v =  p_t(h_{t-1},A_{t(-j)}(h_{t-1}), x^*_{tj})$ such that $p_t(h_{t-1},\cdot)$ is a measurable selection of $P_t(h_{t-1}, \cdot)$;
  \item $x^*_{tj} \in A_{tj}(h_{t-1})$ is a maximization point of player~$j$ given the payoff function $p_{tj}(h_{t-1}, A_{t(-j)}(h_{t-1}), \cdot)$ and the action space $A_{tj}(h_{t-1})$, $\alpha_i = \delta_{A_{ti}(h_{t-1})}$ for $i \neq j$ and $\alpha_j = \delta_{x^*_{tj}}$;
  \item $\mu = \delta_{p_t(h_{t-1},A_{t(-j)}(h_{t-1}), x^*_{tj})}$.
\end{enumerate}

Next we consider the case that $N_t = 0$. Suppose that  the correspondence $Q_{t+1}$ from $H_{t}$ to $\bR^n$ is bounded, measurable, nonempty and compact valued, and essentially sectionally upper hemicontinuous on $X^t \times \hat{S}^t$. For any $(h_{t-1}, x_t, \hat{s}_t) \in \mbox{Gr}(\hat{A}_t)$, let
\begin{align*}
R_{t}(h_{t-1}, x_{t}, \hat{s}_t)
& = \int_{\tilde{S}_t} Q_{t+1}(h_{t-1}, x_{t}, \hat{s}_t, \tilde{s}_{t}) \tilde{f}_{t0}(\rmd \tilde{s}_{t}|h_{t-1},x_t, \hat{s}_t) \\
& = \int_{\tilde{S}_t} Q_{t+1}(h_{t-1}, x_{t}, \hat{s}_t, \tilde{s}_{t}) \varphi_{t0}(h_{t-1},x_t, \hat{s}_t, \tilde{s}_{t}) \lambda_{t}(\rmd \tilde{s}_{t}).
\end{align*}
Then following the same argument as in Section~\ref{subsubsec-backward}, one can show that $R_t$ is a nonempty, convex and compact valued, and essentially sectionally upper hemicontinuous correspondence on $X^t \times \hat{S}^{t}$.

For any $h_{t-1} \in H_{t-1}$ and $x_{t} \in A_{t}(h_{t-1})$, let
$$ P_{t}(h_{t-1}, x_{t}) = \int_{\hat{A}_{t0}(h_{t-1},x_t)} R_{t}(h_{t-1}, x_{t}, \hat{s}_{t}) \hat{f}_{t0}(\rmd \hat{s}_{t}|h_{t-1},x_t).
$$
By Lemma~\ref{lem-upper}, $P_t$ is nonempty, convex and compact valued, and essentially sectionally upper hemicontinuous on $X^t \times \hat{S}^{t-1}$. The rest of the step remains the same as in Subsection~\ref{subsubsec-backward}.

\

2. Forward induction: unchanged.

\

3. Infinite horizon: we need to slightly modify the definition of $\Xi_t^{m_1}$ for any $m_1 \ge t\ge 1$. Fix any $t\ge 1$. Define a correspondence $\Xi_t^t$ as follows: in the subgame $h_{t-1}$,
$$\Xi_t^t(h_{t-1}) = (\cM(A_{t}(h_{t-1})) \diamond \hat{f}_{t0}(h_{t-1},\cdot)) \otimes \lambda_{t}.$$
For any $m_1 > t$, suppose that the correspondence $\Xi_t^{m_1 - 1}$ has been defined. Then we can define a correspondence $\Xi_t^{m_1} \colon H_{t-1}\to \cM\left(\prod_{t \le m \le m_1}(X_m \times S_m) \right)$ as follows:
\begin{align*}
\Xi_t^{m_1}(h_{t-1}) =
& \big\{ g(h_{t-1})\diamond \left((\xi_{m_1}(h_{t-1}, \cdot) \diamond \hat{f}_{m_10}(h_{t-1},\cdot)) \otimes \lambda_{m_1}\right) \colon  \\
& g \mbox{ is a Borel  measurable selection of } \Xi_t^{m_1-1}, \\
& \xi_{m_1} \mbox{ is a Borel measurable selection of } \cM(A_{m_1}) \big\}.
\end{align*}
Then the result in Subsection~\ref{subsubsec-proof general} is true with the above $\Xi_t^{m_1}$.

Consequently, a subgame-perfect equilibrium exists.

\subsection{Proof of Proposition~\ref{prop-continuous}}\label{subsec-proof continuous}

We will describe the necessary changes in comparison with the proofs presented in Subsections~\ref{subsubsec-backward}-\ref{subsubsec-proof general} and \ref{subsec-proof acc'}.

1. Backward induction. For any $t\ge 1$, suppose that  the correspondence $Q_{t+1}$ from $H_{t}$ to $\bR^n$ is bounded, nonempty and compact valued, and upper hemicontinuous on $X^t$.

If $N_t = 1$, then $S_t = \{\acute{s_t}\}$. Thus, $P_{t}(h_{t-1}, x_{t}) = Q_{t+1}(h_{t-1}, x_{t}, \acute{s_t})$, which is nonempty and compact valued, and upper hemicontinuous. Then define the correspondence $\Phi_t$ from $H_{t-1}$ to $\bR^n \times \cM(X_t) \times \triangle(X_t)$ as $(v,\alpha,\mu) \in \Phi_t(h_{t-1})$ if
\begin{enumerate}
  \item $v =  p_t(h_{t-1},A_{t(-j)}(h_{t-1}), x^*_{tj})$ such that $p_t(h_{t-1},\cdot)$ is a measurable selection of $P_t(h_{t-1}, \cdot)$;
  \item $x^*_{tj} \in A_{tj}(h_{t-1})$ is a maximization point of player~$j$ given the payoff function $p_{tj}(h_{t-1}, A_{t(-j)}(h_{t-1}), \cdot)$ and the action space $A_{tj}(h_{t-1})$, $\alpha_i = \delta_{A_{ti}(h_{t-1})}$ for $i \neq j$ and $\alpha_j = \delta_{x^*_{tj}}$;
  \item $\mu = \delta_{p_t(h_{t-1},A_{t(-j)}(h_{t-1}), x^*_{tj})}$.
\end{enumerate}
As discussed in Subsection~\ref{subsec-proof acc'}, $\Phi_t$ is nonempty and compact valued, and upper hemicontinuous.

When $N_t = 0$, for any $h_{t-1} \in H_{t-1}$ and $x_{t} \in A_{t}(h_{t-1})$,
$$ P_{t}(h_{t-1}, x_{t}) = \int_{A_{t0}(h_{t-1},x_t)} Q_{t+1}(h_{t-1}, x_{t}, s_{t}) f_{t0}(\rmd s_{t}|h_{t-1},x_t).
$$
Let $\mbox{co} Q_{t+1}(h_{t-1}, x_{t}, s_{t})$ be the convex hull of $Q_{t+1}(h_{t-1}, x_{t}, s_{t})$. Because $Q_{t+1}$ is bounded, nonempty and compact valued, $\mbox{co} Q_{t+1}$ is bounded, nonempty, convex and compact valued. By Lemma~\ref{lem-measurable selection}~(8), $\mbox{co} Q_{t+1}$ is upper hemicontinuous.

Notice that $f_{t0}(\cdot|h_{t-1},x_t)$ is atomless and $Q_{t+1}$ is nonempty and compact valued. By Lemma~\ref{lem-lyapunov},
$$ P_{t}(h_{t-1}, x_{t}) = \int_{A_{t0}(h_{t-1},x_t)} \mbox{co} Q_{t+1}(h_{t-1}, x_{t}, s_{t}) f_{t0}(\rmd s_{t}|h_{t-1},x_t).
$$
By Lemma~\ref{lem-upper}, $P_t$ is bounded, nonempty, convex and compact valued, and upper hemicontinuous.
Then instead of relying on Proposition~\ref{prop-extension SZ}, we now use Lemma~\ref{lem-SZ} to conclude that $\Phi_t$ is bounded, nonempty and compact valued, and upper hemicontinuous.

\

2. Forward induction. The first step is much simpler.

For any $\{(h^k_{t-1}, v^k)\}_{1 \le k \le \infty} \subseteq \mbox{Gr}(\Phi_t(Q_{t+1}))$ such that $(h^k_{t-1}, v^k)$ converges to $(h^\infty_{t-1}, v^\infty)$, pick $(\alpha^k, \mu^k)$  such that $(v^k, \alpha^k, \mu^k) \in \Phi_t(h^k_{t-1})$ for $1 \le k < \infty$. Since $\Phi_t$ is upper hemicontinuous and compact valued, there exists a subsequence of $(v^k, \alpha^k, \mu^k)$, say itself, such that $(v^k, \alpha^k, \mu^k)$ converges to some $(v^\infty, \alpha^\infty, \mu^\infty) \in \Phi_t(h^\infty_{t-1})$ due to Lemma~\ref{lem-measurable correspondence}~(6). Thus, $(\alpha^\infty, \mu^\infty) \in \Psi_t(h^\infty_{t-1},v^\infty)$, which implies that $\Psi_t$ is also upper hemicontinuous and compact valued. By Lemma~\ref{lem-measurable selection}~(3), $\Psi_t$ has a Borel measurable selection $\psi_t$. Given a Borel measurable selection $q_t$ of $\Phi(Q_{t+1})$, one can let $\phi_t(h_{t-1}) = (q_t(h_{t-1}), \psi_t(h_{t-1}, q_t(h_{t-1})))$. Then $\phi_t$ is a Borel measurable selection of $\Phi_t$.

Steps~2 and 3 are unchanged.

\

3. Infinite horizon. We do not need to consider $\Xi_t^{m_1}$ for any $m_1 \ge t\ge 1$. Instead of relying on
Lemma~\ref{lem-continuous composition}, we can use Lemma~\ref{lem-measurable composition}~(3) to prove Lemma~\ref{lem-measurable history}. The proof of Lemma~\ref{lem-continuation set} is much simpler. Notice that the boundedness, nonemptiness, compactness and upper hemicontinuity of $Q^\tau_{t}$ for the case $t > \tau$ is immediate. Then one can apply the backward induction as in Lemma~\ref{lem-continuation set} to show the corresponding properties of $Q^\tau_{t}$ for the case $t \le \tau$. Following the same arguments, one can show that Lemmas~\ref{lem-equi}-\ref{lem-equilibrium set} now hold for all $h_{t-1}\in H_{t-1}$ and all $t \ge 1$.

{\small
\singlespacing

\end{document}